\documentclass[11pt]{article}
\usepackage[margin=1in]{geometry}
\usepackage{float}
\usepackage{multirow}
\usepackage{bm}
\usepackage{amsmath,amsthm,amsfonts,amssymb}
\usepackage{enumerate}
\usepackage{graphicx}
\usepackage{subcaption}
\usepackage[authoryear,round]{natbib}
\usepackage{booktabs,tabularx}
\usepackage{enumitem}

\usepackage{xcolor}
\usepackage[normalem]{ulem}

\definecolor{revisionorange}{RGB}{230,120,0}

\PassOptionsToPackage{hyphens}{url}
\usepackage{url}
\usepackage{hyperref}
\hypersetup{
  colorlinks = true,
  linkcolor  = [rgb]{0.10,0.20,0.55},
  citecolor  = [rgb]{0.10,0.20,0.55},
  urlcolor   = [rgb]{0.10,0.20,0.55}
}

\usepackage{array}
\usepackage{mathrsfs}

\allowdisplaybreaks

\newtheorem{theorem}{Theorem}[section]
\newtheorem{corollary}[theorem]{Corollary}
\newtheorem{lemma}[theorem]{Lemma}
\newtheorem{proposition}[theorem]{Proposition}
\theoremstyle{definition}
\newtheorem{definition}[theorem]{Definition}
\newtheorem{assumption}[theorem]{Assumption}
\theoremstyle{remark}
\newtheorem{remark}[theorem]{Remark}

\newcommand{\R}{\mathbb{R}}
\newcommand{\E}{\mathbb{E}}
\newcommand{\Q}{\mathbb{Q}}

\newcommand{\1}{\mathbf{1}}

\title{Microstructural Foundation for the Rough Hawkes--Heston Model}
\author{%
Yingli Wang\thanks{School of Mathematical Sciences, Fudan University, Shanghai, People's Republic of China; \texttt{yingliwang@fudan.edu.cn}.}%
\and
Yinhao Wu\thanks{School of Mathematics, Shanghai University of Finance and Economics, Shanghai, People's Republic of China; \texttt{yinhaowu@stu.sufe.edu.cn}.}%
\and
Lingjiong Zhu\thanks{Corresponding Author. Department of Mathematics, Florida State University, Tallahassee, Florida, United States of America; \texttt{zhu@math.fsu.edu}.}
}
\date{\today}

\begin{document}
\maketitle

\begin{abstract}
Hawkes-based microstructural foundations for rough volatility, leverage, and rough Heston-type limits were developed by El Euch et al. (2018, \textit{Finance Stoch.}, \textbf{22}(2), 241--280) and connected to the affine rough Heston framework of El Euch and Rosenbaum (2019, \textit{Math. Finance}, \textbf{29}(1), 3--38). The rough Hawkes--Heston model with common price--volatility jumps of Bondi et al. (2024, \textit{Math. Finance}, \textbf{34}(4), 1197--1241) extends this framework by adding state-dependent common jumps to rough affine volatility. We provide a microstructural foundation for its variance and common-jump mechanism by constructing a Poisson-embedded marked Hawkes order-flow model. Ordinary arrivals generate rough continuous volatility and leverage through a nearly unstable heavy-tailed Hawkes mechanism, while rare marked arrivals represent common shock events that produce simultaneous price jumps and volatility excitation. Under the nearly unstable scaling and the reduced-form admissibility conditions,
the complete rescaled price/variance/jump system converges along the full
sequence to the unique complete canonical rough Hawkes--Heston weak solution.
The Hawkes renewal structure
yields a Mittag--Leffler Volterra representation, which is then rewritten in
Riemann--Liouville fractional form. The limiting coefficients are expressed
explicitly in terms of the microscopic parameters. The effective Volterra drift
is determined by the balance between the near-critical Hawkes gap and the
predictable mean of common-jump feedback. Within the subcritical scaling
regime, this balance restricts the drift to be negative. The continuous price
martingale and its leverage correlation are generated by regular order flow
through a martingale-balanced asymmetry of sign-specific arrival rates and
tick sizes. The construction combines this endogenous mechanism with
reduced-form exponential price normalization and direct marked-event price
loading. Hence, the construction provides a
microstructural foundation for the variance and common-jump mechanism of the
rough Hawkes--Heston model.
Numerical experiments illustrate the convergence  of our microstructural foundation to the rough Hawkes-Heston model.
\end{abstract}

\medskip

\noindent\textbf{Keywords:}
rough volatility; rough Hawkes--Heston model; Hawkes processes; marked point
processes; market microstructure; nearly unstable Hawkes processes; Volterra
equations.

\medskip

\noindent\textbf{2020 Mathematics Subject Classification:}
Primary 60G55, 91G80; Secondary 60F17, 60G57, 60H20, 45D05, 91B70.

\section{Introduction}
\label{sec:intro}

The Heston model \citep{heston1993} is a classical affine stochastic volatility
benchmark with tractable Markovian dynamics. Rough-volatility and common-jump
extensions enrich its path regularity and joint return--variance shocks.
Empirically, volatility is very rough; see
\cite{gatheral2018}. A microstructural explanation is given by nearly unstable
heavy-tailed Hawkes systems. In these models, the near-critical Hawkes renewal
resolvent first converges to a Mittag--Leffler resolvent kernel. The
corresponding mild equation can then be rewritten in Riemann--Liouville
fractional Volterra form; see
\cite{jaisson2016rough,eleuchfukasawarosenbaum2018}.

In the notation used below, with $P_t=\log(S_t/S_0)$ where $S_t$ is the asset price at time $t$, the rough Heston model can
be written in the schematic form
\[
\left\{
\begin{aligned}
P_t
&=
\int_0^t\sqrt{V_s}dW_s
-\frac12\int_0^tV_sds,\\
V_t
&=
G_0(t)
+
b\int_0^t k_\alpha(t-s)V_sds
+
\eta\int_0^t k_\alpha(t-s)\sqrt{V_s}dB_s,
\end{aligned}
\right.
\]
with
\begin{align}\label{k:alpha}
  k_\alpha(t):=\frac{t^{\alpha-1}}{\Gamma(\alpha)},
  \qquad \alpha\in(1/2,1),
\end{align}
where $W_{t},B_{t}$ are standard Brownian motions with $d\langle W,B\rangle_t=\rho dt$.
Here, $G_0$ is the deterministic forward variance curve, $\eta>0$ is the volatility-of-volatility coefficient, and $b$ is the Volterra mean-reversion loading.
With this sign convention,
stability corresponds to a negative value of this coefficient, in analogy with
the negative linear drift in the classical square-root variance equation.

Roughness is essential for describing volatility dynamics, and joint SPX/VIX
smile calibration motivates extensions of the baseline rough Heston
mechanism. One approach introduces a price-feedback, or
Zumbach, mechanism. Quadratic Hawkes limits lead to super-Heston rough
volatility models with an explicit Zumbach effect
\citep{dandapani2021quadratic}, while the quadratic rough Heston model shows
how roughness combined with price feedback can address the joint SPX/VIX
calibration problem \citep{gatheral2020quadratic}. A different and potentially
complementary approach is to introduce a Hawkes-type jump component.

This second route is motivated by the clustering of large volatility moves.
Volatility indices such as the VIX often show sharp upward moves during
stressed market periods, and these moves tend to occur in clusters
\citep{bondi2024}. Moreover, jumps in the underlying and jumps in volatility are
often modeled as common shocks \citep{contkokholm2013,sepp2008}. This motivates
enriching the rough Heston framework with self-exciting jumps and common jumps
in returns and variance. The rough Hawkes--Heston model provides such an
extension while preserving affine Volterra tractability
\citep{bondi2024,abijaber2019}.
Building on this motivation, the rough Hawkes--Heston model \citep{bondi2024}
combines rough volatility, leverage, and common jumps in an affine Volterra
framework of the type studied in \cite{abijaber2019}. In that model, the jump intensity is proportional to spot
variance, and the same compensated jump measure appears in the variance equation
and the log-return equation. In the notation of this paper, the reduced-form
rough Hawkes--Heston model is
\begin{equation}
\label{eq:limit-bondi-form}
\left\{
\begin{aligned}
P_t
&=
\int_0^t \sigma_sdW_s
-
\left(
\frac12
+
\int_{(0,\infty)}
\left(e^{-\Lambda z}-1+\Lambda z\right)\nu_J(dz)
\right)
\int_0^t\sigma_s^2ds\\
&\qquad
-\Lambda
\int_0^t\int_{(0,\infty)}z
\left(
\mu(ds,dz)
-
\sigma_s^2\nu_J(dz)ds
\right),\\
\sigma_t^2
&=
G_0(t)
+
b\int_0^t k_\alpha(t-s)\sigma_s^2ds
+
\sqrt{\bar c_B}\int_0^t k_\alpha(t-s)\sigma_sdB_s\\
&\quad
+
\bar\xi
\int_{[0,t)}k_\alpha(t-s)
\int_{(0,\infty)} z
\left(
\mu(ds,dz)-\sigma_s^2\nu_J(dz)ds
\right),
\end{aligned}
\right.
\end{equation}
where $k_\alpha(t)$ is given in \eqref{k:alpha}, 
$\sigma^2$ denotes the variance process, with
$\sigma$ being the volatility process, and $W$ and $B$ are
standard Brownian motions satisfying
$d\langle W,B\rangle_t=\rho dt$.
Here, $\Lambda\ge0$ controls the size of the common downward log-price jump,
whereas $\bar\xi>0$ is the common-jump loading in the variance equation.
The common-jump measure $\mu$ has predictable compensator
\begin{equation}
\label{eq:limit-compensator}
  \nu^\mu(dt,dz)=\sigma_t^2\nu_J(dz)dt.
\end{equation}
Thus, the same compensated jump measure drives both the price equation and the
variance equation. With the sign convention $\Lambda\ge0$ and $z>0$, a jump
mark produces a downward log-price jump of size $-\Lambda z$ and a positive
volatility excitation. If the common-jump component is suppressed, for instance
by taking $\bar\xi=0$, $\Lambda=0$, and $\nu_J=0$, then
\eqref{eq:limit-bondi-form} reduces to the rough Heston model with
$V=\sigma^2$ and $\eta=\sqrt{\bar c_B}$.

Define the semimartingale driver of the variance equation in
\eqref{eq:limit-bondi-form} by
\begin{equation}
\label{eq:variance-driver-Z}
Z_t
:=
b\int_0^t\sigma_s^2ds
+\sqrt{\bar c_B}\int_0^t\sigma_s\,dB_s
+\bar\xi\int_0^t\int_{(0,\infty)}z
\left(\mu(ds,dz)-\sigma_s^2\nu_J(dz)ds\right).
\end{equation}
Thus the variance equation reads
$\sigma^2-G_0=k_\alpha*dZ$.  When $\bar\xi>0$, set
$S_{\bar\xi}(z)=\bar\xi z$.  We call a weak solution
\emph{canonical} if
\[
  \left(\mathrm{id}\times S_{\bar\xi}\right)_\#\mu
  =
  \mathcal J(Z)
  :=
  \sum_{s:\,\Delta Z_s\ne0}\delta_{(s,\Delta Z_s)},
\]
that is, the rescaled common-jump measure is the canonical jump measure of
the semimartingale driver $Z$.

The normalization in \eqref{eq:limit-bondi-form} carries the common-jump
loading $\bar\xi$ in the variance equation, whereas in \citet{bondi2024} the
jump mark enters the variance equation with unit loading. When $\bar\xi>0$,
the two parametrizations are equivalent: replacing the mark $z$ by
$\bar\xi z$, that is, replacing $\nu_J$ by its pushforward under
$z\mapsto\bar\xi z$ and $\Lambda$ by $\Lambda/\bar\xi$, maps
\eqref{eq:limit-bondi-form} into the exact parametrization of
\citet{bondi2024}. When $\bar\xi=0$, the mark rescaling degenerates and the
well-posedness statement below concerns the law of the variance process.

The microscopic construction selects an explicitly characterized attainable
parameter region. Within the
subcritical branching regime it yields $b<0$, and the rare-mark scaling yields
the finite-activity jump measure $\nu_J=2\kappa F_J$. The martingale-balanced
asymmetry between the two signs' arrival rates and price impacts attains every
$\rho\in(-1,0)$; its symmetric benchmark recovers
$\rho\in(-1/\sqrt2,0)$. The effective initial curve $G_0$
must admit the baseline-resolvent representation generated by the microscopic
immigration rate. This is a realizability condition on a freely chosen
deterministic baseline, separate from the scalar restriction $b<0$ and the
finite-activity restriction on the jump measure. The mark law $F_J$ remains free subject
to the first and second moment conditions in Section~\ref{sec:scaling}, and
may in particular be the exponential law used in the joint SPX/VIX calibration
of \citet{bondi2024}.

This work is connected to several related strands of literature. Nearly unstable Hawkes limits,
Hawkes-driven limit order book limits, and related Hawkes functional limit
theorems provide microstructural routes from order flow to macroscopic stochastic
dynamics
\citep{jaisson2015limit,horst2019scaling,horst2023convergence,horstXu2026functional}.
Marked Hawkes point measures and their diffusion or shot-noise approximations
provide a flexible framework for incorporating event-level heterogeneity and
branching effects in self-excited systems
\citep{xu2024diffusion}.
The closest comparisons are the microstructure models of
\citet{horst2022microstructure},
\citet{horstXuZhang2024pathdependent}, and
\citet{hagerHorstWagenhoferXu2026roughlognormal}, which differ from the
present construction along complementary dimensions.
\citet{horst2022microstructure} obtain Markovian affine
stochastic-volatility jump-diffusion limits from exponential-kernel marked
Hawkes systems with exogenous and induced orders. Their exponential-kernel
branching mechanism yields Markovian price--volatility co-jumps.
\citet{horstXuZhang2024pathdependent} combine heavy-tailed Hawkes market
orders with limit-order and cancellation effects having random, possibly
long-lived durations. Their limit is a rough Heston-type volatility equation
with an additional path-dependent Poisson term generating volatility spikes
and clusters.  The Poisson term acts through the volatility equation and its
path-dependent order-flow state.
Finally, \citet{hagerHorstWagenhoferXu2026roughlognormal} consider
non-self-exciting Poisson order arrivals whose direct and slowly decaying
impact on log-volatility yields a continuous rough log-normal, or rough
Bergomi, limit, with the microscopic Poisson impacts aggregating into a
continuous limiting price--volatility system.

Our rare marks are embedded in a single nearly critical
heavy-tailed Hawkes order flow. A rare marked arrival produces an immediate
price jump and, through its Hawkes descendants, an amplified rough volatility
response. Consequently, the same state-dependent finite-activity random
measure survives in both the price equation and the affine Volterra variance
equation. This yields the canonical rough Hawkes--Heston common-jump system,
together with an explicit micro-to-macro coefficient map and its associated
attainability restrictions.

Discrete-time heavy-tailed integer-valued autoregressive (INAR) constructions
give related routes to fractional Cox--Ingersoll--Ross (CIR) and rough Heston
limits \citep{wang2026scaling,wang2025rough}. Affine-transform, numerical, and
calibration methods for rough volatility are developed, among others, in
\citep{BayerFrizGatheral2015PricingRough,el2019characteristic,
callegaro2021fast,BayerBreneis2024WeakRoughHeston,guyon2025dispersion}; a broader
comparison of rough-volatility microfoundations is given by
\citet{alonso2026microstructure}.

In our construction, regular marks $i=0$ represent routine order-flow events,
whereas rare marks $i=1$ with size $z>0$ represent common shocks. Regular events
generate the continuous rough component and leverage. A rare marked event has
an immediate price impact and excites the primitive Hawkes kernel $a_T\phi$;
its branching cascade produces the macroscopic Mittag--Leffler response and,
after fractional rewriting, the common-jump Volterra feedback.

Throughout the rest of the paper, let $\Q$ denote the underlying probability measure, interpreted as
a pricing measure when the price process is considered, and $\E$ denotes
expectation under $\Q$.

\paragraph{Contributions.}
Our paper makes three main contributions. 
Their common technical feature is the derivation of the limiting rough
Hawkes--Heston dynamics from a single near-critical marked Hawkes mechanism.
The main difficulty is that rare marked events are
negligible at the microscopic event level, while their descendants are amplified
by the nearly unstable Hawkes cascade. We overcome this by first proving that
the primitive marked excitation is filtered by a single near-critical resolvent (Lemma~\ref{lem:single-resolvent}), then identifying the joint Brownian and
jump characteristics in one common filtration 
(Proposition~\ref{prop:joint-tightness-identification}), proving time simplicity
(Lemma~\ref{lem:marked-no-clustering}), and
finally passing the marked jump feedback through the singular Volterra kernel 
(Proposition~\ref{prop:raw-feedback-limit}). The Mittag--Leffler-to-fractional
rewriting with both Brownian and jump drivers is justified by the stochastic
convolution associativity result 
(Lemma~\ref{lem:stochastic-convolution-associativity}). Complete weak uniqueness
is obtained by recovering the variance-driving noise from the variance path and
identifying the remaining orthogonal price noise through a localized Girsanov
argument; see Proposition~\ref{prop:complete-weak-uniqueness}. We summarize
the three main contributions of the paper as follows.

\begin{enumerate}[label=(C\arabic*)]
  \item We introduce a Poisson-embedded marked-Hawkes order-flow model
  tailored to common price--volatility jumps. Ordinary arrivals generate the
  continuous rough volatility component, while rare jump-marked arrivals
  generate simultaneous price jumps and volatility excitation.
We obtain the full-sequence convergence of the complete
  rescaled price/variance/jump system (Theorem~\ref{thm:main}).

  \item We derive the common-jump Volterra feedback from primitive marked
  excitation. The near-critical Hawkes resolvent filters the excitation
  (Lemma~\ref{lem:single-resolvent}) and generates the macroscopic Volterra
  jump feedback (Proposition~\ref{prop:raw-feedback-limit} and
  Corollary~\ref{cor:compensated-rewriting}).

  \item We characterize the structural parameter region selected by the
  microscopic model. We obtain the explicit coefficient
  map and show that subcriticality yields $b<0$, the rare-mark scaling yields
  $\nu_J=2\kappa F_J$, and the martingale-balanced two-sign construction
  attains every $\rho\in(-1,0)$ (Theorem~\ref{thm:main}). 
  Finally, conditional on the baseline-realizability condition for $G_0$, we
  prove the converse attainability statement
  (Proposition~\ref{prop:coefficient-attainability}).
\end{enumerate}

Finally, numerical experiments are provided in Section~\ref{sec:numerical-illustrations}
to illustrate the convergence 
of our microstructural foundation to the rough Hawkes-Heston model.

\section{The rough Hawkes--Heston model}

\subsection{Reduced-form model and common-jump representation}
\label{sec:rHH}

The limiting reduced-form equation used throughout the paper is
\eqref{eq:limit-bondi-form}. The microscopic construction below identifies
$\mu$ as a time-simple common-jump measure whose compensator, relative to the
common limiting filtration, is \eqref{eq:limit-compensator}; after scaling by
$\bar\xi$, it is the canonical jump measure of the variance driver. The
compensated jump measure is
denoted by
\[
  \widetilde\mu(dt,dz)
  =
  \mu(dt,dz)-\sigma_t^2\nu_J(dz)dt.
\]
The continuous martingales
$\int_0^\cdot \sigma_sdW_s$
and
$\int_0^\cdot \sigma_sdB_s$
are orthogonal to bounded stochastic integrals against $\widetilde\mu$ in the
sense of zero predictable covariation.

\begin{assumption}[Jump integrability and sign of price jumps]
\label{ass:jump-integrability}
We assume $\Lambda\ge0$ and that the limiting jump measure $\nu_J$ is
supported on $(0,\infty)$. It satisfies
\[
  \nu_J((0,\infty))<\infty,
  \qquad
  \int_{(0,\infty)}z^2\nu_J(dz)<\infty.
\]
\end{assumption}

Note that Assumption~\ref{ass:jump-integrability} implies that
\[
\int_{(0,\infty)}(z\vee z^{2})\nu_{J}(dz)<\infty,
\qquad
  \int_{(0,\infty)}
  \left|e^{-\Lambda z}-1+\Lambda z\right|\nu_J(dz)<\infty.
\]

\begin{remark}[Mild, fractional, and compensated forms]
The microscopic marked-Hawkes derivation first yields the raw variance equation
in Mittag--Leffler mild form, where $V$ corresponds to $\sigma^2$ in the
notation of \eqref{eq:limit-bondi-form}:
\begin{align}
\label{eq:raw-limit-variance}
  V_t
  &=
  g_0(t)
  +
  \sqrt{c_B}\int_0^t K(t-s)\sqrt{V_s}dB_s
  +
  \xi\int_{[0,t)}K(t-s)\int_{(0,\infty)} z\mu(ds,dz).
\end{align}
Here, $g_0$ denotes the deterministic limiting baseline profile in the raw
Mittag--Leffler mild equation. More precisely, $g_0$ is the $L^2([0,1])$
limit of the rescaled microscopic baseline term $g_0^T$ defined later in
\eqref{eq:g0T} and specified in Assumption~\ref{ass:baseline}.
The fractional resolvent identity for $K$, followed by compensation of the
raw jump measure, transforms \eqref{eq:raw-limit-variance} into
\eqref{eq:limit-bondi-form}. Section~\ref{sec:fractional-rewriting} gives the
complete derivation of $G_0$, $b$, $\bar c_B$, and $\bar\xi$; their final
microscopic expressions are collected in Theorem~\ref{thm:main}.
\end{remark}


\subsection{Reduced-form admissibility}
\label{sec:reduced-form-admissibility}

\begin{assumption}[Reduced-form admissibility]
\label{ass:wellposed}
Define the transformed baseline curve by
\[
  G_0(t)
  :=
  g_0(t)+\frac{\mu_0}{\Gamma(1-\alpha)}
  \int_0^t\frac{(t-s)^{\alpha-1}}{\Gamma(\alpha)}g_0(s)ds.
\]
Here $g_0$ on $\mathbb R_+$ is the extension specified in
Assumption~\ref{ass:baseline}. This extension permits application of the
global-in-time affine Volterra well-posedness results cited in the proof of
Proposition~\ref{prop:wellposed} and formulation of uniqueness in
$L^2_{\mathrm{loc}}(\mathbb R_+)$. By causality, its restriction to $[0,1]$
depends only on $g_0|_{[0,1]}$.
We assume that the resulting curve $G_0$ is continuous and nondecreasing on
$\mathbb R_+$ with $G_0(0)\ge0$.
\end{assumption}

For $\alpha\in(1/2,1)$, the fractional kernel
$k_\alpha(t)=t^{\alpha-1}/\Gamma(\alpha)$ is nonnegative, nonincreasing, and
locally square-integrable. This and all shifted kernels
$k_\alpha(\cdot+1/n)$ are completely monotone and therefore satisfy
\citet[Hypothesis~2.1]{bondi2024}, as required in the proof of
Proposition~\ref{prop:wellposed}. The next result records reduced-form
well-posedness under the preceding jump-integrability and admissibility
conditions.

\begin{proposition}[Reduced-form well-posedness]
\label{prop:wellposed}
Under Assumptions~\ref{ass:jump-integrability} and
\ref{ass:wellposed}, for every $b\in\R$, $\bar c_B>0$ and $\bar\xi\ge0$,
the reduced-form variance equation in
\eqref{eq:limit-bondi-form} admits a nonnegative weak solution $\sigma^2$
whose law is unique in $L^2_{\mathrm{loc}}(\mathbb R_+)$. If
$\bar\xi>0$ and the solution is canonical in the sense defined above, then
the joint law of $(\sigma^2,\mu)$ is unique.
\end{proposition}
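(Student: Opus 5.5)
The plan is to reduce the statement to the affine Volterra well-posedness theory of \citet{bondi2024} (built on \citet{abijaber2019}), using the mark rescaling described after \eqref{eq:variance-driver-Z}. We treat $\bar\xi>0$ and $\bar\xi=0$ separately.

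\emph{Case $\bar\xi>0$.} Replace $\nu_J$ by its pushforward $\tilde\nu_J:=(S_{\bar\xi})_\#\nu_J$ and $\Lambda$ by $\Lambda/\bar\xi$. The variance equation in \eqref{eq:limit-bondi-form} then becomes
\[
\sigma^2_t=G_0(t)+\int_0^t k_\alpha(t-s)\,dZ_s,
\]
with driver $Z$ as in \eqref{eq:variance-driver-Z} and jumps of unit loading. Under Assumption~\ref{ass:jump-integrability}, $\tilde\nu_J$ is a finite measure on $(0,\infty)$ with finite second moment, and the compensator $\sigma^2_t\tilde\nu_J(dz)dt$ is affine in the state.

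The existence step follows \citet[Hypothesis~2.1]{bondi2024}. The kernel $k_\alpha$ is completely monotone and in $L^2_{\mathrm{loc}}$. Assumption~\ref{ass:wellposed} gives that $G_0$ is continuous and nondecreasing with $G_0(0)\ge0$, which is exactly the type of admissible input curve for which nonnegativity is preserved. If the cited result requires a regular kernel, approximate by the shifted kernels $k_\alpha(\cdot+1/n)$, which are also completely monotone and satisfy the hypothesis. For these kernels one gets nonnegative solutions with uniform moment bounds $\sup_{t\le T}\E[\sigma_t^{2p}]$. These bounds follow from the linear growth of the coefficients and the $L^2$-integrability of the marks. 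Tightness in $L^2_{\mathrm{loc}}$ via Kolmogorov-type increments of the stochastic convolution, together with martingale-problem identification, then yields a weak solution for $k_\alpha$. I expect \citet{bondi2024} to already contain this existence result, so it can be invoked directly.

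The uniqueness step is the standard affine argument. For any weak solution, the conditional Fourier–Laplace functional of $(\sigma^2, Z)$ is expressed through $G_0$ and the solution $\psi$ of a Riccati–Volterra equation
\[
\psi=k_\alpha*\Big(f+b\psi+\tfrac{\bar c_B}{2}\psi^2+\int(e^{\psi z}-1-\psi z)\,\tilde\nu_J(dz)\Big).
\]
One first shows this Riccati–Volterra equation has a unique global solution with nonpositive real part for nonpositive test functions $f$. The exponential-affine formula then pins down the finite-dimensional distributions of $\sigma^2$ in $L^2_{\mathrm{loc}}$.

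For the canonical case, the joint law of $(\sigma^2,\mu)$ is handled as follows. Canonicity means $(\mathrm{id}\times S_{\bar\xi})_\#\mu=\mathcal J(Z)$. Since $S_{\bar\xi}$ is invertible, $\mu$ is a measurable function of $Z$. In turn, $Z$ is recovered from $\sigma^2$ by inverting the convolution: $k_{1-\alpha}*(\sigma^2-G_0)=Z$, using $k_{1-\alpha}*k_\alpha=1$ and associativity as in Lemma~\ref{lem:stochastic-convolution-associativity}. Hence $(\sigma^2,\mu)$ is a measurable image of $\sigma^2$, and its law is unique.

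\emph{Case $\bar\xi=0$.} The jump term drops from the variance equation, and $\mu$ enters only through its compensator. The equation reduces to a rough Heston variance equation with continuous input $G_0$. Existence and uniqueness of the law of $\sigma^2$ then follow from \citet{abijaber2019} (or the same Riccati argument with $\tilde\nu_J=0$).

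The main obstacle is the global uniqueness of the Riccati–Volterra equation with the jump nonlinearity, together with checking that our $G_0$ (rather than a constant initial value) fits the admissible-input class of \citet{bondi2024}. For the first point I would try to quote their theorem directly after the rescaling. For the second, I would use that a continuous nondecreasing $G_0\ge0$ is of the form $G_0(0)+k_\alpha*$(nonnegative measure) only if needed. Otherwise I would fall back on their general input-curve condition (nonnegativity of $\Delta_h G_0-(\Delta_h k_\alpha * R)G_0$-type quantities), which is implied by monotonicity for completely monotone kernels.
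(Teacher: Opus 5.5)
Your proposal is correct and follows essentially the same route as the paper. You rescale the marks to unit loading, invoke affine Volterra existence and uniqueness, treat $\bar\xi=0$ separately, and get canonical joint uniqueness by recovering $Z=k_{1-\alpha}*(\sigma^2-G_0)$ and then $\mu$ from the jumps of $Z$. Note that $k_{1-\alpha}$ is exactly the paper's first-kind resolvent $L_\alpha$, and that the paper simply cites Abi Jaber (Theorem~2.13) and Bondi--Livieri--Pulido (Lemma~9, Corollary~12) for what you sketch via shifted-kernel approximation and the Riccati--Volterra transform. The one detail you gloss over is that the inversion identity holds only for a.e.\ $t$, so you need a Borel map from $L^2_{\mathrm{loc}}$ selecting the c\`adl\`ag version of $Z$ before applying the jump-measure map; the paper supplies this in Appendix~\ref{app:measurable-recovery}.
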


Before we proceed to the proof of Proposition~\ref{prop:wellposed}, we first state and prove the following technical lemma that will be used in the proof of Proposition~\ref{prop:wellposed}. 

\begin{lemma}[Recovery of the common-jump measure]
\label{lem:recover-jump-measure}
Let $\bar\xi>0$ and let $(\sigma^2,\mu)$ be a canonical weak solution. Then
the variance path determines the semimartingale driver $Z$ and its jump
measure. Consequently, the law of $\sigma^2$ determines the joint law of
$(\sigma^2,\mu)$.
\end{lemma}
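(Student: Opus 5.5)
The plan is to invert the Volterra convolution. The variance equation says $\sigma^2-G_0=k_\alpha*dZ$, where $G_0$ is deterministic and known. Since $k_\alpha$ has a resolvent of the first kind, namely $k_{1-\alpha}(t)=t^{-\alpha}/\Gamma(1-\alpha)$ with $k_{1-\alpha}*k_\alpha\equiv 1$, convolving both sides with $k_{1-\alpha}$ should give
\[
  \int_0^t k_{1-\alpha}(t-s)\bigl(\sigma_s^2-G_0(s)\bigr)ds
  = (k_{1-\alpha}*k_\alpha*dZ)_t
  = (1*dZ)_t
  = Z_t-Z_0
  = Z_t .
\]
This would express $Z$ as a measurable functional of the path $\sigma^2$ (which lies in $L^2_{\mathrm{loc}}$, so the left side is well defined) and of the deterministic curve $G_0$.

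The first step is to justify exchanging deterministic and stochastic convolutions for the semimartingale driver $Z$. I would do this by a stochastic Fubini argument, treating the three components of $Z$ separately:
\begin{itemize}
  \item the finite-variation drift $b\int\sigma^2ds$, handled by the ordinary Fubini theorem;
  \item the Brownian part $\sqrt{\bar c_B}\int\sigma\,dB$, handled by stochastic Fubini, which is valid because $\int_0^t\sigma_s^2ds<\infty$ and $k_{1-\alpha},k_\alpha\in L^1_{\mathrm{loc}}$;
  \item the compensated jump part, handled by splitting it into $\int z\,\mu$ (finite activity, so the convolution is pathwise a finite sum plus a Lebesgue integral) and its compensator $\int\sigma^2\nu_J\,ds$, which is Lebesgue and finite by Assumption~\ref{ass:jump-integrability}.
\end{itemize}
Alternatively, the associativity result Lemma~\ref{lem:stochastic-convolution-associativity} stated in the introduction could be invoked directly. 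In either case the identity holds almost surely for each $t$. Since $Z$ is càdlàg, I would take the right-continuous version of the left-hand functional, for example by defining it via limits along rationals, to get a version holding for all $t$ simultaneously.

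The second step: once $Z$ is a measurable functional $\Phi(\sigma^2)$ on path space, its jumps $\Delta Z_s$ and hence $\mathcal J(Z)=\sum_{\Delta Z_s\neq0}\delta_{(s,\Delta Z_s)}$ are measurable functionals of $Z$. By canonicity, $(\mathrm{id}\times S_{\bar\xi})_\#\mu=\mathcal J(Z)$. Since $\bar\xi>0$, the map $S_{\bar\xi}$ is a bijection of $(0,\infty)$, so $\mu=(\mathrm{id}\times S_{\bar\xi}^{-1})_\#\mathcal J(Z)$ is a measurable function of $\sigma^2$. Hence $(\sigma^2,\mu)=(\sigma^2,\Psi(\sigma^2))$ for a fixed measurable $\Psi$, and the joint law is the pushforward of the law of $\sigma^2$.

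The main obstacle is the pathwise, all-$t$ identification in the first step. Two points need care:
\begin{itemize}
  \item Measurability of the functional on $L^2_{\mathrm{loc}}$ paths.
  \item The fact that $\sigma^2$ is only defined up to Lebesgue-null sets, whereas $Z$ must be recovered as a càdlàg process with exact jump times.
\end{itemize}
I would handle both by defining $\Phi(\sigma^2)_t$ as the right limit along rationals of the continuous-in-$L^1$ functional $t\mapsto\int_0^t k_{1-\alpha}(t-s)(\sigma^2_s-G_0(s))ds$. Here I would use that $\int_0^t (\cdot)$ depends on $\sigma^2$ only through its $L^1_{\mathrm{loc}}$ class, and that $Z$ is càdlàg, so the rational-time values determine the path.
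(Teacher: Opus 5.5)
Your proposal is correct and is essentially the paper's argument: invert $\sigma^2-G_0=k_\alpha*dZ$ with the first-kind resolvent $L_\alpha=k_{1-\alpha}$, recover the c\`adl\`ag driver $Z$ through a Borel map on $L^2_{\mathrm{loc}}$, and read off $\mu=(\mathrm{id}\times S_{\bar\xi}^{-1})_\#\mathcal J(Z)$ from canonicity (the paper cites Proposition~4 of Bondi, Livieri and Pulido for the interchange instead of running stochastic Fubini itself). The differences are technical: the paper recovers $Z_q$ from the averages $n\int_q^{q+1/n}(L_\alpha*(v-G_0))(r)\,dr$, which are continuous in $v$ and need the identity only for a.e.\ $t$ (plus Lusin--Souslin for measurability into $D_{\mathrm{loc}}$), whereas your rational-time evaluations need the identity at each fixed $t$ and a default value where the integral diverges (point evaluation is not continuous on $L^2_{\mathrm{loc}}$ because $k_{1-\alpha}\notin L^2_{\mathrm{loc}}$ for $\alpha>1/2$); your stated justification for that fixed-$t$ stochastic Fubini step ($\int_0^t\sigma_s^2\,ds<\infty$ with $L^1_{\mathrm{loc}}$ kernels) is not by itself sufficient, while pathwise $\sigma^2\in L^2_{\mathrm{loc}}$ is, and Lemma~\ref{lem:stochastic-convolution-associativity} cannot be invoked directly, since it concerns $k_\alpha$ and $K$ under moment bounds.
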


\begin{proof}
For \(k_\alpha(t)=t^{\alpha-1}/\Gamma(\alpha)\), the resolvent of the first
kind is
\begin{equation}
\label{eq:first-kind-resolvent}
  L_\alpha(dt)=\frac{t^{-\alpha}}{\Gamma(1-\alpha)}\,dt.
\end{equation}
Recall Euler's beta-integral: for $a,b>0$,
\begin{equation}
\label{eq:beta-integral}
  \int_0^t(t-s)^{a-1}s^{b-1}ds
  =\frac{\Gamma(a)\Gamma(b)}{\Gamma(a+b)}t^{a+b-1}.
\end{equation}
Taking $a=\alpha$ and $b=1-\alpha$ in
\eqref{eq:beta-integral} gives, for $t>0$,
\[
  (L_\alpha*k_\alpha)(t)
  =\frac{1}{\Gamma(1-\alpha)\Gamma(\alpha)}
    \int_0^t(t-s)^{\alpha-1}s^{-\alpha}ds
  =1.
\]
Applying
\citet[Proposition~4]{bondi2024affine} with \(F\equiv1\) to
\(\sigma^2-G_0=k_\alpha*dZ\) yields
\[
  Z_t=(L_\alpha*(\sigma^2-G_0))(t)
\]
for Lebesgue-a.e.\ \(t\), almost surely. Appendix~\ref{app:measurable-recovery}
constructs a Borel map \(\mathcal R:L^2_{\mathrm{loc}}\to D_{\mathrm{loc}}\)
that recovers the c\`adl\`ag version \(Z=\mathcal R(\sigma^2)\) and verifies
that the jump-measure map is Borel. Canonicity then gives
\[
  \mu
  =
  \left(\mathrm{id}\times S_{\bar\xi}^{-1}\right)_\#
  \mathcal J(\mathcal R(\sigma^2)),
\]
so \(\mu\) is a Borel functional of \(\sigma^2\).
\end{proof}

\begin{proof}[Proof of Proposition~\ref{prop:wellposed}]
The fractional kernel $k_\alpha$ and its shifted kernels
$k_\alpha(\cdot+1/n)$, $n\in\mathbb N$, are completely monotone and satisfy
the kernel conditions in \citet[Hypothesis~2.1]{bondi2024} for the affine
Volterra variance equation in \eqref{eq:limit-bondi-form}. The
resolvent of the first kind of $k_\alpha$ is the nonnegative, nonincreasing
measure $L_\alpha$ defined in \eqref{eq:first-kind-resolvent}; see also
\citep{abijaber2019} and
\citep[Hypothesis~2.1 and Remark~2.3]{bondi2024}.
By Assumption~\ref{ass:wellposed}, the prescribed curve $G_0$ is continuous and nondecreasing with
$G_0(0)\ge0$, matching the initial-curve hypothesis of
\citep[Hypothesis~2.2]{bondi2024}.

Suppose first that $\bar\xi>0$. Define
\[
  S_{\bar\xi}(z):=\bar\xi z,
  \qquad
  \nu^B:=(S_{\bar\xi})_\#\nu_J,
  \qquad
  \mu^B:=(\mathrm{id}\times S_{\bar\xi})_\#\mu.
\]
Then
\[
\bar\xi\int_{(0,\infty)}z
\left(\mu(ds,dz)-\sigma_s^2\nu_J(dz)ds\right)
=
\int_{(0,\infty)}y
\left(\mu^B(ds,dy)-\sigma_s^2\nu^B(dy)ds\right),
\]
and
\[
  \nu^B(\{0\})=0,
  \qquad
  \int_{(0,\infty)}y^2\nu^B(dy)
  =\bar\xi^2\int_{(0,\infty)}z^2\nu_J(dz)<\infty.
\]
Thus the mark-rescaled variance equation has unit jump loading, diffusion
coefficient $c=\bar c_B$, and jump measure $\nu^B$. Weak existence and
state-process uniqueness follow from \citet[Theorem~2.13]{abijaber2021},
\citet[Lemma~9 and Corollary~12]{bondi2024affine}, and
\citet[Remark~2.3]{bondi2024}. Translating the marks back gives a solution of
the variance equation in \eqref{eq:limit-bondi-form}, and joint uniqueness of
the canonical pair follows from
Lemma~\ref{lem:recover-jump-measure}.

If $\bar\xi=0$, the variance equation in \eqref{eq:limit-bondi-form} reduces
to the continuous affine Volterra case. Weak existence follows from
\citet[Theorem~2.13]{abijaber2021} with zero jump measure, and state-process
uniqueness follows from
\citet[Lemma~9 and Corollary~12]{bondi2024affine}.
If $\mu$ is still
required for the price equation, it may be realized on a product extension
as a random measure with compensator $\sigma_t^2\nu_J(dz)dt$. In this case
the uniqueness conclusion concerns the state process $\sigma^2$.
\end{proof}

\begin{remark}[On the sign of the drift]
\label{rem:drift-sign}
Proposition~\ref{prop:wellposed} establishes reduced-form well-posedness for
every $b\in\R$. The microscopic model under the subcritical scaling regime of
Assumption~\ref{ass:subcritical} generates the subfamily $b<0$, via the identity
$b=\bar\xi\int_{(0,\infty)}z\nu_J(dz)-\mu_0/\Gamma(1-\alpha)$ derived in
Section~\ref{sec:fractional-rewriting}. Critical and supercritical scalings
constitute separate regimes.
\end{remark}

Theorem~\ref{thm:subseq} first proves that every subsequential variance/jump
limit is canonical. Proposition~\ref{prop:wellposed} and
Lemma~\ref{lem:recover-jump-measure} then identify the law of the
variance/jump pair. The common-filtration time-simplicity, compensator, and
orthogonality results, together with the recovery map in
Proposition~\ref{prop:complete-weak-uniqueness}, identify the complete joint
law including the price-driving Brownian motion.

\begin{remark}[On the monotonicity of $G_0$]
\label{rem:G0-monotonicity}
The monotonicity condition on $G_0$ matches the initial-curve admissibility
condition in the reduced-form rough Hawkes--Heston well-posedness theory;
see \citet[Hypothesis~2.2 and Remark~2.3]{bondi2024}, as well as
\citet[Theorem~2.13]{abijaber2021} and
\citet[Lemma~9 and Corollary~12]{bondi2024affine} for the underlying affine
Volterra existence and uniqueness results.
A sufficient condition is that $g_0$ is continuous, nonnegative, and
nondecreasing. Indeed, after writing
$(k_\alpha*g_0)(t)=\int_0^t k_\alpha(s)g_0(t-s)ds$, monotonicity follows
directly by comparing the common integration interval and using
$k_\alpha,g_0\ge0$. Hence
$G_0=g_0+\mu_0\Gamma(1-\alpha)^{-1}k_\alpha*g_0$ is nondecreasing as well.
\end{remark}

The reduced-form system above is the target of the microscopic construction
developed in the next sections. After introducing the marked Hawkes order-flow
model and the rescaled microscopic price, variance proxy, and common jump
measure, we state the main convergence theorem in
Section~\ref{sec:main-convergence-theorem}.

\section{Microscopic marked Hawkes model}
\label{sec:model}

\subsection{Model overview and economic mechanism}
\label{sec:model-overview}

For each scaling parameter $T$, microscopic time runs on $[0,T]$, while
$t\in[0,1]$ denotes macroscopic time through $t\mapsto tT$. Accepted events
share the endogenous activity process $\lambda^T$ and are classified as either
regular signed quote changes or rare events carrying a positive mark $z$.
Regular events generate the continuous price martingale and rough continuous
volatility, while a marked event produces a common price jump $-\Lambda z$ and
excites subsequent activity proportionally to $z$. This construction combines
the two-sign Hawkes order-flow mechanism with a common-jump volatility channel
\citep{bowsher2007modelling,bacry2013modelling,
eleuchfukasawarosenbaum2018,contkokholm2013,bondi2024}.

\begin{center}
\renewcommand{\arraystretch}{1.25}
\begin{tabularx}{\textwidth}{@{}>{\raggedright\arraybackslash}p{0.19\textwidth}
>{\raggedright\arraybackslash}X>{\raggedright\arraybackslash}X
>{\raggedright\arraybackslash}X@{}}
\toprule
Microscopic event & Immediate price effect & Effect on future activity &
Macroscopic contribution \\
\midrule
Positive regular event & Small upward move & Ordinary Hawkes excitation &
Continuous return noise and rough volatility \\
Negative regular event & Larger downward move & Stronger regular loading,
controlled by $\beta$ & Leverage and rough volatility \\
Rare marked event of size $z$ & Common downward jump $-\Lambda z$ & Amplified
excitation $c_Jr_Tz\,a_T\phi$ & Common price--variance jump and volatility
clustering \\
\bottomrule
\end{tabularx}
\end{center}

The principal parameters have the following roles. The pair $(a_T,\phi)$
controls endogeneity and the memory profile; $p_T$ is the probability that an
accepted event is a common shock and $F_J$ is its size distribution; $c_Jr_T$
controls the additional excitation caused by such a shock; $q$ balances the
arrival frequencies and price impacts of positive and negative regular events;
$\beta$ controls the stronger variance loading of negative regular events; and
$\hat h^T$ represents exogenous information flow and the initial activity
profile. We now give the precise construction at a fixed microscopic scale,
before imposing the asymptotic relations among these parameters.
\subsection{Poisson embedding and marked Hawkes intensity}
\label{sec:poisson-intensity}

Fix $\alpha\in(1/2,1)$ and $T\ge1$. For the finite-$T$ construction, take $a_T,p_T\in(0,1)$,
$r_T,c_J>0$, a non-negative probability density $\phi$ on $\mathbb R_+$,
a probability measure $F_J$ on $(0,\infty)$, and a non-negative locally
bounded measurable baseline $\hat h^T$. Fix also a price-impact ratio
$q\in(0,1]$ and an asymmetric variance loading $\beta>1$. Define the
sign-specific activity
weights
\begin{equation}
\label{eq:sign-activity-weights}
  \omega_+:=\frac{2}{1+q},
  \qquad
  \omega_-:=\frac{2q}{1+q}.
\end{equation}
Thus $\omega_++\omega_-=2$ and $q\omega_+=\omega_-$. The first identity
keeps total event activity unchanged; the second balances the predictable
drift of the raw signed price defined below.

For each sign $\pm$, let
\[
  \Pi_\pm^T(dt,du,di,dz)
\]
be independent Poisson random measures on
$(0,\infty)\times(0,\infty)\times\{0,1\}\times(0,\infty)$ with deterministic
intensity
\[
  dtdu\eta_T(di,dz),
\]
where
\[
  \eta_T(di,dz)
  :=(1-p_T)\delta_0(di)\delta_1(dz)
  +p_T\delta_1(di)F_J(dz).
\]
Here $u$ is the thinning variable, $i=0$ and $i=1$ indicate regular
and marked events, respectively, and $z$ records the shock size when
$i=1$; when $i=0$, $z$ is fixed at $1$. Given the microscopic intensity $\lambda^T$, define
\begin{equation}
\label{eq:embedded-mu}
  \mu^{T,\pm}(dt,di,dz)
  :=\int_0^\infty\1_{\{u\le \omega_\pm\lambda_{t-}^T\}}
  \Pi_\pm^T(dt,du,di,dz).
\end{equation}

The regular and jump-marked measures are
\[
  \mu_{\rm reg}^{T,\pm}(dt)
  :=\mu^{T,\pm}(dt,\{0\},(0,\infty)),
  \qquad
  \mu_J^{T,\pm}(dt,dz)
  :=\mu^{T,\pm}(dt,\{1\},dz).
\]
The common microscopic jump measure is
\begin{equation}
\label{eq:common-jump-measure}
  \mu_{J,{\rm com}}^T(dt,dz)
  :=\mu_J^{T,+}(dt,dz)+\mu_J^{T,-}(dt,dz).
\end{equation}

Set
\begin{equation}
\label{eq:regular-counts}
  N_t^{T,\pm,{\rm reg}}:=\mu_{\rm reg}^{T,\pm}([0,t]).
\end{equation}

\begin{remark}[Two-sign balance and common shocks]
\label{rem:two-sign-poisson}
The signs of $\Pi_+^T$ and $\Pi_-^T$ describe regular upward and downward
quote changes. When $q<1$, upward moves are more frequent but smaller, and the
identity $q\omega_+=\omega_-$ makes their uncentered signed flow a local
martingale. Negative moves have variance loading $\beta>1$, which produces the
leverage correlation in \eqref{eq:rho}. Marked events from both channels are
aggregated in $\mu_{J,{\rm com}}^T$ and therefore enter the limiting price and
variance equations through the same compensated jump measure.
\end{remark}

The predictable compensators of the regular measures
$\mu_{\rm reg}^{T,\pm}$ and of their common marked sum
$\mu_{J,{\rm com}}^T$ are, respectively,
\[
  \nu_{\rm reg}^{T,\pm}(dt)
  :=(1-p_T)\omega_\pm\lambda_{t-}^Tdt,
\]
and
\begin{equation}
\label{eq:common-jump-compensator}
  \nu_{J,{\rm com}}^T(dt,dz)
  :=p_T(\omega_++\omega_-)\lambda_{t-}^TF_J(dz)dt
  =2p_T\lambda_{t-}^TF_J(dz)dt.
\end{equation}
Both compensators are proportional to the same process $\lambda^T$.
This common dependence is the microscopic source of the state-dependent
limiting jump compensator $V_t\nu_J(dz)dt$; see
\eqref{eq:rescaled-compensator} for its rescaled form and
Proposition~\ref{prop:joint-tightness-identification} for its limiting
identification.

\paragraph{Marked Hawkes intensity.}

Let
\[
  \bar\mu_{\rm reg}^T(dt)
  :=\frac{\mu_{\rm reg}^{T,+}(dt)+\beta\mu_{\rm reg}^{T,-}(dt)}
  {\omega_++\beta\omega_-},
  \qquad \beta>1.
\]
The denominator makes the compensator of $\bar\mu_{\rm reg}^T$ equal to
$(1-p_T)\lambda_{t-}^Tdt$. The parameter $\beta$ controls the stronger
variance response to the negative side, while $q$ controls the activity--tick
tradeoff in \eqref{eq:sign-activity-weights}.
The factor $2c_J$ in the reproduction ratio is computed in the proof of
Lemma~\ref{lem:finite-T-subcriticality}.

Let $\hat h^T\colon[0,\infty)\to\mathbb{R}_+$ be a deterministic
non-negative locally bounded measurable function, called the
\emph{exogenous baseline intensity}.
It represents arrivals driven by external information flow or initial
system conditions.  Self-excited arrivals enter through the two convolution
terms in \eqref{eq:lambda-marked}.
The proofs use this baseline through the scaling condition in
Assumption \ref{ass:baseline}.

The common scalar microscopic intensity is
\begin{equation}
\label{eq:lambda-marked}
\lambda_t^T
:=
\hat h^T(t)
+
\int_{(0,t)}a_T\phi(t-s)\bar\mu_{\rm reg}^T(ds)
+
c_Jr_T\int_{(0,t)}a_T\phi(t-s)
\int_{(0,\infty)} z\mu_{J,{\rm com}}^T(ds,dz).
\end{equation}
The three terms on the right hand side of \eqref{eq:lambda-marked} are the baseline, the feedback from regular
events, and the additional feedback from marked events. A marked event of size $z$ contributes $c_Jr_Tz\,a_T\varphi$ to future activity. The renewal equation in Section \ref{sec:renewal-limits} propagates this primitive-kernel excitation through the same near-critical resolvent as the regular fluctuations.

\subsection{Rescaled microscopic observables}
\label{sec:microscopic-observables}

The objects whose joint macroscopic limit will be studied are the rescaled
variance proxy, the rescaled common-jump measure, and the microscopic log-price.
They are all defined here so that the full model is visible before the limit
analysis begins.

The variance proxy is
\begin{equation}
\label{eq:VT-definition}
  V_t^T:=\mathfrak c_T\lambda_{tT-}^T,
  \qquad
  \mathfrak c_T:=\frac{\theta(1-a_T)}{m_\star T^{\alpha-1}}.
\end{equation}
Here, $\theta>0$ and $m_\star>0$ are normalization constants. The constant $m_\star$ fixes the scale of the rescaled martingale array $\widehat{\mathbf N}_t^T$, and $\theta$ appears in the limiting bracket normalization; see Proposition~\ref{prop:regular-fclt}.

Define the rescaled common-jump measure and its compensator by
\begin{equation}
\label{eq:rescaled-common-jump-measure}
  \bar\mu_J^T(du,dz):=\mu_{J,{\rm com}}^T(Tdu,dz),
  \qquad
  \bar\nu_J^T(du,dz)
  :=2p_TT\lambda_{Tu-}^TF_J(dz)du.
\end{equation}

The microscopic continuous price component is
\begin{equation}
\label{eq:PTc}
P_t^{T,c}
:=
\sqrt{\frac{\theta}{2q}}
\sqrt{\frac{1-a_T}{m_\star T^\alpha}}
\left(qN_{tT}^{T,+,{\rm reg}}-N_{tT}^{T,-,{\rm reg}}\right)
-\frac12\int_0^tV_s^Tds.
\end{equation}
The jump price component is
\begin{equation}
\label{eq:PTJ}
P_t^{T,J}
:=
-\Lambda\int_0^t\int_{(0,\infty)} z
\left(\bar\mu_J^T(ds,dz)-\bar\nu_J^T(ds,dz)\right)
-
\int_0^t\int_{(0,\infty)}\left(e^{-\Lambda z}-1+\Lambda z\right)
\bar\nu_J^T(ds,dz).
\end{equation}
Set
\[
  P^T:=P^{T,c}+P^{T,J}.
\]

\begin{remark}[Sources of the model components]
\label{rem:price-inserted}
The marked Hawkes order flow generates the rough continuous variance
component, the martingale-balanced leverage correlation $\rho$, the
state-dependent limiting compensator \eqref{eq:limit-compensator}, and the
Volterra jump feedback.  The price definition includes the drift correction
$-\frac12\int_0^tV_s^Tds$ in \eqref{eq:PTc} and the compensator correction in
\eqref{eq:PTJ}, which normalize the limiting asset price $S=S_0e^P$ as a local
martingale under the pricing measure.  The loading $-\Lambda z$ parameterizes
the instantaneous price impact of a common shock, while the marked Hawkes
mechanism determines its state-dependent intensity and rough volatility
response.
\end{remark}

\subsection{Near-critical scaling and subcriticality}
\label{sec:scaling}

We now specify the asymptotic regime. The three scales below serve different
purposes: near-criticality produces a long Hawkes cascade, rare-event scaling
leaves only finitely many marked shocks on the macroscopic horizon, and the
amplification of marked excitation makes the volatility response of each such
shock non-degenerate.

For the fixed $\alpha\in(1/2,1)$, let $a_T\in(0,1)$ satisfy
\begin{equation}
\label{eq:near-critical}
  1-a_T\sim\mu_0T^{-\alpha},
  \qquad \mu_0>0.
\end{equation}

\begin{assumption}[Canonical heavy-tailed Hawkes kernel]
\label{ass:kernel-tail-regularity}
For the fixed $\alpha\in(1/2,1)$, we take
\begin{equation}
\label{eq:canonical-kernel}
  \phi(t)=\alpha(1+t)^{-1-\alpha},
  \qquad t\ge0.
\end{equation}
Then $\phi$ is a non-negative probability density on $\mathbb R_+$,
continuous, locally bounded, decreasing, and satisfies
\begin{equation}
\label{eq:kernel-tail-regularity}
  \phi(t)\sim \alpha t^{-1-\alpha},
  \qquad t\to\infty.
\end{equation}
\end{assumption}

Define the associated Hawkes resolvent kernel as in \cite{jaisson2016rough} by
\[
  \Psi_T:=\sum_{k\ge1}(a_T\phi)^{*k}.
\]
Since $a_T<1$ and $\|\phi\|_1=1$, this series is well defined and satisfies
\[
  \Psi_T=a_T\phi+\Psi_T*(a_T\phi).
\]
Its Mittag--Leffler scaling is stated and proved at the beginning of
Section~\ref{sec:renewal-limits}, where it first enters the limit analysis.

Let
\begin{equation}
\label{eq:pT}
  p_T:=(1-a_T)^2\bar p_T,
  \qquad \bar p_T\to\bar p>0.
\end{equation}
Let $F_J$ be a probability measure on $(0,\infty)$ with
\begin{equation}
\label{eq:FJ-moments}
  m_1:=\int_{(0,\infty)} zF_J(dz)<\infty,
  \qquad
  m_2:=\int_{(0,\infty)} z^2F_J(dz)<\infty.
\end{equation}
The amplified marked-excitation strength satisfies
\begin{equation}
\label{eq:rT}
  r_T\sim\frac{r_\star}{1-a_T},
  \qquad r_\star>0.
\end{equation}

The activity scaling determines the orders in these rates.
Since $V_t^T=\mathfrak c_T\lambda_{tT-}^T$ and
$\mathfrak c_T$ is of order $T^{1-2\alpha}$, an order-one variance level
corresponds to microscopic intensity of order $T^{2\alpha-1}$. Over a time
interval of microscopic length $T$, the total number of events is therefore of
order $T^{2\alpha}$. On the other hand,
$p_T\asymp(1-a_T)^2\asymp T^{-2\alpha}$, so the number of marked shocks remains
of order one. Thus the marked component survives as a finite-activity jump
measure. Finally, $r_T\asymp T^\alpha$ amplifies the excitation created by each
rare shock; after propagation through the same near-critical resolvent as the
regular flow, this produces a non-vanishing rough Volterra response.

\begin{assumption}[Marked near-critical subcriticality]
\label{ass:subcritical}
The rare but amplified jump-marked reproduction remains subcritical in the
near-critical regime:
\[
  2c_Jr_\star\bar p m_1<1,
  \qquad
  m_1=\int_{(0,\infty)}zF_J(dz).
\]
With the limiting constants defined later in \eqref{eq:nuJ} and
\eqref{eq:xi}, this condition becomes
\[
  \xi\int_{(0,\infty)}z\nu_J(dz)<\mu_0,
\]
which is equivalent, after the Riemann--Liouville rewriting, to the negative
Volterra drift condition $b<0$.
\end{assumption}

The asymptotic subcriticality condition must also hold at the microscopic
finite-$T$ level. The next lemma shows that the marked Hawkes branching system
is genuinely subcritical for all sufficiently large $T$, which ensures
existence and non-explosion of the Poisson-embedded construction.

\begin{lemma}[Finite-$T$ subcriticality]
\label{lem:finite-T-subcriticality}
Under Assumption \ref{ass:subcritical}, for all sufficiently large $T$,
\begin{equation}
\label{eq:finite-T-subcriticality}
  a_T\left((1-p_T)+2c_Jr_Tp_Tm_1\right)<1.
\end{equation}
Consequently the marked Hawkes system admits a unique non-explosive
Poisson-embedded solution.
\end{lemma}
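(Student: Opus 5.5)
The plan has two parts: compute the mean offspring number of a single accepted event, then use the resulting subcriticality to build the process.

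\emph{Reproduction ratio.} The first step is to compute the expected total number of direct offspring generated by one accepted event. By \eqref{eq:lambda-marked}, an event raises future intensity in one of two ways.
\begin{itemize}
\item A regular event of sign $\pm$ adds $a_T\phi/(\omega_++\beta\omega_-)$, weighted by $1$ or $\beta$.
\item A marked event of size $z$ adds $c_Jr_Tz\,a_T\phi$.
\end{itemize}
Each unit of intensity produces arrivals at total rate $\omega_++\omega_-=2$. A fraction $1-p_T$ of these arrivals is regular, split by sign in proportions $\omega_\pm/2$, and a fraction $p_T$ is marked with mark law $F_J$.

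Averaging over the type of the parent event gives the mean offspring number
\[
a_T\Big((1-p_T)\frac{\omega_++\beta\omega_-}{\omega_++\beta\omega_-}+2c_Jr_Tp_Tm_1\Big)=a_T\big((1-p_T)+2c_Jr_Tp_Tm_1\big),
\]
using $\|\phi\|_1=1$. Equivalently, take expectations in \eqref{eq:lambda-marked}: the mean intensity $\bar\lambda$ solves the renewal equation $\bar\lambda=\hat h^T+\rho_T\,a_T\phi*\bar\lambda$, with $\rho_T:=(1-p_T)+2c_Jr_Tp_Tm_1$. This is precisely the origin of the factor $2c_J$ in the reproduction ratio.

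\emph{Asymptotics.} By \eqref{eq:pT} and \eqref{eq:rT},
\[
2c_Jr_Tp_Tm_1\sim 2c_Jr_\star\bar p\,m_1(1-a_T),\qquad p_T=O((1-a_T)^2).
\]
Writing $\varepsilon_T:=1-a_T$, this gives
\[
a_T\rho_T=(1-\varepsilon_T)\big(1+\varepsilon_T(2c_Jr_\star\bar pm_1+o(1))\big)=1-\varepsilon_T\big(1-2c_Jr_\star\bar pm_1+o(1)\big).
\]
Assumption~\ref{ass:subcritical} makes $1-2c_Jr_\star\bar pm_1>0$, so $a_T\rho_T<1$ for all large $T$. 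This is \eqref{eq:finite-T-subcriticality}.

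\emph{Existence and non-explosion.} I would construct the process by the standard thinning recursion on the Poisson measures $\Pi_\pm^T$, event by event; \eqref{eq:lambda-marked} is causal, so each next jump time is determined from the past. Alternatively, one can use the cluster (Galton--Watson) representation: immigrants arrive at rate $\hat h^T$, and each event has offspring whose mean is the reproduction ratio above, which is $<1$.

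Non-explosion would follow from a mean bound. Let $\tau_n$ be the $n$-th event time. Stopped at $\tau_n$, the expected intensity satisfies the renewal inequality
\[
\E[\lambda_{t\wedge\tau_n}]\le \hat h^T(t)+a_T\rho_T\,(\phi*\E[\lambda_{\cdot\wedge\tau_n}])(t).
\]
Since $a_T\rho_T<1$ and $\hat h^T$ is locally bounded, the resolvent $\sum_k(a_T\rho_T\phi)^{*k}$ has $L^1$ mass $a_T\rho_T/(1-a_T\rho_T)$. This gives a bound on $\E[\lambda_{t\wedge\tau_n}]$ that is uniform in $n$, hence $\E[N_t]<\infty$ for every $t$. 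Letting $n\to\infty$ then shows $\tau_n\to\infty$ almost surely.

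Pathwise uniqueness comes from the inductive construction: two solutions driven by the same Poisson measures agree up to each $\tau_n$.

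\emph{Main obstacle.} The delicate point is the localization argument behind non-explosion. The marks $z$ are unbounded, and only $m_1<\infty$ is available. I would handle this by working with expectations only, which requires nothing beyond $m_1$, and by applying the compensator identity \eqref{eq:common-jump-compensator} to the stopped integrals, where it is justified because each stopped integral involves at most $n$ events.
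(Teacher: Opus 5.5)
Your proof is correct. The asymptotic step is the same expansion as in the paper, written in terms of $\varepsilon_T=1-a_T$ (using $r_Tp_T\sim r_\star\bar p\,\varepsilon_T$ and $p_T=o(\varepsilon_T)$) rather than after substituting $1-a_T\sim\mu_0T^{-\alpha}$.

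The real difference is in the second half. The paper passes to the immigration--birth representation with the two signs as types. It writes the mean offspring matrix $M_T=a_T((1-p_T)\chi+p_Tr_Tm_1\chi_J)$ and checks that $\chi$ and $\chi_J$ both have range spanned by $e=(\omega_+,\omega_-)^\top$. Hence $M_T$ is rank one, with spectral radius equal to the left side of \eqref{eq:finite-T-subcriticality}. From $\mathbf 1^\top(I-M_T)^{-1}e_j<\infty$ it concludes that clusters are finite, so only finitely many events occur on compacts.

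You work on the Poisson-embedded process directly. Both compensators are multiples of the single scalar $\lambda^T$, so taking expectations in \eqref{eq:lambda-marked} gives a scalar renewal inequality with factor $a_T\rho_T$. The rank-one structure that the paper verifies by hand is therefore built in. This is also what justifies your ``averaging over the parent type''; for a general multitype system that average would not give the spectral radius.

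What each route buys:
\begin{itemize}
\item Yours avoids matching the thinning construction with the cluster construction.
\item Yours directly gives the finiteness of $\E\int_0^t\lambda^T_s\,ds$, which Proposition~\ref{prop:first-moment} later takes from the cluster representation.
\item Yours is more robust: with a weight $e^{-\gamma t}$ and $\gamma$ large, the same inequality gives local non-explosion even without \eqref{eq:finite-T-subcriticality}. Subcriticality is then needed only for a bound uniform in time.
\item The paper's route gives the branching picture (almost surely finite clusters of finite mean size) behind the exact cluster simulations of Section~\ref{sec:numerical-illustrations}.
\end{itemize}

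One formulation needs fixing. Localize with $f_n(t):=\E[\lambda^T_t\mathbf 1_{\{t\le\tau_n\}}]$, the intensity of the stopped counting process, not with $\E[\lambda^T_{t\wedge\tau_n}]$. The latter evaluates the kernel at $t\wedge\tau_n-s$ instead of $t-s$. Since $\phi$ is decreasing, this goes the wrong way and does not give your convolution inequality.

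With $f_n$ the argument closes as follows:
\begin{itemize}
\item The bound $f_n\le\hat h^T+a_T\rho_T\,\phi*f_n$ follows from the compensator identity for nonnegative predictable integrands.
\item $f_n$ is locally bounded a priori: at most $n$ earlier events contribute, $\phi\le\alpha$, and the marks have mean $m_1$.
\item For the total event count $N$, $n\,\Q(\tau_n\le t)\le\E N_{t\wedge\tau_n}\le 2\int_0^t f_n(s)\,ds$, which is bounded uniformly in $n$ and finishes the proof.
\end{itemize}
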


\begin{proof}
Using \eqref{eq:near-critical}, \eqref{eq:pT}, and \eqref{eq:rT},
\[
2c_Jr_Tp_Tm_1
=
2c_Jr_\star\bar p\mu_0m_1T^{-\alpha}+o(T^{-\alpha}),
\qquad
p_T=\mathcal O(T^{-2\alpha}).
\]
Since $a_T=1-\mathcal O(T^{-\alpha})$, the products
$a_Tp_T=p_T+\mathcal O(T^{-3\alpha})$ and
$a_T(2c_Jr_Tp_Tm_1)=2c_Jr_Tp_Tm_1+\mathcal O(T^{-2\alpha})$ differ from their
leading terms only by $o(T^{-\alpha})$.
Therefore,
\[
\begin{aligned}
1-a_T\left((1-p_T)+2c_Jr_Tp_Tm_1\right)
&=
(1-a_T)+a_Tp_T-2a_Tc_Jr_Tp_Tm_1  \\
&=
\mu_0\left(1-2c_Jr_\star\bar p m_1\right)T^{-\alpha}
+o(T^{-\alpha}).
\end{aligned}
\]
The coefficient is positive by Assumption~\ref{ass:subcritical}, which proves
\eqref{eq:finite-T-subcriticality} for all large $T$.

For non-explosion, use the immigration--birth representation of linear marked
Hawkes processes \citep{hawkes1974cluster,bremaud1981}. Its mean offspring
matrix is
\begin{equation}
\label{eq:mean-offspring-matrix}
M_T=a_T\left((1-p_T)\chi+p_Tr_Tm_1\chi_J\right),
\end{equation}
where
\[
\chi:=\frac1{\omega_++\beta\omega_-}
\begin{pmatrix}
\omega_+&\beta\omega_+\\
\omega_-&\beta\omega_-
\end{pmatrix},
\qquad
\chi_J:=c_J
\begin{pmatrix}
\omega_+&\omega_+\\
\omega_-&\omega_-
\end{pmatrix}.
\]
For $e:=(\omega_+,\omega_-)^\top$, the identities
$\chi e=e$ and $\chi_Je=2c_Je$ show that $M_T$ is rank one with
\begin{equation}
\label{eq:offspring-spectral-radius}
\rho(M_T)=a_T\left((1-p_T)+2c_Jr_Tp_Tm_1\right).
\end{equation}

By \eqref{eq:finite-T-subcriticality}, this mean offspring number is strictly
smaller than one for all large $T$. If $Z_n^T\in\mathbb N^2$ denotes the
sign-count vector in generation $n$ of a cluster, then, conditional on a
single initial parent of type $j$,
\[
  \E\left[Z_n^T\mid Z_0^T=e_j\right]=M_T^ne_j.
\]
Consequently,
\[
  \E\left[
    \left.
    \sum_{n\ge0}\mathbf 1^\top Z_n^T
    \right|Z_0^T=e_j
  \right]
  =
  \mathbf 1^\top(I-M_T)^{-1}e_j
  <\infty.
\]
Every cluster is therefore finite almost surely and has finite expected size.
Since immigrants are generated by the locally integrable baseline $\hat h^T$,
only finitely many clusters are born on compact time intervals almost surely.
Thus the Poisson-embedded marked Hawkes system admits a unique non-explosive
solution.
This completes the proof.
\end{proof}

Choose and fix $T_0<\infty$ so that
\eqref{eq:finite-T-subcriticality} holds and the Poisson-embedded microscopic
model is well defined and subcritical for every $T\ge T_0$. All stochastic
asymptotic families below are indexed by $T\ge T_0$.

\section{Renewal representation and driver limits}
\label{sec:renewal-limits}

This section first derives the scalar renewal equation and then establishes
the separate tightness and convolution limits for its regular and marked
drivers. These estimates establish the product tightness used in
Proposition~\ref{prop:joint-tightness-identification}, where the martingale,
bracket, and compensator identities are placed in one common limiting
filtration.

We begin with the deterministic resolvent asymptotics that transmit the
microscopic heavy-tailed Hawkes memory to the macroscopic Volterra kernel.

\begin{proposition}[Near-critical resolvent scaling for the canonical kernel]
\label{prop:resolvent-scaling}
Under Assumption \ref{ass:kernel-tail-regularity} and
$1-a_T\sim\mu_0T^{-\alpha}$ with $\mu_0>0$, the canonical kernel satisfies
\begin{equation}
\label{eq:canonical-laplace-tail}
  1-\widehat\phi(\lambda)
  \sim
  \Gamma(1-\alpha)\lambda^\alpha,
  \qquad \lambda\downarrow0.
\end{equation}
Consequently, Lemma~\ref{lem:local-strong-renewal} applies to
$\Psi_T:=\sum_{k\ge1}(a_T\phi)^{*k}$, yielding
\begin{equation}
\label{eq:psi-scaling}
  T^{1-\alpha}\Psi_T(Tu)\to K(u),
  \qquad u>0,
\end{equation}
locally uniformly on compact subsets of $(0,\infty)$, where
\[
  \widehat K(\lambda)
  =
  \frac{1}{\mu_0+\Gamma(1-\alpha)\lambda^\alpha},
\]
and there exists $C<\infty$ such that
\begin{equation}
\label{eq:psi-bound-R}
  T^{1-\alpha}\Psi_T(Tu)
  \le Cu^{\alpha-1},
  \qquad u>0,\ T\ge1.
\end{equation}
\end{proposition}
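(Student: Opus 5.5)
The plan is to reduce everything to a Laplace-transform computation for the canonical kernel \eqref{eq:canonical-kernel} and then invoke Lemma~\ref{lem:local-strong-renewal} for the pointwise statements. I take that lemma's input to be exactly the small-$\lambda$ asymptotics \eqref{eq:canonical-laplace-tail} together with the regularity of $\phi$ recorded in Assumption~\ref{ass:kernel-tail-regularity}; that reading is an assumption on my part.

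\textbf{Step 1: the Laplace asymptotics \eqref{eq:canonical-laplace-tail}.} Write $\bar\Phi(t):=\int_t^\infty\phi(s)ds=(1+t)^{-\alpha}$. Integration by parts gives
\[
  1-\widehat\phi(\lambda)=\int_0^\infty(1-e^{-\lambda t})\phi(t)dt=\lambda\int_0^\infty e^{-\lambda t}(1+t)^{-\alpha}dt .
\]
Substitute $t=s/\lambda$. Since $(1+s/\lambda)^{-\alpha}\lambda^{-\alpha}\cdots$ rescales as $\lambda^{\alpha}(\lambda+s)^{-\alpha}\le \lambda^\alpha s^{-\alpha}$, which is integrable against $e^{-s}$ because $\alpha<1$, dominated convergence yields
\[
  \lambda^{1-\alpha}\int_0^\infty e^{-\lambda t}(1+t)^{-\alpha}dt\to\int_0^\infty e^{-s}s^{-\alpha}ds=\Gamma(1-\alpha).
\]
This is \eqref{eq:canonical-laplace-tail}.

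\textbf{Step 2: identification of $K$.} Since $a_T<1$, the resolvent series has transform
\[
  \widehat\Psi_T(\lambda)=\frac{a_T\widehat\phi(\lambda)}{1-a_T\widehat\phi(\lambda)} .
\]
For $\Psi_T^{(T)}(u):=T^{1-\alpha}\Psi_T(Tu)$ this gives
\[
  \widehat{\Psi_T^{(T)}}(\lambda)=T^{-\alpha}\widehat\Psi_T(\lambda/T)=\frac{a_T\widehat\phi(\lambda/T)}{T^{\alpha}(1-a_T)+a_T T^\alpha\bigl(1-\widehat\phi(\lambda/T)\bigr)} .
\]
By \eqref{eq:near-critical} and Step~1, this converges to $1/(\mu_0+\Gamma(1-\alpha)\lambda^\alpha)=\widehat K(\lambda)$ for every $\lambda>0$. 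This identifies the only possible limit and gives vague convergence of the measures $\Psi_T^{(T)}(u)du$.

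\textbf{Step 3: the uniform bound \eqref{eq:psi-bound-R}.} Since $a_T\le1$, we have termwise $\Psi_T\le U:=\sum_{k\ge1}\phi^{*k}$, the renewal density of the critical kernel. Because $\bar\Phi$ is regularly varying with index $-\alpha$ and $\alpha>1/2$, the strong renewal theorem (Garsia--Lamperti/Erickson) gives $U(t)\sim t^{\alpha-1}/(\Gamma(\alpha)\Gamma(1-\alpha))$. The density version needs a local argument, since $\phi$ is bounded and decreasing. Also $U$ is bounded on compacts, because $\phi\le\alpha$ and $\phi^{*k}\le\alpha\,\mathbb P(S_{k-1}\le t)$ with geometric decay. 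Together these give $U(t)\le C(1+t)^{\alpha-1}$ for all $t\ge0$. Hence
\[
  T^{1-\alpha}\Psi_T(Tu)\le C\,T^{1-\alpha}(1+Tu)^{\alpha-1}\le C u^{\alpha-1},
\]
uniformly in $T\ge1$.

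\textbf{Step 4: pointwise convergence (main obstacle).} Upgrading the transform convergence of Step~2 to locally uniform pointwise convergence \eqref{eq:psi-scaling} on compacts of $(0,\infty)$ is the delicate step. Convolution powers of a decreasing density need not be monotone, so monotone-density Tauberian arguments do not apply directly. This is precisely what Lemma~\ref{lem:local-strong-renewal} is for. My first attempt at verifying it would be as follows:
\begin{enumerate}
  \item Use the resolvent identity $\Psi_T=a_T\phi+\Psi_T*(a_T\phi)$ together with the bound of Step~3 to show that the family $u\mapsto T^{1-\alpha}\Psi_T(Tu)$ is equicontinuous on each $[\delta,M]$.
  \item Split the convolution at a small scale. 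The piece of $\phi$ near the origin is then smooth and bounded, while the tail piece contributes a uniformly small modulus.
  \item Apply Arzel\`a--Ascoli: every subsequence has a locally uniformly convergent further subsequence, and its limit has Laplace transform $\widehat K$ by Step~2 and the domination from Step~3.
  \item Conclude from uniqueness of the Laplace transform that the whole family converges locally uniformly to $K$ on $(0,\infty)$.
\end{enumerate}
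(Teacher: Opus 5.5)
If you are content to cite Lemma~\ref{lem:local-strong-renewal}, your argument is the paper's. The paper's proof of this proposition is exactly the Laplace asymptotics \eqref{eq:canonical-laplace-tail} followed by an appeal to that lemma. The paper gets the asymptotics from an Abelian theorem; your explicit-tail computation is essentially the one the paper records inside the proof of Lemma~\ref{lem:density-llt}, with dominated convergence in place of Karamata's theorem. The lemma needs no input beyond the explicit kernel and $1-a_T\sim\mu_0T^{-\alpha}$, so your reading of its hypotheses is unnecessary but harmless.

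Your Steps~2 and~3 are also sound. Step~3 replaces the paper's proof of part~(ii), which combines Lemma~\ref{lem:density-llt} with the local large-deviation bound of Lemma~\ref{lem:lld}, by the Garsia--Lamperti/Erickson strong renewal theorem. That is a legitimate alternative. The density version you say you still need is immediate from $U=\phi*\mathcal U$ with $\mathcal U:=\delta_0+U(y)\,dy$. Since $\phi$ is decreasing, this gives $U(x)\le\sum_{j\ge0}\phi(j)\,\mathcal U\bigl((x-j-1,x-j]\bigr)$, and the renewal theorem bounds each unit-interval mass.

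The genuine gap is in your own verification of the lemma, Step~4, items~1--2. Equicontinuity of $f_T(u):=T^{1-\alpha}\Psi_T(Tu)$ on $[\delta,M]$ cannot be extracted from the resolvent identity; items~3--4 would be fine given it.

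\begin{itemize}
\item Rescaled, the identity reads $f_T=a_TT^{1-\alpha}\phi(T\cdot)+a_T\,\phi_T*f_T$ with $\phi_T(w):=T\phi(Tw)$, a probability density collapsing to $\delta_0$.
\item Its translation modulus is $\int_{\mathbb R}|\phi_T(w+h)-\phi_T(w)|\,dw=2\bigl(1-(1+Th)^{-\alpha}\bigr)\to2$. So the identity asymptotically reads $f_T\approx f_T$ and transfers no regularity at macroscopic scale.
\item Splitting off the smooth near-origin piece of $\phi$ only regularizes at scale $1/T$ in $u$. A Lipschitz bound in microscopic time yields merely an $\mathcal O(Th)$ bound on the $h$-increments of $f_T$.
\end{itemize}

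The macroscopic regularity comes from aggregating the $k\asymp T^\alpha$ convolution powers. This is why the paper proves the lemma by splitting the series in $k$:
\begin{itemize}
\item in the bulk $\epsilon T^\alpha\le k\le MT^\alpha$, the uniform density local limit theorem $k^{1/\alpha}\phi^{*k}(k^{1/\alpha}\cdot)\to g_\alpha$ turns the sum into a Riemann sum for the subordination formula $K(u)=\int_0^\infty e^{-\mu_0v}v^{-1/\alpha}g_\alpha(uv^{-1/\alpha})\,dv$;
\item Lemma~\ref{lem:lld} controls $k<\epsilon T^\alpha$;
\item $\|\phi^{*k}\|_\infty\le Ck^{-1/\alpha}$ together with $a_T^k\le e^{-k(1-a_T)}$ controls $k>MT^\alpha$.
\end{itemize}

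For this particular kernel there is also a cheaper repair, and it shows your dismissal of monotone-density arguments is too quick. The powers $\phi^{*k}$ are not monotone, but $\Psi_T$ is. Indeed $\phi(t)=\frac{\alpha}{\Gamma(1+\alpha)}\int_0^\infty s^\alpha e^{-s}e^{-st}\,ds$ is completely monotone. Hence $1-a_T\widehat\phi$ is a complete Bernstein function, $1+\widehat\Psi_T=1/(1-a_T\widehat\phi)$ is a Stieltjes function, and $\Psi_T$ is completely monotone, so $f_T$ is nonincreasing. The repair then runs as follows.
\begin{itemize}
\item Your Step~2 and continuity of $K$ (which is itself completely monotone) give $f_T(u)\to K(u)$ through the sandwich $h^{-1}\int_u^{u+h}f_T\le f_T(u)\le h^{-1}\int_{u-h}^uf_T$.
\item Local uniformity follows because monotone functions converging pointwise to a continuous limit converge uniformly on compacts.
\item For $Tu\ge1$, the envelope follows from $uf_T(u)\le\int_0^uf_T\le e\,\widehat{f_T}(1/u)$, $1-a_T\widehat\phi\ge1-\widehat\phi$, and $1-\widehat\phi(\mu)\ge e^{-1}(1+\mu^{-1})^{-\alpha}$.
\item For $Tu<1$, it follows from $f_T(u)\le T^{1-\alpha}\Psi_T(0+)=a_T\alpha T^{1-\alpha}\le\alpha u^{\alpha-1}$.
\end{itemize}
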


\begin{proof}
The canonical kernel $\phi(t)=\alpha(1+t)^{-1-\alpha}$ is a probability
density, since $\int_0^\infty\alpha(1+t)^{-1-\alpha}dt=1$, and is regularly
varying at infinity with $\phi(t)\sim\alpha t^{-1-\alpha}$. Since
$\int_0^\infty\phi(t)dt=1$, we have
\[
  1-\widehat\phi(\lambda)
  =
  \int_0^\infty\left(1-e^{-\lambda t}\right)\phi(t)dt.
\]
By regular variation of $\phi$, the Abelian theorem for regularly varying
densities implies that
\[
  1-\widehat\phi(\lambda)
  \sim
  \alpha\lambda^\alpha\int_0^\infty(1-e^{-x})x^{-1-\alpha}dx,
  \qquad \lambda\downarrow0.
\]
Integration by parts yields
\[
  \int_0^\infty(1-e^{-x})x^{-1-\alpha}dx
  =
  \frac{1}{\alpha}\int_0^\infty e^{-x}x^{-\alpha}dx
  =
  \frac{\Gamma(1-\alpha)}{\alpha}.
\]
This proves \eqref{eq:canonical-laplace-tail}.

Lemma \ref{lem:local-strong-renewal} applies to the canonical kernel by
construction, and yields the local uniform convergence and fractional
envelope. This completes the proof.
\end{proof}

\subsection{Renewal representation}
\label{sec:renewal}

Recall the regular counts $N^{T,\pm,{\rm reg}}$ from
\eqref{eq:regular-counts}.  Define regular martingales
\[
  M_t^{T,\pm,{\rm reg}}
  :=N_t^{T,\pm,{\rm reg}}-
  \int_0^t(1-p_T)\omega_\pm\lambda_{s-}^Tds,
\]
and
\[
  \bar M_t^T
  :=\frac{M_t^{T,+,{\rm reg}}+\beta M_t^{T,-,{\rm reg}}}
  {\omega_++\beta\omega_-}.
\]
Then,
\begin{equation}
\label{eq:regular-measure-decomposition}
  \bar\mu_{\rm reg}^T(ds)=(1-p_T)\lambda_{s-}^Tds+d\bar M_s^T.
\end{equation}
Let
\[
  \phi_T:=a_T(1-p_T)\phi,
  \qquad
  \Psi_T^{\rm reg}:=\sum_{k\ge1}\phi_T^{*k}.
\]
The regular resolvent satisfies the renewal identity
\begin{equation}
\label{eq:psi-reg-renewal}
  \Psi_T^{\rm reg}=\phi_T+\phi_T*\Psi_T^{\rm reg}.
\end{equation}
Substituting the martingale decomposition
\eqref{eq:regular-measure-decomposition} into \eqref{eq:lambda-marked} gives
\begin{align}
\label{eq:lambda-pre-renewal}
\lambda_t^T
&=\hat h^T(t)
+\int_0^t\phi_T(t-s)\lambda_s^Tds
+\int_{(0,t)}a_T\phi(t-s)d\bar M_s^T\nonumber\\
&\quad
+c_Jr_T\int_{(0,t)}a_T\phi(t-s)
\int_{(0,\infty)} z\mu_{J,{\rm com}}^T(ds,dz).
\end{align}
Solving this renewal equation yields
\begin{align}
\label{eq:lambda-renewal}
\lambda_t^T
&=\hat h^T(t)+\int_0^t\Psi_T^{\rm reg}(t-s)
\hat h^T(s)ds
+\int_{(0,t)}H_T(t-s)d\bar M_s^T\nonumber\\
&\quad
+c_Jr_T\int_{(0,t)}H_T(t-s)\int_{(0,\infty)} z\mu_{J,{\rm com}}^T(ds,dz),
\end{align}
where
\begin{equation}
\label{eq:HT}
  H_T:=\left(\delta_0+\Psi_T^{\rm reg}\right)*a_T\phi.
\end{equation}

Solving the renewal equation filters the primitive-kernel marked excitation through the same near-critical resolvent as the ordinary order flow. The identity in the following lemma makes this reduction exact.

\begin{lemma}[Single-resolvent identity]
\label{lem:single-resolvent}
The kernel $H_T$ satisfies the exact identity
\begin{equation}
\label{eq:HT-identity}
  H_T=\frac{1}{1-p_T}\Psi_T^{\rm reg}.
\end{equation}
\end{lemma}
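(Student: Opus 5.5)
The identity is purely algebraic: it follows from the definition $\phi_T=a_T(1-p_T)\phi$ together with the series definition of $\Psi_T^{\rm reg}$. Since $p_T\in(0,1)$, we can write $a_T\phi=\phi_T/(1-p_T)$. Substituting this into \eqref{eq:HT} gives
\[
  H_T=\frac{1}{1-p_T}\left(\delta_0+\Psi_T^{\rm reg}\right)*\phi_T
  =\frac{1}{1-p_T}\left(\phi_T+\Psi_T^{\rm reg}*\phi_T\right).
\]
It then remains to identify $\phi_T+\Psi_T^{\rm reg}*\phi_T$ with $\Psi_T^{\rm reg}$.

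For this step we use the renewal identity \eqref{eq:psi-reg-renewal}, $\Psi_T^{\rm reg}=\phi_T+\phi_T*\Psi_T^{\rm reg}$, together with commutativity of convolution on $\mathbb R_+$. Commutativity gives $\Psi_T^{\rm reg}*\phi_T=\phi_T*\Psi_T^{\rm reg}$. Alternatively, one can argue directly from the series: the sum $\sum_{k\ge1}\phi_T^{*k}$ converges in $L^1(\mathbb R_+)$, because $\|\phi_T\|_1=a_T(1-p_T)<1$. Since all terms are nonnegative, Tonelli's theorem lets us convolve the series term by term, and
\[
\Psi_T^{\rm reg}*\phi_T=\sum_{k\ge2}\phi_T^{*k}.
\]
Adding $\phi_T$ recovers $\Psi_T^{\rm reg}$. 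Combining this with the display above yields \eqref{eq:HT-identity}.

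The only point requiring care is justifying the manipulation of the convolution series, namely interchanging the sum with the convolution and using commutativity. This is routine here. Everything is nonnegative, so Tonelli applies pointwise. The geometric bound $\|\phi_T^{*k}\|_1=(a_T(1-p_T))^k$ ensures that $\Psi_T^{\rm reg}\in L^1$, so $\Psi_T^{\rm reg}*\phi_T$ is well defined and finite almost everywhere. If a pointwise (rather than a.e.) identity is wanted, we note that $\phi$ is continuous and bounded, so each $\phi_T^{*k}$ with $k\ge1$ is continuous and bounded by $\|\phi\|_\infty(a_T(1-p_T))^{k-1}a_T(1-p_T)$. The series therefore converges uniformly, and both sides of \eqref{eq:HT-identity} are continuous functions that agree everywhere.
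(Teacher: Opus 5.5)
Your proof is correct and is essentially the paper's argument carried out in the time domain. The paper takes Laplace transforms of the renewal identity \eqref{eq:psi-reg-renewal}, obtains $\widehat H_T(\lambda)=a_T\widehat\phi(\lambda)/(1-a_T(1-p_T)\widehat\phi(\lambda))=(1-p_T)^{-1}\widehat\Psi_T^{\rm reg}(\lambda)$, and then inverts; you instead factor out $a_T\phi=\phi_T/(1-p_T)$ and use $(\delta_0+\Psi_T^{\rm reg})*\phi_T=\Psi_T^{\rm reg}$ directly, which avoids Laplace inversion and, through the uniformly convergent continuous series, gives the identity pointwise rather than only almost everywhere.
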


\begin{proof}
Taking Laplace transforms in the renewal identity
\eqref{eq:psi-reg-renewal} gives
\[
  1+\widehat\Psi_T^{\rm reg}(\lambda)
  =\frac{1}{1-a_T(1-p_T)\widehat\phi(\lambda)}.
\]
Hence,
\[
  \widehat H_T(\lambda)
  =\frac{a_T\widehat\phi(\lambda)}{1-a_T(1-p_T)\widehat\phi(\lambda)}
  =\frac{1}{1-p_T}\widehat\Psi_T^{\rm reg}(\lambda).
\]
The Laplace inversion proves the identity \eqref{eq:HT-identity}.
\end{proof}

The regular resolvent is generated by
$\widetilde a_T\phi$, where $\widetilde a_T:=a_T(1-p_T)$, rather than by
$a_T\phi$. Its critical gap nevertheless has the same leading order:
\[
  1-\widetilde a_T
=1-a_T(1-p_T)
=(1-a_T)+a_Tp_T
  =\mu_0T^{-\alpha}+o(T^{-\alpha}),
\]
because $p_T=\mathcal O(T^{-2\alpha})$. Therefore
Proposition~\ref{prop:resolvent-scaling}, applied with $\widetilde a_T$ in
place of $a_T$, gives the same Mittag--Leffler limit
\[
  \widehat K(\lambda)
  =\frac{1}{\mu_0+\Gamma(1-\alpha)\lambda^\alpha}
\]
for $T^{1-\alpha}\Psi_T^{\rm reg}(T\cdot)$.
Since $(1-p_T)^{-1}\to1$ as $T\to\infty$, the identity \eqref{eq:HT-identity} gives the same
limit for $T^{1-\alpha}H_T(T\cdot)$. Lemma~\ref{lem:resolvent-perturbation},
proved in Appendix~\ref{app:resolvent-perturbation}, also provides the uniform
fractional envelope used in all later stochastic-convolution and tightness
arguments.

Since an event is classified as regular with probability $1-p_T$, the regular
resolvent differs slightly from the unmarked near-critical resolvent. The next
lemma shows that this perturbation is negligible at the macroscopic scale.

\begin{lemma}[Regular resolvent perturbation]
\label{lem:resolvent-perturbation}
There exists $\epsilon_T\to0$, with $\epsilon_T=\mathcal O(T^{-\alpha})$, such that
\[
  T^{1-\alpha}\left|\Psi_T^{\rm reg}(Tu)-\Psi_T(Tu)\right|
  \le C\epsilon_Tu^{\alpha-1},
  \qquad u\in(0,1].
\]
Consequently, as $T\to\infty$, $T^{1-\alpha}H_T(Tu)\to K(u)$ uniformly on $[\varepsilon,1]$
for every $\varepsilon\in(0,1)$, with the uniform fractional envelope: there
exists a constant $C<\infty$, independent of $T$ and $u$, such that for all
sufficiently large $T$,
\[
  T^{1-\alpha}H_T(Tu)
  \le C u^{\alpha-1},
  \qquad u\in(0,1].
\]
\end{lemma}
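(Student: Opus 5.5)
The plan is to compare the two resolvents through their Laplace transforms and then transfer the comparison back to time. Write $\widetilde a_T=a_T(1-p_T)$. Then $\Psi_T^{\rm reg}=\sum_{k\ge1}\widetilde a_T^{\,k}\phi^{*k}$ and $\Psi_T=\sum_{k\ge1}a_T^k\phi^{*k}$. Both series have nonnegative terms and $\widetilde a_T\le a_T$, so pointwise
\[
0\le \Psi_T(t)-\Psi_T^{\rm reg}(t)=\sum_{k\ge1}\left(a_T^k-\widetilde a_T^{\,k}\right)\phi^{*k}(t).
\]
Rather than a uniform bound on $a_T^k-\widetilde a_T^{\,k}$, I would use the mean-value bound
\[
a_T^k-\widetilde a_T^{\,k}\le k\,a_T^{k-1}(a_T-\widetilde a_T)=k\,a_T^{k}p_T .
\]
This gives
\[
0\le\Psi_T-\Psi_T^{\rm reg}\le p_T\sum_{k\ge1}k a_T^k\phi^{*k}=p_T\,(\delta_0+\Psi_T)*\Psi_T .
\]
So the difference is controlled by a convolution of near-critical resolvents, multiplied by $p_T$.

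The second step is to estimate $(\Psi_T*\Psi_T)(Tu)$ and $\Psi_T(Tu)$ at macroscopic scale using the fractional envelope \eqref{eq:psi-bound-R}. Changing variables,
\[
T^{1-\alpha}(\Psi_T*\Psi_T)(Tu)=T\int_0^u \Psi_T(Tv)\,\Psi_T(T(u-v))\,dv\le C\,T^{\alpha}\int_0^u v^{\alpha-1}(u-v)^{\alpha-1}dv .
\]
By the beta integral \eqref{eq:beta-integral} the last integral equals $C' u^{2\alpha-1}\le C'u^{\alpha-1}$ for $u\le1$. Combining,
\[
T^{1-\alpha}\left|\Psi_T^{\rm reg}(Tu)-\Psi_T(Tu)\right|\le C p_T T^{\alpha}u^{\alpha-1}+Cp_T u^{\alpha-1},
\]
and since $p_T=\mathcal O(T^{-2\alpha})$ this yields $\epsilon_T=\mathcal O(T^{-\alpha})$.

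\textbf{Main obstacle.} The envelope for $\Psi_T*\Psi_T$ relies on \eqref{eq:psi-bound-R} being valid for all $u>0$ and uniformly in $T$, and this is exactly what Proposition~\ref{prop:resolvent-scaling} provides. The one thing to check is that the extra factor $T^\alpha$ coming from the self-convolution is absorbed by $p_T\asymp T^{-2\alpha}$. It is, because rare marks scale quadratically in the critical gap. If I wanted to avoid the convolution entirely, the fallback would be the Laplace identity
\[
\widehat\Psi_T-\widehat\Psi_T^{\rm reg}=\frac{(a_T-\widetilde a_T)\widehat\phi}{(1-a_T\widehat\phi)(1-\widetilde a_T\widehat\phi)},
\]
which is the transform of the same positive kernel and leads to the same estimate.

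The consequences then follow from results already established. By Lemma~\ref{lem:single-resolvent}, $T^{1-\alpha}H_T(Tu)=(1-p_T)^{-1}T^{1-\alpha}\Psi_T^{\rm reg}(Tu)$. This quantity is bounded above by $(1-p_T)^{-1}T^{1-\alpha}\Psi_T(Tu)\le 2Cu^{\alpha-1}$ for large $T$, using $\Psi_T^{\rm reg}\le\Psi_T$ and \eqref{eq:psi-bound-R}; this is the uniform envelope. For the convergence on $[\varepsilon,1]$, I would combine the perturbation bound, which is uniform there since $u^{\alpha-1}\le\varepsilon^{\alpha-1}$, with the locally uniform convergence \eqref{eq:psi-scaling} of $T^{1-\alpha}\Psi_T(T\cdot)$ to $K$ on the compact set $[\varepsilon,1]\subset(0,\infty)$, and with $(1-p_T)^{-1}\to1$.
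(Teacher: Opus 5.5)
Your proposal is correct and follows essentially the paper's route: dominate $\Psi_T-\Psi_T^{\rm reg}$ by $p_T$ times near-critical resolvent convolutions, then use the envelope \eqref{eq:psi-bound-R} and the beta integral to get $Cp_TT^\alpha u^{\alpha-1}$ with $\epsilon_T=p_TT^\alpha$, and finally read off the $H_T$ statements via Lemma~\ref{lem:single-resolvent}. The differences are minor. The paper starts from the two-resolvent identity $\Psi_{a_T}-\Psi_{\widetilde a_T}=(a_T-\widetilde a_T)\phi*(\delta_0+\Psi_{a_T})*(\delta_0+\Psi_{\widetilde a_T})$, which is your Laplace fallback, and then dominates each factor by $\Psi_{a_T}$. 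Your coefficientwise mean-value bound gives $p_T(\delta_0+\Psi_T)*\Psi_T$ directly, and your envelope for $H_T$ via $\Psi_T^{\rm reg}\le\Psi_T$ is a slightly shorter path.
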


\begin{proof}
The critical-gap calculation above identifies the limiting parameter as
$\mu_0$. See Appendix \ref{app:resolvent-perturbation} for the quantitative
perturbation estimate and uniform envelope.
\end{proof}

Multiplying \eqref{eq:lambda-renewal} by $\mathfrak c_T$ and evaluating at
$tT$, we get
\begin{equation}
\label{eq:VT-mild}
  V_t^T=g_0^T(t)+\mathcal M_t^T+\mathcal J_t^T,
\end{equation}
where
\begin{equation}
\label{eq:g0T}
  g_0^T(t):=\mathfrak c_T\left(\hat h^T(tT)
  +\int_0^{tT}\Psi_T^{\rm reg}(tT-s)\hat h^T(s)ds\right),
\end{equation}
\begin{equation}
\label{eq:MT}
  \mathcal M_t^T:=\mathfrak c_T\int_{(0,tT)}H_T(tT-s)d\bar M_s^T,
\end{equation}
\begin{equation}
\label{eq:JT-raw}
  \mathcal J_t^T:=\mathfrak c_Tc_Jr_T\int_{(0,tT)}H_T(tT-s)
  \int_{(0,\infty)} z\mu_{J,{\rm com}}^T(ds,dz).
\end{equation}

\begin{assumption}[Baseline scaling]
\label{ass:baseline}
There exists a continuous, nonnegative, nondecreasing deterministic function
$g_0:[0,1]\to\R_+$ such that, as $T\to\infty$,
\[
  \|g_0^T-g_0\|_{L^2([0,1])}\to0.
\]
We extend $g_0$ to $\mathbb R_+$ by setting $g_0(t):=g_0(1)$ for
$t>1$; this extension is continuous, nonnegative, and nondecreasing.
\end{assumption}

\begin{remark}[A canonical baseline scaling]
\label{rem:muhat_canonical}
A canonical family satisfying Assumption \ref{ass:baseline} is given, on the
microscopic horizon $[0,T]$, by
\[
  \hat h^T(t)
  =
  \frac{m_\star}{\theta}T^{\alpha-1}h(t/T),
  \qquad 0\le t\le T,
\]
with any non-negative locally bounded extension to $[0,\infty)$, where
$h\in C([0,1],\mathbb R_+)$ is fixed and nondecreasing. Then, as $T\to\infty$,
\[
  g_0^T(t)\to
  \mu_0\int_0^t K(t-u)h(u)du
\]
in $L^2([0,1])$. The limiting function is continuous, nonnegative, and
nondecreasing. Hence Assumption \ref{ass:baseline} is satisfied.
\end{remark}

\begin{remark}[Endogenous regular drift after fractional rewriting]
\label{rem:endogenous-regular-drift}
In the renewal mild representation, the Hawkes resolvent absorbs predictable
regular feedback. Fractional rewriting produces the effective drift, where the
near-critical gap contributes negatively and the predictable jump feedback
contributes positively. Their exact coefficients and the implication $b<0$
are derived in Corollary~\ref{cor:compensated-rewriting}.
\end{remark}

\paragraph{A priori estimates.}
Appendix~\ref{app:apriori} establishes uniform first and second moments for
$V^T$, followed by tightness in $L^2([0,1])$. These estimates support the
subsequential limit arguments below.

Recall the rescaled common-jump measure and its compensator from
\eqref{eq:rescaled-common-jump-measure}. From
\eqref{eq:common-jump-compensator} and \eqref{eq:VT-definition},
\begin{align}
\label{eq:rescaled-compensator}
  \bar\nu_J^T(du,dz)
 =2p_T\lambda_{Tu-}^TF_J(dz)Tdu
 =2\kappa_TV_u^TF_J(dz)du,
\end{align}
where
\[
  \kappa_T:=\frac{\bar p_Tm_\star(1-a_T)T^\alpha}{\theta}
  \to
  \kappa:=\frac{\bar p m_\star\mu_0}{\theta}.
\]
Set
\begin{equation}
\label{eq:nuJ}
  \nu_J(dz):=2\kappa F_J(dz).
\end{equation}
Since $F_J$ is a probability measure, the constructed limiting jump measure
$\nu_J(dz)=2\kappa F_J(dz)$ is finite. Moreover, $m_2<\infty$
implies $\int_{(0,\infty)} z^2\nu_J(dz)<\infty$. Thus, the limiting jump measure
produced by the microscopic construction satisfies the finiteness and
second-moment requirements of Assumption~\ref{ass:jump-integrability}. The
price-jump convention imposes $\Lambda\ge0$.

We next separate the raw marked-jump feedback into its predictable and martingale parts. The predictable part is responsible for the positive jump-feedback contribution to the effective Volterra drift.

\begin{lemma}[Predictable mean of the raw marked-jump feedback]
\label{lem:jump-mean}
The predictable mean of $\mathcal J^T$ is
\begin{equation}
\label{eq:JT-predictable}
  \mathcal J_t^{T,{\rm pred}}
  =\int_0^tK_T^{J,{\rm mean}}(t-u)V_u^Tdu,
\end{equation}
where
\begin{equation}
\label{eq:KJmean}
  K_T^{J,{\rm mean}}(u)
:=2p_Tc_Jr_TT\left(\int_{(0,\infty)} zF_J(dz)\right)H_T(Tu).
\end{equation}
Moreover
\begin{equation}
\label{eq:KJmean-limit}
  K_T^{J,{\rm mean}}(u)
  =\left(\xi\int_{(0,\infty)} z\nu_J(dz)+o(1)\right)T^{1-\alpha}H_T(Tu),
\end{equation}
where
\begin{equation}
\label{eq:xi}
  \xi:=\frac{\theta c_Jr_\star}{m_\star}.
\end{equation}
\end{lemma}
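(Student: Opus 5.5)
We replace the raw jump measure $\mu^T_{J,{\rm com}}$ in \eqref{eq:JT-raw} by its predictable compensator \eqref{eq:common-jump-compensator}, change variables to macroscopic time, and then track the scaling constants.

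\emph{Step 1: compensation.} Since $H_T$ is a deterministic kernel, for fixed $t$ the integrand $s\mapsto \mathfrak c_Tc_Jr_TH_T(tT-s)z\1_{\{s<tT\}}$ is deterministic, hence predictable. Local boundedness of $\phi$ and Lemma~\ref{lem:single-resolvent} make $H_T$ locally integrable, and the a priori moment bounds of Appendix~\ref{app:apriori} together with $m_1<\infty$ make $\E\int_0^{tT}H_T(tT-s)\lambda^T_{s}ds$ finite. So $\mathcal J^T_t$ splits as the integral against $\nu^T_{J,{\rm com}}$ plus the integral against $\mu^T_{J,{\rm com}}-\nu^T_{J,{\rm com}}$, and the latter has zero mean. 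The predictable part is
\[
\mathfrak c_Tc_Jr_T\int_0^{tT}H_T(tT-s)\,2p_T\lambda^T_{s-}m_1\,ds .
\]

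\emph{Step 2: macroscopic time.} Substitute $s=Tu$, so $ds=T\,du$, and use $\mathfrak c_T\lambda^T_{Tu-}=V^T_u$ from \eqref{eq:VT-definition}. This gives
\[
\int_0^t 2p_Tc_Jr_T T m_1 H_T(T(t-u))V^T_u\,du ,
\]
which is \eqref{eq:JT-predictable} with the kernel \eqref{eq:KJmean}.

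\emph{Step 3: the constant.} Write $2p_Tc_Jr_TTm_1=\gamma_T T^{1-\alpha}$ with $\gamma_T:=2p_Tc_Jr_Tm_1T^{\alpha}$. By \eqref{eq:pT}, \eqref{eq:rT} and \eqref{eq:near-critical},
\[
\gamma_T\sim 2(1-a_T)^2\bar p_T c_J\frac{r_\star}{1-a_T}m_1T^\alpha=2(1-a_T)T^{\alpha}\bar p_Tc_Jr_\star m_1\to 2\mu_0\bar p c_Jr_\star m_1 .
\]
On the other side,
\[
\xi\int z\,\nu_J(dz)=\frac{\theta c_Jr_\star}{m_\star}\cdot 2\kappa m_1=\frac{\theta c_Jr_\star}{m_\star}\cdot\frac{2\bar p m_\star\mu_0 m_1}{\theta}=2\mu_0\bar p c_Jr_\star m_1 .
\]
The two limits agree, so $\gamma_T=\xi\int z\,\nu_J(dz)+o(1)$, which is \eqref{eq:KJmean-limit}.

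Steps 2 and 3 are pure algebra. The only step needing care is the integrability in Step 1, which is what justifies calling the compensated remainder a true martingale rather than only a local one. We will cite the first-moment bound from Appendix~\ref{app:apriori} for this. If the lemma is read purely as identifying the predictable compensator pathwise, this integrability is not needed.
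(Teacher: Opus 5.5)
Your argument is the paper's: compensate $\mu^T_{J,{\rm com}}$ by \eqref{eq:common-jump-compensator}, substitute $s=Tu$ with $\mathfrak c_T\lambda^T_{Tu-}=V^T_u$, and check that $2p_Tc_Jr_TTm_1\,T^{\alpha-1}\to 2c_Jr_\star\bar p\mu_0m_1=\xi\int_{(0,\infty)}z\,\nu_J(dz)$. One caution on your optional zero-mean remark: justify the finite-$T$ integrability by the subcritical cluster representation in Lemma~\ref{lem:finite-T-subcriticality}, not by Proposition~\ref{prop:first-moment}, because that uniform bound is itself derived from the identity \eqref{eq:JT-predictable} you are proving.
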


\begin{proof}
Using \eqref{eq:JT-raw} and the compensator \eqref{eq:common-jump-compensator},
then changing variables $s=Tu$, gives
\[
\mathcal J_t^{T,{\rm pred}}
=\mathfrak c_Tc_Jr_T\int_0^{tT}H_T(tT-s)
2p_T\lambda_s^T\left(\int_{(0,\infty)} zF_J(dz)\right)ds.
\]
Since $\mathfrak c_T\lambda_{Tu}^T=V_u^T$, this is \eqref{eq:KJmean}. Also,
\[
2p_Tc_Jr_TT\int_{(0,\infty)} zF_J(dz)
\sim
2c_Jr_\star\bar p\mu_0\left(\int_{(0,\infty)} zF_J(dz)\right)T^{1-\alpha}.
\]
On the other hand,
\[
\xi\int_{(0,\infty)} z\nu_J(dz)
=\frac{\theta c_Jr_\star}{m_\star}\cdot
2\frac{\bar p m_\star\mu_0}{\theta}\int_{(0,\infty)} zF_J(dz)
=2c_Jr_\star\bar p\mu_0\int_{(0,\infty)} zF_J(dz).
\]
This proves the asymptotic relation.
\end{proof}

Let
\[
  \widetilde M_J^T(du,dz)
  :=\bar\mu_J^T(du,dz)-\bar\nu_J^T(du,dz).
\]
Then
\begin{equation}
\label{eq:JT-decomposition}
  \mathcal J_t^T=\mathcal J_t^{T,{\rm pred}}+\bar{\mathcal J}_t^T,
\end{equation}
where
\begin{equation}
\label{eq:JT-compensated}
\bar{\mathcal J}_t^T
:=\int_{(0,t)}K_J^T(t-u)\int_{(0,\infty)} z\widetilde M_J^T(du,dz),
\qquad
K_J^T(u):=\mathfrak c_Tc_Jr_TH_T(Tu).
\end{equation}
By \eqref{eq:rT}, \eqref{eq:HT-identity}, and Lemma
\ref{lem:resolvent-perturbation},
as $T\to\infty$,
\begin{equation}
\label{eq:KJ-limit}
  K_J^T(u)\to\xi K(u),
  \qquad
  \left|K_J^T(u)\right|\le Cu^{\alpha-1},
  \qquad u\in(0,1],
\end{equation}
where the convergence is locally uniform in $u\in(0,1]$, and $C<\infty$
is independent of $T$ and $u$, for all sufficiently large $T$.

\subsection{Regular martingale and jump limits}
\label{sec:driver-limits}

Define
\[
  \widehat{\mathbf N}_t^T
  :=\sqrt{\frac{1-a_T}{m_\star T^\alpha}}
  \begin{pmatrix}
  M_{tT}^{T,+,{\rm reg}}\\
  M_{tT}^{T,-,{\rm reg}}
\end{pmatrix}.
\]

The continuous Brownian noises in the limiting model come from the centered
regular order-flow fluctuations. The following martingale functional central
limit theorem provides the tightness and bracket estimates for the common
filtration argument below.

\begin{proposition}[Regular martingale FCLT]
\label{prop:regular-fclt}
The jumps of $\widehat{\mathbf N}^T$ vanish uniformly, and the family is
$C$-tight in $D([0,1],\mathbb R^2)$. Its predictable brackets are
\[
  \left\langle\widehat N^{T,i},\widehat N^{T,j}\right\rangle_t
  =
  \delta_{ij}\omega_i\frac{1-p_T}{\theta}\int_0^tV_s^Tds,
  \qquad i,j\in\{+,-\}.
\]
Consequently, along every subsequence on which $V^T\Rightarrow V$,
\[
  \left\langle\widehat N^{T,i},\widehat N^{T,j}\right\rangle
  \Rightarrow
  \delta_{ij}\omega_i\theta^{-1}\int_0^\cdot V_sds
\]
in $C([0,1])$.
\end{proposition}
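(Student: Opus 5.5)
The plan has three steps: compute the brackets exactly, show the jumps vanish, then obtain $C$-tightness and the bracket limit from the a priori bounds and continuity of integration on $L^2([0,1])$.

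\emph{Step 1: brackets.} The processes $N^{T,+,{\rm reg}}$ and $N^{T,-,{\rm reg}}$ are counting processes built from the independent Poisson measures $\Pi_+^T$ and $\Pi_-^T$. Hence they have no common jumps almost surely, so $[M^{T,+,{\rm reg}},M^{T,-,{\rm reg}}]=0$ and the cross predictable bracket vanishes. For each sign, the predictable quadratic variation equals the compensator:
\[
\langle M^{T,\pm,{\rm reg}}\rangle_{tT}=(1-p_T)\,\omega_\pm\int_0^{tT}\lambda_{s-}^T\,ds .
\]
Substituting $s=Tu$ and using $\lambda_{Tu-}^T=V_u^T/\mathfrak c_T$ gives
\[
\langle \widehat N^{T,\pm}\rangle_t
=\frac{1-a_T}{m_\star T^\alpha}\,(1-p_T)\,\omega_\pm\,\frac{T}{\mathfrak c_T}\int_0^t V_u^T\,du .
\]
By the definition of $\mathfrak c_T$, the prefactor is $\frac{1-a_T}{m_\star T^{\alpha}}\cdot\frac{m_\star T^{\alpha}}{\theta(1-a_T)}=\theta^{-1}$. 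This is exactly the stated formula.

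\emph{Step 2: jumps vanish.} Each jump of $\widehat N^{T,i}$ has size $\sqrt{(1-a_T)/(m_\star T^\alpha)}=O(T^{-\alpha})$, deterministically. So $\sup_t|\Delta\widehat N^T_t|\to0$ uniformly.

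\emph{Step 3: $C$-tightness.} I would use the standard criterion for locally square-integrable martingales (Jacod--Shiryaev VI.4.13 together with VI.3.26). If the jumps vanish uniformly and the predictable brackets are $C$-tight, then the martingales are $C$-tight. The brackets are continuous, nondecreasing, and bounded by $C\int_0^t V^T$.

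The main obstacle is the tightness of $\int_0^\cdot V_s^T\,ds$ in $C([0,1])$. For this I would invoke the uniform moment bounds of Appendix~\ref{app:apriori}, namely $\sup_T\sup_{t\le1}\E[(V_t^T)^2]<\infty$, or at least $\sup_T\E\|V^T\|_{L^2}^2<\infty$. By Cauchy--Schwarz,
\[
\Bigl|\int_s^t V^T\Bigr|\le |t-s|^{1/2}\,\|V^T\|_{L^2([0,1])}.
\]
This gives uniform Hölder-$1/2$ equicontinuity in probability, so Aldous/Arzelà--Ascoli tightness follows, together with $\E\sup_t|\widehat N^T_t|^2\le C$ by Doob.

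\emph{Bracket limit.} The map $v\mapsto\int_0^\cdot v$ is continuous from $L^2([0,1])$ into $C([0,1])$. Along a subsequence with $V^T\Rightarrow V$ in $L^2([0,1])$ (the topology of the a priori tightness), the continuous mapping theorem gives $\int_0^\cdot V^T\Rightarrow\int_0^\cdot V$ in $C([0,1])$. Since $(1-p_T)\to1$, Slutsky's lemma yields the stated limit of the brackets.
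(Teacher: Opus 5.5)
Your proposal is correct and follows the paper's proof essentially step for step: the exact bracket computation from the compensators and the normalization $\mathfrak c_T$, deterministic $O(T^{-\alpha})$ jump sizes, tightness from $C$-tight brackets via the Rebolledo/Jacod--Shiryaev criterion with vanishing jumps upgrading it to $C$-tightness, and the continuous map $v\mapsto\int_0^\cdot v$ from $L^2([0,1])$ to $C([0,1])$ for the bracket limit. One remark: the paper establishes only the integrated bound $\sup_T\E\|V^T\|_{L^2([0,1])}^2<\infty$ (Proposition~\ref{prop:second-moment}), not a pointwise-in-time second moment, and that integrated bound is exactly what your Cauchy--Schwarz modulus of continuity requires.
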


\begin{proof}
The jumps vanish because their sizes satisfy, as $T\to\infty$,
\[
  \sup_{t\le1}\left|\Delta \widehat N_t^{T,i}\right|
  =
  \sqrt{\frac{1-a_T}{m_\star T^\alpha}}
  \to0,
  \qquad i\in\{+,-\}.
\]
Thus, for every $\varepsilon>0$, the sum of the conditional second moments
of jumps larger than $\varepsilon$ is eventually zero; this is the Lindeberg
condition for the triangular martingale array.
Moreover,
\[
  \left\langle \widehat N^{T,\pm}\right\rangle_t
  =\omega_\pm\frac{1-p_T}{\theta}\int_0^tV_s^Tds,
  \qquad
  \left\langle \widehat N^{T,+},\widehat N^{T,-}\right\rangle_t=0.
\]
Since the map
$v\mapsto \left(\int_0^t v_sds\right)_{t\in[0,1]}$
is continuous from $L^2([0,1])$ to $C([0,1])$, we have, as $T\to\infty$
along the considered subsequence,
\[
  \left(\int_0^tV_s^Tds\right)_{t\in[0,1]}
  \Rightarrow
  \left(\int_0^tV_sds\right)_{t\in[0,1]}
\]
in $C([0,1])$. Hence, the diagonal brackets converge to
$\omega_\pm\theta^{-1}\int_0^\cdot V_sds$, while the cross brackets vanish
identically.
The bracket bounds and Rebolledo's criterion
\citep[Proposition~II.1]{rebolledo1980central}, equivalently Aldous'
criterion for this array, give tightness; the vanishing jumps make it
$C$-tight.
\end{proof}

Let
\[
  \mathbf v:=\frac{(1,\beta)^\top}
  {\sqrt{\omega_++\beta^2\omega_-}},
  \qquad
  \mathbf w_q:=\frac{(q,-1)^\top}
  {\sqrt{q^2\omega_++\omega_-}}
  =\frac{(q,-1)^\top}{\sqrt{2q}}.
\]
Define the normalized projections
\[
  B_t^T:=\sqrt\theta\mathbf v\cdot\widehat{\mathbf N}_t^T,
  \qquad
  W_t^T:=\sqrt\theta\mathbf w_q\cdot\widehat{\mathbf N}_t^T.
\]
The microscopic balance $q\omega_+=\omega_-$ gives the exact identity
\[
  qN_t^{T,+,{\rm reg}}-N_t^{T,-,{\rm reg}}
  =qM_t^{T,+,{\rm reg}}-M_t^{T,-,{\rm reg}}.
\]
Consequently, the continuous price component defined in \eqref{eq:PTc}
satisfies
\[
  P_t^{T,c}=W_t^T-\frac12\int_0^tV_s^Tds.
\]
Thus the exact balance makes the uncentered signed regular flow a local
martingale and yields $W^T$ directly from that flow.
Their predictable brackets satisfy
\begin{equation}
\label{eq:rho}
\begin{aligned}
\left\langle B^T\right\rangle_t
&=(1-p_T)\int_0^tV_s^Tds,\\
\left\langle W^T\right\rangle_t
&=(1-p_T)\int_0^tV_s^Tds,\\
\left\langle B^T,W^T\right\rangle_t
&=(1-p_T)\rho\int_0^tV_s^Tds,
\end{aligned}
\qquad
\rho
=\frac{q\omega_+-\beta\omega_-}
{\sqrt{(\omega_++\beta^2\omega_-)(q^2\omega_++\omega_-)}}
=-\frac{(\beta-1)\sqrt q}
{\sqrt{(1+q)(1+\beta^2q)}}.
\end{equation}
Since $\beta>1$ and $q>0$, the correlation is strictly negative.
Proposition~\ref{prop:coefficient-attainability} proves that varying
$(\beta,q)$ attains every value in $(-1,0)$.
For $q=1$, one recovers the symmetric benchmark
$\rho=(1-\beta)/\sqrt{2(1+\beta^2)}\in(-1/\sqrt2,0)$.

\begin{lemma}[Canonical predictable density]
\label{lem:canonical-predictable-density}
Let $V$ be a nonnegative $L^2([0,1])$ coordinate and set
\begin{equation}
\label{eq:canonical-left-density}
  A_t:=\int_0^tV_sds,
  \qquad
  V_t^p:=\liminf_{n\to\infty}
  n\left(A_t-A_{(t-1/n)^+}\right).
\end{equation}
Set $V_t^p=0$ where the extended-real liminf is infinite. In any usual
filtration to which $A$ is adapted, $V^p$ is predictable and satisfies
$V^p=V$ $dt$-almost everywhere; in particular, $dA_t=V_t^pdt$.
\end{lemma}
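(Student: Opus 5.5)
The plan has three steps: establish predictability of $V^p$ from the continuity and adaptedness of $A$, identify $V^p$ with $V$ almost everywhere via the Lebesgue differentiation theorem, and then read off $dA_t=V_t^p\,dt$.

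\emph{Predictability.} Since $V\ge0$ lies in $L^2([0,1])\subset L^1([0,1])$, the map $t\mapsto A_t$ is continuous and nondecreasing. For each fixed $n$, the process
\[
Y^n_t:=n\bigl(A_t-A_{(t-1/n)^+}\bigr)
\]
is continuous in $t$ and adapted: $A_t$ is $\mathcal F_t$-measurable, and $A_{(t-1/n)^+}$ is $\mathcal F_{(t-1/n)^+}\subset\mathcal F_t$-measurable. A continuous adapted process is predictable, so each $Y^n$ is predictable. Predictable processes form the measurable functions for the predictable $\sigma$-field on $\Omega\times[0,1]$, and this class is closed under countable $\liminf$ with values in $[0,\infty]$. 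Hence $\liminf_n Y^n$ is predictable as an extended-real process. The set $\{\liminf_n Y^n=\infty\}$ is predictable, so replacing the value $\infty$ by $0$ on it keeps the result predictable. This gives predictability of $V^p$, and it holds in any filtration to which $A$ is adapted. No usual-conditions subtlety arises, because only continuity and adaptedness are used.

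\emph{Almost-everywhere identification.} For each $\omega$, the Lebesgue differentiation theorem, applied to the one-sided left averages of the integrable function $V(\omega)$, gives
\[
n\bigl(A_t-A_{t-1/n}\bigr)=n\int_{t-1/n}^tV_s\,ds\longrightarrow V_t
\]
for Lebesgue-a.e.\ $t\in(0,1]$. The one-sided version follows because the left intervals $[t-1/n,t]$ have bounded eccentricity with respect to the centered intervals; equivalently, every Lebesgue point of $V$ is a point where these left averages converge. At such points the $\liminf$ is a finite limit equal to $V_t$, so the truncation to $0$ never acts there, and $V^p_t=V_t$ for a.e.\ $t$, for every $\omega$. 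Consequently
\[
\int_0^tV^p_s\,ds=\int_0^tV_s\,ds=A_t
\]
for all $t$, which is the statement $dA_t=V^p_t\,dt$.

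\emph{Main obstacle.} The only delicate point is measurability of the $\liminf$ as a process, rather than pathwise. I will handle it as above, by noting that each $Y^n$ is jointly continuous in $t$ and adapted, hence predictable, so that the countable $\liminf$ and the $\infty\mapsto0$ truncation both stay within the predictable $\sigma$-field. The left-sided choice of difference quotient is exactly what ensures each approximant uses only information available strictly up to time $t$.
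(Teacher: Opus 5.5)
Your proposal is correct and follows essentially the same argument as the paper. The paper also obtains predictability from the left difference quotients being continuous and adapted, and notes that their $\liminf$ and the $\infty\mapsto 0$ modification stay predictable. It then identifies $V^p=V$ $dt$-a.e.\ pathwise via the Lebesgue differentiation theorem for left averages, citing Folland's version for nicely shrinking sets, which is exactly your Lebesgue-point/bounded-eccentricity remark; $dA_t=V^p_t\,dt$ follows.
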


\begin{proof}
Each left difference quotient in \eqref{eq:canonical-left-density} is
continuous and adapted, hence predictable. Its pointwise $\liminf$ and the
finite modification above are predictable. Extend $V(\omega,\cdot)$ by zero
outside $[0,1]$. By the Lebesgue differentiation theorem
\cite[Theorem~3.21]{folland1999real}, for almost every $t$,
\[
  \lim_{h\downarrow0}\frac1h\int_{t-h}^tV_s\,ds=V_t.
\]
Taking $h=1/n$ gives $V^p=V$ $dt$-almost everywhere, pathwise.
\end{proof}

After the martingale FCLT, the regular noise still has to be passed through the singular Hawkes resolvent. The next proposition identifies the limiting Volterra stochastic convolution and the corresponding volatility-of-volatility constant.

\begin{proposition}[Regular Volterra martingale limit]
\label{prop:regular-volterra-limit}
Along every jointly convergent subsequence for which
\[
  \left(V^T,B^T\right)\Rightarrow(V,B^\infty),
  \qquad
  B_t^\infty=\int_0^t\sqrt{V_s^p}\,dB_s,
\]
\[
  \mathcal M^T
  \Rightarrow
  \left(
  \sqrt{c_B}\int_0^t K(t-s)\sqrt{V_s^p}dB_s
  \right)_{t\in[0,1]}
\]
in $L^2([0,1])$, jointly with the other convergent canonical coordinates,
where
\begin{equation}
\label{eq:cB}
  c_B
:=\frac{\theta\mu_0(\omega_++\beta^2\omega_-)}
  {m_\star(\omega_++\beta\omega_-)^2}.
\end{equation}
\end{proposition}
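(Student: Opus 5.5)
The proof reduces $\mathcal M^T$ to a stochastic convolution of the rescaled kernel against the projection $B^T$. The singular kernel is then handled by a regularization-and-continuous-mapping argument, with the errors controlled uniformly in $L^2$ through the predictable brackets.

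\emph{Step 1: algebraic rewriting.} I first express $\bar M^T$ through $\widehat{\mathbf N}^T$. By definition,
\[
\bar M^T_{uT}=\sqrt{\tfrac{m_\star T^\alpha}{1-a_T}}\,\frac{(1,\beta)\cdot\widehat{\mathbf N}^T_u}{\omega_++\beta\omega_-}
=\sqrt{\tfrac{m_\star T^\alpha}{1-a_T}}\,\frac{\sqrt{\omega_++\beta^2\omega_-}}{\sqrt\theta\,(\omega_++\beta\omega_-)}\,B^T_u .
\]
After the change of variables $s=uT$ in \eqref{eq:MT}, and using $\mathfrak c_T=\theta(1-a_T)T^{1-\alpha}/m_\star$, this gives
\[
\mathcal M^T_t=\gamma_T\int_{(0,t)}\mathcal K_T(t-u)\,dB^T_u,
\qquad
\mathcal K_T(u):=T^{1-\alpha}H_T(Tu),
\]
\[
\gamma_T:=\sqrt{\frac{\theta(1-a_T)T^\alpha}{m_\star}}\,
\frac{\sqrt{\omega_++\beta^2\omega_-}}{\omega_++\beta\omega_-}.
\]
Since $(1-a_T)T^\alpha\to\mu_0$ by \eqref{eq:near-critical}, we get $\gamma_T^2\to c_B$ with $c_B$ exactly as in \eqref{eq:cB}. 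It therefore suffices to prove that $\int_0^\cdot\mathcal K_T(\cdot-u)\,dB^T_u\Rightarrow\int_0^\cdot K(\cdot-u)\sqrt{V^p_u}\,dB_u$ in $L^2([0,1])$, jointly with the other coordinates.

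\emph{Step 2: removing the $T$-dependence and the singularity of the kernel.} By Lemma~\ref{lem:resolvent-perturbation}, $\mathcal K_T\to K$ pointwise on $(0,1]$ with the envelope $\mathcal K_T(u)\le Cu^{\alpha-1}$. The Laplace transform of $K$ gives the same bound for $K$, after possibly enlarging $C$. Because $\alpha>1/2$, this envelope lies in $L^2([0,1])$, so dominated convergence yields $\|\mathcal K_T-K\|_{L^2([0,1])}\to0$. For a fixed $\varepsilon>0$, let $K_\varepsilon$ be a $C^1$ kernel on $[0,1]$ with $\|K_\varepsilon-K\|_{L^2}\to0$ as $\varepsilon\to0$; for example, mollify $K\1_{[\varepsilon,1]}$. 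The It\^o isometry, with $\langle B^T\rangle=(1-p_T)\int V^T$, then gives
\[
\E\int_0^1\Big|\int_0^t(\mathcal K_T-K_\varepsilon)(t-u)\,dB^T_u\Big|^2dt
\le\sup_{u\le1}\E V^T_u\;\|\mathcal K_T-K_\varepsilon\|^2_{L^2([0,1])}.
\]
The right-hand side is small uniformly in large $T$ by the uniform first-moment bound of Appendix~\ref{app:apriori}. If only integrated moments are available, I would instead use Young's inequality in the form $\int_0^1\!\int_0^t k(t-u)^2V_u\,du\,dt\le\|k\|_{L^2}^2\int_0^1V_u\,du$, which needs only $\E\int_0^1 V^T$.

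\emph{Step 3: continuous mapping for the smooth kernel.} For the $C^1$ kernel $K_\varepsilon$, integration by parts gives
\[
\int_0^tK_\varepsilon(t-u)\,dB^T_u=K_\varepsilon(0)B^T_t+\int_0^tK_\varepsilon'(t-u)B^T_u\,du.
\]
The right-hand side is a continuous functional of $B^T$ from $D$ (indeed $C$, by $C$-tightness) into $C([0,1])\subset L^2$. Hence, along the subsequence, it converges jointly with the other coordinates to the same expression in $B^\infty$, which equals $\int_0^\cdot K_\varepsilon(\cdot-u)\,dB^\infty_u$. Since $B^\infty=\int\sqrt{V^p}\,dB$, with quadratic variation $\int V$ identified from Proposition~\ref{prop:regular-fclt} and Lemma~\ref{lem:canonical-predictable-density}, this limit is $\int K_\varepsilon(\cdot-u)\sqrt{V^p_u}\,dB_u$.

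\emph{Step 4: letting $\varepsilon\to0$.} In the limit, the It\^o isometry together with Fatou's lemma (so that $\E\int_0^1 V\le\liminf_T\E\int_0^1 V^T<\infty$) shows that $\int K_\varepsilon\sqrt{V^p}\,dB\to\int K\sqrt{V^p}\,dB$ in $L^2(\Omega;L^2[0,1])$. A standard approximation lemma for weak convergence (Billingsley, Theorem~3.2) then combines Steps 2–4. Joint convergence holds because every approximating functional is a continuous functional of the jointly converging vector.

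The main obstacle is Step 2: the uniform control near the kernel singularity together with the $T$-dependence of $\mathcal K_T$. This is exactly where the uniform fractional envelope of Lemma~\ref{lem:resolvent-perturbation}, square integrability from $\alpha>1/2$, and the a priori moment bounds are essential. A secondary subtlety is checking that $B^\infty$ is a martingale in the common limiting filtration, so that the limiting stochastic integral is well defined. Uniform integrability of $(B^T)^2$, via the bracket bound, handles this.
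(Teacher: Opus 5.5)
Your proposal is correct and follows essentially the paper's route. You use the same rewriting $\mathcal M^T_t=\gamma_T\int_0^t T^{1-\alpha}H_T(T(t-u))\,dB^T_u$ with $\gamma_T^2\to c_B$, and then the kernel convergence and envelope of Lemma~\ref{lem:resolvent-perturbation}, followed by smoothing, integration by parts, continuous mapping and the converging-together lemma; the paper packages these last steps as Lemma~\ref{lem:singular-martingale-convolution}.

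The only difference is cosmetic. The paper uses a diagonal cutoff, uniform kernel convergence away from zero, and sup-norm $C^1$ approximation, whereas you approximate the kernel directly in $L^2([0,1])$ by dominated convergence. Your version works equally well, because the It\^o-isometry error bound only requires $\|\mathcal K_T-K_\varepsilon\|_{L^2}^2\,\sup_T\E\int_0^1V^T_u\,du$. Note that this integrated bound, your stated fallback, is the one actually available from Proposition~\ref{prop:first-moment}; a uniform pointwise bound on $\E V^T_u$ is not established.
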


\begin{proof}
By the definition of $B^T$,
\[
  \bar M_{Tu}^T
  =\frac{\sqrt{\omega_++\beta^2\omega_-}}
  {\omega_++\beta\omega_-}
  \sqrt{\frac{m_\star T^\alpha}{1-a_T}}
  \frac{1}{\sqrt\theta}B_u^T.
\]
Therefore,
\[
\mathcal M_t^T
=\sqrt{c_{B,T}}\int_0^t T^{1-\alpha}H_T(T(t-u))dB_u^T,
\]
where
\[
  c_{B,T}
:=\frac{\theta(\omega_++\beta^2\omega_-)}
  {m_\star(\omega_++\beta\omega_-)^2}(1-a_T)T^\alpha
  \to c_B.
\]
The common-filtration local-martingale property and Brownian representation
of $B^\infty$ are supplied by
Proposition~\ref{prop:joint-tightness-identification}.
The kernel convergence and uniform envelope follow from Lemma
\ref{lem:resolvent-perturbation}. The singularity at the diagonal is handled by
Lemma~\ref{lem:singular-martingale-convolution}, applied jointly with the
remaining canonical coordinates.
\end{proof}

It remains to pass the marked common-jump forcing through the same singular resolvent. The next proposition gives the raw, uncompensated jump-feedback limit before the Riemann--Liouville rewriting.

\begin{proposition}[Raw jump-feedback limit]
\label{prop:raw-feedback-limit}
Along every jointly convergent subsequence for which
\[
  \left(V^T,\bar\mu_J^T\right)\Rightarrow(V,\mu)
\]
in $L^2([0,1])\times
\mathcal M_p([0,1]\times(0,\infty))$,
\[
  \mathcal J^T
  \Rightarrow
  \left(
  \xi\int_{[0,t)}K(t-s)\int_{(0,\infty)} z\mu(ds,dz)
  \right)_{t\in[0,1]}
\]
in $L^2([0,1])$.
\end{proposition}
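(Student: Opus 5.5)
The plan is to write $\mathcal J^T$ directly as a pathwise functional of the rescaled point measure and pass to the limit through continuity. From \eqref{eq:JT-raw}, with the substitution $s=Tu$ and $K_J^T(u)=\mathfrak c_Tc_Jr_TH_T(Tu)$,
\[
  \mathcal J_t^T=\int_{[0,t)}K_J^T(t-u)\int_{(0,\infty)}z\,\bar\mu_J^T(du,dz),
\]
that is, a finite sum $\sum_k K_J^T(t-\tau_k^T)z_k^T$ over the atoms $(\tau_k^T,z_k^T)$ of $\bar\mu_J^T$ in $[0,t)\times(0,\infty)$. By \eqref{eq:KJ-limit}, $K_J^T\to\xi K$ locally uniformly on $(0,1]$, and $|K_J^T(u)|\le Cu^{\alpha-1}$. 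Since $\alpha>1/2$, this envelope lies in $L^2([0,1])$. The map $\Phi_k(m)(t):=\int_{[0,t)}k(t-u)\int z\,m(du,dz)$ therefore sends a finite point measure to a finite sum of translated copies of $k$, and each copy belongs to $L^2([0,1])$.

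First I will establish deterministic continuity. Suppose $m_T\to m$ vaguely in $\mathcal M_p([0,1]\times(0,\infty))$, with finitely many atoms and $\int z\,m_T$ bounded, and suppose $k_T\to k$ locally uniformly on $(0,1]$ under the common envelope. Then $\Phi_{k_T}(m_T)\to\Phi_k(m)$ in $L^2([0,1])$. On a fixed compact region $[0,1]\times[\epsilon,R]$ of marks, the atoms converge one by one, and $\|k_T(\cdot-\tau_T)-k(\cdot-\tau)\|_{L^2}\to0$ by dominated convergence: split off a small neighbourhood of $\tau$, where both translates have small $L^2$ norm by the envelope, and use uniform convergence away from the diagonal. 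Atom positions at boundary times are harmless, because with probability one no atom of $\mu$ sits at the endpoints; here I will use the time-simplicity of Lemma~\ref{lem:marked-no-clustering} or absolute continuity of the compensator. The half-open interval $[0,t)$ versus $[0,t]$ matters only on a Lebesgue-null set of $t$.

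Second, I will control the truncation probabilistically. For marks in $(0,\epsilon)\cup(R,\infty)$, the $L^2$ norm of their contribution is bounded by $C\|u^{\alpha-1}\|_{L^2}\int_{[0,1]}\int_{z\notin[\epsilon,R]}z\,\bar\mu_J^T$. Its expectation equals $2\kappa_T\int_{z\notin[\epsilon,R]}zF_J(dz)\,\E\int_0^1V_u^Tdu$ by \eqref{eq:rescaled-compensator}. This is uniformly small by the first-moment bounds of Appendix~\ref{app:apriori} together with $m_1<\infty$. Marks in $(0,\epsilon)$ are therefore also harmless, even though they are not isolated in the vague topology. The number of atoms in $[0,1]\times[\epsilon,R]$ is tight for the same reason. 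I will then use Skorokhod representation on the jointly convergent subsequence, apply the continuity step on the truncated region, and conclude with a standard $\epsilon$–$R$ approximation argument, of the form "converging together". This gives convergence jointly with the other coordinates.

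The main obstacle is the first step, namely the near-diagonal singularity combined with moving atoms. The kernel blows up at $0$, so uniform convergence alone fails, and the translates $k_T(\cdot-\tau_T)$ have singularities at moving points. I will handle this with the $L^2$-integrable envelope: the tail $\int_0^\delta u^{2\alpha-2}du=\delta^{2\alpha-1}/(2\alpha-1)$ is uniformly small. I will also use $L^2$ continuity of translation for the fixed limit kernel $K\in L^2$.
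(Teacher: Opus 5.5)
Your argument is correct but takes a different route from the paper.

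\emph{The paper's route.} The proof in Appendix~\ref{app:raw-jump} first inserts a diagonal cutoff $\mathbf 1_{\{t-s>\varepsilon\}}$. This makes the kernel bounded, so vague convergence can be applied pointwise in $t$. It then approximates $z\mathbf 1_{\{\delta\le z\le R\}}$ by continuous mark functions. The diagonal, small-mark and large-mark errors are controlled in mean square by splitting each into two parts: a compensated part, handled by the random-measure isometry (which uses $m_2<\infty$), and a predictable part, handled by Young's inequality (which uses $\sup_T\E\|V^T\|_{L^2([0,1])}^2$).

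\emph{Your route.} You treat each atom as an $L^2([0,1])$-translate of the kernel. The singularity is then absorbed by dominated convergence under the $L^2$ envelope and $L^2$-continuity of translation, with no diagonal cutoff. You bound the discarded marks pathwise by Minkowski, $\|\cdot\|_{L^2}\le C\sum_k z_k$, whose expectation involves only $m_1$ and $\sup_T\E\int_0^1V_u^Tdu$. Because $\mathcal J^T$ is a fixed Borel functional of $\bar\mu_J^T$, the Skorokhod representation preserves it, so joint convergence with the other coordinates follows.

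\emph{Comparison.} Your route is shorter, needs weaker moment inputs, and does not use the compensator of $\mu$ on the limit side. The paper's route reuses the cutoff-plus-isometry template of Lemma~\ref{lem:singular-martingale-convolution} and of the tightness estimates, where no pathwise argument is available.

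Three points to tidy.
\begin{enumerate}
\item Choose $\epsilon,R$ among the levels with $\mu([0,1]\times\{\epsilon,R\})=0$ almost surely; all but countably many levels qualify. Otherwise the atoms on $[0,1]\times[\epsilon,R]$ need not converge one by one.
\item State the limit-side tail bound
\[
\E\int_{[0,1]\times(0,\infty)}z\mathbf 1_{\{z\notin[\epsilon,R]\}}\,\mu(du,dz)\le C\int_{\{z\notin[\epsilon,R]\}}zF_J(dz).
\]
Obtain it by approximating from below with nonnegative $C_c$ functions and applying Fatou on the representation space. It is needed both for $X^n\Rightarrow X$ in the converging-together step and to make the target well defined.
\item Drop the appeal to time simplicity and to the absence of endpoint atoms. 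The $L^2$-valued map is continuous in the atom positions even when atoms coincide or sit at $t=0$ or $t=1$. In any case, Lemma~\ref{lem:marked-no-clustering} does not concern endpoints.
\end{enumerate}
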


\begin{proof}
For $R>0$, define the truncated feedback
\[
\mathcal J_t^{T,R}
:=
\mathfrak c_Tc_Jr_T
\int_{(0,tT)}H_T(tT-s)
\int_{(0,\infty)} z\mathbf 1_{\{z\le R\}}\mu_{J,{\rm com}}^T(ds,dz).
\]
The convergence for fixed $R$ and the uniform tail estimate for the difference $\mathcal J^T-\mathcal J^{T,R}$ are given in Appendix \ref{app:raw-jump}.
\end{proof}

\subsection{Fractional rewriting}
\label{sec:fractional-rewriting}

The preceding limits yield the variance equation in the Mittag--Leffler mild representation. The limiting Hawkes resolvent density $K$ satisfies the fractional resolvent identity
\begin{equation}
\label{eq:fractional-resolvent-identity}
  K
  =\frac{1}{\Gamma(1-\alpha)}k_\alpha
  -\frac{\mu_0}{\Gamma(1-\alpha)}k_\alpha*K,
  \qquad
  k_\alpha(t)=\frac{t^{\alpha-1}}{\Gamma(\alpha)}.
\end{equation}
To distinguish the raw Mittag--Leffler mild driver from the compensated
canonical driver $Z$ in \eqref{eq:variance-driver-Z}, define
\begin{equation}
\label{eq:raw-mild-driver}
  d\widetilde Z_t
  =\sqrt{c_B}\sqrt{V_t^p}dB_t
  +\xi\int_{(0,\infty)} z\mu(dt,dz).
\end{equation}
Using the predictable compensator $V_t^p\nu_J(dz)dt$, the raw driver has the
componentwise decomposition
\begin{align}
\label{eq:raw-driver-decomposition}
d\widetilde Z_t
=\sqrt{c_B}\sqrt{V_t^p}\,dB_t 
+\xi\int_{(0,\infty)}z
\left(\mu(dt,dz)-V_t^p\nu_J(dz)dt\right)
+\xi V_t^p
\left(\int_{(0,\infty)}z\nu_J(dz)\right)dt.
\end{align}
The mild limiting equation is $V=g_0+K*d\widetilde Z$ in $L^2([0,1])$.
The following lemma justifies the convolution interchange required to apply
the fractional resolvent identity to its Brownian, compensated-jump, and
predictable-jump components in \eqref{eq:raw-driver-decomposition}.

\begin{lemma}[Associativity of limiting stochastic convolutions]
\label{lem:stochastic-convolution-associativity}
Let $\alpha>1/2$, $K$ be the Mittag--Leffler kernel satisfying the
fractional resolvent identity above, and suppose that
Assumption~\ref{ass:jump-integrability} holds. Let $V\ge0$, let $B$ be a
Brownian motion, and let $\mu$ have predictable compensator
$V_t^p\nu_J(dz)dt$ in their common filtration. Assume further that
\[
  \E\int_0^1V_sds<\infty,
  \qquad
  \E\|V\|_{L^2([0,1])}^2<\infty.
\]
Then the identity
\[
  k_\alpha*\left(K*d\widetilde Z\right)=(k_\alpha*K)*d\widetilde Z
\]
holds on $[0,1]$, componentwise for the three terms in
\eqref{eq:raw-driver-decomposition}.
\end{lemma}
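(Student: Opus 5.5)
The plan is a stochastic Fubini argument, run separately for each of the three components of $d\widetilde Z$ in \eqref{eq:raw-driver-decomposition}. For a deterministic kernel $k$ and a driver $dX$, write $(k*dX)(t)=\int_{[0,t)}k(t-s)\,dX_s$. The identity to prove says that for a.e.\ $t\in[0,1]$,
\[
  \int_0^t k_\alpha(t-r)\int_{[0,r)}K(r-s)\,dX_s\,dr
  =\int_{[0,t)}\left(\int_s^t k_\alpha(t-r)K(r-s)\,dr\right)dX_s ,
\]
where the inner integral on the right is exactly $(k_\alpha*K)(t-s)$. Both $k_\alpha$ and $K$ are nonnegative, and by \eqref{eq:psi-bound-R} the kernel $K$ is bounded by $Cu^{\alpha-1}$ on $(0,1]$. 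Consequently $k_\alpha*K\le C'u^{2\alpha-1}$, which is bounded on $[0,1]$. All kernels involved are therefore in $L^2(0,1)$, because $\alpha>1/2$.

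\textbf{Drift component.} Here $dX_s=\xi\bigl(\int z\nu_J(dz)\bigr)V_s^p\,ds$. This case is the deterministic Fubini--Tonelli theorem applied pathwise. The integrand $k_\alpha(t-r)K(r-s)V_s^p$ is nonnegative. Moreover $\int_0^1 (k_\alpha*K)(t-s)V_s\,ds<\infty$ almost surely, since $V\in L^1$ and $k_\alpha*K$ is bounded.

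\textbf{Compensated jump component.} Here $dX=\xi\int z\,\widetilde\mu(ds,dz)$. Since $\mu$ has finite activity (Assumption~\ref{ass:jump-integrability} together with $\E\int V<\infty$), I would split $\widetilde\mu=\mu-V^p\nu_J\,dt$. Each half is then a nonnegative pathwise integral: a finite sum of atoms for $\mu$, and a Lebesgue integral for the compensator part. Tonelli applies to each half separately. For the $\mu$ half, finiteness of $\E\int_0^1\!\int_0^t k_\alpha(t-r)K(r-s)\,dr\,z\,\mu(ds,dz)$ follows from the compensator bound $\E\int (k_\alpha*K)(t-s)\,zV_s\,\nu_J(dz)\,ds<\infty$, using $\int z\,\nu_J(dz)<\infty$. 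Subtracting the two halves gives the identity, and everything is finite. As a consistency check, the stochastic integral against $\widetilde\mu$ coincides with this pathwise difference, because $z$ is $\widetilde\mu$-integrable in the $L^1$ sense.

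\textbf{Brownian component.} This is the main obstacle. Here $dX=\sqrt{c_B}\sqrt{V^p}\,dB$, and pathwise arguments are unavailable, so I would invoke the stochastic Fubini theorem (e.g.\ Veraar's or Protter's version). Its sufficient condition is
\[
  \int_0^t k_\alpha(t-r)\left(\int_0^r K(r-s)^2V_s\,ds\right)^{1/2}dr<\infty \quad \text{a.s.}
\]
To obtain it, I would first apply Young's inequality. The kernel $K^2\le Cu^{2\alpha-2}$ is integrable and $V\in L^2$, so $\int_0^r K(r-s)^2V_s\,ds$ is in $L^2$ in $r$. Next I would apply Cauchy--Schwarz with $k_\alpha\in L^2(0,1)$ to the square root of that function. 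This shows the condition holds almost surely, and in fact in expectation under $\E\|V\|_{L^2}^2<\infty$. A subtlety is measurability: one needs a jointly measurable, progressively measurable version of $r\mapsto\int_0^r K(r-s)\sqrt{V_s^p}\,dB_s$. I would handle it by proving the identity first for the truncated kernels $K\mathbf 1_{\{u\ge1/n\}}$, which are bounded. For these, the classical stochastic Fubini theorem applies directly. One then passes to the limit $n\to\infty$ in $L^2(\Omega\times[0,1])$. On the left side the limit follows from the Itô isometry combined with Young's inequality. On the right side it follows from $k_\alpha*(K\mathbf 1_{\{u\ge1/n\}})\to k_\alpha*K$ in $L^2(0,1)$, together with the Itô isometry and $\E\int V<\infty$.

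Finally, summing the three componentwise identities gives the claim.
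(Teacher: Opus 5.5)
Your proof is correct and has the same skeleton as the paper's: componentwise Fubini on the three pieces of \eqref{eq:raw-driver-decomposition}, with deterministic Fubini for the predictable drift. The two martingale pieces, however, are handled differently.

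\textbf{Compensated jump term.} The paper applies stochastic Fubini through the random-measure isometry, which uses $\int z^2\nu_J(dz)<\infty$. You instead split $\widetilde\mu$ into $\mu$ and its compensator and apply Tonelli pathwise to each nonnegative half. This needs only $\int z\,\nu_J(dz)<\infty$, which the drift piece already requires, and avoids stochastic Fubini altogether. One point should be stated more precisely: identifying the $\widetilde\mu$-integral with the pathwise difference requires $L^1$-integrability against the compensator of the full integrands $K(r-s)z$ and $(k_\alpha*K)(t-s)z$, not just of $z$. This holds here because $K\in L^2(0,1)$, $k_\alpha*K$ is bounded, and $\E V\in L^2([0,1])$.

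\textbf{Brownian term.} The paper's displayed estimate is the square-integrability of the composite kernel $k_\alpha*K$ against $V_s\,ds$. That makes the right-hand side a well-defined $L^2$ stochastic integral, but by Minkowski's integral inequality it is implied by, and in general weaker than, the usual stochastic Fubini hypothesis on the iterated kernel $k_\alpha(t-r)K(r-s)$. Your Veraar-type check verifies that hypothesis. Your reduction to the bounded kernels $K\mathbf 1_{\{u\ge1/n\}}$ goes further: there the classical theorem applies with the finite measure $k_\alpha(t-r)\,dr$ and a bounded deterministic integrand. You then pass to an $L^2(\Omega\times[0,1])$ limit on both sides via It\^o's isometry and Young's inequality, which also settles the measurability issue explicitly.

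The paper's route is shorter and treats both martingale components uniformly. Yours is longer but self-contained, and uses a weaker mark-moment assumption for the jump part.
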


\begin{proof}
The bounds $k_\alpha(u),K(u)\le Cu^{\alpha-1}$ imply
\[
  (k_\alpha*K)(u)
  \le
  C\int_0^u(u-r)^{\alpha-1}r^{\alpha-1}dr
  =
  CB(\alpha,\alpha)u^{2\alpha-1}.
\]
For the Brownian component in \eqref{eq:raw-driver-decomposition}, It\^o's
isometry and Tonelli's theorem give
\[
\E\int_0^1\int_0^t
  |(k_\alpha*K)(t-s)|^2V_s\,ds\,dt
  \le
C\E\int_0^1V_s
\int_s^1(t-s)^{4\alpha-2}dt\,ds
<\infty,
\]
because $\E\int_0^1V_sds<\infty$ and $\alpha>1/2$. This is the
square-integrability condition for stochastic Fubini's theorem.

For the compensated-jump component in
\eqref{eq:raw-driver-decomposition}, the random-measure isometry gives the same
bound multiplied by
\[
  \int_{(0,\infty)}z^2\nu_J(dz)<\infty.
\]
Thus, stochastic Fubini's theorem also applies to that component.

For the predictable-jump component in
\eqref{eq:raw-driver-decomposition}, Young's inequality yields
\[
  \left\|
  \int_{(0,\infty)}z\nu_J(dz)\,(k_\alpha*K)*V
  \right\|_2
  \le
  \left(\int_{(0,\infty)}z\nu_J(dz)\right)
  \|k_\alpha*K\|_1\|V\|_2<\infty
\]
almost surely, and the assumed second-moment bound makes this estimate
integrable. Here the first mark moment is finite under the assumption
of the lemma. 
These three estimates verify the hypotheses of stochastic Fubini's theorem
for the two martingale components and deterministic Fubini's theorem for the
predictable component. This proves the associativity identity.
\end{proof}

We can now decompose the raw jump term into its compensated martingale part and predictable mean. Combining the latter with the regular fractional drift produces the effective Volterra drift coefficient.

\begin{corollary}[Compensated fractional rewriting]
\label{cor:compensated-rewriting}
Assume the hypotheses of Lemma~\ref{lem:stochastic-convolution-associativity}
and suppose that the raw mild equation $V=g_0+K*d\widetilde Z$ holds in
$L^2([0,1])$. Then
\[
\begin{aligned}
V_t
&=G_0(t)
+ b\int_0^t k_\alpha(t-s)V_sds
+ \sqrt{\bar c_B}\int_0^t k_\alpha(t-s)\sqrt{V_s^p}dB_s\\
&\quad
+ \bar\xi\int_{[0,t)}k_\alpha(t-s)
\int_{(0,\infty)} z\left(\mu(ds,dz)-V_s^p\nu_J(dz)ds\right),
\end{aligned}
\]
with effective parameters
\[
  \bar\xi=\frac{\xi}{\Gamma(1-\alpha)},
  \qquad
  \bar c_B=\frac{c_B}{\Gamma(1-\alpha)^2},
  \qquad
  b=\bar\xi\int_{(0,\infty)} z\nu_J(dz)-\frac{\mu_0}{\Gamma(1-\alpha)}.
\]
Within the microscopic subcritical regime, $b<0$.
\end{corollary}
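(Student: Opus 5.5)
The plan is to convolve the raw mild equation $V=g_0+K*d\widetilde Z$ with $k_\alpha$ and then use the fractional resolvent identity \eqref{eq:fractional-resolvent-identity} to eliminate $K$. Rewrite \eqref{eq:fractional-resolvent-identity} as
\[
  \Gamma(1-\alpha)K=k_\alpha-\mu_0\,k_\alpha*K .
\]
Apply this kernel identity to the driver, $K*d\widetilde Z$, componentwise over the three terms of \eqref{eq:raw-driver-decomposition}. Lemma~\ref{lem:stochastic-convolution-associativity} justifies $k_\alpha*(K*d\widetilde Z)=(k_\alpha*K)*d\widetilde Z$ for each component. This gives, in $L^2([0,1])$,
\[
  \Gamma(1-\alpha)\,(K*d\widetilde Z)
  = k_\alpha*d\widetilde Z-\mu_0\,k_\alpha*(K*d\widetilde Z)
  = k_\alpha*d\widetilde Z-\mu_0\,k_\alpha*(V-g_0).
\]
The last step uses $K*d\widetilde Z=V-g_0$, convolved deterministically with $k_\alpha$ as an $L^2$ function.

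Next, divide by $\Gamma(1-\alpha)$ and substitute into $V=g_0+K*d\widetilde Z$:
\[
  V=g_0+\frac{\mu_0}{\Gamma(1-\alpha)}k_\alpha*g_0
  -\frac{\mu_0}{\Gamma(1-\alpha)}k_\alpha*V
  +\frac{1}{\Gamma(1-\alpha)}k_\alpha*d\widetilde Z .
\]
The first two terms are exactly $G_0$ from Assumption~\ref{ass:wellposed}, restricted to $[0,1]$ by causality.

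Expand $k_\alpha*d\widetilde Z$ using \eqref{eq:raw-driver-decomposition}:
\begin{itemize}
  \item The Brownian part gives $\frac{\sqrt{c_B}}{\Gamma(1-\alpha)}\int k_\alpha\sqrt{V^p}\,dB$, so $\sqrt{\bar c_B}=\sqrt{c_B}/\Gamma(1-\alpha)$.
  \item The compensated jump part gives the $\bar\xi$ term with $\bar\xi=\xi/\Gamma(1-\alpha)$.
  \item The predictable part gives $\bar\xi\left(\int z\nu_J(dz)\right)k_\alpha*V^p$.
\end{itemize}
Since $V^p=V$ $dt$-a.e. by Lemma~\ref{lem:canonical-predictable-density}, we may replace $k_\alpha*V^p$ by $k_\alpha*V$. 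Collecting the two drift terms yields $b=\bar\xi\int z\nu_J(dz)-\mu_0/\Gamma(1-\alpha)$. The jump integral is over $[0,t)$, which is consistent with the raw form; with a.s. finitely many jumps, the point $s=t$ is immaterial in $L^2$.

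For the sign, note $\Gamma(1-\alpha)b=\xi\int z\nu_J(dz)-\mu_0$. Lemma~\ref{lem:jump-mean} computes $\xi\int z\nu_J(dz)=2c_Jr_\star\bar p\mu_0m_1$. Hence $\Gamma(1-\alpha)b=\mu_0(2c_Jr_\star\bar pm_1-1)$, which is $<0$ by Assumption~\ref{ass:subcritical}.

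The main obstacle is the first step: justifying the interchange for the stochastic parts. In particular, one must check that the identity holds as an equality of $L^2([0,1])$ elements, not merely for each fixed $t$. I would handle this by Lemma~\ref{lem:stochastic-convolution-associativity} for each fixed $t$ almost surely, then use Fubini in $t$ together with the integrability bounds from that lemma to pass to $dt\otimes\Q$-a.e. equality. The remaining issue is the compatibility of the deterministic term $k_\alpha*(V-g_0)$ with the stochastic convolution. This follows because $K*d\widetilde Z\in L^2$ and Young's inequality gives $k_\alpha*(\cdot)$ continuity on $L^2([0,1])$.
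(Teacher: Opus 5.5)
Your proposal is correct and follows essentially the same route as the paper. You apply the fractional resolvent identity to $K*d\widetilde Z$ through the associativity lemma, then substitute $K*d\widetilde Z=V-g_0$ to produce $G_0$ and the drift term $-\mu_0\Gamma(1-\alpha)^{-1}k_\alpha*V$. After that you split the raw jump term using the compensator and use $V^p=V$ $dt$-a.e.\ to read off $b$, $\bar c_B$ and $\bar\xi$. Your explicit sign check $\Gamma(1-\alpha)b=\mu_0(2c_Jr_\star\bar p m_1-1)<0$, via Lemma~\ref{lem:jump-mean} and Assumption~\ref{ass:subcritical}, just spells out what the paper leaves to the remark in that assumption.
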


\begin{proof}
Lemma~\ref{lem:stochastic-convolution-associativity} and the fractional
resolvent identity \eqref{eq:fractional-resolvent-identity} give
\begin{align*}
K*d\widetilde Z
&=\frac1{\Gamma(1-\alpha)}k_\alpha*d\widetilde Z
-\frac{\mu_0}{\Gamma(1-\alpha)}k_\alpha*(K*d\widetilde Z)\\
&=\frac1{\Gamma(1-\alpha)}k_\alpha*d\widetilde Z
-\frac{\mu_0}{\Gamma(1-\alpha)}k_\alpha*(V-g_0).
\end{align*}
Consequently,
\[
V=G_0-\frac{\mu_0}{\Gamma(1-\alpha)}k_\alpha*V
+\frac1{\Gamma(1-\alpha)}k_\alpha*d\widetilde Z,
\qquad
G_0=g_0+\frac{\mu_0}{\Gamma(1-\alpha)}k_\alpha*g_0.
\]
The compensator identity decomposes the raw jump term as
\begin{align*}
\bar\xi\int_{[0,t)}k_\alpha(t-s)\int_{(0,\infty)}z\mu(ds,dz)
&=\bar\xi\int_{[0,t)}k_\alpha(t-s)\int_{(0,\infty)}z
\left(\mu(ds,dz)-V_s^p\nu_J(dz)ds\right)\\
&\quad+\bar\xi\left(\int_{(0,\infty)}z\nu_J(dz)\right)
\int_0^tk_\alpha(t-s)V_s^pds.
\end{align*}
Since $V^p=V$ $dt$-almost everywhere, combining the last term with
$-\mu_0\Gamma(1-\alpha)^{-1}k_\alpha*V$ gives the stated coefficient $b$;
the Brownian and jump coefficients give $\bar c_B$ and $\bar\xi$.
\end{proof}

\section{Main convergence theorem and identification}
\label{sec:identification}

\subsection{Main convergence theorem}
\label{sec:main-convergence-theorem}

Before stating the main convergence result, we recall that the three rescaled
microscopic objects entering the theorem are the microscopic price process
$P^T$ defined in \eqref{eq:PTc}--\eqref{eq:PTJ}, the variance proxy $V^T$
defined in \eqref{eq:VT-definition}, and the rescaled common-jump measure
$\bar\mu_J^T$ defined in \eqref{eq:rescaled-common-jump-measure}.
The preceding renewal and driver limits identify the candidate macroscopic
dynamics and its coefficients. Combined with the joint noise identification
of Proposition~\ref{prop:joint-tightness-identification} and the complete weak
uniqueness of Proposition~\ref{prop:complete-weak-uniqueness}, they yield the
full-sequence convergence result below.

\begin{theorem}[Full microscopic convergence]
\label{thm:main}
Fix $\Lambda\ge0$. Assume
\[
  1-a_T\sim\mu_0T^{-\alpha},
  \qquad
  p_T=(1-a_T)^2\bar p_T,\quad \bar p_T\to\bar p>0,
  \qquad
  r_T\sim\frac{r_\star}{1-a_T},
\]
as $T\to\infty$, and further assume
\[
  \int_{(0,\infty)}zF_J(dz)<\infty,
  \qquad
  \int_{(0,\infty)}z^2F_J(dz)<\infty.
\]
Suppose that Assumptions
\ref{ass:kernel-tail-regularity},
\ref{ass:subcritical},
and \ref{ass:baseline} hold. Then the following holds as
$T\to\infty$.
\begin{enumerate}
\item The augmented complete system converges along the full sequence:
\[  \left(V^T,B^T,W^T,\bar\mu_J^T,P^T\right)
  \Rightarrow
  \left(\sigma^2,B^\infty,W^\infty,\mu,P\right)
\]
in
\[
  L^2([0,1])\times D([0,1],\R^2)\times
  \mathcal M_p([0,1]\times(0,\infty))\times D([0,1],\R).
\]
\item The limiting law is the unique complete canonical weak solution in the
sense of Definition~\ref{def:complete-canonical-solution}, with the coefficient
map specified in \eqref{eq:limiting-coefficient-map}. In particular, both
\[
(V^T,\bar\mu_J^T)\Rightarrow(\sigma^2,\mu)
\qquad\text{and}\qquad  \left(P^T,V^T,\bar\mu_J^T\right)
  \Rightarrow(P,\sigma^2,\mu)
\]
along the full sequence.
\end{enumerate}

The limiting constants are
\begin{equation}
\label{eq:limiting-coefficient-map}
\begin{gathered}
  \nu_J(dz)=2\frac{\bar p m_\star\mu_0}{\theta}F_J(dz),
  \qquad
  \xi=\frac{\theta c_Jr_\star}{m_\star},\\
  \bar\xi=\frac{\xi}{\Gamma(1-\alpha)},
  \qquad
  c_B=\frac{\theta\mu_0(\omega_++\beta^2\omega_-)}
  {m_\star(\omega_++\beta\omega_-)^2},
  \qquad
  \bar c_B=\frac{c_B}{\Gamma(1-\alpha)^2},\\
  b=\bar\xi\int_{(0,\infty)}z\nu_J(dz)-\frac{\mu_0}{\Gamma(1-\alpha)}<0,
  \qquad
  \rho=-\frac{(\beta-1)\sqrt q}
  {\sqrt{(1+q)(1+\beta^2q)}}.
\end{gathered}
\end{equation}
Here, the microscopic parameters $c_J>0$, $\beta>1$, $q\in(0,1]$,
$\theta>0$, and $m_\star>0$ are fixed constants of the marked Hawkes
order-flow model introduced in Section~\ref{sec:model}, and $\omega_\pm$ are
given by \eqref{eq:sign-activity-weights}. The induced forward
curve is
\[
  G_0(t)
  =
  g_0(t)
  +
  \frac{\mu_0}{\Gamma(1-\alpha)}
  \int_0^t
  \frac{(t-s)^{\alpha-1}}{\Gamma(\alpha)}
  g_0(s)ds.
\]
\end{theorem}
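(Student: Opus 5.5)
The argument is the standard tightness–identification–uniqueness scheme, carried out on the complete augmented vector $(V^T,B^T,W^T,\bar\mu_J^T,P^T)$.

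\emph{Tightness.} I will first prove tightness of the augmented system in the stated product space.
For $V^T$ I use the a priori moment bounds of Appendix~\ref{app:apriori}, which give tightness in $L^2([0,1])$.
For $(B^T,W^T)$, $C$-tightness follows from Proposition~\ref{prop:regular-fclt}, since both are fixed linear projections of $\widehat{\mathbf N}^T$.
For $\bar\mu_J^T$, tightness in $\mathcal M_p([0,1]\times(0,\infty))$ follows from the compensator identity \eqref{eq:rescaled-compensator}. It gives
\[
\E\,\bar\mu_J^T([0,1]\times(R,\infty))=2\kappa_T F_J((R,\infty))\,\E\int_0^1V^T_sds,
\]
which is uniformly bounded in $T$ and small for large $R$. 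Mass near $z=0$ is not an issue because $F_J$ is a fixed probability measure on $(0,\infty)$.
For $P^T=W^T-\tfrac12\int_0^\cdot V^T_sds+P^{T,J}$, the drift and compensator integrals are continuous functionals of $V^T$. The term $-\Lambda\int z\,\bar\mu_J^T$ is a finite sum of jumps, so tightness in $D$ reduces to the joint tightness of $(V^T,\bar\mu_J^T)$ together with first-moment control of the marks.

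\emph{Identification.} Along any jointly convergent subsequence I identify the limit as follows.
Proposition~\ref{prop:joint-tightness-identification} places the limit in a common filtration. In that filtration $B^\infty,W^\infty$ are continuous local martingales with brackets $\int V^p$, $\int V^p$ and $\rho\int V^p$; this uses \eqref{eq:rho} with $p_T\to0$. The same proposition shows that $\mu$ has compensator $V^p_t\nu_J(dz)dt$ with $\nu_J=2\kappa F_J$, and that the continuous parts are orthogonal to the jump part.
Lemma~\ref{lem:canonical-predictable-density} together with a Brownian representation gives $B^\infty=\int\sqrt{V^p}dB$ and $W^\infty=\int\sqrt{V^p}dW$ with $d\langle W,B\rangle=\rho\,dt$.
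Next I pass to the limit in the mild decomposition \eqref{eq:VT-mild}, using Assumption~\ref{ass:baseline} for $g_0^T$, Proposition~\ref{prop:regular-volterra-limit} for $\mathcal M^T$, and Proposition~\ref{prop:raw-feedback-limit} for $\mathcal J^T$. This yields the raw equation $V=g_0+K*d\widetilde Z$.
Corollary~\ref{cor:compensated-rewriting} then converts it to the fractional form \eqref{eq:limit-bondi-form} with the coefficient map \eqref{eq:limiting-coefficient-map}; the sign $b<0$ comes from Assumption~\ref{ass:subcritical}. The moment hypotheses of Lemma~\ref{lem:stochastic-convolution-associativity} pass to the limit by Fatou's lemma.
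Time simplicity (Lemma~\ref{lem:marked-no-clustering}) shows that $\mu$ is simple in time. Consequently the jumps of the driver $Z$ are exactly $\bar\xi z$ at the atoms of $\mu$, so the limit is canonical. The price limit $P$ is then the continuous image of the other coordinates and has the form of \eqref{eq:limit-bondi-form}.

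\emph{Uniqueness and full-sequence convergence.} The subsequential limit is a complete canonical weak solution in the sense of Definition~\ref{def:complete-canonical-solution}.
Proposition~\ref{prop:wellposed} and Lemma~\ref{lem:recover-jump-measure} determine the law of $(\sigma^2,\mu)$. Proposition~\ref{prop:complete-weak-uniqueness} then fixes the law of $(B^\infty,W^\infty,P)$: it recovers $B$ from the variance path and identifies the orthogonal component of $W$ by the localized Girsanov argument.
Since every subsequence has a further subsequence converging to this one law, the whole sequence converges.

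\emph{Main obstacle.} The delicate step is the joint convergence of $P^T$. The functional $\mu\mapsto\int_0^\cdot\int z\,\mu(ds,dz)$ from $\mathcal M_p$ to $D$ is continuous only at measures without simultaneous atoms and without mass at the time boundary, and it requires control of the large marks.
I would handle this in two steps. First, truncate the marks at level $R$, using the uniform bound $m_1\sup_T\E\int_0^1 V^T$ to control the tail. Second, apply the continuous mapping theorem at limit points, which are almost surely time-simple by Lemma~\ref{lem:marked-no-clustering}. The vanishing jumps of $W^T$ ensure that addition in the Skorokhod topology is continuous at these limit points.
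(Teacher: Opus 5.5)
Your plan follows the paper's proof of Theorem~\ref{thm:main} essentially step for step: tightness of the augmented vector, identification of an arbitrary subsequential limit, complete weak uniqueness, and the subsequence principle. The identification goes through Proposition~\ref{prop:joint-tightness-identification}, Proposition~\ref{prop:point-process-stability} and Theorem~\ref{thm:subseq}, and uniqueness through Proposition~\ref{prop:complete-weak-uniqueness}.

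Your handling of the compensator parts of $P^{T,J}$ is actually leaner than the paper's. Since $\bar\nu_J^T(ds,dz)=2\kappa_TV_s^TF_J(dz)ds$, these parts equal $2\kappa_Tm_1A(V^T)$ and $2\kappa_Tc_\Lambda A(V^T)$, where $A(v)=\int_0^\cdot v_sds$ and $c_\Lambda:=\int_{(0,\infty)}(e^{-\Lambda z}-1+\Lambda z)F_J(dz)\le\Lambda m_1$. They are therefore continuous functionals of $V^T$ into $C([0,1])$. The paper instead truncates these terms as well and uses the vague compensator convergence of Lemma~\ref{lem:limiting-compensator-identification}.

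\textbf{The step that fails as written.} This is the continuous-mapping argument for the uncompensated sum $\int_0^\cdot\int z\,\bar\mu_J^T(ds,dz)$ with only an upper cutoff $R$. The topology on $\mathcal M_p([0,1]\times(0,\infty))$ is vague, so it only controls compact subsets of $[0,1]\times(0,\infty)$ and does not see marks accumulating at $z=0$. For example, take $\mu_n$ with $n$ atoms at times $k/(n+1)$, each with mark $1/n$. Then $\mu_n$ converges vaguely to the zero measure, which is time-simple. Yet the paths $\int_0^t\int_{(0,R]}z\,\mu_n(ds,dz)$ converge uniformly to $t$, not to $0$. Your remark that mass near $z=0$ is harmless is correct for tightness of $\bar\mu_J^T$, but not for this functional.

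\textbf{The fix.} Use a two-sided window $[\delta,R]$, as in Lemma~\ref{lem:jump-price-functional-convergence}, with levels chosen so that $\mu([0,1]\times\{\delta,R\})=0$ almost surely. The small-mark remainder is nondecreasing in $t$ and its mean is bounded by $2\kappa_T\int_{(0,\delta)}zF_J(dz)\sup_T\E\int_0^1V_s^Tds\to0$. The limit satisfies the same bound with $\nu_J$ in place of $2\kappa_TF_J$. This is the same first-moment estimate you already use for the large marks. The $J_1$ continuity at the limit also needs the absence of atoms at $t\in\{0,1\}$, which is proved in Proposition~\ref{prop:joint-tightness-identification}.

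\textbf{Hypotheses to record.} Before invoking Propositions~\ref{prop:wellposed} and~\ref{prop:complete-weak-uniqueness}, state explicitly that their hypotheses hold:
\begin{itemize}
\item Assumption~\ref{ass:jump-integrability} follows from $\nu_J=2\kappa F_J$ and the moment conditions.
\item Assumption~\ref{ass:wellposed} follows from Assumption~\ref{ass:baseline} via Remark~\ref{rem:G0-monotonicity}.
\item $\bar\xi>0$, $\bar c_B>0$ and $\rho\in(-1,0)$ follow from the coefficient map.
\end{itemize}
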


\begin{proof}
We provide the proof in Section~\ref{subsec:proof_of_main_theorem}.
\end{proof}

The coefficient map in Theorem~\ref{thm:main} gives necessary restrictions
on the reduced-form parameters. Conditional on baseline realizability, the
next proposition characterizes the corresponding attainable coefficient
region.

\begin{proposition}[Attainability of the structural coefficient region]
\label{prop:coefficient-attainability}
Fix
\[
  \alpha\in(1/2,1),\qquad b<0,\qquad
  \bar c_B>0,\qquad \bar\xi>0,\qquad \rho\in(-1,0),
\]
and let $\nu_J$ be a nonzero finite measure on $(0,\infty)$ satisfying
\[
  \int_{(0,\infty)}z^2\nu_J(dz)<\infty.
\]
Then there exist positive microscopic parameters
$\mu_0,\bar p,c_J,r_\star,\theta,m_\star$, together with
$\beta>1$, $q\in(0,1]$, and a probability law $F_J$, satisfying
Assumption~\ref{ass:subcritical}, such that the coefficient map
\eqref{eq:limiting-coefficient-map} equals the prescribed
$(b,\bar c_B,\bar\xi,\nu_J,\rho)$.

For $\mu>0$, define the set of realizable forward curves
\[
  \mathfrak G_\mu
  :=
  \left\{
  g_0+\frac{\mu}{\Gamma(1-\alpha)}k_\alpha*g_0:
  g_0\text{ is realizable under Assumption~\ref{ass:baseline}}
  \right\}.
\]
For a reduced-form tuple
$y=(b,\bar c_B,\bar\xi,\nu_J,\rho)$ satisfying the conditions above, set
\[
  \mu_0(y)
  :=
  \Gamma(1-\alpha)
  \left(
  \bar\xi\int_{(0,\infty)}z\nu_J(dz)-b
  \right)>0.
\]
Then the full tuple $(y,G_0)$ is attained if and only if
$G_0\in\mathfrak G_{\mu_0(y)}$.
\end{proposition}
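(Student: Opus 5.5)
The plan is to invert the coefficient map \eqref{eq:limiting-coefficient-map} explicitly, one block at a time, using the free normalizations $\theta,m_\star$ and the scale of $F_J$.

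\emph{Correlation.} First fix $(\beta,q)$ from $\rho$. The map $\rho(\beta,q)=-(\beta-1)\sqrt q/\sqrt{(1+q)(1+\beta^2q)}$ is continuous on $(1,\infty)\times(0,1]$. At $q=1$ it increases in $\beta$ and tends to $-1/\sqrt2$ as $\beta\to\infty$. For fixed small $q$ and $\beta\to\infty$ the value tends to $-1/\sqrt{1+q}$, which approaches $-1$ as $q\downarrow0$. Along $\beta\downarrow1$ it tends to $0$. Connectedness of the parameter domain and the intermediate value theorem then give every $\rho\in(-1,0)$. To make this concrete, take $\beta=1/q$ and set $f(q):=\rho(1/q,q)=-(1-q)/\sqrt{(1+q)(1+1/q)\,q}=-(1-q)/(1+q)$. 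This is a continuous bijection from $(0,1)$ onto $(-1,0)$, so $q=(1+\rho)/(1-\rho)$ and $\beta=1/q$. This choice also gives $\beta>1$ and $q\in(0,1)$.

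\emph{Drift, jump and diffusion blocks.}
\begin{enumerate}
\item Set $\mu_0:=\mu_0(y)=\Gamma(1-\alpha)(\bar\xi\int z\nu_J(dz)-b)$, which is positive because $b<0$. The formula for $b$ then holds by construction.
\item Let $\|\nu_J\|:=\nu_J((0,\infty))>0$ and put $F_J:=\nu_J/\|\nu_J\|$. Its first and second moments are finite because $\nu_J$ is finite and has a finite second moment, the first via $z\le 1+z^2$.
\item Choose $\theta=m_\star=1$ and $\bar p:=\|\nu_J\|/(2\mu_0)$. Then $2\bar p m_\star\mu_0/\theta=\|\nu_J\|$, so $\nu_J=2\kappa F_J$.
\item Set $\xi=\Gamma(1-\alpha)\bar\xi$ and choose $c_J=1$, $r_\star=\xi$.
\item For $\bar c_B$ a normalization is needed, because $c_B=\theta\mu_0(\omega_++\beta^2\omega_-)/(m_\star(\omega_++\beta\omega_-)^2)$ is already pinned down by $\mu_0,\beta,q$ once $\theta/m_\star$ is fixed. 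So instead keep $m_\star=1$ and free $\theta$: choose $\theta$ so that $c_B=\Gamma(1-\alpha)^2\bar c_B$.
\item Then restore the other relations by setting $\bar p:=\|\nu_J\|\theta/(2\mu_0)$ and $r_\star:=\xi/(\theta c_J)$.
\end{enumerate}

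\emph{Subcriticality.} It remains to check Assumption~\ref{ass:subcritical}. It is equivalent to $\xi\int z\nu_J(dz)<\mu_0$, that is, to $\bar\xi\int z\nu_J<\mu_0/\Gamma(1-\alpha)=\bar\xi\int z\nu_J-b$, which is exactly $b<0$. All constructed parameters are positive.

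\emph{Forward curve.} For the second part, the `if' direction goes as follows. If $G_0\in\mathfrak G_{\mu_0(y)}$, pick a realizable $g_0$ representing it, with $\mu_0=\mu_0(y)$ as above. Its baseline family $\hat h^T$ satisfying Assumption~\ref{ass:baseline} is then used in the construction, and Theorem~\ref{thm:main} yields the forward curve $g_0+\mu_0\Gamma(1-\alpha)^{-1}k_\alpha*g_0=G_0$.

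For the `only if' direction, any microscopic realization attaining $y$ must have $\mu_0=\Gamma(1-\alpha)(\bar\xi\int z\nu_J-b)$. This is forced by inverting the $b$-formula, since $\bar\xi$ and $\nu_J$ are prescribed. Its forward curve is then, by Theorem~\ref{thm:main}, of the form $g_0+\mu_0(y)\Gamma(1-\alpha)^{-1}k_\alpha*g_0$ with $g_0$ realizable. Hence $G_0\in\mathfrak G_{\mu_0(y)}$.

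\emph{Main obstacle.} I expect the only delicate point to be the bookkeeping of the degrees of freedom $(\theta,m_\star,\bar p,c_J,r_\star)$. Fixing $\bar c_B$ forces $\theta/m_\star$, so $\nu_J$ and $\xi$ must be matched afterwards through $\bar p$ and $r_\star$, as above, rather than through $\theta$. The surjectivity of $\rho$ is handled by the explicit curve $\beta=1/q$.
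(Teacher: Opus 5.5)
Your proposal is correct and follows the paper's proof essentially step for step. It uses the same forced value $\mu_0=\mu_0(y)$, $F_J=\nu_J/\nu_J((0,\infty))$, $m_\star=1$, $\theta$ fixed by matching $c_B$, $\bar p=\nu_J((0,\infty))\,\theta/(2\mu_0)$, $c_J=1$, $r_\star=\xi/\theta$, and subcriticality reducing to $b<0$, while your explicit checks of the $F_J$ moments and of both directions for $G_0$ spell out what the paper leaves as ``follows directly from the construction''. The only difference is in the correlation block: you take the extremal point $\beta=(1-\rho)/(1+\rho)$, $q=1/\beta$, where $\rho(\beta,q)=-(\beta-1)/(\beta+1)$, to get a closed form, whereas the paper takes $\beta$ strictly above this threshold and locates $q\in(0,1/\beta)$ by the intermediate value theorem.
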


\begin{proof}
Write
\[
  \Gamma_\alpha:=\Gamma(1-\alpha),\qquad
  M_1:=\int_{(0,\infty)}z\nu_J(dz),\qquad
  \xi:=\Gamma_\alpha\bar\xi,
  \qquad c_B:=\Gamma_\alpha^2\bar c_B.
\]
Set
\[
  \mu_0:=\xi M_1-\Gamma_\alpha b>0.
\]
Choose
\[
  \beta>\frac{1+|\rho|}{1-|\rho|}.
\]
The continuous function
\[
  q\longmapsto
  -\frac{(\beta-1)\sqrt q}{\sqrt{(1+q)(1+\beta^2q)}}
\]
tends to $0$ as $q\downarrow0$ and equals
$-(\beta-1)/(\beta+1)<\rho$ at $q=1/\beta$. Hence, the intermediate value
theorem gives a $q\in(0,1/\beta)$ realizing the prescribed $\rho$. Define
$\omega_\pm$ by \eqref{eq:sign-activity-weights}.

\[
  f_{\beta,q}
  :=\frac{\omega_++\beta^2\omega_-}
  {(\omega_++\beta\omega_-)^2}
  =\frac{(1+q)(1+\beta^2q)}{2(1+\beta q)^2},
  \qquad
  m_\star:=1,
  \qquad
  \theta:=\frac{c_B}{\mu_0f_{\beta,q}}.
\]
If $N_J:=\nu_J((0,\infty))$, set
\[
  \kappa:=\frac{N_J}{2},
  \qquad F_J:=\frac{\nu_J}{N_J},
  \qquad \bar p:=\frac{\kappa\theta}{\mu_0},
  \qquad c_J:=1,
  \qquad r_\star:=\frac{\xi}{\theta}.
\]
These choices give
\[
  2\frac{\bar p m_\star\mu_0}{\theta}F_J=\nu_J,
  \qquad
  \frac{\theta c_Jr_\star}{m_\star}=\xi,
  \qquad
  \frac{\theta\mu_0(\omega_++\beta^2\omega_-)}
  {m_\star(\omega_++\beta\omega_-)^2}=c_B.
\]
Moreover, since $\mu_0=\xi M_1-\Gamma_\alpha b$ and $b<0$,
\[
  \xi M_1<\mu_0.
\]
This is precisely Assumption~\ref{ass:subcritical} in limiting-coefficient
form, and the drift identity recovers the prescribed $b$. All remaining
claims follow directly from the construction.
\end{proof}

\begin{remark}[Non-identifiability of microscopic parameters]
\label{rem:micro-parameter-nonidentifiability}
The coefficient map in Proposition~\ref{prop:coefficient-attainability}
identifies $\theta$ and $m_\star$ through the ratio $\theta/m_\star$, and
$c_J$ and $r_\star$ through the product $c_Jr_\star$. The individual
microscopic parameters therefore remain non-identifiable.
\end{remark}

\begin{definition}[Complete canonical weak solution]
\label{def:complete-canonical-solution}
Consider the canonical coordinate vector
\[
  (V,B^\infty,W^\infty,\mu,P)
\]
on
\[
L^2_+([0,1])
\times D([0,1],\mathbb R^2)
\times\mathcal M_p([0,1]\times(0,\infty))
\times D([0,1],\mathbb R),
\]
and fix the convergence-determining family $(h_\ell)$ used in
Proposition~\ref{prop:joint-tightness-identification}. The common filtration
is the right-continuous completion generated up to time $t$ by
$A_r,B_r^\infty,W_r^\infty$ and
$\int_{(0,r]}h_\ell(z)\mu(ds,dz)$ for $r\le t$ and $\ell\ge1$.
Set $A_t:=\int_0^tV_sds$ and let $V^p$ be the canonical predictable density
defined in Lemma~\ref{lem:canonical-predictable-density}. Thus
$V^p=V$ $dt$-almost everywhere and $dA_t=V_t^pdt$.
A probability law on this space is a complete canonical weak solution if:
\begin{enumerate}
\item $B^\infty,W^\infty$ are continuous local martingales in the common
filtration and
\[
\langle B^\infty\rangle=A,\qquad
\langle W^\infty\rangle=A,\qquad
\langle B^\infty,W^\infty\rangle=\rho A;
\]
both continuous martingales are strongly orthogonal to the purely
discontinuous local martingales generated by
$\mu-V_t^p\nu_J(dz)dt$;
\item $\mu$ is simple in time and has predictable compensator
$V_t^p\nu_J(dz)dt$;
\item with
\[
Z_t
=bA_t+\sqrt{\bar c_B}\,B_t^\infty
+\bar\xi
\int_{(0,t]\times(0,\infty)}
z\left(\mu(ds,dz)-V_s^p\nu_J(dz)ds\right),
\]
the canonical relation
\[
(\mathrm{id}\times S_{\bar\xi})_\#\mu=\mathcal J(Z)
\]
holds;
\item $V$ satisfies, in $L^2([0,1])$,
\[
\begin{aligned}
V_t
&=G_0(t)+b\int_0^tk_\alpha(t-s)V_sds
+\sqrt{\bar c_B}\int_0^tk_\alpha(t-s)dB_s^\infty\\
&\quad+\bar\xi\int_{[0,t)}k_\alpha(t-s)
\int_{(0,\infty)}z
\left(\mu(ds,dz)-V_s^p\nu_J(dz)ds\right);
\end{aligned}
\]
\item the price coordinate is the adapted path functional
\[
\begin{aligned}
P_t
&=W_t^\infty-\frac12A_t
-\Lambda\int_{(0,t]\times(0,\infty)}
z\left(\mu(ds,dz)-V_s^p\nu_J(dz)ds\right)\\
&\quad-\int_0^t\int_{(0,\infty)}
\left(e^{-\Lambda z}-1+\Lambda z\right)
V_s^p\nu_J(dz)ds.
\end{aligned}
\]
\end{enumerate}
Weak uniqueness means uniqueness of the probability law on this complete
canonical space. Since $P$ is an adapted functional of
$(A,W^\infty,\mu)$, adjoining it does not enlarge the canonical filtration.
\end{definition}

\begin{proposition}[Weak uniqueness of complete canonical solutions]
\label{prop:complete-weak-uniqueness}
Under Assumptions~\ref{ass:jump-integrability} and
\ref{ass:wellposed}, suppose that
\[
  \bar\xi>0,\qquad \bar c_B>0,
  \qquad \rho\in(-1,1).
\]
Then complete canonical weak solutions in the sense of
Definition~\ref{def:complete-canonical-solution} are weakly unique.
\end{proposition}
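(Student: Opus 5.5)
The plan is to reduce the complete law to the law of the variance/jump pair, which is already unique, and then identify the remaining price noise. Let $\mathbb P$ be any complete canonical weak solution. By items 2--4 of Definition~\ref{def:complete-canonical-solution}, under $\mathbb P$ the pair $(V,\mu)$ is a canonical weak solution of the reduced-form variance equation. Here $\sqrt{\bar c_B}B^\infty$ plays the role of $\sqrt{\bar c_B}\int\sigma\,dB$, and a Brownian motion $B$ with $B^\infty=\int\sqrt{V^p}dB$ is obtained on an extension via the usual representation, since $\langle B^\infty\rangle=A$. Proposition~\ref{prop:wellposed} then fixes the law of $V$. Lemma~\ref{lem:recover-jump-measure} gives a Borel map $\mathcal R$ with $Z=\mathcal R(V)$ and $\mu=(\mathrm{id}\times S_{\bar\xi}^{-1})_\#\mathcal J(\mathcal R(V))$, so the law of $(V,\mu)$ is unique. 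Next, since $\bar c_B>0$,
\[
\sqrt{\bar c_B}\,B^\infty=Z-bA-\bar\xi\int z\,(\mu-V^p\nu_J\,ds),
\]
and every term on the right is a Borel functional of $V$. So $B^\infty$ is also a measurable functional of $V$, and the law of $(V,B^\infty,\mu)$ is unique.

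It remains to identify $W^\infty$. Set
\[
W^\perp:=\frac{W^\infty-\rho B^\infty}{\sqrt{1-\rho^2}},
\]
which is well defined because $|\rho|<1$. Then $\langle W^\perp\rangle=A$ and $\langle W^\perp,B^\infty\rangle=0$, and by item 1 $W^\perp$ is strongly orthogonal to the compensated jump martingales. The goal is to show that, conditionally on the law of $(V,B^\infty,\mu)$, $W^\perp$ is a time-changed Brownian motion independent of $(V,B^\infty,\mu)$; equivalently, that $W^\perp=\int\sqrt{V^p}\,d\beta$ with $\beta$ an independent Brownian motion on an extension. For this I will use a martingale-problem/characteristic-function argument. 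For deterministic bounded $f$, consider
\[
\mathcal E_t:=\exp\Big(i\int_0^tf\,dW^\perp+\tfrac12\int_0^tf^2V^p\,ds\Big).
\]
This is a local martingale whose product with any bounded martingale of the $(B^\infty,\mu)$-generated filtration is again a local martingale, by orthogonality and the absence of common jumps ($W^\perp$ is continuous). Because $V^p$ may be unbounded, localize at $\tau_n=\inf\{t:A_t\ge n\}$. Equivalently, use a Girsanov change of measure with density $\mathcal E^{\tau_n}$ (the paper's "localized Girsanov" route), under which the laws of the $(V,B^\infty,\mu)$ coordinates are unchanged while $W^\perp$ receives a drift that is determined explicitly. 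This gives
\[
\E\big[\Phi(V,B^\infty,\mu)\,e^{i\int_0^{t\wedge\tau_n}f\,dW^\perp}\big]=\E\big[\Phi\,e^{-\frac12\int_0^{t\wedge\tau_n}f^2V^p\,ds}\big]
\]
for bounded functionals $\Phi$. Letting $n\to\infty$ identifies the joint finite-dimensional law of $(V,B^\infty,\mu,W^\perp)$ in terms of the already-unique law of $(V,B^\infty,\mu)$.

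The main obstacle is exactly this step. $\Phi$ depends on the whole path, not only on the past, so the martingale identity above is not immediate. To handle it I will use martingale representation for the filtration generated by $(B^\infty,\mu)$: under the unique law of $(V,\mu)$, with $V$ a functional of $(B^\infty,\mu)$, bounded $\Phi$ admits a representation as a stochastic integral against $B^\infty$ and $\tilde\mu$ because of the uniqueness (extremality) of the solution. Strong orthogonality then makes $\E[\Phi\,\cdot\,\mathcal E_{t\wedge\tau_n}]$ computable. If representation is too delicate, the fallback is to replace it by the Girsanov/Jacod argument: show that the martingale problem for the joint characteristics of $(B^\infty,\mu,W^\perp)$ given $V$ has a unique solution via the Jacod--Shiryaev uniqueness theorem for conditional processes with independent increments. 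Finally, $W^\infty=\rho B^\infty+\sqrt{1-\rho^2}\,W^\perp$, and $P$ is the adapted functional in item 5, so the complete law is unique.
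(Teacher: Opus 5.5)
Your reduction to the base coordinate $\mathsf X=(V,B^\infty,\mu)$ matches the paper's first steps: the law of $V$ from Proposition~\ref{prop:wellposed}, $Z$ and $\mu$ from Lemma~\ref{lem:recover-jump-measure}, then $B^\infty$ from the defining identity for $Z$. So does your orthogonal coordinate $W^\perp$. The gap is in the step you flag as the main obstacle. You name the localized Girsanov route, but you use the complex exponential $\mathcal E^{\tau_n}$ as a ``density'', which it is not. You also assert without reason that the law of $(V,B^\infty,\mu)$ is unchanged, and you then still regard whole-path functionals $\Phi$ as unresolved.

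The missing idea is the reason for that invariance, and it removes the obstacle completely. Use the real density $D_1^{\lambda,n}=\exp(\lambda M_{\tau_n}-\frac{\lambda^2}{2}Q_{\tau_n})$, $\lambda\in\mathbb R$, with $M=\int h\,dW^\perp$ for a deterministic step function $h$ and $Q=\langle M\rangle$. Since $\langle B^\infty,M\rangle=0$, and $M$ is continuous while the compensated jump martingales are purely discontinuous, the tilted measure leaves the base system intact:
\begin{itemize}
\item $B^\infty$ is still a continuous local martingale with bracket $A$;
\item $\mu$ still has compensator $V^p\nu_J(dz)dt$;
\item the Volterra and canonical relations persist, being pathwise or stochastic integrals, which are invariant under an equivalent change of measure.
\end{itemize}
Hence $\mathsf X$ is again a canonical solution. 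The base uniqueness you already have then yields $\E[F(\mathsf X)D_1^{\lambda,n}]=\E[F(\mathsf X)]$ for every bounded Borel $F$ of the entire path, i.e.\ $\E[D_1^{\lambda,n}\mid\mathsf X]=1$. Since $Q_{\tau_n}$ is $\mathsf X$-measurable, this is a conditional Gaussian moment generating function; take a common null set over rational $\lambda$ and extend by convexity. The conditional characteristic function, the limit $n\to\infty$ and Cram\'er--Wold then give the conditional law of $W^\perp$ given $\mathsf X$.

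Neither of your substitutes closes this step as written. The predictable-representation claim ``because of the uniqueness (extremality) of the solution'' does not follow from the Jacod--Shiryaev extremality theorem off the shelf. The martingale problem determined by the characteristics of $(B^\infty,\mu)$ alone is far from unique: a Brownian motion with an independent Poisson measure of intensity $\nu_J$, i.e.\ $V\equiv1$, satisfies the bracket and compensator conditions. Uniqueness comes only from the Volterra constraint. To obtain extremality you must show that any solution $P'\ll P$ inherits the almost-sure Volterra and canonical relations and therefore equals $P$ on $\sigma(\mathsf X)$. That is the same change-of-measure argument as above, and you must still check that the represented martingales remain martingales in the common filtration.

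The fallback through conditional processes with independent increments is circular. It needs the characteristics of $W^\perp$ in the filtration initially enlarged by $\sigma(\mathsf X)$, that is, that $W^\perp$ stays a local martingale with bracket $A$ after conditioning on the whole base path. That is exactly what has to be proved.
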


\begin{proof}
We first identify the part of the complete system that drives the variance.
By Lemma~\ref{lem:recover-jump-measure}, the variance path determines the
c\`adl\`ag semimartingale driver and the common-jump measure through the
Borel maps
\[
  Z=\mathcal R(V),
  \qquad
  \mu
  =
  \left(\mathrm{id}\times S_{\bar\xi}^{-1}\right)_\#
  \mathcal J(\mathcal R(V)).
\]
By Assumption~\ref{ass:jump-integrability}, the compensated first-mark
integral in \eqref{eq:recover-B-from-V} is well-defined; as a path map it
is obtained from bounded mark truncations and is Borel.
The defining identity for $Z$ then recovers
$B^\infty$ as
\begin{equation}
\label{eq:recover-B-from-V}
  B_t^\infty
  =
  \frac{1}{\sqrt{\bar c_B}}
  \left[
  Z_t-bA_t
  -\bar\xi\int_{(0,t]\times(0,\infty)}z
  \left(\mu(ds,dz)-V_s^p\nu_J(dz)ds\right)
  \right].
\end{equation}
Thus $(V,B^\infty,\mu)$ is a Borel functional of $V$. Proposition
\ref{prop:wellposed} and Lemma~\ref{lem:recover-jump-measure} imply that its
joint law is unique. Denote this base coordinate by
\[
  \mathsf X:=(V,B^\infty,\mu).
\]

We identify the orthogonal continuous price coordinate by defining
\begin{equation}
\label{eq:orthogonal-price-coordinate}
  C_t^\infty
  :=
  \frac{W_t^\infty-\rho B_t^\infty}{\sqrt{1-\rho^2}}.
\end{equation}
Then $C^\infty$ is a continuous local martingale in the common filtration and
\[
  \langle C^\infty\rangle=A,
  \qquad
  \langle C^\infty,B^\infty\rangle=0.
\]
It is also strongly orthogonal to the compensated jump martingales generated
by $\mu-V_s^p\nu_J(dz)ds$.

We identify its conditional law by a localized change of measure. Fix a
bounded deterministic step function $h$ on $[0,1]$ and let
\[
  M_t:=\int_0^t h(s)dC_s^\infty,
  \qquad
  Q_t:=\langle M\rangle_t=\int_0^t (h(s))^2dA_s.
\]
For $n\ge1$, let
\[
  \tau_n:=\inf\{t\in[0,1]:Q_t\ge n\}\wedge1.
\]
For every $\lambda\in\R$, the stochastic exponential
\[
  D_t^{\lambda,n}
  :=
  \mathcal E\left(\lambda M^{\tau_n}\right)_t
  =
  \exp\left(
  \lambda M_{t\wedge\tau_n}
  -\frac{\lambda^2}{2}Q_{t\wedge\tau_n}
  \right)
\]
is a uniformly integrable martingale because its driving martingale has
bracket bounded by $n$, so Novikov's criterion applies. 
Let
$\Q^{\lambda,n}$ be the probability measure with
density $D_1^{\lambda,n}$ relative to $\Q$.

Set $L^{\lambda,n}:=\lambda M^{\tau_n}$, so that
$D^{\lambda,n}=\mathcal E(L^{\lambda,n})$. Girsanov's theorem
\citep[Theorem~III.3.24]{jacod2013limit} states that, under
$\Q^{\lambda,n}$, a $\Q$-local martingale $N$ is transformed into
\[
  N^{\lambda,n}
  :=N-\left\langle N,L^{\lambda,n}\right\rangle
\]
whenever the predictable covariation is defined. Since
\[
  \left\langle B^\infty,L^{\lambda,n}\right\rangle
  =
  \lambda\left\langle B^\infty,M^{\tau_n}\right\rangle
  =0,
\]
we have
\begin{equation}
\label{eq:girsanov-B-unchanged}
  \left(B^\infty\right)^{\lambda,n}=B^\infty,
  \qquad
  \left\langle (B^\infty)^{\lambda,n}\right\rangle=A.
\end{equation}
For every bounded $f\in C_c((0,\infty))$, define the compensated jump
martingale
\[
  J_t^f
  :=
  \int_{(0,t]\times(0,\infty)}
  f(z)\left(\mu(ds,dz)-V_s^p\nu_J(dz)ds\right).
\]
It is purely discontinuous, whereas $L^{\lambda,n}$ is continuous, and the
strong orthogonality in Definition~\ref{def:complete-canonical-solution}
therefore gives
\[
  \left\langle J^f,L^{\lambda,n}\right\rangle=0,
  \qquad
  \left(J^f\right)^{\lambda,n}=J^f.
\]
Equivalently, in the semimartingale characteristic formula the jump
likelihood ratio is $Y\equiv1$, and hence
\begin{equation}
\label{eq:girsanov-compensator-unchanged}
  \nu_{\Q^{\lambda,n}}^\mu(ds,dz)
  =V_s^p\nu_J(dz)ds.
\end{equation}
Equations \eqref{eq:girsanov-B-unchanged} and
\eqref{eq:girsanov-compensator-unchanged} show that the continuous bracket,
jump compensator, and continuous--jump orthogonality of the base system are
unchanged. All defining Volterra and canonical relations are pathwise
identities and are therefore unchanged as well. Consequently, under
$\Q^{\lambda,n}$ the base coordinate $\mathsf X$ is again a canonical
variance/jump solution with the same coefficients. Its weak uniqueness gives,
for every bounded Borel function $F$ of $\mathsf X$,
\begin{equation}
\label{eq:girsanov-base-invariance}
  \E\left[F(\mathsf X)D_1^{\lambda,n}\right]
  =\E\left[F(\mathsf X)\right].
\end{equation}
Because this holds for every bounded Borel $F$, it is equivalent to
\begin{equation}
\label{eq:conditional-density-one}
  \E\left[D_1^{\lambda,n}\mid\mathsf X\right]=1.
\end{equation}

Since $Q_{\tau_n}$ is a measurable function of the base coordinate,
\eqref{eq:conditional-density-one} implies
\[
  \E\left[
  \left.
  \exp\left(\lambda M_{\tau_n}
  -\frac{\lambda^2}{2}Q_{\tau_n}\right)
  \right|\mathsf X
  \right]=1.
\]
First intersect the full-measure sets on which this identity
holds over $\lambda\in\mathbb Q$. The conditional moment generating function
is finite on this dense set and is convex, so continuity on the interior of
its effective domain extends the identity to every $\lambda\in\mathbb R$ on
one common full-measure set. Hence, conditionally on $\mathsf X$,
$M_{\tau_n}$ is centered Gaussian with variance $Q_{\tau_n}$; equivalently,
for every $\vartheta\in\mathbb R$,
\[
  \E\left[e^{i\vartheta M_{\tau_n}}\Big|\mathsf X\right]
  =\exp\left(-\frac{\vartheta^2}{2}Q_{\tau_n}\right).
\]
Since
$Q_1<\infty$ almost surely, $\tau_n=1$ for all sufficiently large $n$ along
each path. Conditional dominated convergence therefore gives
\begin{equation}
\label{eq:conditional-price-noise-transform}
  \E\left[
  \left.
  \exp\left(i\int_0^1h(s)dC_s^\infty\right)
  \right|\mathsf X
  \right]
  =
  \exp\left(-\frac12\int_0^1(h(s))^2dA_s\right).
\end{equation}

Fix $m\ge1$, $t_1,\ldots,t_m\in[0,1]$, and
$\vartheta=(\vartheta_1,\ldots,\vartheta_m)\in\mathbb R^m$. Applying
\eqref{eq:conditional-price-noise-transform} with the deterministic step
function
\[
  h_\vartheta(s)
  :=
  \sum_{j=1}^m\vartheta_j\mathbf 1_{[0,t_j]}(s)
\]
gives
\begin{equation}
\label{eq:conditional-fdd-price-noise}
  \E\left[
  \left.
  \exp\left(i\sum_{j=1}^m\vartheta_jC_{t_j}^\infty\right)
  \right|\mathsf X
  \right]
  =
  \exp\left(
  -\frac12\int_0^1\left(h_\vartheta(s)\right)^2dA_s
  \right)
  =
  \exp\left(
  -\frac12\sum_{i,j=1}^m
  \vartheta_i\vartheta_jA_{t_i\wedge t_j}
  \right).
\end{equation}
This is the Cram\'er--Wold identification
\citep[Theorem~29.4]{billingsley1995probability} of the conditional
finite-dimensional distributions.
It follows that, conditionally on $\mathsf X$, $C^\infty$ is the centered
continuous Gaussian process with covariance
\[
\operatorname{Cov}\left(C_s^\infty,C_t^\infty\mid\mathsf X\right)
  =A_{s\wedge t}.
\]
Equivalently, its conditional law is that of $U_{A_t}$, where $U$ is a
standard Brownian motion independent of the base coordinate. This uniquely
determines the joint law of $(\mathsf X,C^\infty)$ on path space.

Finally, \eqref{eq:orthogonal-price-coordinate} determines
\[
  W^\infty=\rho B^\infty+\sqrt{1-\rho^2}\,C^\infty,
\]
and the price $P$ is the prescribed path functional of
$(A,W^\infty,\mu)$. Hence the law of
$(V,B^\infty,W^\infty,\mu,P)$ is unique.
\end{proof}

\subsection{Variance identification}
\label{sec:variance-identification}

Throughout this section,
$\mathcal M_p([0,1]\times(0,\infty))$ is the space of boundedly finite point
measures endowed with the vague topology. The space $D([0,1],\mathbb R)$
is endowed with the Skorokhod $J_1$ topology.

We now collect the tightness and noise-identification results needed to pass from subsequential limits to the limiting variance equation. The next proposition identifies all limiting noise sources jointly.

\begin{proposition}[Joint tightness and subsequential noise identification]
\label{prop:joint-tightness-identification}
The family
\[
  \left(V^T,B^T,W^T,\bar\mu_J^T\right)_{T\ge T_0}
\]
is tight in
\[
  L^2([0,1])\times D([0,1],\R^2)\times
  \mathcal M_p([0,1]\times(0,\infty)).
\]
For every joint limit point $(V,B^\infty,W^\infty,\mu)$, after modification
on a null set, the following statements hold in one common filtration. On
\[
\mathsf E
=
L^2_+([0,1])\times D([0,1],\mathbb R^2)
\times\mathcal M_p([0,1]\times(0,\infty))
\]
with coordinates $(v,b,w,m)$, set $A_t(v):=\int_0^tv_sds$ and define $v^p$
from $A(v)$ by Lemma~\ref{lem:canonical-predictable-density}.
Let $(h_\ell)_{\ell\ge1}\subset C_c((0,\infty))$ be countable and
convergence determining, and let $\mathcal G$ be the usual right-continuous
completion of
\[
\mathcal G_t^0
=
\sigma\left(
A_r,b_r,w_r,
\int_{(0,r]\times(0,\infty)}h_\ell(z)m(ds,dz):
r\le t,\ \ell\ge1
\right).
\]
Then $V^p$ is $\mathcal G$-predictable and is a version of the
Radon--Nikodym density $dA/dt$: $V^p=V$ $dt$-a.e. and
$dA_t=V_t^pdt$. Moreover, $B^\infty,W^\infty$ are continuous
$\mathcal G$-local martingales,
\[
\left\langle B^\infty\right\rangle_t=A_t,\qquad
\left\langle W^\infty\right\rangle_t=A_t,\qquad
\left\langle B^\infty,W^\infty\right\rangle_t=\rho A_t,
\]
and, for every $h\in C_c((0,\infty))$,
\[
M_t^h
:=
\int_{(0,t]\times(0,\infty)}h(z)\mu(ds,dz)
-\nu_J(h)A_t
\]
is a $\mathcal G$-local martingale. Consequently, the
$\mathcal G$-predictable compensator of $\mu$ is
$V_t^p\nu_J(dz)dt$. The measure $\mu$ is simple in time, and the continuous
martingales are orthogonal to compensated stochastic integrals against
$\mu-V_t^p\nu_J(dz)dt$.

On an independent product extension there are correlated Brownian motions
$B,W$, satisfying $d\langle B,W\rangle_t=\rho\,dt$, such that
\[
B_t^\infty=\int_0^t\sqrt{V_s^p}\,dB_s,
\qquad
W_t^\infty=\int_0^t\sqrt{V_s^p}\,dW_s.
\]
\end{proposition}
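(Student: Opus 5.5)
We prove tightness coordinatewise and then identify the martingale structure of every limit point by passing to the limit in the finite-$T$ martingale problems.

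\textbf{Tightness.} Tightness of $V^T$ in $L^2([0,1])$ is the a priori estimate of Appendix~\ref{app:apriori}. $C$-tightness of $(B^T,W^T)$ follows from Proposition~\ref{prop:regular-fclt}, because $B^T,W^T$ are fixed linear projections of $\widehat{\mathbf N}^T$.

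For $\bar\mu_J^T$ in the vague topology it suffices to control $\E\,\bar\mu_J^T([0,1]\times(0,\infty))$. By \eqref{eq:rescaled-compensator} this equals $2\kappa_T\E\int_0^1V_s^Tds$, which is bounded uniformly in $T$ by the first-moment estimate. Product tightness follows.

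\textbf{Filtration and martingale properties.} Fix a limit point and realize the convergence almost surely by Skorokhod's theorem. Since $A_t=\int_0^tV_sds$ is a continuous functional on $L^2$, $A^T\to A$ uniformly. Predictability of $V^p$ and the identity $dA=V^pdt$ are Lemma~\ref{lem:canonical-predictable-density}.

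We verify the $\mathcal G$-local martingale properties by the standard test-function method. Take a bounded continuous functional $\Phi$ of the coordinates up to time $s$, evaluated at continuity points, and bounded mark test functions $h_\ell$. We aim to show
\[
\E\bigl[\Phi\,(X_t-X_s)\bigr]=0
\]
for each of the following processes $X$:
\[
B^\infty,\qquad W^\infty,\qquad (B^\infty)^2-A,\qquad (W^\infty)^2-A,\qquad B^\infty W^\infty-\rho A,\qquad M^h.
\]
At level $T$ these identities hold exactly, up to the factors $1-p_T\to1$, by \eqref{eq:rho} and \eqref{eq:rescaled-compensator}, with $\kappa_T\to\kappa$. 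Passing to the limit needs two ingredients:
\begin{itemize}
\item uniform integrability, obtained from second-moment bounds. For the quadratic terms this requires $\E\sup_t|B^T_t|^4$, which we control by BDG and the second moments of $V^T$ after a localization by $\inf\{t:A^T_t\ge n\}$.
\item continuity of $m\mapsto\int_{(0,t]}h\,dm$ at $t$ with $\mu(\{t\}\times(0,\infty))=0$, which holds outside a countable set of times.
\end{itemize}
A right-continuity argument then upgrades these identities to the completed filtration $\mathcal G$. The compensator identification $V^p\nu_J(dz)dt$ follows from $M^h$ for $h$ in a convergence-determining class.

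\textbf{Orthogonality.} The jump martingales $M^h$ have finite variation, so the continuous martingales are automatically orthogonal to them: $[B^\infty,M^h]=0$ because the covariation of a continuous martingale with a finite-variation martingale vanishes. Hence $\langle B^\infty,M^h\rangle=0$, and the same holds for $W^\infty$.

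\textbf{Brownian representation.} We apply the Doob/Knight representation on an independent product extension. Set
\[
dB=(V^p)^{-1/2}\mathbf 1_{\{V^p>0\}}\,dB^\infty+\mathbf 1_{\{V^p=0\}}\,d\beta,
\]
and define $W$ analogously, using auxiliary Brownian motions with correlation $\rho$. Lévy's characterization then shows that $(B,W)$ are Brownian motions with $d\langle B,W\rangle=\rho\,dt$ and $B^\infty=\int\sqrt{V^p}\,dB$, $W^\infty=\int\sqrt{V^p}\,dW$.

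\textbf{Time simplicity (main obstacle).} This step does not pass through vague limits automatically, since two marked events at distinct microscopic times can merge in the limit. We control the expected number of pairs of marked events within macroscopic distance $\delta$. Conditionally on the past, the intensity of a second marked event within distance $\delta$ is $2\kappa_TV^T$, integrated over a window of length $\delta$. The difficulty is that a marked event excites $V^T$ through $K_J^T$, so the intensity is elevated right after the first event.

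We bound the expected number of such pairs by
\[
C\,\E\int_0^1 V^T_u\Bigl(\int_u^{u+\delta}V^T_r\,dr\Bigr)du .
\]
Using the envelope $K_J^T(u)\le Cu^{\alpha-1}$ from \eqref{eq:KJ-limit}, the second-moment bounds, and the mark moment $m_2<\infty$ (which bounds the self-excitation contribution $\int_0^\delta u^{\alpha-1}du\cdot m_1$), this expectation is $O(\delta^{\alpha})+O(\delta)$ uniformly in $T$. Letting $\delta\to0$ then gives simplicity by the Portmanteau theorem applied on the closed set of measures having two atoms within distance $\delta$. This is the argument of Lemma~\ref{lem:marked-no-clustering}.
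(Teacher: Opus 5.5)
Outside time simplicity, your argument follows the paper's proof closely. You use the same coordinatewise tightness, the same cylinder-functional passage to the limit with uniform integrability and a countable set of excluded times, and the same orthogonality argument. Your Brownian representation is equivalent to the paper's whitening by $C_\rho^{-1/2}$. Your localization is unnecessary: $\E[A_1(V^T)^2]\le\E\|V^T\|_{L^2}^2$, and BDG with the vanishing jump size already gives $\sup_T\E\sup_t|B_t^T|^4<\infty$.

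The time-simplicity step, however, does not work as written. Let $N$ count marked points with marks in $C$, let $\lambda=2\kappa_TF_J(C)V^T$ be its intensity, and let $M=N-\int_0^\cdot\lambda_u\,du$. The expected number of ordered pairs at time distance at most $\delta$ is
\[
\E\int_0^1\bigl(N_{s-}-N_{(s-\delta)^+}\bigr)\lambda_s\,ds .
\]
Your bound keeps only the part in which the random count $N_{s-}-N_{(s-\delta)^+}$ is replaced by its compensator. The discarded term $\E\int_0^1(M_{s-}-M_{(s-\delta)^+})\lambda_s\,ds$ is exactly the self-excitation correlation. Your displayed expression is pathwise at most $C\delta\|V^T\|_{L^2}^2$ by Cauchy--Schwarz. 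It is therefore $O(\delta)$ and cannot dominate the order-$\delta^\alpha$ self-excitation you yourself identify. The appeal to the envelope of $K_J^T$ is a Palm-type heuristic. Making it rigorous would need second-order factorial moment control through the whole cascade, including the correlation of $\mathcal M^T$ with the marked count.

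None of this is needed. Keep the martingale part and apply Cauchy--Schwarz:
\[
\E\int_0^1\bigl|M_{s-}-M_{(s-\delta)^+}\bigr|\lambda_s\,ds
\le\Bigl(\delta\,\E\int_0^1\lambda_u\,du\Bigr)^{1/2}\Bigl(\E\int_0^1\lambda_s^2\,ds\Bigr)^{1/2}.
\]
Together with the compensator part, this gives $O(\delta^{1/2}+\delta)$ uniformly in $T$, using only the uniform second moment of $V^T$. This is what Lemma~\ref{lem:marked-no-clustering} does on windows of length $2\eta$, via $N_I(N_I-1)=2\int_IN_{a,s-}\,dN_s$. No kernel information enters.

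Second, your transfer to the limit goes the wrong way. For a closed set $F$, Portmanteau gives $\limsup_T\Q(\bar\mu_J^T\in F)\le\Q(\mu\in F)$. That bounds the prelimit probabilities by the limiting one and cannot show that limiting collisions are null. You need lower semicontinuity instead. For a relatively compact open mark set $O$ and relatively open time intervals $I$, the map $m\mapsto m(I\times O)$ is vaguely lower semicontinuous. Hence the set of $m$ with $m(I\times O)\ge2$ for some such $I$ of length less than $\delta$ is open. It contains $\{\exists t:\mu(\{t\}\times O)\ge2\}$ for every $\delta$, and its limiting probability is at most the $\liminf$ of the prelimit ones, which the pair count controls. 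Then exhaust the marks by $O_n=(1/n,n)$. Proposition~\ref{prop:point-process-stability} implements this same direction, using $\mu$-null mark boundaries, intervals with uncharged endpoint sections, and Fatou's lemma.

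Finally, your martingale identities and the generators of $\mathcal G$ integrate over $(0,t]$. They therefore cannot detect an atom of $\mu$ at time $0$, which would invalidate the compensator identification on $[0,1]$. The paper excludes endpoint atoms separately, via the uniform bound $\E\bar\mu_J^T([0,\delta]\times O)\le C_O\delta^{1/2}$.
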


\begin{proof}
Tightness of $V^T$ is Proposition \ref{prop:tightness}. Tightness and
time simplicity of $\bar\mu_J^T$ are given by Proposition
\ref{prop:point-process-stability}. The bracket bounds in the proof of
Proposition \ref{prop:regular-fclt}, together with the vanishing jumps, give
tightness of $(B^T,W^T)$ in $D([0,1],\R^2)$. Product tightness gives joint
tightness.

Since each $V^T$ is nonnegative, every limit point $V$ in $L^2([0,1])$
satisfies $V\ge0$, $dt\otimes\Q$-a.e. The process $A$ is a continuous
$\mathcal G$-adapted canonical coordinate, so
Lemma~\ref{lem:canonical-predictable-density} gives the stated
predictability and density properties of $V^p$.

We next show that the limiting measure has no atoms at the time endpoints.
Let $C$ be a compact subset of $(0,\infty)$, and choose an open set
$O\subset(0,\infty)$ containing $C$ whose closure is compact in
$(0,\infty)$.
For $\delta\in(0,1)$, the compensator
formula, Cauchy--Schwarz inequality, and Proposition~\ref{prop:second-moment} give
\[
\begin{aligned}
\E\bar\mu_J^T([0,\delta]\times O)
&=2\kappa_TF_J(O)\E\int_0^\delta V_s^Tds
\le C_O\delta^{1/2},\\
\E\bar\mu_J^T([1-\delta,1]\times O)
&=2\kappa_TF_J(O)\E\int_{1-\delta}^1 V_s^Tds
\le C_O\delta^{1/2},
\end{aligned}
\]
uniformly in $T$. On a Skorokhod representation, the sets
$[0,\delta)\times O$ and $(1-\delta,1]\times O$ are relatively open and
relatively compact in $[0,1]\times(0,\infty)$. The Portmanteau theorem and
Fatou's lemma therefore transfer the same upper bounds to their limiting
counts. Since $\{0,1\}\times C$ is contained in their union, letting
$\delta\downarrow0$ gives $\E\mu(\{0,1\}\times C)=0$. A countable
exhaustion of the mark space $(0,\infty)$
by compact sets yields
\[
  \mu(\{0,1\}\times(0,\infty))=0
  \qquad\text{almost surely}.
\]
Thus the martingale identities below, written with $(0,t]$, identify the
compensator on the full time interval.

At the prelimit level,
\[
\begin{aligned}
\left\langle B^T\right\rangle_t&=(1-p_T)A_t\left(V^T\right),\\
\left\langle W^T\right\rangle_t&=(1-p_T)A_t\left(V^T\right),\\
\left\langle B^T,W^T\right\rangle_t&=(1-p_T)\rho A_t\left(V^T\right),
\end{aligned}
\]
and, for $h\in C_c((0,\infty))$,
\[
M_t^{T,h}
=
\int_{(0,t]}h(z)\bar\mu_J^T(ds,dz)
-2\kappa_TF_J(h)A_t\left(V^T\right)
\]
is a martingale. We pass all these identities in the same canonical
filtration. Fix $0\le s<t\le1$, times $r_i\le s$, indices $\ell_i$, and a
bounded continuous function $\varphi$. Set
\[
  \Phi^T
  :=\varphi\left(
  A_{r_i}(V^T),B_{r_i}^T,W_{r_i}^T,
  \int_{(0,r_i]\times(0,\infty)}h_{\ell_i}(z)
  \bar\mu_J^T(du,dz):1\le i\le m
  \right).
\]
These cylinder variables form an algebra generating the prelimit past
$\sigma$-field; the extension to all bounded past-measurable variables will
follow from a monotone-class argument. The martingale and bracket identities
give
\[
\E\left[\Phi^T\left(B_t^T-B_s^T\right)\right]=0,\qquad
\E\left[\Phi^T\left(W_t^T-W_s^T\right)\right]=0,
\]
\[
\E\left[\Phi^T\left(
(B_t^T)^2-(B_s^T)^2
-(1-p_T)\left(A_t(V^T)-A_s(V^T)\right)\right)\right]=0,
\]
with the analogous identity for $W^T$,
\[
\E\left[\Phi^T\left(
B_t^TW_t^T-B_s^TW_s^T
-(1-p_T)\rho\left(A_t(V^T)-A_s(V^T)\right)\right)\right]=0,
\]
and
\[
\E\left[\Phi^T\left(M_t^{T,h}-M_s^{T,h}\right)\right]=0.
\]

These identities pass directly to the limit because
Proposition~\ref{prop:second-moment} and the Cauchy--Schwarz inequality give
\[
  \sup_{T\ge T_0}\E\left[\left(A_1(V^T)\right)^2\right]
  \le
  \sup_{T\ge T_0}\E\|V^T\|_{L^2([0,1])}^2
  <\infty.
\]
Moreover, set
\[
  d_T:=\sup_{u\le1}\left(|\Delta B_u^T|\vee|\Delta W_u^T|\right).
\]
By Proposition~\ref{prop:regular-fclt}, $d_T$ is bounded by a deterministic
number tending to zero. The Burkholder--Davis--Gundy inequality and the
bounded-jump bracket estimate give, for either $N^T=B^T$ or $N^T=W^T$,
\[
  \E\left[\sup_{u\le1}\left|N_u^T\right|^4\right]
  \le C\E\left[\left(A_1(V^T)\right)^2+d_T^2A_1(V^T)\right].
\]
Consequently,
\[
  \sup_{T\ge T_0}\E\left[
    \sup_{u\le1}\left|B_u^T\right|^4+\sup_{u\le1}\left|W_u^T\right|^4
  \right]<\infty.
\]
Thus the linear, square, and cross-product martingale expressions stated
above are uniformly integrable. For the marked martingale, the random-measure
isometry implies that
\[
  \E\left[\left|M_1^{T,h}\right|^2\right]
  =2\kappa_TF_J(h^2)\E A_1(V^T)
  \le C_h,
\]
so these expressions are uniformly integrable as well.

Let $(K_j)_{j\ge1}$ be a compact exhaustion of $(0,\infty)$. The uniform
first-moment bound for the prelimit compensators implies
$\E[\mu([0,1]\times K_j)]<\infty$ for every $j$. We may therefore choose a
countable dense set $\mathbb T\subset[0,1]$, containing $0$ and $1$, such
that
\begin{equation}
\label{eq:no-fixed-atoms-dense-times}
  \Q\big(\mu(\{r\}\times K_j)>0\big)=0,
  \qquad r\in\mathbb T,\quad j\ge1.
\end{equation}
Indeed, for each $j$ the finite measure
$D\mapsto\E[\mu(D\times K_j)]$ has at most countably many atoms.

Fix $s<t$ in $\mathbb T$. It is enough first to consider past-cylinder
variables of the form
\begin{equation}
\label{eq:past-cylinder-variable}
  \Phi
  =
  \varphi\left(
  A_{r_i},B_{r_i}^\infty,W_{r_i}^\infty,
  \int_{(0,r_i]\times(0,\infty)}h_{\ell_i}(z)\mu(du,dz):
  1\le i\le m
  \right),
\end{equation}
where $r_i\in\mathbb T\cap[0,s]$ and $\varphi$ is bounded and continuous.
By \eqref{eq:no-fixed-atoms-dense-times}, these cylinder maps are almost
surely continuous at the limiting coordinate. Since
$v\mapsto A(v)$ is continuous from $L^2([0,1])$ to $C([0,1])$,
$p_T\to0$, $2\kappa_TF_J(h)\to\nu_J(h)$, and the relevant random variables
are uniformly integrable by the estimates above, the prelimit martingale
identities pass to the limit and give
\begin{align}
\label{eq:limit-martingale-identities-linear}
  \E\left[\Phi(B_t^\infty-B_s^\infty)\right]&=0,
  &
  \E\left[\Phi(W_t^\infty-W_s^\infty)\right]&=0,\\
\label{eq:limit-martingale-identities-square}
  \E\left[\Phi\left(
  (B_t^\infty)^2-(B_s^\infty)^2-(A_t-A_s)
  \right)\right]&=0,
  &
  \E\left[\Phi\left(
  (W_t^\infty)^2-(W_s^\infty)^2-(A_t-A_s)
  \right)\right]&=0,\\
\label{eq:limit-martingale-identities-cross}
  \E\left[\Phi\left(
  B_t^\infty W_t^\infty-B_s^\infty W_s^\infty
  -\rho(A_t-A_s)
  \right)\right]&=0,
\end{align}
and, for every $h\in C_c((0,\infty))$,
\begin{equation}
\label{eq:limit-marked-martingale-identity}
  \E\left[\Phi\left(M_t^h-M_s^h\right)\right]=0.
\end{equation}
Right continuity extends
\eqref{eq:limit-martingale-identities-linear}--
\eqref{eq:limit-marked-martingale-identity} from $\mathbb T$ to all
$0\le s<t\le1$.
A functional monotone-class argument then extends these identities from
the continuous past-cylinder variables $\Phi$ defined in
\eqref{eq:past-cylinder-variable} to every bounded
$\mathcal G_s$-measurable random variable. Equations
\eqref{eq:limit-martingale-identities-linear}--
\eqref{eq:limit-martingale-identities-cross} identify
$B^\infty,W^\infty$ as continuous local martingales with
\[
  \langle B^\infty\rangle=A,
  \qquad
  \langle W^\infty\rangle=A,
  \qquad
  \langle B^\infty,W^\infty\rangle=\rho A.
\]
Equation \eqref{eq:limit-marked-martingale-identity}, first for a
convergence-determining countable family and then for all nonnegative Borel
test functions by a monotone-class argument, identifies the predictable
compensator of $\mu$ as
\[
  \nu^\mu(dt,dz)=V_t^p\nu_J(dz)dt.
\]

Time simplicity follows from Proposition~\ref{prop:point-process-stability}.
Each $M^h$ is purely discontinuous, whereas $B^\infty,W^\infty$ are
continuous. Their quadratic, and hence predictable, covariations vanish,
which gives the asserted orthogonality.

Finally set
\[
C_\rho:=
\begin{pmatrix}1&\rho\\ \rho&1\end{pmatrix}.
\]
This matrix is positive definite. Write
$L=(B^\infty,W^\infty)^\top$, and on an independent product extension let
$\bar U$ be a two-dimensional Brownian motion. Define
\[
  U_t
  :=
  \int_0^t
  \mathbf 1_{\{V_s^p>0\}}
  (V_s^p)^{-1/2}C_\rho^{-1/2}dL_s
  +
  \int_0^t\mathbf 1_{\{V_s^p=0\}}d\bar U_s.
\]
The two integrals are orthogonal and the bracket identities above yield
\[
  \langle U\rangle_t
  =
  \int_0^t\mathbf 1_{\{V_s^p>0\}}I_2ds
  +
  \int_0^t\mathbf 1_{\{V_s^p=0\}}I_2ds
  =tI_2.
\]
L\'evy's characterization shows that $U$ is a two-dimensional Brownian
motion, and the definition gives
\[
  \begin{pmatrix}B_t^\infty\\W_t^\infty\end{pmatrix}
  =
  \int_0^t\sqrt{V_s^p}\,C_\rho^{1/2}dU_s.
\]
Thus $(B,W)^\top:=C_\rho^{1/2}U$ is a pair of standard Brownian motions with
$d\langle B,W\rangle_t=\rho dt$ and
\[
  B_t^\infty=\int_0^t\sqrt{V_s^p}dB_s,
  \qquad
  W_t^\infty=\int_0^t\sqrt{V_s^p}dW_s.
\]
Because the extension is an independent product, the already identified
compensator of $\mu$ is unchanged.
\end{proof}

Given the joint noise identification in the previous proposition, each subsequential limit can be tested against the limiting Volterra equation. The following theorem shows that every such limit satisfies the compensated rough Hawkes--Heston variance equation.

\begin{theorem}[Subsequential variance identification]
\label{thm:subseq}
Assume \eqref{eq:near-critical}, \eqref{eq:pT},
\eqref{eq:FJ-moments}, and \eqref{eq:rT}, together with
Assumptions~\ref{ass:kernel-tail-regularity}, \ref{ass:subcritical}, and
\ref{ass:baseline}. Let
\[
  \left(V^T,B^T,W^T,\bar\mu_J^T\right)
  \Rightarrow
  (V,B^\infty,W^\infty,\mu)
\]
along a joint convergent subsequence in
\[
  L^2([0,1])\times D([0,1],\mathbb R^2)\times
  \mathcal M_p([0,1]\times(0,\infty)).
\]
Suppose the joint limit is represented as in Proposition
\ref{prop:joint-tightness-identification}, namely, there exist standard Brownian motions
$B$ and $W$ with $d\langle W,B\rangle_t=\rho dt$ and 
\[
  B_t^\infty=\int_0^t\sqrt{V_s^p}dB_s,
  \qquad
  W_t^\infty=\int_0^t\sqrt{V_s^p}dW_s.
\]
Then $\mu$ is a time-simple point measure with common-filtration compensator
$V_t^p\nu_J(dz)dt$, and $V$ satisfies, in $L^2([0,1])$,
\begin{align}
\label{eq:identified-bondi}
V_t
&=G_0(t)
+b\int_0^t k_\alpha(t-s)V_sds
+\sqrt{\bar c_B}\int_0^t k_\alpha(t-s)\sqrt{V_s^p}dB_s\nonumber\\
&\quad
+\bar\xi\int_{[0,t)}k_\alpha(t-s)\int_{(0,\infty)} z
\left(\mu(ds,dz)-V_s^p\nu_J(dz)ds\right).
\end{align}
Define
\[
Z_t
=
b\int_0^tV_sds+\sqrt{\bar c_B}\,B_t^\infty
+
\bar\xi\int_{(0,t]\times(0,\infty)}
z\left(\mu(ds,dz)-V_s^p\nu_J(dz)ds\right).
\]
Then every subsequential variance/jump limit is canonical:
\[
  (\mathrm{id}\times S_{\bar\xi})_\#\mu
  =
  \mathcal J(Z),
  \qquad S_{\bar\xi}(z)=\bar\xi z.
\]
\end{theorem}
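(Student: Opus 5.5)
The plan has three steps: pass the mild decomposition \eqref{eq:VT-mild} to the limit term by term, convert it to fractional form via Corollary~\ref{cor:compensated-rewriting}, and then read off canonicity from the jump structure of the resulting driver $Z$.

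\emph{Step 1: the raw mild equation.} Work on a Skorokhod representation of the convergent subsequence, enlarged by the independent product extension of Proposition~\ref{prop:joint-tightness-identification}. That proposition already gives the time simplicity of $\mu$ and its common-filtration compensator $V_t^p\nu_J(dz)dt$. In \eqref{eq:VT-mild} the three terms are handled separately:
\begin{itemize}
\item the baseline $g_0^T\to g_0$ in $L^2([0,1])$ by Assumption~\ref{ass:baseline};
\item $\mathcal M^T$ converges to $\sqrt{c_B}\,K*(\sqrt{V^p}dB)$ by Proposition~\ref{prop:regular-volterra-limit};
\item $\mathcal J^T$ converges to $\xi\int_{[0,t)}K(t-s)\int z\,\mu(ds,dz)$ by Proposition~\ref{prop:raw-feedback-limit}.
\end{itemize}
All three limits hold jointly with $(V^T,B^T,\bar\mu_J^T)$, since each proposition is stated along every jointly convergent subsequence. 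Since $V^T-g_0^T-\mathcal M^T-\mathcal J^T=0$ identically and the map is continuous in $L^2$, the limit gives $V=g_0+K*d\widetilde Z$ in $L^2([0,1])$ almost surely, with $\widetilde Z$ as in \eqref{eq:raw-mild-driver}.

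\emph{Step 2: fractional rewriting.} The moment hypotheses of Lemma~\ref{lem:stochastic-convolution-associativity} are $\E\int_0^1V<\infty$ and $\E\|V\|_2^2<\infty$. Both follow from the uniform bounds of Appendix~\ref{app:apriori} (Proposition~\ref{prop:second-moment}) and Fatou's lemma under $L^2$ convergence. The jump integrability of Assumption~\ref{ass:jump-integrability} holds because $\nu_J=2\kappa F_J$ and $m_2<\infty$. Corollary~\ref{cor:compensated-rewriting} then yields \eqref{eq:identified-bondi} with the stated $G_0,b,\bar c_B,\bar\xi$.

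\emph{Step 3: canonicity.} This is the step I expect to be the main obstacle. In $Z$, the terms $bA$ and $\sqrt{\bar c_B}B^\infty$ are continuous, and the compensator part $\int z\,V^p_s\nu_J(dz)ds$ is absolutely continuous, since $\int z\nu_J<\infty$ and $V^p\in L^1$. Hence $\Delta Z_s=\bar\xi\int z\,\mu(\{s\},dz)$. Because $\mu$ is simple in time, each atom time $s$ carries exactly one mark $z_s$, so $\Delta Z_s=\bar\xi z_s\neq0$ as $z_s>0$ and $\bar\xi>0$. Conversely, $Z$ jumps only at atom times. This gives $(\mathrm{id}\times S_{\bar\xi})_\#\mu=\mathcal J(Z)$.

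The delicate points are, first, that $\mu$ is finite on $[0,1]\times(0,\infty)$ almost surely, which follows from $\E\mu([0,1]\times(0,\infty))=\nu_J(0,\infty)\E A_1<\infty$. This ensures the compensated integral $\int z(\mu-V^p\nu_J)$ is a genuine finite-variation difference, so the pathwise jump computation is legitimate and not merely an $L^2$ martingale limit. Second, one needs a càdlàg version of $Z$ consistent with the $L^2$ Volterra equation. If simplicity from Proposition~\ref{prop:point-process-stability} were only available on compact mark sets, I would first apply the argument on $[0,1]\times K_j$ and then exhaust $(0,\infty)$.
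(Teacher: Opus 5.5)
Your proposal is correct and follows essentially the same route as the paper. You pass the exact identity $V^T=g_0^T+\mathcal M^T+\mathcal J^T$ to the limit via Assumption~\ref{ass:baseline} and Propositions~\ref{prop:regular-volterra-limit} and~\ref{prop:raw-feedback-limit}, transfer the moment bounds by Fatou's lemma to invoke Lemma~\ref{lem:stochastic-convolution-associativity} and Corollary~\ref{cor:compensated-rewriting}, and read off canonicity from the continuity of the non-jump parts of $Z$ together with time simplicity of $\mu$. Your explicit observation that $\E\mu([0,1]\times(0,\infty))=\nu_J((0,\infty))\,\E A_1<\infty$, so that the compensated mark integral is a pathwise finite-variation difference, is a small but worthwhile addition that the paper leaves implicit.
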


\begin{proof}
Along the considered subsequence, Assumption~\ref{ass:baseline},
Proposition~\ref{prop:regular-volterra-limit}, and
Proposition~\ref{prop:raw-feedback-limit} give, jointly with the canonical
coordinates,
\begin{align*}
  g_0^T&\longrightarrow g_0
  &&\text{in }L^2([0,1]),\\
  \mathcal M^T&\Rightarrow
  \mathcal M,\qquad
  \mathcal M_t
  :=\sqrt{c_B}\int_0^tK(t-s)\sqrt{V_s^p}dB_s
  &&\text{in }L^2([0,1]),\\
  \mathcal J^T&\Rightarrow
  \mathcal J,\qquad
  \mathcal J_t
  :=\xi\int_{[0,t)}K(t-s)\int_{(0,\infty)}z\mu(ds,dz)
  &&\text{in }L^2([0,1]).
\end{align*}
Since addition is continuous on $L^2([0,1])$, the exact prelimit identity
$V^T=g_0^T+\mathcal M^T+\mathcal J^T$ in \eqref{eq:VT-mild} passes to
\begin{equation}
\label{eq:identified-raw-mild}
  V_t
  =
  g_0(t)
  +\sqrt{c_B}\int_0^tK(t-s)\sqrt{V_s^p}dB_s
  +\xi\int_{[0,t)}K(t-s)\int_{(0,\infty)}z\mu(ds,dz)
\end{equation}
in $L^2([0,1])$.

The uniform first- and second-moment estimates in
Propositions~\ref{prop:first-moment} and \ref{prop:second-moment} pass to the
limit by lower semicontinuity. Proposition
\ref{prop:joint-tightness-identification} identifies the compensator of
$\mu$ as $V_s^p\nu_J(dz)ds$. Hence the hypotheses of
Lemma~\ref{lem:stochastic-convolution-associativity} and
Corollary~\ref{cor:compensated-rewriting} hold for
\eqref{eq:identified-raw-mild}; those results give
\eqref{eq:identified-bondi} with the coefficients in
\eqref{eq:limiting-coefficient-map}.

It remains to verify canonicality. Proposition
\ref{prop:joint-tightness-identification} places all terms in the same
filtration and Proposition~\ref{prop:point-process-stability} makes $\mu$
simple in time. The drift, $B^\infty$, and the compensator part of $Z$ have
continuous paths. Hence, at every atom $(t,z)$ of $\mu$,
\[
  \Delta Z_t=\bar\xi z.
\]
Conversely, these are all jumps of $Z$. Since $\bar\xi>0$ and the marks are
positive, the time-simple atoms correspond one-to-one.
More precisely, for every nonnegative
$f\in C_c([0,1]\times(0,\infty))$,
\begin{align*}
  \int_{[0,1]\times(0,\infty)}
  f(t,y)\mathcal J(Z)(dt,dy)
  &=
  \sum_{t\in[0,1]:\Delta Z_t\ne0}
  f(t,\Delta Z_t)\\
  &=
  \int_{[0,1]\times(0,\infty)}
  f(t,\bar\xi z)\mu(dt,dz)\\
  &=
  \int_{[0,1]\times(0,\infty)}
  f(t,y)
  \left((\mathrm{id}\times S_{\bar\xi})_\#\mu\right)(dt,dy).
\end{align*}
Thus $(\mathrm{id}\times S_{\bar\xi})_\#\mu=\mathcal J(Z)$.
Thus every subsequential variance/jump limit belongs to the
canonical solution class.
\end{proof}

The previous theorem identifies all subsequential variance/jump limits.
Because the microscopic coefficient $\bar\xi$ is strictly positive, state
uniqueness and recovery of the jump measure remove subsequences for that pair.

\begin{corollary}[Full variance and jump-measure convergence]
\label{cor:variance-convergence}
Under \eqref{eq:near-critical}, \eqref{eq:pT},
\eqref{eq:FJ-moments}, and \eqref{eq:rT}, together with
Assumptions~\ref{ass:kernel-tail-regularity}, \ref{ass:subcritical}, and
\ref{ass:baseline},
\[
  \left(V^T,\bar\mu_J^T\right)\Rightarrow(\sigma^2,\mu)
\]
in $L^2([0,1])\times
\mathcal M_p([0,1]\times(0,\infty))$, where
$(\sigma^2,\mu)$ is the reduced-form variance/jump pair solving the limiting
affine Volterra equation with compensator
\[
  (\sigma_t^2)^p\nu_J(dz)dt.
\]
Its joint law is determined by Proposition~\ref{prop:wellposed} and
Lemma~\ref{lem:recover-jump-measure}.
\end{corollary}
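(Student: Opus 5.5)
The plan is the standard tightness-plus-uniqueness argument: show that every subsequence has a further subsequence converging to one and the same law. By Proposition~\ref{prop:joint-tightness-identification}, the family $(V^T,\bar\mu_J^T)_{T\ge T_0}$ is tight in $L^2([0,1])\times\mathcal M_p([0,1]\times(0,\infty))$, since its marginals are tight. Take an arbitrary subsequence. Tightness of the enlarged family $(V^T,B^T,W^T,\bar\mu_J^T)$ lets us extract a further subsequence along which the enlarged vector converges jointly to some $(V,B^\infty,W^\infty,\mu)$. Proposition~\ref{prop:joint-tightness-identification} then supplies the common filtration $\mathcal G$, the brackets, the compensator $V_t^p\nu_J(dz)dt$, time simplicity, and the Brownian representation on a product extension. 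These are exactly the hypotheses of Theorem~\ref{thm:subseq}.

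Applying Theorem~\ref{thm:subseq}, the pair $(V,\mu)$ satisfies the compensated variance equation \eqref{eq:identified-bondi} with the coefficients \eqref{eq:limiting-coefficient-map}, and it is canonical, i.e.\ $(\mathrm{id}\times S_{\bar\xi})_\#\mu=\mathcal J(Z)$. Here $\bar\xi=\theta c_Jr_\star/(m_\star\Gamma(1-\alpha))>0$. Next we check the hypotheses of Proposition~\ref{prop:wellposed}: Assumption~\ref{ass:jump-integrability} holds for $\nu_J=2\kappa F_J$ by \eqref{eq:FJ-moments}, and Assumption~\ref{ass:wellposed} follows from Remark~\ref{rem:G0-monotonicity}, because $g_0$ is continuous, nonnegative and nondecreasing by Assumption~\ref{ass:baseline}. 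Proposition~\ref{prop:wellposed} then shows that the law of $V$ is the unique law of the reduced-form variance solution $\sigma^2$. Lemma~\ref{lem:recover-jump-measure} gives $\mu=(\mathrm{id}\times S_{\bar\xi}^{-1})_\#\mathcal J(\mathcal R(V))$ for a Borel map $\mathcal R$, so the joint law of $(V,\mu)$ is a fixed pushforward of the law of $\sigma^2$. Hence it does not depend on the subsequence.

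One point needs care. The uniqueness in Proposition~\ref{prop:wellposed} is stated for weak solutions on some stochastic basis, whereas our limit is realized on the canonical filtration $\mathcal G$, possibly enlarged by an independent product. The equation must therefore hold with $B$ a Brownian motion and $\mu$ having compensator $V^p\nu_J\,dt$ in one common filtration, and the continuous and jump parts must be orthogonal. Proposition~\ref{prop:joint-tightness-identification} guarantees exactly this, and an independent product extension preserves both the martingale properties and the compensator. I expect this to be the main (though mild) obstacle: confirming that the limiting object qualifies as a weak solution in the sense required by the cited affine Volterra uniqueness results. The representation $\sqrt{V^p}\,dB$ together with $V^p=V$ $dt$-a.e.\ is what matches the formulations there.

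Finally, every subsequence has a further subsequence converging to the same law $\mathcal L(\sigma^2,\mu)$, so the whole family converges: $(V^T,\bar\mu_J^T)\Rightarrow(\sigma^2,\mu)$ along the full sequence $T\to\infty$, with compensator $(\sigma_t^2)^p\nu_J(dz)dt$.
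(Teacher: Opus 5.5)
Your proposal is correct and follows the paper's own argument. The steps match: tightness from Proposition~\ref{prop:joint-tightness-identification}, identification of every subsequential limit as a canonical solution via Theorem~\ref{thm:subseq}, uniqueness of the law of $V$ from Proposition~\ref{prop:wellposed} after checking Assumptions~\ref{ass:jump-integrability} and \ref{ass:wellposed}, recovery of $\mu$ as a Borel functional of $V$ through Lemma~\ref{lem:recover-jump-measure}, and then the subsequence principle. You extract a further subsequence on which the enlarged vector $(V^T,B^T,W^T,\bar\mu_J^T)$ converges jointly, which is slightly more careful than the paper: the paper extracts only for $(V^T,\bar\mu_J^T)$ before invoking Theorem~\ref{thm:subseq}, whose hypotheses are stated for the four-component joint limit.
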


\begin{proof}
Set
\[
  \mathsf S
  :=L^2_+([0,1])\times
  \mathcal M_p([0,1]\times(0,\infty)),
  \qquad
  \mathsf P_T:=\mathcal L(V^T,\bar\mu_J^T).
\]
By Proposition~\ref{prop:joint-tightness-identification},
$(\mathsf P_T)$ is tight on $\mathsf S$. Let $(T_n)$ be arbitrary. There
exists a further subsequence, still denoted $(T_n)$, such that
\[
  (V^{T_n},\bar\mu_J^{T_n})\Rightarrow(V,\mu).
\]
By Theorem~\ref{thm:subseq}, $(V,\mu)$ is a canonical weak solution of
\eqref{eq:identified-bondi}. The construction \eqref{eq:nuJ}, the two moment
assumptions on $F_J$, and $\Lambda\ge0$ imply
Assumption~\ref{ass:jump-integrability}; Assumption~\ref{ass:baseline} and
Remark~\ref{rem:G0-monotonicity} imply Assumption~\ref{ass:wellposed}.
Consequently,
\[
  \mathcal L(V)=\mathcal L(\sigma^2)
\]
by Proposition~\ref{prop:wellposed}.
By Lemma~\ref{lem:recover-jump-measure}, there exists a Borel recovery map
$\mathfrak R_{\bar\xi}$ such that
\[
  \mu=\mathfrak R_{\bar\xi}(V)
  \qquad\text{almost surely}
\]
for every canonical solution. Hence
\[
  \mathcal L(V,\mu)
  =\bigl(\operatorname{id},\mathfrak R_{\bar\xi}\bigr)_\#
    \mathcal L(\sigma^2)
  =:\mathsf P_\star.
\]
Thus, every subsequential limit of $(\mathsf P_T)$ equals
$\mathsf P_\star$. Tightness and the subsequence principle therefore give
\[
  \mathsf P_T\Rightarrow\mathsf P_\star,
\]
which is the claimed full convergence.
\end{proof}

\subsection{Price convergence}
\label{sec:price-convergence}

The price process contains the same common-jump measure as the variance
equation. The following lemma shows that the compensated jump-price functional
is continuous along the convergent jump-measure and compensator sequence. We
first restrict the marks to $[\delta,R]$, where the relevant mapping is
continuous at point measures without boundary atoms, and then let
$\delta\downarrow0$ and $R\uparrow\infty$ using the first- and second-moment
tail bounds below.
In its application below, the limiting point measure is almost surely
simple in time by Proposition~\ref{prop:point-process-stability} and has no
atoms at the time endpoints by
Proposition~\ref{prop:joint-tightness-identification}.

We repeatedly use the following approximation principle. Let
$(S,d)$ be a metric space, and let $X^T,X^{T,n},X^n,X$ be $S$-valued random
elements. If
\begin{equation}
\label{eq:converging-together-hypotheses}
  X^{T,n}\Rightarrow X^n\quad\text{as }T\to\infty
  \quad\text{for every fixed }n,
  \qquad
  X^n\Rightarrow X\quad\text{as }n\to\infty,
\end{equation}
and, for every $\eta>0$,
\begin{equation}
\label{eq:converging-together-error}
  \lim_{n\to\infty}\limsup_{T\to\infty}
  \Q\left(d\left(X^{T,n},X^T\right)>\eta\right)=0,
\end{equation}
then $X^T\Rightarrow X$. This is the converging-together lemma
\citep[Theorem~3.2]{billingsley1999convergence}. For two cutoff parameters,
we apply it along a deterministic sequence of admissible cutoff pairs.

\begin{lemma}[Jump-price functional convergence]
\label{lem:jump-price-functional-convergence}
Assume $(V^T,\bar\mu_J^T)\Rightarrow(V,\mu)$ in
$L^2([0,1])\times\mathcal M_p([0,1]\times(0,\infty))$, and assume the compensators
satisfy, jointly with this convergence,
\[
  \bar\nu_J^T(ds,dz)\Rightarrow V_s^p\nu_J(dz)ds
\]
vaguely on compact sets. Assume also that the limiting point measure
$\mu$ is almost surely simple in time, that is,
$\mu(\{t\}\times(0,\infty))\le1$ for every $t$, and has no atoms at the
time endpoints:
\[
  \mu(\{0,1\}\times(0,\infty))=0
  \qquad\text{almost surely}.
\]
These two properties are guaranteed in the application by
Propositions~\ref{prop:point-process-stability} and
\ref{prop:joint-tightness-identification}, respectively.
Let $\nu_J=2\kappa F_J$ with $\kappa>0$, and suppose that
Assumption~\ref{ass:jump-integrability} holds. Then, as $T\to\infty$,
\[
  P^{T,J}\Rightarrow
  -\Lambda\int_0^\cdot\int_{(0,\infty)} z
  \left(\mu(ds,dz)-V_s^p\nu_J(dz)ds\right)
  -\int_0^\cdot\int_{(0,\infty)} (e^{-\Lambda z}-1+\Lambda z)V_s^p\nu_J(dz)ds
\]
in $D([0,1],\R)$.
\end{lemma}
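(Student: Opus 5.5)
Because $\bar\nu_J^T(ds,dz)=2\kappa_TV_s^TF_J(dz)ds$ and $\mu$ has the compensator $V^p_s\nu_J(dz)ds$, the first step is to rewrite $P^{T,J}$ as a raw jump sum plus a compensator drift:
\[
P^{T,J}_t=-\Lambda\int_0^t\int_{(0,\infty)} z\,\bar\mu_J^T(ds,dz)+c_T\int_0^tV_s^Tds,
\qquad
c_T:=2\kappa_T\int_{(0,\infty)}\left(\Lambda z-\left(e^{-\Lambda z}-1+\Lambda z\right)\right)F_J(dz).
\]
This is legitimate because $m_1<\infty$ makes both integrals finite, and it uses only that $\Lambda z-(e^{-\Lambda z}-1+\Lambda z)=1-e^{-\Lambda z}$ is bounded by $\Lambda z$. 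The limit functional decomposes the same way with $c:=\int(1-e^{-\Lambda z})\nu_J(dz)$, and $c_T\to c$ since $\kappa_T\to\kappa$. The drift term is then straightforward: $v\mapsto\int_0^\cdot v_sds$ is continuous from $L^2([0,1])$ to $C([0,1])$, so $c_TA(V^T)\Rightarrow cA(V)$ in $C([0,1])$, jointly with $\bar\mu_J^T$. Because this limit is continuous, adding it to a $J_1$-convergent sequence keeps $J_1$ convergence. Consequently, everything reduces to showing that the pure jump sum $X^T_t:=\int_{(0,t]}\int z\,\bar\mu_J^T(ds,dz)$ converges in $D$ to $X_t:=\int_{(0,t]}\int z\,\mu(ds,dz)$, jointly with $A(V^T)$.

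To get that convergence, I will truncate marks to $[\delta,R]$, choosing $\delta,R$ among the countably many non-atoms of the intensity measure $\E\mu([0,1]\times\cdot)$, which has finite mass on compacts by the first-moment bound. For these cutoffs, $\mu([0,1]\times\{\delta,R\})=0$ almost surely. On the event where $\mu$ restricted to $[0,1]\times[\delta,R]$ is finite, time-simple, and has no atoms at times $0,1$ or on the mark boundary, the map $m\mapsto\int_{(0,\cdot]}\int z\1_{[\delta,R]}(z)\,m(ds,dz)$ is $J_1$-continuous at $\mu$. The reason is that vague convergence of point measures on the relatively compact set $[0,1]\times[\delta,R]$ with no boundary mass lets the finitely many atoms be matched, with converging times and marks. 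The prelimit count may have near-coincident atoms, but they merge in the limit only if the limit has a double time atom, and time simplicity excludes that. The continuous mapping theorem applied on a Skorokhod representation then gives $X^{T,\delta,R}\Rightarrow X^{\delta,R}$, jointly with $A(V^T)$. As $\delta\downarrow0$ and $R\uparrow\infty$, $X^{\delta,R}\to X$ uniformly almost surely, since $\int z\,\mu<\infty$ almost surely.

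The converging-together lemma \eqref{eq:converging-together-hypotheses}--\eqref{eq:converging-together-error} then finishes the argument, provided the truncation error is small. I will bound that error in uniform norm, which dominates the Skorokhod distance, using the compensator formula and monotonicity:
\[
\E\sup_{t\le1}\left|X^T_t-X^{T,\delta,R}_t\right|
\le 2\kappa_T\,\E A_1(V^T)\int_{(0,\delta)\cup(R,\infty)}z\,F_J(dz).
\]
By Proposition~\ref{prop:first-moment}, $\E A_1(V^T)$ is bounded uniformly in $T$, and the mark integral tends to zero by dominated convergence since $m_1<\infty$. This gives the error condition uniformly in $T$; the second moment is not even needed for the raw-sum formulation. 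The main obstacle is the continuity step for the truncated sum at time-simple limits, that is, controlling $J_1$ matching when two prelimit atoms sit close in time. My first approach will be to fix a spacing $\epsilon$ smaller than the minimal gap between the limit atoms and show that, for large $T$, each $\epsilon$-window around a limit atom contains exactly one prelimit atom in $[\delta,R]$ and no others appear, by vague convergence of counts on boxes with null boundaries. A piecewise linear time change then matches the jump times.
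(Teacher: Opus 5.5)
Your proof is correct. Its core step is the same as the paper's: you match the finitely many atoms of $\bar\mu_J^T$ on $[0,1]\times[\delta,R]$ with those of $\mu$ through a piecewise-linear time change, using time simplicity and the absence of endpoint and mark-boundary atoms.

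The difference is the decomposition. The paper keeps $P^{T,J}$ in compensated form and truncates both the compensated jump integral and the correction $\int g_\Lambda(z)\,\bar\nu_J^T(ds,dz)$ to marks in $[\delta,R]$. It therefore needs:
\begin{itemize}
\item cutoffs that also avoid atoms of $\nu_J$;
\item the assumed vague convergence of the compensators, to get uniform convergence of the truncated compensator integrals;
\item Doob's $L^2$ inequality and the random-measure isometry to remove the cutoffs, which costs the second mark moment. For the limiting tails it also uses the martingale property of $\mu-V^p_s\nu_J(dz)ds$, which comes from Proposition~\ref{prop:joint-tightness-identification} in the application rather than from the lemma's hypotheses.
\end{itemize}

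You instead use $\Lambda z-(e^{-\Lambda z}-1+\Lambda z)=1-e^{-\Lambda z}$ to write $P^{T,J}=-\Lambda X^T+c_TA(V^T)$. The whole compensator then becomes a continuous functional of $V^T$, so it needs neither truncation nor the compensator hypothesis. The only truncation acts on a nonnegative raw sum, and its error is controlled in $L^1$ by the compensator formula using just $m_1<\infty$ and Proposition~\ref{prop:first-moment}.

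On the limit side you only need $\int z\,\mu([0,1]\times dz)<\infty$ almost surely, and you should justify this explicitly. Fatou along the Skorokhod representation, applied to continuous compactly supported approximations of $z$ from below, gives $\E\int z\,\mu([0,1]\times dz)\le\sup_T 2\kappa_T m_1\E A_1(V^T)<\infty$. This also makes the limiting identity $-\Lambda X+cA(V)$ purely pathwise, so your opening appeal to the compensator of $\mu$ is unnecessary.

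Your route is shorter and uses weaker hypotheses, because it exploits finite activity with a finite first moment. The paper's compensated formulation is the one that would survive in an infinite-activity regime with $\int_{(0,1)}z\,\nu_J(dz)=\infty$, where the raw sum $X$ is not defined.
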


\begin{proof}
Write
\[
  \nu(ds,dz):=V_s^p\nu_J(dz)ds,
  \qquad
  g_\Lambda(z):=e^{-\Lambda z}-1+\Lambda z.
\]
For $0<\delta<R<\infty$, define
\[
  f_{1,\delta,R}(z):=z\mathbf 1_{[\delta,R]}(z),
  \qquad
  f_{2,\delta,R}(z):=g_\Lambda(z)\mathbf 1_{[\delta,R]}(z),
\]
and the truncated price functionals
\begin{align*}
  P_t^{T,J;\delta,R}
  &:=-\Lambda\int_{(0,t]}f_{1,\delta,R}(z)
  \left(\bar\mu_J^T-\bar\nu_J^T\right)(ds,dz)
  -\int_{(0,t]}f_{2,\delta,R}(z)\bar\nu_J^T(ds,dz),\\
  P_t^{J;\delta,R}
  &:=-\Lambda\int_{(0,t]}f_{1,\delta,R}(z)
  (\mu-\nu)(ds,dz)
  -\int_{(0,t]}f_{2,\delta,R}(z)\nu(ds,dz).
\end{align*}

We first prove convergence for admissible cutoff levels. For almost every
$z>0$,
\[
  \Q\big(\mu([0,1]\times\{z\})>0\big)=0;
\]
this follows from Tonelli's theorem because the set of mark coordinates of a
boundedly finite point measure is pathwise countable. Excluding also the at
most countable set of atoms of $\nu_J$, we may choose deterministic
$0<\delta<R$ such that
\begin{equation}
\label{eq:admissible-price-cutoffs}
  \mu([0,1]\times\{\delta,R\})=0,
  \qquad
  \nu_J(\{\delta,R\})=0
  \quad\text{almost surely}.
\end{equation}

Use a Skorokhod representation along the considered subsequence, jointly for
$\bar\mu_J^T$ and $\bar\nu_J^T$, so that both vague convergences hold almost
surely. All limits in the remainder of this cutoff argument are understood
almost surely as $T\to\infty$ along the considered subsequence. On
$[0,1]\times[\delta,R]$ write
\[
  \mu=\sum_{k=1}^N\delta_{(t_k,z_k)},
  \qquad 0<t_1<\cdots<t_N<1.
\]
The finiteness of $N$, the strict ordering of the times, and the absence of
endpoint atoms follow from the hypotheses. Vague convergence on the
continuity set in \eqref{eq:admissible-price-cutoffs} gives, for all large
$T$, a labeling
\[
  \bar\mu_J^T\big|_{[0,1]\times[\delta,R]}
  =\sum_{k=1}^N\delta_{(t_k^T,z_k^T)},
  \qquad
  \left(t_k^T,z_k^T\right)\longrightarrow(t_k,z_k).
\]
Let $\lambda_T$ be the increasing piecewise-linear bijection of $[0,1]$
that maps $t_k$ to $t_k^T$ and fixes $0$ and $1$. Then,
\[
  \|\lambda_T-\mathrm{id}\|_\infty\longrightarrow0,
\]
and, with
\[
  X_T(t):=\int_{(0,t]}f_{1,\delta,R}(z)\bar\mu_J^T(ds,dz),
  \qquad
  X(t):=\int_{(0,t]}f_{1,\delta,R}(z)\mu(ds,dz),
\]
we have
\[
  \|X_T\circ\lambda_T-X\|_\infty
  \le
\sum_{k=1}^N\left|f_{1,\delta,R}\left(z_k^T\right)-f_{1,\delta,R}(z_k)\right|
  \longrightarrow0.
\]
Thus $X_T\to X$ in the $J_1$ topology.

For $j=1,2$, set
\[
  H_T^j(t):=\int_{(0,t]}f_{j,\delta,R}(z)\bar\nu_J^T(ds,dz),
  \qquad
  H^j(t):=\int_{(0,t]}f_{j,\delta,R}(z)\nu(ds,dz).
\]
The limiting measure $\nu$ is atomless in time. Vague convergence of the
finite weighted measures and continuity of $H^j$ therefore imply
\[
  \max_{j=1,2}\left\|H_T^j-H^j\right\|_\infty\longrightarrow0.
\]
Combining this uniform convergence with the $J_1$ convergence of $X_T$ gives
\begin{equation}
\label{eq:truncated-jump-price-convergence}
  P^{T,J;\delta,R}\Rightarrow P^{J;\delta,R}
  \qquad\text{in }(D([0,1],\mathbb R),J_1).
\end{equation}
Time simplicity is essential here: it prevents distinct prelimit jumps from
merging into one limiting time, which would in general destroy $J_1$
continuity.

For the martingale term in the cutoff errors, we use Doob's $L^2$ inequality
and the random-measure isometry. Thus the error caused by excluding marks in $(0,\delta)$ vanishes as
$\delta\downarrow0$, by the estimates:
\[
  \E\left[
  \sup_{t\le1}\left|
  \int_0^t\int_{(0,\delta)}z
  \left(\bar\mu_J^T-\bar\nu_J^T\right)(ds,dz)
  \right|^2
  \right]
  \le
  C\int_{(0,\delta)}z^2F_J(dz)\sup_{T\ge T_0}\E\int_0^1V_s^Tds,
\]
and
\[
  \E\int_0^1\int_{(0,\delta)}
  \left|e^{-\Lambda z}-1+\Lambda z\right|\bar\nu_J^T(ds,dz)
  \le
  C\int_{(0,\delta)}z^2F_J(dz)\sup_{T\ge T_0}\E\int_0^1V_s^Tds.
\]
The upper tail is controlled similarly by
\[
C\int_{(R,\infty)}z^2F_J(dz)\sup_{T\ge T_0}\E\int_0^1V_s^Tds
+
    C\int_{(R,\infty)}\left|e^{-\Lambda z}-1+\Lambda z\right|F_J(dz)
    \sup_{T\ge T_0}\E\int_0^1V_s^Tds.
\]
For the lower compensator correction we used
$|e^{-\Lambda z}-1+\Lambda z|\le C_\Lambda z^2$ on bounded neighborhoods of
zero, for fixed $\Lambda$. The first upper-tail term vanishes by
$m_2<\infty$, equivalently by the constructed second moment
$\int_{(0,\infty)} z^2\nu_J(dz)<\infty$. For the second upper-tail term,
$\nu_J=2\kappa F_J$ with $\kappa>0$, and $\kappa_T$ is bounded, so the
$F_J$-tail is controlled by the corresponding $\nu_J$-tail in Assumption
\ref{ass:jump-integrability}. Hence all tails vanish as
$\delta\downarrow0$ and $R\to\infty$, using also Proposition
\ref{prop:first-moment}. More explicitly, for every $\varepsilon>0$,
\begin{equation}
\label{eq:prelimit-jump-price-converging-together}
  \lim_{\substack{\delta\downarrow0\\R\uparrow\infty}}
  \sup_{T\ge T_0}
  \Q\left(
  \left\|P^{T,J}-P^{T,J;\delta,R}\right\|_\infty>\varepsilon
  \right)=0.
\end{equation}
The limiting estimates are identical with $F_J$ replaced by $\nu_J$, and
give
\begin{equation}
\label{eq:limit-jump-price-converging-together}
  \lim_{\substack{\delta\downarrow0\\R\uparrow\infty}}
  \Q\left(
  \left\|P^J-P^{J;\delta,R}\right\|_\infty>\varepsilon
  \right)=0,
\end{equation}
where
\[
  P_t^J
  =
  -\Lambda\int_{(0,t]}z(\mu-\nu)(ds,dz)
  -\int_{(0,t]}g_\Lambda(z)\nu(ds,dz).
\]
The fixed-cutoff convergence
\eqref{eq:truncated-jump-price-convergence}, together with
\eqref{eq:prelimit-jump-price-converging-together} and
\eqref{eq:limit-jump-price-converging-together}, yields
$P^{T,J}\Rightarrow P^J$ by the converging-together lemma: take any
admissible sequence $(\delta_n,R_n)$ satisfying
\eqref{eq:admissible-price-cutoffs}, with
$\delta_n\downarrow0$ and $R_n\uparrow\infty$, and use
$X^{T,n}=P^{T,J;\delta_n,R_n}$ and
$X^n=P^{J;\delta_n,R_n}$ in
\eqref{eq:converging-together-hypotheses}--
\eqref{eq:converging-together-error}.
\end{proof}

\subsection{Proof of Theorem~\ref{thm:main}}
\label{subsec:proof_of_main_theorem}

\begin{proof}[Proof of Theorem~\ref{thm:main}]
Let
\[
  \Pi_T
  :=
  \operatorname{Law}\left(V^T,B^T,W^T,\bar\mu_J^T,P^T\right)
\]
on the product space in the theorem. Proposition
\ref{prop:joint-tightness-identification} gives tightness of the first four
coordinates.
Define the cumulative-integral map
$A:L^2([0,1])\to C([0,1])$ by
\[
  [A(v)]_t=:A_t(v):=\int_0^tv_sds.
\]
Thus $A(v)$ denotes the entire cumulative-integral path, whereas
$A_t(v)$ denotes its value at time $t$. With this notation,
\[
  P^{T,c}=W^T-\frac12A(V^T),
\]
and
\begin{equation}
\label{eq:A-map-continuity-main-proof}
  \|A(v)-A(w)\|_\infty
  \le \|v-w\|_{L^1([0,1])}
  \le \|v-w\|_{L^2([0,1])}.
\end{equation}
The vanishing-jump bracket estimates for $W^T$ make it $C$-tight, while
\eqref{eq:A-map-continuity-main-proof} makes $A(V^T)$ tight in
$C([0,1])$. Thus $P^{T,c}$ is $C$-tight. For fixed admissible
$0<\delta<R$, the truncated jump-price functional is tight by the atom and
compensator mapping argument leading to
\eqref{eq:truncated-jump-price-convergence}. The uniform estimate
\eqref{eq:prelimit-jump-price-converging-together} then gives tightness of
$P^{T,J}$. Product tightness gives tightness of
$(P^{T,c},P^{T,J})$, and addition in the $J_1$ topology is continuous at
pairs whose first coordinate is continuous. Since
$P^T=P^{T,c}+P^{T,J}$, the family $(\Pi_T)$ is tight.

Now let $(T_n)$ be an arbitrary sequence tending to infinity. By tightness,
there is a subsequence, still denoted $(T_n)$, and a probability law $\Pi$
such that
\begin{equation}
\label{eq:arbitrary-augmented-limit}
  \left(V^{T_n},B^{T_n},W^{T_n},\bar\mu_J^{T_n},P^{T_n}\right)
  \Rightarrow
  \left(V,B^\infty,W^\infty,\mu,P\right)
  \quad\text{under }\Pi.
\end{equation}
Write $A:=A(V)$, so that
$A_t=[A(V)]_t=\int_0^tV_sds$.
Proposition~\ref{prop:joint-tightness-identification} identifies, in one
common filtration under $\Pi$,
\begin{equation}
\label{eq:augmented-limit-characteristics}
  \langle B^\infty\rangle=A,
  \qquad
  \langle W^\infty\rangle=A,
  \qquad
  \langle B^\infty,W^\infty\rangle=\rho A,
  \qquad
  \nu^\mu(dt,dz)=V_t^p\nu_J(dz)dt,
\end{equation}
together with continuous--jump orthogonality and absence of atoms at the time
endpoints. Proposition~\ref{prop:point-process-stability} gives time
simplicity. Theorem~\ref{thm:subseq} then gives the variance equation
\eqref{eq:identified-bondi} and the canonical relation
\begin{equation}
\label{eq:augmented-limit-canonical-relation}
  (\mathrm{id}\times S_{\bar\xi})_\#\mu=\mathcal J(Z).
\end{equation}

Corollary~\ref{cor:variance-convergence} identifies the variance/jump pair,
and Lemma~\ref{lem:limiting-compensator-identification} supplies the joint
vague convergence of the compensators required in
Lemma~\ref{lem:jump-price-functional-convergence}. Applying that lemma along
\eqref{eq:arbitrary-augmented-limit} identifies the last coordinate as
\[
\begin{aligned}
P_t
&=W_t^\infty-\frac12A_t
-\Lambda\int_{(0,t]\times(0,\infty)}
z\left(\mu(ds,dz)-V_s^p\nu_J(dz)ds\right)\\
&\quad-\int_0^t\int_{(0,\infty)}
\left(e^{-\Lambda z}-1+\Lambda z\right)
V_s^p\nu_J(dz)ds,
\end{aligned}
\]
which is an adapted functional of $(A,W^\infty,\mu)$. Equations
\eqref{eq:augmented-limit-characteristics},
\eqref{eq:identified-bondi}, and
\eqref{eq:augmented-limit-canonical-relation}, together with this price
identity, show that $\Pi$ is the law of a complete canonical weak solution in
the sense of Definition~\ref{def:complete-canonical-solution}.

The constructed coefficients satisfy $\bar c_B>0$, $\bar\xi>0$, and
$\rho\in(-1,0)$. Let $\Pi_\star$ denote the unique complete canonical law
given by Proposition~\ref{prop:complete-weak-uniqueness}. The preceding
argument shows that every subsequential limit $\Pi$ equals $\Pi_\star$:
\[
  \Pi=\Pi_\star.
\]
Because the initial sequence $(T_n)$ was arbitrary, the subsequence principle
gives
\[
  \Pi_T\Rightarrow\Pi_\star
  \qquad\text{as }T\to\infty.
\]
Projection onto $(V^T,\bar\mu_J^T)$ and
$(P^T,V^T,\bar\mu_J^T)$ gives the two remaining assertions of the theorem.
\end{proof}

\section{Numerical illustrations}
\label{sec:numerical-illustrations}

We first illustrate the variance convergence in
Corollary~\ref{cor:variance-convergence}, for the microscopic variance proxy
$V^T$ defined in \eqref{eq:VT-definition} and the limiting variance
$\sigma^2$ solving \eqref{eq:limit-bondi-form}, using the canonical kernel
\eqref{eq:canonical-kernel}.
We set
\[
 \alpha=0.65,\quad \mu_0=\bar p=m_\star=1,\quad
 \theta=0.2,\quad c_J=0.4,\quad r_\star=1.25,
\]
and use the exact finite-$T$ specifications
\[
  a_T=1-\mu_0T^{-\alpha},\qquad
  \bar p_T\equiv\bar p,\qquad
  p_T=(1-a_T)^2\bar p,\qquad
  r_T=\frac{r_\star}{1-a_T},
\]
which instantiate \eqref{eq:near-critical}, \eqref{eq:pT}, and
\eqref{eq:rT}. We take $F_J=\operatorname{Exp}(2)$, so that $m_1=1/2$ in
\eqref{eq:FJ-moments}, set $q=0.5$ and $\beta=3$ in
\eqref{eq:sign-activity-weights} and \eqref{eq:rho}, and use
$(\theta,m_\star)$ in the normalization \eqref{eq:VT-definition} and $c_J$
in the marked intensity \eqref{eq:lambda-marked}. The raw limiting baseline
in Assumption~\ref{ass:baseline} is set to the constant $g_0=0.04$.
At each integer $T=2,\ldots,50$, the exact
immigration--branching representation generates $100{,}000$ independent
marked Hawkes paths.  The value $T=1$ is outside the finite-scale parameter
region because it gives $a_1=0$ and $p_1=1$.  Both $V_1^T$ and
$\int_0^1V_t^Tdt$ are evaluated analytically from the simulated event times.
The limiting laws are recovered deterministically from their affine
Riccati--Volterra transforms using 40 positive exponential kernel factors,
16,384 transform time steps, and 2,048 COS terms.

Figure~\ref{fig:marked-hawkes-convergence} reports the empirical
Wasserstein--1 distances
\[
  \mathcal{W}_1\!\left(\mathcal L(V_1^T),\mathcal L(\sigma_1^2)\right)
  \quad\text{and}\quad
  \mathcal{W}_1\!\left(
    \mathcal L\!\left(\int_0^1V_t^Tdt\right),
    \mathcal L\!\left(\int_0^1\sigma_t^2dt\right)
  \right),
\]
where $V^T$ is defined in \eqref{eq:VT-definition} and $\sigma^2$ solves
\eqref{eq:limit-bondi-form}.
The shaded bands are the
5--95\% ranges obtained by repeatedly comparing two independent samples of
size $100{,}000$ drawn from the limiting law.  From $T=2$ to $T=50$, the
terminal distance decreases from $2.37\times10^{-2}$ to $3.09\times10^{-3}$,
while the integrated-variance distance decreases from $1.20\times10^{-2}$ to
$7.63\times10^{-4}$.  The larger sample makes the decreasing trend and the
large-$T$ plateau substantially smoother.  The plateau remains above the
narrower Monte Carlo resolution bands, especially for the integrated
variance, and therefore records a finite-$T$ pre-asymptotic discrepancy
together with the numerical approximation of the affine reference law.
The right panel uses a continuous functional on $L^2([0,1])$, while the left
panel provides the complementary endpoint-marginal comparison.  The dense
grid therefore displays the distributional approximation over the
pre-asymptotic range without fitting a rate below numerical resolution.

\begin{figure}[H]
  \centering
  \includegraphics[width=\textwidth]{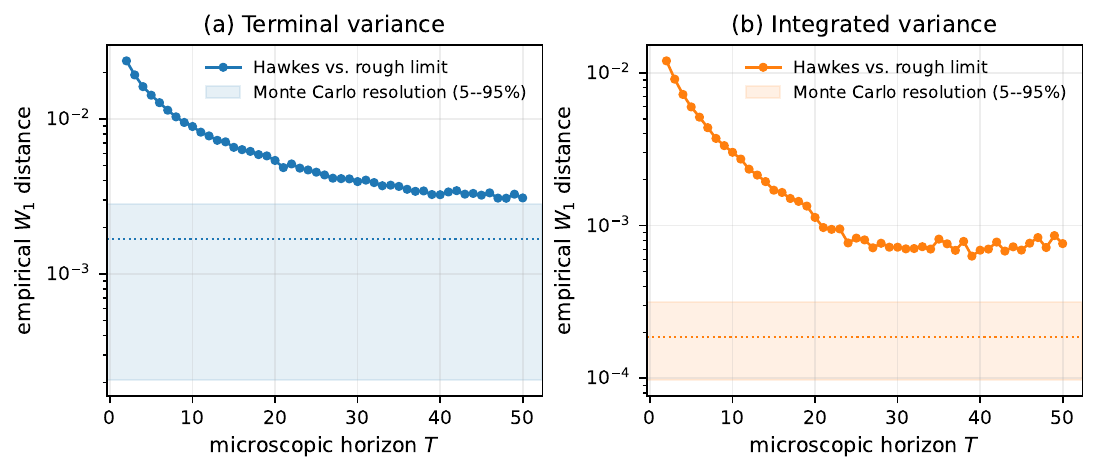}
  \caption{Distributional approximation of the finite-$T$ marked Hawkes
  variance proxy by its rough Volterra limit.  The left panel compares the
  terminal marginals $V_1^T$ and $\sigma_1^2$; the right panel compares the
  integrated variances.  Each point uses $100{,}000$ exact-cluster Hawkes
  simulations; the limiting laws are obtained by deterministic affine-transform
  inversion.  Shading shows the 5--95\% Monte Carlo resolution range for two
  samples of size $100{,}000$ from the limiting law.}
  \label{fig:marked-hawkes-convergence}
\end{figure}

We next compare option prices generated by the finite-$T$ microscopic model
with those of its rough Hawkes--Heston limit.  We adopt the exponential jump
law and the parameter vector of \citet{bondi2024}:
\[
 \alpha=0.527,\quad \rho=-0.731,\quad b=-1.812,\quad
 \bar c_B=0.115,\quad \Lambda=0.276,
 \quad G_0(t)=0.0079+\frac{0.049t^\alpha}{\Gamma(\alpha+1)}.
\]
These are the reduced-form coefficients of \eqref{eq:limit-bondi-form}, with
$k_\alpha$ given by \eqref{k:alpha}. The corresponding microscopic
parameters are obtained by inverting the coefficient map
\eqref{eq:limiting-coefficient-map}.
Bondi et al. normalize the variance-jump loading to one.  Taking
$\bar\xi=1$, $F_J=\operatorname{Exp}(1)$, and applying
Proposition~\ref{prop:coefficient-attainability} with $\beta=8$ gives the
following concrete microscopic instance:
\[
 (\mu_0,\bar p,c_J,r_\star,\theta,m_\star,\beta,q,F_J)
 = (5.2653,0.006772,1,26.2566,0.07131,1,8,0.041605,\operatorname{Exp}(1)).
\]
These values recover $\rho=-0.731$, $b=-1.812$, and $\bar c_B=0.115$.  Since
$\rho(\beta,1)\in(-1/\sqrt{2},0)$, the target $\rho=-0.731<-1/\sqrt{2}$
rules out $q=1$ in \eqref{eq:rho}.  Moreover,
\[
 \max_{q\in(0,1]}|\rho(\beta,q)|=\frac{\beta-1}{\beta+1},
 \qquad\text{attained at }q=1/\beta,
\]
so any realization requires $\beta\ge(1+0.731)/(1-0.731)=6.435$;
our admissible choice $\beta=8$ gives $q=0.041605$.  The corresponding
weights, defined in \eqref{eq:sign-activity-weights}, are
$\omega_+=1.9201$ and $\omega_-=0.07989$: positive regular
events are more frequent but have the smaller tick impact, while negative
regular events carry the larger variance loading.  The value $\beta=8$ is one
admissible realization; the lower bound is the implication identified by
the reduced-form correlation.  The microscopic
baseline is chosen by inverting the fractional baseline map so that its limit
is the displayed curve $G_0$.  At scaling horizon $T=1000$, we simulate
$100,000$ exact immigration--branching paths of the microscopic price
$P^T$.  OTM put and call prices are sample averages of payoffs of
$S^T=\exp(P^T)$, with strikes measured relative to the corresponding
Monte Carlo forward.  The affine limit uses 32 kernel factors and the Fourier
inversion of \citet{bondi2024}.

Figure~\ref{fig:bondi-smiles} compares the two SPX smiles at maturities
$t=23/730$ and $t=66/730$.  The microscopic prices reproduce the deep-put
negative skew, the at-the-money minimum, and the short right-wing upturn of
the affine limit.  The error bars are 95\% Monte Carlo intervals obtained by
propagating the payoff standard errors through the implied-volatility
inversion; they are largest for the deepest puts at the shorter maturity.

\begin{figure}[H]
  \centering
  \includegraphics[width=\textwidth]{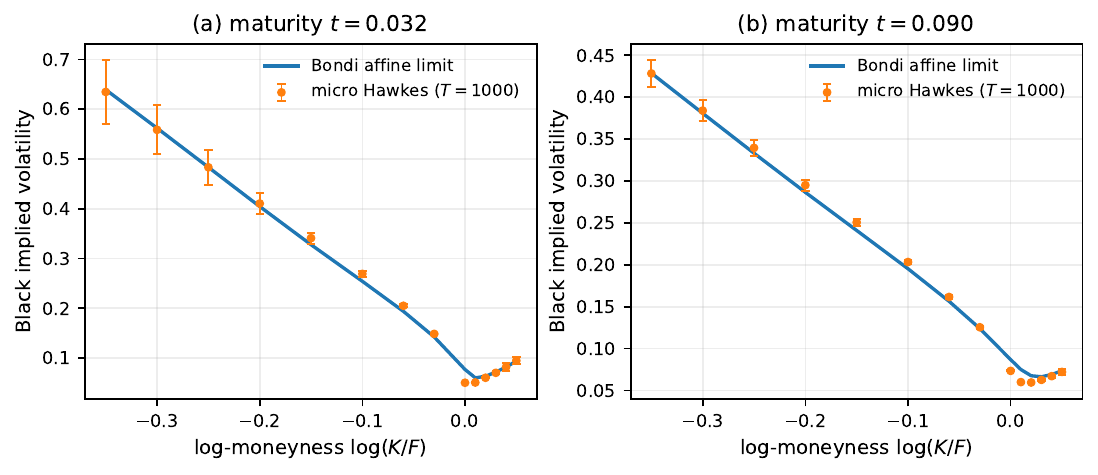}
  \caption{Finite-scale microscopic and affine-limit SPX implied-volatility
  smiles under the parameterization of \citet{bondi2024}.  The orange points
  use $100{,}000$ exact-cluster paths at microscopic scaling horizon $T=1000$;
  error bars show 95\% Monte Carlo intervals.  The blue curves are computed
  from the rough Hawkes--Heston affine transform.  Log-moneyness is measured
  relative to each model's forward price.}
  \label{fig:bondi-smiles}
\end{figure}

\section{Conclusion}
\label{sec:conclusion}

We provided a microstructural foundation for the rough Hawkes--Heston model by constructing a Poisson-embedded marked Hawkes order-flow model. Ordinary arrivals generate rough continuous volatility and leverage through a nearly unstable heavy-tailed Hawkes mechanism, while rare marked arrivals represent common shock events that produce simultaneous price jumps and volatility excitation. Under the nearly unstable scaling and the reduced-form admissibility conditions,
we showed the convergence along the full sequence for 
the complete rescaled price/variance/jump system to the unique complete canonical rough Hawkes--Heston weak solution.
The Hawkes renewal structure
yields a Mittag--Leffler Volterra representation, which is then rewritten in
Riemann--Liouville fractional form. We expressed the limiting coefficients 
in terms of the microscopic parameters. The effective Volterra drift
is determined by the balance between the near-critical Hawkes gap and the
predictable mean of common-jump feedback. Within the subcritical scaling
regime, this balance restricts the drift to be negative. The continuous price
martingale and its leverage correlation are generated by regular order flow
through a martingale-balanced asymmetry of sign-specific arrival rates and
tick sizes. Our construction combined this endogenous mechanism with
reduced-form exponential price normalization and direct marked-event price
loading and provided a
microstructural foundation for the variance and common-jump mechanism of the
rough Hawkes--Heston model.
Numerical experiments were provided 
to illustrate the convergence 
of our microstructural foundation to the rough Hawkes-Heston model.

\section*{Acknowledgements}

The authors would like to thank Mathieu Rosenbaum for helpful discussions.

\appendix
\section{Technical estimates}
\label{app:technical}

\subsection{Measurable recovery of the variance driver}
\label{app:measurable-recovery}

We complete the measurable-recovery step used in
Lemma~\ref{lem:recover-jump-measure}. For
$v\in L^2_{\mathrm{loc}}(\mathbb R_+)$, set
\[
  F_v:=L_\alpha*(v-G_0).
\]
On every compact interval, $L_\alpha\in L^1$, so Young's inequality makes
$v\mapsto F_v$ continuous from local $L^2$ to local $L^2$. For
$q\in\mathbb Q_+$, define
\[
  R_{n,q}(v):=n\int_q^{q+1/n}F_v(r)dr,
\]
and let $\ell_q(v)$ be its limit as $n\to\infty$ when the limit exists, and
zero otherwise. Each $\ell_q$ is Borel because the existence and value of the
limit are described by countably many relations among continuous functions.

Consider the evaluation embedding
\[
  \iota:D_{\mathrm{loc}}(\mathbb R_+)
  \longrightarrow\mathbb R^{\mathbb Q_+},
  \qquad x\longmapsto(x(q))_{q\in\mathbb Q_+}.
\]
It is injective and Borel. Since $D_{\mathrm{loc}}$ is a standard Borel
space, the Lusin--Souslin theorem implies that $\iota(D_{\mathrm{loc}})$ is
Borel and that $\iota^{-1}$ is Borel on its image. Hence
\[
  \mathcal R(v)
  =
  \begin{cases}
  \iota^{-1}\left((\ell_q(v))_{q\in\mathbb Q_+}\right),
  &(\ell_q(v))_q\in\iota(D_{\mathrm{loc}}),\\
  0,&\text{otherwise},
  \end{cases}
\]
defines a Borel map from $L^2_{\mathrm{loc}}$ to $D_{\mathrm{loc}}$.

For an actual weak solution, the resolvent formula in the proof of
Lemma~\ref{lem:recover-jump-measure} gives $F_{\sigma^2}=Z$ almost everywhere.
Right continuity of $Z$ then yields, for every $q\in\mathbb Q_+$,
\[
  R_{n,q}(\sigma^2)
  =n\int_q^{q+1/n}Z_rdr\longrightarrow Z_q,
\]
and therefore $\mathcal R(\sigma^2)=Z$.

Finally, the map
\[
  \mathcal J(x)=\sum_{s:\,\Delta x_s\ne0}\delta_{(s,\Delta x_s)}
\]
is Borel on $D_{\mathrm{loc}}$. Indeed, on each compact time interval the
finitely many jumps of absolute size at least $1/m$ can be ordered by Borel
hitting-time maps; allowing the horizon and $m$ to range over the positive
integers constructs $\mathcal J$ as a countable limit of Borel point-measure
maps. This proves the measurable assertions used in the lemma.

\subsection{Renewal-density estimates for the canonical kernel}
\label{app:renewal-density}

The next lemmas record the renewal-density asymptotics needed to pass from the
nearly unstable microscopic Hawkes kernel to the macroscopic Mittag--Leffler
kernel. The convergence is obtained from the stable density local limit theorem
together with a subordination representation of $K$, and the fractional
envelope used repeatedly in tightness and stochastic-convolution estimates
follows by combining the local limit theorem with the local large-deviation
estimate below.

\begin{lemma}[Density local limit theorem for the canonical kernel]
\label{lem:density-llt}
Let $\phi(t)=\alpha(1+t)^{-1-\alpha}$ with $\alpha\in(1/2,1)$, and let
$g_\alpha$ denote the density of the one-sided $\alpha$-stable law with
Laplace transform
\[
  \int_0^\infty e^{-\lambda y}g_\alpha(y)\,dy
  =\exp\!\left(-\Gamma(1-\alpha)\lambda^\alpha\right),
  \qquad \lambda\ge0 .
\]
Then
\begin{equation}
\label{eq:density-llt}
  \lim_{k\to\infty}\ \sup_{y>0}\
  \left|k^{1/\alpha}\phi^{*k}\!\left(k^{1/\alpha}y\right)-g_\alpha(y)\right|=0 .
\end{equation}
In particular, there exists $C_*<\infty$ such that
$\|\phi^{*k}\|_\infty\le C_*k^{-1/\alpha}$ for every $k\ge1$.
\end{lemma}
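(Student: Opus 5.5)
We prove \eqref{eq:density-llt} by Fourier inversion, which is the standard route to a uniform density local limit theorem for i.i.d.\ sums in the domain of attraction of a one-sided stable law. Let $X_1,X_2,\dots$ be i.i.d.\ with density $\phi$. Then $k^{1/\alpha}\phi^{*k}(k^{1/\alpha}\cdot)$ is the density of $S_k:=k^{-1/\alpha}(X_1+\cdots+X_k)$. Set $\varphi(u):=\int_0^\infty e^{iut}\phi(t)dt$. By the same Abelian computation as in Proposition~\ref{prop:resolvent-scaling}, now with a complex argument, we get
\[
  1-\varphi(u)\sim\Gamma(1-\alpha)(-iu)^\alpha,\qquad u\to0,
\]
using the principal branch. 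This uses regular variation of the tail $\int_t^\infty\phi=(1+t)^{-\alpha}$ and the identity $\int_0^\infty(1-e^{iux})\alpha x^{-1-\alpha}dx=\Gamma(1-\alpha)(-iu)^\alpha$. Hence $\varphi(u/k^{1/\alpha})^k\to\exp(-\Gamma(1-\alpha)(-iu)^\alpha)=:\widehat g_\alpha(u)$ pointwise, which is the characteristic function of $g_\alpha$. Note that $|\widehat g_\alpha(u)|=\exp(-\Gamma(1-\alpha)\cos(\pi\alpha/2)|u|^\alpha)$ is integrable.

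To obtain sup-norm convergence of densities, we bound
\[
  \sup_y\left|k^{1/\alpha}\phi^{*k}(k^{1/\alpha}y)-g_\alpha(y)\right|\le\frac1{2\pi}\int_{\mathbb R}\left|\varphi(u/k^{1/\alpha})^k-\widehat g_\alpha(u)\right|du.
\]
This is valid once $\varphi^{k}\in L^1$ for large $k$. Since $\phi$ is bounded and in $L^1\cap L^\infty\subset L^2$, Plancherel gives $\varphi\in L^2$, so $\varphi^k\in L^1$ for all $k\ge2$. The inversion formula then holds, and $\phi^{*k}$ is continuous for $k\ge2$. We split the integral into three ranges.

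\textbf{Range (i), $|u|\le A$:} dominated convergence applies directly, since both integrands are bounded.

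\textbf{Range (ii), $A<|u|\le\varepsilon k^{1/\alpha}$:} the expansion of $1-\varphi$ gives $|\varphi(v)|\le\exp(-c|v|^\alpha)$ for $|v|\le\varepsilon$, because $\Re(-iv)^\alpha=\cos(\pi\alpha/2)|v|^\alpha>0$. This yields the integrable majorant $e^{-c|u|^\alpha}$, so the contribution is small uniformly in $k$ once $A$ is large.

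\textbf{Range (iii), $|u|>\varepsilon k^{1/\alpha}$:} since $\phi$ is a genuine density and nonlattice, $\sup_{|v|\ge\varepsilon}|\varphi(v)|=:\gamma<1$; by Riemann--Lebesgue this supremum is attained and is below one. The contribution is then at most
\[
  k^{1/\alpha}\gamma^{k-2}\int|\varphi(v)|^2dv\to0,
\]
while $\int_{|u|>\varepsilon k^{1/\alpha}}|\widehat g_\alpha|\to0$.

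This proves \eqref{eq:density-llt}. The bound $\|\phi^{*k}\|_\infty\le C_*k^{-1/\alpha}$ follows for large $k$ from \eqref{eq:density-llt} and $\|g_\alpha\|_\infty<\infty$. For the finitely many remaining small $k$, we use $\|\phi^{*k}\|_\infty\le\|\phi\|_\infty=\alpha$, by Young's inequality.

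The main obstacle is range (ii). It requires a uniform lower bound $\Re(1-\varphi(v))\ge c|v|^\alpha$ near zero, not just the asymptotic equivalence. We would get this from the exact formula
\[
  \Re(1-\varphi(v))=\int_0^\infty(1-\cos vt)\phi(t)dt\ge\int_0^{1/|v|}(1-\cos vt)\phi(t)dt,
\]
and bound the right-hand side below by $c|v|^\alpha$ directly. This uses $\phi(t)\ge c(1+t)^{-1-\alpha}$ and $1-\cos x\ge x^2/4$ on $[0,1]$, which gives $\gtrsim v^2\int_0^{1/|v|}t^2(1+t)^{-1-\alpha}dt\asymp|v|^\alpha$. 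This also controls $|\varphi|\le1-\Re(1-\varphi)\le e^{-c|v|^\alpha}$.
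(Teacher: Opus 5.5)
Your overall route is the paper's: Fourier inversion with $L^1(du)$ convergence of $\varphi(u/k^{1/\alpha})^k$, $\varphi\in L^2$ by Plancherel, an exponential bound $|\varphi(v)|\le e^{-c|v|^\alpha}$ near the origin, and the nonlattice bound $\sup_{|v|\ge\varepsilon}|\varphi(v)|<1$ combined with $\|\varphi\|_2$ for high frequencies. Merging the paper's intermediate and large ranges into your range (iii) is fine. The small-$k$ and sup-norm consequences also match the paper.

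\textbf{The gap.} The step you single out as the main obstacle is justified incorrectly. The inequality $|\varphi|\le 1-\operatorname{Re}(1-\varphi)$ says $|\varphi|\le\operatorname{Re}\varphi$, which is backwards. It fails for every small $v\neq0$. Because $\phi$ is one-sided, $\operatorname{Im}\varphi(v)\sim\Gamma(1-\alpha)\sin(\pi\alpha/2)\operatorname{sgn}(v)|v|^\alpha$, which is of the same order as $\operatorname{Re}(1-\varphi(v))$. A lower bound on the real part alone therefore does not control the modulus.

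\textbf{A repair.} Use the identity
\[
  |\varphi(v)|^2=1-2\operatorname{Re}\bigl(1-\varphi(v)\bigr)+|1-\varphi(v)|^2 ,
\]
together with the upper bound
\[
  |1-\varphi(v)|\le\int_0^\infty\min(2,|v|t)\phi(t)\,dt\le C|v|^\alpha .
\]
Your lower bound on the real part then gives $|\varphi(v)|^2\le 1-2c|v|^\alpha+C^2|v|^{2\alpha}\le1-c|v|^\alpha$ for small $|v|$. Hence $|\varphi(v)|\le e^{-c|v|^\alpha/2}$.

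By the same identity, the complex expansion $1-\varphi(v)\sim\Gamma(1-\alpha)(-iv)^\alpha$ already suffices, contrary to your remark. It controls both $\operatorname{Re}(1-\varphi)$ from below and $|1-\varphi|$ from above.

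\textbf{The paper's way around it.} The paper avoids the imaginary part altogether by symmetrizing. Let $X,X'$ be independent with density $\phi$. Then $|\varphi|^2$ is the real characteristic function of $X-X'$, so $1-|\varphi(u)|^2=\E[1-\cos(u(X-X'))]$. The paper bounds this below by $c|u|^\alpha$ by restricting to the event $X'\in[0,1]$, $X\in[|u|^{-1}+1,2|u|^{-1}]$.

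With either repair, your proof goes through.
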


\begin{proof}
Since $\overline\Phi(t)=\int_t^\infty\phi(s)\,ds=(1+t)^{-\alpha}$, the law
with density $\phi$ belongs to the domain of normal attraction of the
one-sided $\alpha$-stable law; see, for example,
\citet[Section~8.3]{bingham1989regular}. Let
\[
  \varphi(u)=\int_0^\infty e^{iut}\phi(t)\,dt
\]
be its characteristic function. The expansion
\[
  1-\widehat\phi(\lambda)
  =\lambda\int_0^\infty e^{-\lambda t}\,\overline\Phi(t)\,dt
  =\lambda\int_0^\infty e^{-\lambda t}(1+t)^{-\alpha}dt
  \sim\Gamma(1-\alpha)\lambda^\alpha,
  \qquad\lambda\downarrow0,
\]
obtained by integration by parts and Karamata's Abelian theorem, together
with the cited stable domain-of-attraction theorem, implies
\begin{equation}
\label{eq:char-stable-domain}
\varphi\left(u/k^{1/\alpha}\right)^k\longrightarrow\widehat g_\alpha(u),
  \qquad u\in\mathbb R,
\end{equation}
where
$\widehat g_\alpha(u)=
\exp\{-\Gamma(1-\alpha)(-iu)^\alpha\}$, with the principal branch.
In particular, $|\widehat g_\alpha(u)|\le e^{-c_\alpha|u|^\alpha}$.

We prove that the convergence in \eqref{eq:char-stable-domain} holds in
$L^1(du)$. First, $\phi\in L^2(\mathbb R)$ after extension by zero to the
negative half-line, so $\varphi\in L^2(\mathbb R)$ by Plancherel's theorem.
Let $X,X'$ be independent with density $\phi$. For sufficiently small
$0<|u|\le u_0$,
\[
  1-|\varphi(u)|^2
  =\E\left[1-\cos(u(X-X'))\right].
\]
Restrict the expectation to
\[
  X'\in[0,1],
  \qquad
  X\in\left[|u|^{-1}+1,\,2|u|^{-1}\right].
\]
On this event $1\le |u|(X-X')\le2$, so
$1-\cos(u(X-X'))\ge c_0>0$. Moreover,
$\Q(X'\in[0,1])>0$, and the exact tail formula implies that
\[
\begin{aligned}
\Q\!\left(
X\in\left[|u|^{-1}+1,\,2|u|^{-1}\right]
\right)
&=
\left(2+|u|^{-1}\right)^{-\alpha}
-\left(1+2|u|^{-1}\right)^{-\alpha}\\
&=
|u|^\alpha\left(\left(1+2|u|\right)^{-\alpha}
-\left(2+|u|\right)^{-\alpha}\right)
\ge c_1|u|^\alpha,
\end{aligned}
\]
for all sufficiently small $|u|$, since the expression in parentheses
tends to $1-2^{-\alpha}>0$. Consequently
\[
  1-|\varphi(u)|^2\ge c|u|^\alpha.
\]
Using $\sqrt{1-y}\le e^{-y/2}$ for $0\le y\le1$, after decreasing $u_0$
and changing the constant, this yields
\[
  |\varphi(u)|\le \exp(-c|u|^\alpha),
  \qquad |u|\le u_0.
\]
Moreover, absolute continuity makes the distribution non-lattice, hence
$|\varphi(u)|<1$ for $u\ne0$, while the Riemann--Lebesgue lemma gives
$\varphi(u)\to0$ as $|u|\to\infty$. Consequently, for every $\delta>0$,
\[
  \rho_\delta:=\sup_{|u|\ge\delta}|\varphi(u)|<1.
\]

We split the Fourier integral into small, intermediate, and large
frequencies. On $|u|\le u_0 k^{1/\alpha}$, the preceding exponential
bound implies that
\[
  \left|\varphi\left(u/k^{1/\alpha}\right)\right|^k\le e^{-c|u|^\alpha},
\]
so dominated convergence and \eqref{eq:char-stable-domain} apply. For a
fixed $R>u_0$, the intermediate region
$u_0 k^{1/\alpha}<|u|\le Rk^{1/\alpha}$ is bounded by
$2Rk^{1/\alpha}\rho_{u_0}^k$. On the large-frequency region, Plancherel's theorem
gives
\[
  \int_{|u|>Rk^{1/\alpha}}
  \left|\varphi\left(u/k^{1/\alpha}\right)\right|^kdu
  =
  k^{1/\alpha}\int_{|v|>R}|\varphi(v)|^kdv
  \le
  k^{1/\alpha}\rho_R^{\,k-2}\|\varphi\|_2^2.
\]
Both bounds tend to zero. The corresponding tails of
$\widehat g_\alpha$ vanish by its exponential bound. Therefore
\[
  \int_{\mathbb R}
  \left|\varphi\left(u/k^{1/\alpha}\right)^k-\widehat g_\alpha(u)\right|du
  \longrightarrow0.
\]

Since $\varphi^k\in L^1$ for $k\ge2$, Fourier inversion yields
\[
  k^{1/\alpha}\phi^{*k}\left(k^{1/\alpha}x\right)
  =
  \frac1{2\pi}\int_{\mathbb R}e^{-iux}
  \varphi\left(u/k^{1/\alpha}\right)^kdu.
\]
The $L^1$ convergence of the characteristic functions implies uniform
convergence of the densities, proving \eqref{eq:density-llt}. It also gives
$\|\phi^{*k}\|_\infty\le Ck^{-1/\alpha}$ for all sufficiently large $k$.
For the finitely many remaining indices,
$\|\phi^{*k}\|_\infty\le\|\phi\|_\infty=\alpha$, and enlarging $C$ proves
the asserted bound for every $k\ge1$.
\end{proof}

\begin{lemma}[Local large-deviation upper bound]
\label{lem:lld}
There exist constants $c_0>0$ and $C_0<\infty$ such that
\begin{equation}
\label{eq:lld}
  \phi^{*k}(x)\ \le\ C_0\,k\,\phi(x)
  \qquad\text{for all }k\ge1\text{ and all }x\ge c_0\,k^{1/\alpha}.
\end{equation}
\end{lemma}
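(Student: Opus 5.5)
Because $\phi$ is decreasing, bounded by $\alpha$, and regularly varying with index $-1-\alpha$, the bound \eqref{eq:lld} should follow from a one-big-jump argument proved by induction, with the uniform density bound of Lemma~\ref{lem:density-llt} controlling the small-jump part. Write $X_1,\dots,X_k$ for i.i.d.\ variables with density $\phi$ and $S_k$ for their sum. For $x\ge c_0k^{1/\alpha}$, the event $\{S_k\in dx\}$ splits according to whether the maximal summand exceeds $x/2$ or not. On the first event, one summand, say $X_k$, is at least $x/2$. By exchangeability,
\[
\phi^{*k}(x)\le k\int_0^{x/2}\phi^{*(k-1)}(y)\,\phi(x-y)\,dy+R_k(x),
\]
where $R_k(x)$ is the density contribution from the event that all $X_i\le x/2$. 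In the first term, $x-y\ge x/2$ and $\phi(x/2)\le 2^{1+\alpha}\phi(x)$ up to a uniform constant, because $\phi(t)=\alpha(1+t)^{-1-\alpha}$ satisfies $\phi(t/2)\le 2^{1+\alpha}\phi(t)$ exactly. Since $\phi^{*(k-1)}$ is a probability density, this term is at most $2^{1+\alpha}k\phi(x)$, which already has the required form.

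The work is in bounding $R_k(x)$, which is the main obstacle. There I would use a truncated-exponential (Cramér/Fuk--Nagaev type) estimate. Condition on the truncated summands $X_i\mathbf 1_{\{X_i\le x/2\}}$; the joint density of the sum at $x$ can be written as a convolution of $k$ truncated densities. For a parameter $h=s/x$, bound it by
\[
e^{-hx}\,\|\text{tilted density}\|_\infty\,\bigl(\E e^{hX\mathbf 1_{\{X\le x/2\}}}\bigr)^k .
\]
The standard estimate gives
\[
\E e^{hX\mathbf 1_{\{X\le x/2\}}}\le 1+h\E X\wedge\cdots .
\]
Since $\E X=\infty$ for $\alpha<1$, I would instead use
\[
\E\bigl(e^{hX}-1\bigr)\mathbf 1_{\{X\le y\}}\le C h^\alpha+ e^{hy}\,\overline\Phi\text{-type terms}\le C h^{\alpha}e^{hy}(hy)^{1-\alpha}.
\]
Choosing $y=x/2$ and $h=(1+\alpha)\cdot 2\log(x^\alpha/k)/x$ makes the product $e^{-hx}(1+Ch^\alpha e^{hx/2})^k$ of order $(k/x^{\alpha})^{2(1+\alpha)/\alpha\cdot\ldots}$. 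This is at most $k x^{-1-\alpha}\cdot x^{-\epsilon}$ when $x^\alpha/k\ge c_0^\alpha$ is large, because $kh^\alpha\lesssim k x^{-\alpha}\log^\alpha(x^\alpha/k)$ is small. The sup-norm factor is controlled by splitting off one summand and using $\|\phi\|_\infty\le\alpha$ together with the exponential tilt, or alternatively by Lemma~\ref{lem:density-llt}'s bound $\|\phi^{*m}\|_\infty\le C_*m^{-1/\alpha}$. Comparing with $k\phi(x)\asymp kx^{-1-\alpha}$ then yields $R_k(x)\le C k\phi(x)$.

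If the tilting route gets messy, the fallback is a dyadic induction. Assume $\phi^{*m}(x)\le C_0m\phi(x)$ for $m<k$ on the region $x\ge c_0 m^{1/\alpha}$. Write $\phi^{*k}=\phi^{*\lceil k/2\rceil}*\phi^{*\lfloor k/2\rfloor}$ and split the convolution integral at $x/2$. On the region where one factor's argument is at least $x/2\ge c_0(k/2)^{1/\alpha}$, the induction hypothesis applies to that factor, and regular variation gives $\phi(x-y)\le 2^{1+\alpha}\phi(x)$. The two halves then contribute $C_0(k/2)2^{1+\alpha}\phi(x)$ each, with the other factor integrating to one. This naive recursion loses the constant factor $2^{1+\alpha}$ at each level, so I would correct it by splitting at $x-\epsilon x$ rather than $x/2$, giving a factor $(1-\epsilon)^{-1-\alpha}$, and absorbing the middle region via the sup bound $C_*k^{-1/\alpha}$ times tail probabilities $\overline\Phi(\epsilon x)^2$-type terms, which are $o(k\phi(x))$ when $c_0$ is large. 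Either way, the decisive step is showing the no-big-jump contribution is negligible relative to $k\phi(x)$ uniformly in $k$, and large $c_0$ is what buys this.
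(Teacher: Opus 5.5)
Your big-jump term is fine: at most one summand can exceed $x/2$, and $\phi(x/2)\le 2^{1+\alpha}\phi(x)$. The bound on $R_k(x)$, however, has a genuine gap, and it comes from putting the no-big-jump cap at $x/2$.

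First, the computation as written drops a factor. Set $L(h):=\E\bigl[e^{hX}\mathbf 1_{\{X\le x/2\}}\bigr]$. For $L(h)^k$ to stay bounded, the quantity that must be small is $kh^\alpha e^{hx/2}$ (up to powers of $hx$), not $kh^\alpha$. With your choice of $h$ one has $e^{hx/2}=(x^\alpha/k)^{1+\alpha}$, so this quantity is of order $(x^\alpha/k)^\alpha$ up to logarithms, which is large.

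More fundamentally, no choice of $h$ rescues a cap at $x/2$. For $hx$ large, $L(h)-1\gtrsim e^{hx/2}x^{-1-\alpha}/h$, and optimizing gives $e^{-hx}L(h)^k\gtrsim (kx^{-\alpha})^2/\log^2(x^\alpha/k)$ for every $h$; the Fuk--Nagaev exponent here is $x/y=2$. The local factors you offer are $\|\phi\|_\infty=O(1)$, or $C_*k^{-1/\alpha}$ from Lemma~\ref{lem:density-llt}, which moreover concerns $\phi^{*k}$ rather than the tilted density. Even the latter leaves a bound of order $k^{2-1/\alpha}x^{-2\alpha}$. This exceeds $k\phi(x)\asymp kx^{-1-\alpha}$ by the factor $k^{1-1/\alpha}x^{1-\alpha}$ (up to logarithms), which is unbounded as $x\to\infty$ for fixed $k$. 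You would need a local factor of order $x^{\alpha-1}/k$, which is much smaller than $k^{-1/\alpha}$ on $x\gg k^{1/\alpha}$. That amounts to a genuine local large-deviation estimate for the tilted sum, and nothing in the proposal supplies it.

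The missing idea is to lower the cap to $\gamma x$, with $\gamma$ small enough that the constrained estimate has exponent $m$ with $m\alpha>1+\alpha$, i.e.\ $m\ge3$ for $\alpha\in(1/2,1)$. The whole range $M_k:=\max_{i\le k}X_i\in(\gamma x,x]$ then goes to the union bound $k\int_{\gamma x}^x\phi(u)\phi^{*(k-1)}(x-u)\,du\le k\phi(\gamma x)\le C_\gamma k\phi(x)$, which costs no more than your $x/2$ term.

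The paper does exactly this with $\gamma=1/5$. A one-coordinate shift (using $\phi(x_1)\le 2^{1+\alpha}\phi(x_1+h)$) turns the small-maximum density into $\Q(S_k\in[x,x+1],\,M_k\le x/3)$. With $A(x)=(1+x)^\alpha$ and $a_k=A^{-1}(k)\sim k^{1/\alpha}$, the Caravenna--Doney bound $Ca_k^{-1}(k/A(x))^3$ is then $\le Ck\phi(x)$, because $k^2a_k^{-1}(1+x)^{1-2\alpha}\le C$ when $x\ge c_0k^{1/\alpha}$.

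Your tilting route can be repaired in the same spirit without that citation. Write $\phi_y:=\phi\mathbf 1_{[0,y]}$ and cap at $x/6$, so the small-maximum density is $\phi_{x/6}^{*k}(x)$.
\begin{itemize}
\item Split $\phi_{x/6}^{*k}=\phi_{x/6}^{*\lceil k/2\rceil}*\phi_{x/6}^{*\lfloor k/2\rfloor}$ according to which argument exceeds $x/2$.
\item Bound the other factor by $\|\phi^{*m}\|_\infty\le C_*m^{-1/\alpha}$; truncation only lowers densities.
\item Bound $\Q(S_m\ge x/2,\,M_m\le x/6)\le(Ckx^{-\alpha})^3$ by Chernoff.
\end{itemize}
The resulting $Ck^{3-1/\alpha}x^{-3\alpha}$ is $\le Ck\phi(x)$ for the same reason as in the paper.

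Your dyadic fallback does not close either.
\begin{itemize}
\item With fixed $\epsilon$, the factor $(1-\epsilon)^{-1-\alpha}$ per level compounds over $\log_2k$ levels to a positive power of $k$.
\item Bounding the middle region by $C_*k^{-1/\alpha}$ times a single tail probability gives order $k^{1-1/\alpha}(\epsilon x)^{-\alpha}$. This is again much larger than $kx^{-1-\alpha}$ when $x\gg k^{1/\alpha}$.
\item A second tail factor would require the induction hypothesis at $\epsilon x$. That hypothesis is unavailable when $\epsilon x<c_0(k/2)^{1/\alpha}$.
\end{itemize}
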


\begin{proof}
Set
\[
  A(x):=(1+x)^\alpha,
  \qquad
  a_k:=A^{-1}(k)=k^{1/\alpha}-1\sim k^{1/\alpha},
\]
and let $S_k:=X_1+\cdots+X_k$ and $M_k:=\max_{i\le k}X_i$ for independent
variables with density $\phi$. We use the constrained estimate in
\citet[Theorem~1.1]{caravennadoney2019}: for fixed $\gamma\in(0,1)$,
\begin{equation}
\label{eq:CD-constrained}
  \Q(S_k\in x+[0,1],\,M_k\le\gamma x)
  \le
  Ca_k^{-1}\left(\frac{k}{A(x)}\right)^{\lceil1/\gamma\rceil}.
\end{equation}
The case $k=1$ is immediate; below we take $k\ge2$, and any finite set of
remaining indices is absorbed by enlarging the constants.

Take $\gamma_0=1/5$, $\gamma_1=1/4$, and $\gamma_2=1/3$. Denote by
$q_{k,\gamma}(x)$ the contribution to the density of $S_k$ from
$M_k\le\gamma x$, namely
\[
q_{k,\gamma}(x)
=
\int_{\substack{x_1,\ldots,x_{k-1}\ge0\\
x_k=x-\sum_{j<k}x_j\ge0\\
\max_i x_i\le\gamma x}}
\prod_{i=1}^k\phi(x_i)\,
dx_1\cdots dx_{k-1}.
\]
For fixed $h\in[0,1]$, apply the transformation
\[
  (x_1,x_2,\ldots,x_k)
  \longmapsto
  (x_1+h,x_2,\ldots,x_k).
\]
The transformed variables sum to $x+h$ and the Jacobian is one. If
$x_i\le\gamma_0x$ for every $i$, then for sufficiently large $x$,
\[
  x_1+h\le\gamma_0x+1\le\gamma_1(x+h),
\]
because $\gamma_0=1/5<1/4=\gamma_1$; and, for $i\ge2$,
\[
  x_i\le\gamma_0x\le\gamma_1(x+h).
\]
Furthermore,
\[
  \phi(x_1)
  \le2^{1+\alpha}\phi(x_1+h),
  \qquad 0\le h\le1.
\]
The simplex change of variables therefore implies that
\[
  q_{k,\gamma_0}(x)
  \le 2^{1+\alpha}q_{k,\gamma_1}(x+h),
  \qquad 0\le h\le1.
\]
Integrating over $h\in[0,1]$ and using, for sufficiently large $x$,
\[
  \gamma_1(x+1)\le\gamma_2x,
  \qquad\gamma_2=\frac13,
\]
implies that
\[
  q_{k,\gamma_0}(x)
  \le  C\Q(S_k\in[x,x+1],\,M_k\le\gamma_2x)
  \le
  Ca_k^{-1}\left(\frac{k}{A(x)}\right)^3,
\]
where the last step is \eqref{eq:CD-constrained}.

For $x\ge c_0k^{1/\alpha}$, with $c_0$ sufficiently large, $a_k\ge
ck^{1/\alpha}$ and, since $1-2\alpha<0$,
\[
\begin{aligned}
  a_k^{-1}\left(\frac{k}{A(x)}\right)^3
  &\le
  C\frac{k}{(1+x)A(x)}
  =
  C'\,k\phi(x).
\end{aligned}
\]
Indeed, the exact ratio is
\[
\frac{
a_k^{-1}(k/A(x))^3
}{
k/((1+x)A(x))
}
=
k^2a_k^{-1}(1+x)^{1-2\alpha}.
\]
Since $a_k\ge ck^{1/\alpha}$ for $k\ge2$ and
$1-2\alpha<0$, the condition $x\ge c_0k^{1/\alpha}$ implies that
\[
  k^2a_k^{-1}(1+x)^{1-2\alpha}
  \le
  Ck^{2-1/\alpha}k^{(1-2\alpha)/\alpha}
  =C.
\]

It remains to control the contribution
$r_{k,\gamma_0}(x)$ from a big jump. By the union bound and symmetry,
\[
  r_{k,\gamma_0}(x)
  \le
  k\int_{\gamma_0x}^{x}
  \phi(y)\phi^{*(k-1)}(x-y)dy
  \le
  Ck\phi(x)\int_0^x\phi^{*(k-1)}(u)du
  \le Ck\phi(x),
\]
because $\phi(y)\le C\phi(x)$ for $y\in[\gamma_0x,x]$. Combining the
small-maximum and big-jump contributions proves
$\phi^{*k}(x)\le C_0k\phi(x)$ whenever
$x\ge c_0k^{1/\alpha}$.
\end{proof}

\begin{lemma}[Subordination representation of $K$]
\label{lem:subordination}
The kernel $K$ characterized by
$\widehat K(\lambda)=(\mu_0+\Gamma(1-\alpha)\lambda^\alpha)^{-1}$ admits the
representation
\begin{equation}
\label{eq:subordination}
  K(u)=\int_0^\infty e^{-\mu_0 v}\,v^{-1/\alpha}\,
  g_\alpha\!\left(u\,v^{-1/\alpha}\right)dv,
  \qquad u>0,
\end{equation}
where $g_\alpha$ is as in Lemma~\ref{lem:density-llt}. In particular, $K$ is
continuous and strictly positive on $(0,\infty)$.
\end{lemma}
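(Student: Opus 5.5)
The plan is to verify the representation on the Laplace side and then invoke uniqueness of the Laplace transform. First I would define the right-hand side of \eqref{eq:subordination} as a function $\widetilde K(u)$ and compute $\int_0^\infty e^{-\lambda u}\widetilde K(u)\,du$ for $\lambda>0$. All integrands are nonnegative, so Tonelli's theorem lets us swap the $u$- and $v$-integrals. In the inner $u$-integral, substitute $y=u v^{-1/\alpha}$, so that $du = v^{1/\alpha}dy$. The inner integral then becomes
\[
\int_0^\infty e^{-\lambda v^{1/\alpha}y}g_\alpha(y)\,dy
=\exp\!\left(-\Gamma(1-\alpha)\lambda^\alpha v\right),
\]
by the defining Laplace transform of $g_\alpha$ in Lemma~\ref{lem:density-llt}. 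The outer integral is
\[
\int_0^\infty e^{-(\mu_0+\Gamma(1-\alpha)\lambda^\alpha)v}\,dv
=\frac{1}{\mu_0+\Gamma(1-\alpha)\lambda^\alpha}
=\widehat K(\lambda).
\]
Taking $\lambda\downarrow0$ shows that $\widetilde K\in L^1$ with total mass $1/\mu_0$. Uniqueness of Laplace transforms of locally integrable functions then gives $K=\widetilde K$ almost everywhere. Since $K$ has so far been characterized only through its transform, we adopt $\widetilde K$ as the canonical version.

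Next I would prove continuity and strict positivity. The one-sided stable density $g_\alpha$ is continuous and bounded, as follows from the Fourier inversion in Lemma~\ref{lem:density-llt}, where its characteristic function is integrable. It is also strictly positive on $(0,\infty)$, which is a classical fact for one-sided stable laws; for example, it follows from the series representation or from the infinite divisibility and support theorem. Positivity of $\widetilde K(u)$ for $u>0$ is then immediate, because the integrand is positive for every $v>0$.

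For continuity at a fixed $u_0>0$, take $u\in[u_0/2,2u_0]$ and use dominated convergence. The main obstacle is finding a dominating function near $v=0$, where the factor $v^{-1/\alpha}$ blows up. There, the argument $uv^{-1/\alpha}\ge (u_0/2)v^{-1/\alpha}\to\infty$. I would use the tail bound $g_\alpha(y)\le Cy^{-1-\alpha}$ for large $y$, which holds for the stable density since its tail is $\sim c\,y^{-1-\alpha}$ (again classical, or derivable from the Abelian asymptotics). This gives
\[
v^{-1/\alpha}g_\alpha(uv^{-1/\alpha})\le C v^{-1/\alpha}u^{-1-\alpha}v^{(1+\alpha)/\alpha}=Cu_0^{-1-\alpha}v .
\]
This bound is integrable near $0$.

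For large $v$, the estimate $v^{-1/\alpha}\|g_\alpha\|_\infty$ multiplied by $e^{-\mu_0v}$ is integrable. Combined with continuity of $g_\alpha$, dominated convergence yields continuity of $\widetilde K$ on $(0,\infty)$.
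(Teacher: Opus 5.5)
Your proposal is correct and follows the paper's proof essentially step for step. You use Tonelli with the substitution $y=uv^{-1/\alpha}$ and the Laplace transform of $g_\alpha$ to recover $\widehat K$, then dominated convergence, with boundedness of $g_\alpha$ and its $y^{-1-\alpha}$ tail making the integrand $O(v)$ near $v=0$, for continuity, and $g_\alpha>0$ for positivity. Your remark that Laplace uniqueness identifies $K$ only almost everywhere, so that the subordination integral should be taken as its continuous version, is a slightly more careful phrasing of the paper's identification step.
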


\begin{proof}
All integrands below are nonnegative, so Tonelli's theorem applies. Using the
Laplace transform of $g_\alpha$,
\begin{align*}
  \int_0^\infty e^{-\lambda u}
  \int_0^\infty e^{-\mu_0 v}v^{-1/\alpha}
  g_\alpha\left(uv^{-1/\alpha}\right)dv\,du
  &=\int_0^\infty e^{-\mu_0 v}
  e^{-\Gamma(1-\alpha)\lambda^\alpha v}\,dv
  \\
  &=\frac{1}{\mu_0+\Gamma(1-\alpha)\lambda^\alpha}
  =\widehat K(\lambda).
\end{align*}
Uniqueness of Laplace transforms of locally integrable functions identifies
the right-hand side of \eqref{eq:subordination} with $K$. Continuity on
$(0,\infty)$ follows by dominated convergence, using the continuity and
boundedness of $g_\alpha$ together with the tail
$g_\alpha(y)\sim\alpha y^{-1-\alpha}$ as $y\to\infty$, which makes the
integrand $\mathcal{O}(v)$ as $v\downarrow0$, locally uniformly in $u$; strict
positivity follows from $g_\alpha>0$ on $(0,\infty)$.
\end{proof}

\begin{lemma}[Local renewal asymptotics for the canonical near-critical resolvent]
\label{lem:local-strong-renewal}
Let $\phi(t)=\alpha(1+t)^{-1-\alpha}$ with $\alpha\in(1/2,1)$, let
$a_T\in(0,1)$ satisfy $1-a_T\sim\mu_0T^{-\alpha}$ for some $\mu_0>0$, and
let
$\Psi_T:=\sum_{k\ge1}(a_T\phi)^{*k}$. Then, with
$\widehat K(\lambda)=(\mu_0+\Gamma(1-\alpha)\lambda^\alpha)^{-1}$:
\begin{enumerate}
\item[(i)] As $T\to\infty$, $T^{1-\alpha}\Psi_T(Tu)\to K(u)$ uniformly on compact subsets of
$(0,\infty)$.
\item[(ii)] There exists $C<\infty$ such that
\[
  T^{1-\alpha}\Psi_T(Tu)\le C\,u^{\alpha-1},
  \qquad u>0,\ T\ge1 .
\]
\end{enumerate}
\end{lemma}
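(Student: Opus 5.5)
Expand the resolvent in its convolution series, $\Psi_T(x)=\sum_{k\ge1}a_T^k\phi^{*k}(x)$, and rescale with $x=Tu$:
\[
  T^{1-\alpha}\Psi_T(Tu)=T^{1-\alpha}\sum_{k\ge1}a_T^k\phi^{*k}(Tu).
\]
The natural scale for the index is $k\approx vT^\alpha$: Lemma~\ref{lem:density-llt} gives $\phi^{*k}(Tu)\approx k^{-1/\alpha}g_\alpha(Tu\,k^{-1/\alpha})$, and $1-a_T\sim\mu_0T^{-\alpha}$ gives $a_T^{vT^\alpha}\to e^{-\mu_0 v}$. With $k=vT^\alpha$ we have $k^{-1/\alpha}=T^{-1}v^{-1/\alpha}$, and the sum over $k$ turns into $T^\alpha\int dv$. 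The prefactors combine as $T^{1-\alpha}\cdot T^{-1}\cdot T^{\alpha}=1$, so the rescaled sum is a Riemann sum for
\[
  \int_0^\infty e^{-\mu_0v}v^{-1/\alpha}g_\alpha(uv^{-1/\alpha})\,dv,
\]
which equals $K(u)$ by Lemma~\ref{lem:subordination}.

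\textbf{Part (i).} Fix a compact $[u_1,u_2]\subset(0,\infty)$ and split the $k$-sum into three ranges.
\begin{enumerate}
\item[(a)] $k\le\varepsilon T^\alpha$ (small $k$, so $Tu$ is far in the tail of $\phi^{*k}$). Here $Tu\ge c_0k^{1/\alpha}$ once $\varepsilon$ is small, so Lemma~\ref{lem:lld} gives $\phi^{*k}(Tu)\le C_0k\phi(Tu)\le Ck\,T^{-1-\alpha}$. Summing, this range contributes at most $T^{1-\alpha}\cdot\varepsilon^2T^{2\alpha}\cdot T^{-1-\alpha}=C\varepsilon^2$.
\item[(b)] $k\ge MT^\alpha$. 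Here $\|\phi^{*k}\|_\infty\le C_*k^{-1/\alpha}$ and $a_T^k\le e^{-c(1-a_T)k}$, so this range contributes at most $C\int_M^\infty v^{-1/\alpha}e^{-cv}\,dv$, which is small for large $M$.
\item[(c)] $\varepsilon T^\alpha<k<MT^\alpha$. The uniform local limit theorem \eqref{eq:density-llt} turns the sum into a Riemann sum. Since $g_\alpha$ is continuous and bounded and $v^{-1/\alpha}$ is bounded on $[\varepsilon,M]$, the error is uniform in $u\in[u_1,u_2]$.
\end{enumerate}
The matching truncations of the integral for $K$ are small by the same bounds, using $g_\alpha(y)\le Cy^{-1-\alpha}$ near $v=0$. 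Letting $T\to\infty$, then $\varepsilon\downarrow0$ and $M\uparrow\infty$, proves (i).

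\textbf{Part (ii).} This is the main obstacle, because the bound must hold uniformly for all $u>0$ and all $T\ge1$, including $u\to0$. Split at $k_0:=(Tu/c_0)^\alpha$.
\begin{enumerate}
\item[(a)] For $k\le k_0$, Lemma~\ref{lem:lld} gives
\[
  \sum_{k\le k_0}k\phi(Tu)\le Ck_0^2(1+Tu)^{-1-\alpha}\le C(Tu)^{2\alpha}(Tu)^{-1-\alpha}=C(Tu)^{\alpha-1}.
\]
For $Tu\lesssim1$ this range is empty or finite and is handled by $\|\phi\|_\infty$.
\item[(b)] For $k>k_0$, drop the factor $a_T^k\le1$ and use $\|\phi^{*k}\|_\infty\le C_*k^{-1/\alpha}$. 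Since $1/\alpha>1$,
\[
  \sum_{k>k_0}k^{-1/\alpha}\le Ck_0^{1-1/\alpha}=C(Tu)^{\alpha-1}.
\]
When $k_0<1$, the sum over all $k\ge1$ is bounded by a constant, which is at most $C(Tu)^{\alpha-1}$ when $Tu\le c$.
\end{enumerate}
Both pieces are therefore $\lesssim(Tu)^{\alpha-1}$, which gives
\[
  T^{1-\alpha}\Psi_T(Tu)\le CT^{1-\alpha}(Tu)^{\alpha-1}=Cu^{\alpha-1}.
\]
The constants come only from Lemmas~\ref{lem:density-llt} and \ref{lem:lld}, so they are uniform in $T$. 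The thing to check carefully is the small-$Tu$ regime, where the bound must still be expressed in terms of $(Tu)^{\alpha-1}\ge c$.
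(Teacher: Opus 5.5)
Your proposal is correct and follows the paper's proof essentially step for step. In (i) you use the same three-range split: Lemma~\ref{lem:lld} for $k<\epsilon T^\alpha$, the sup-norm local limit bound with the damping $a_T^k\le e^{-k(1-a_T)}$ for $k>MT^\alpha$, and a Riemann sum via the uniform local limit theorem and the subordination formula for the bulk. In (ii) you drop $a_T^k$, which amounts to the paper's bound $\Psi_T\le U$, and split at $k\approx (x/c_0)^\alpha$. The only cosmetic difference is near the origin: you bound $U$ by summing $C_*k^{-1/\alpha}$ over all $k\ge1$, whereas the paper uses the renewal equation and Gr\"onwall's inequality.
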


\begin{proof}
\emph{Part (i).} Fix $0<u_0<R<\infty$; we prove uniform convergence on
$[u_0,R]$. For $\epsilon\in(0,1)$ and $M>1$ decompose
\[
  K_T(u):=T^{1-\alpha}\Psi_T(Tu)
  =\Sigma^{(1)}_T(u)+\Sigma^{(2)}_T(u)+\Sigma^{(3)}_T(u),
\]
where $\Sigma^{(1)}_T$, $\Sigma^{(2)}_T$, $\Sigma^{(3)}_T$ collect the terms
$T^{1-\alpha}a_T^k\phi^{*k}(Tu)$ with $1\le k<\epsilon T^\alpha$,
$\epsilon T^\alpha\le k\le MT^\alpha$ and $k>MT^\alpha$, respectively.

\emph{Step 1 (small-$k$ range).} Choose
$\epsilon\le(u_0/c_0)^\alpha$, with $c_0$ from Lemma~\ref{lem:lld}. For
$k<\epsilon T^\alpha$ and $u\ge u_0$ we have
$Tu\ge u_0T\ge c_0(\epsilon T^\alpha)^{1/\alpha}\ge c_0k^{1/\alpha}$, so
Lemma~\ref{lem:lld} and $\phi(Tu)\le\alpha(Tu)^{-1-\alpha}$ give
\begin{equation}
\label{eq:small-k-bound}
  0\le\Sigma^{(1)}_T(u)
  \le C_0\,T^{1-\alpha}\phi(Tu)\!\!\sum_{k<\epsilon T^\alpha}\!\!k
  \le C\,\epsilon^{2}\,u^{-1-\alpha}
  \le C\,\epsilon^{2}\,u_0^{-1-\alpha},
\end{equation}
uniformly in $T\ge1$ and $u\in[u_0,R]$.

\emph{Step 2 (large-$k$ range).} By Lemma~\ref{lem:density-llt},
$\|\phi^{*k}\|_\infty\le C_*k^{-1/\alpha}$, and $\log a_T\le-(1-a_T)$ gives
$a_T^k\le e^{-k(1-a_T)}$. Hence,
\[
  0\le\Sigma^{(3)}_T(u)
  \le C_*\,T^{1-\alpha}\left(MT^\alpha\right)^{-1/\alpha}
  \frac{e^{-MT^\alpha(1-a_T)}}{1-e^{-(1-a_T)}}
  = C_*\,M^{-1/\alpha}\,
  \frac{T^{-\alpha}}{1-e^{-(1-a_T)}}\,e^{-MT^\alpha(1-a_T)} .
\]
Since $T^\alpha(1-a_T)\to\mu_0$, for all sufficiently large $T$ we have
$T^{-\alpha}/(1-e^{-(1-a_T)})\le2/\mu_0$ and
$e^{-MT^\alpha(1-a_T)}\le e^{-M\mu_0/2}$, so
\begin{equation}
\label{eq:large-k-bound}
  0\le\Sigma^{(3)}_T(u)\le C\,M^{-1/\alpha}e^{-\mu_0M/2},
\end{equation}
uniformly in $u$.

\emph{Step 3 (bulk range).} For $\epsilon T^\alpha\le k\le MT^\alpha$ set
$v_k:=k/T^\alpha\in[\epsilon,M]$. By Lemma~\ref{lem:density-llt},
\[
  \phi^{*k}(Tu)
  =k^{-1/\alpha}\left[g_\alpha\left(Tuk^{-1/\alpha}\right)
  +\delta_k\left(Tuk^{-1/\alpha}\right)\right],
  \qquad
  \bar\delta_k:=\sup_{y>0}|\delta_k(y)|\longrightarrow0 .
\]
Since $T^{1-\alpha}k^{-1/\alpha}=T^{-\alpha}v_k^{-1/\alpha}$ and
$Tuk^{-1/\alpha}=uv_k^{-1/\alpha}$,
\[
  \Sigma^{(2)}_T(u)
  =T^{-\alpha}\sum_{\epsilon T^\alpha\le k\le MT^\alpha}
  a_T^{k}\,v_k^{-1/\alpha}\,g_\alpha\left(uv_k^{-1/\alpha}\right)
  +\mathcal E_T(u),
  \qquad
  |\mathcal E_T(u)|
  \le \epsilon^{-1/\alpha}M\max_{k\ge\epsilon T^\alpha}\bar\delta_k
  \longrightarrow0 .
\]
Moreover $a_T^{k}=\exp(v_kT^\alpha\log a_T)$ with
$T^\alpha\log a_T\to-\mu_0$, so $a_T^{k}=e^{-\mu_0v_k}(1+o(1))$ uniformly
over $v_k\in[\epsilon,M]$. The map
$(u,v)\mapsto e^{-\mu_0v}v^{-1/\alpha}g_\alpha(uv^{-1/\alpha})$ is continuous,
hence uniformly continuous, on $[u_0,R]\times[\epsilon,M]$, and the sums above
are Riemann sums with mesh $T^{-\alpha}$. Therefore,
\begin{equation}
\label{eq:bulk-limit}
  \Sigma^{(2)}_T(u)\longrightarrow
  \int_\epsilon^M e^{-\mu_0v}\,v^{-1/\alpha}
  g_\alpha\!\left(uv^{-1/\alpha}\right)\,dv,
  \qquad\text{uniformly in }u\in[u_0,R].
\end{equation}

\emph{Step 4 (conclusion).}
By the tail asymptotic
$g_\alpha(y)\sim\alpha y^{-1-\alpha}$ as $y\to\infty$,
there exist $C<\infty$ and $\epsilon_0>0$ such that, for
$0<\epsilon\le\epsilon_0$,
\[
  e^{-\mu_0v}v^{-1/\alpha}
  g_\alpha\!\left(uv^{-1/\alpha}\right)
  \le C u_0^{-1-\alpha}v,
  \qquad
  u\in[u_0,R],\quad v\in(0,\epsilon).
\]
Moreover, the boundedness of $g_\alpha$ gives, for $v\ge M$,
\[
  e^{-\mu_0v}v^{-1/\alpha}
  g_\alpha\!\left(uv^{-1/\alpha}\right)
  \le
  \|g_\alpha\|_\infty e^{-\mu_0v}v^{-1/\alpha}.
\]
Consequently,
\begin{equation}
\label{eq:subordination-tail-control}
\begin{aligned}
  \sup_{u\in[u_0,R]}
  \int_0^\epsilon
  e^{-\mu_0v}v^{-1/\alpha}
  g_\alpha\!\left(uv^{-1/\alpha}\right)\,dv
  &\le
  \frac{C}{2}u_0^{-1-\alpha}\epsilon^2
  \xrightarrow[\epsilon\downarrow0]{}0,\\
  \sup_{u\in[u_0,R]}
  \int_M^\infty
  e^{-\mu_0v}v^{-1/\alpha}
  g_\alpha\!\left(uv^{-1/\alpha}\right)\,dv
  &\le
  \frac{\|g_\alpha\|_\infty}{\mu_0}
  M^{-1/\alpha}e^{-\mu_0M}
  \xrightarrow[M\uparrow\infty]{}0.
\end{aligned}
\end{equation}
Given $\eta>0$, choose $\epsilon$ small and $M$ large so that the bounds in
\eqref{eq:small-k-bound}, \eqref{eq:large-k-bound}, and
\eqref{eq:subordination-tail-control}
are each smaller than $\eta$, uniformly in large $T$ and $u\in[u_0,R]$. Then
\eqref{eq:bulk-limit} and Lemma~\ref{lem:subordination} yield
\[
  \limsup_{T\to\infty}\ \sup_{u\in[u_0,R]}
  \left|K_T(u)-K(u)\right|\ \le\ 3\eta .
\]
Since $\eta>0$ was arbitrary, part (i) follows.

\emph{Part (ii).} Define the undamped renewal density
$U(x):=\sum_{k\ge1}\phi^{*k}(x)$, $x>0$. Since $0<a_T<1$,
\begin{equation}
  \label{eq:psi-leq-U}
  \Psi_T(x)=\sum_{k\ge1}a_T^k\phi^{*k}(x)\le U(x),
  \qquad x>0,
\end{equation}
so it suffices to prove a $T$-independent envelope for $U$. For large $x$,
let
\[
N_x:=\left\lfloor(x/c_0)^\alpha\right\rfloor,
\]
where $c_0$ is from Lemma~\ref{lem:lld}. For $k\le N_x$,
$x\ge c_0k^{1/\alpha}$, so that lemma and
$\phi(x)\le C(1+x)^{-1-\alpha}$ give
\[
  \sum_{k\le N_x}\phi^{*k}(x)
  \le C\phi(x)\sum_{k\le N_x}k
  \le Cx^{\alpha-1}.
\]
For $k>N_x$, Lemma~\ref{lem:density-llt} implies that
\[
  \sum_{k>N_x}\phi^{*k}(x)
  \le C\sum_{k>N_x}k^{-1/\alpha}
  \le CN_x^{1-1/\alpha}
  \le Cx^{\alpha-1}.
\]
Thus, for some $x_0$,
\begin{equation}
\label{eq:U-tail}
  U(x)\le Cx^{\alpha-1},\qquad x\ge x_0.
\end{equation}
On $(0,x_0]$ the renewal equation $U=\phi+\phi*U$ and Gr\"{o}nwall's inequality
show that $U$ is bounded; after enlarging $C$,
\begin{equation}
\label{eq:U-local}
  U(x)\le Cx^{\alpha-1},\qquad 0<x\le x_0.
\end{equation}
Combining these bounds with \eqref{eq:psi-leq-U} yields
\[
  T^{1-\alpha}\Psi_T(Tu)\le T^{1-\alpha}U(Tu)\le CT^{1-\alpha}(Tu)^{\alpha-1}
  =Cu^{\alpha-1},
  \qquad u>0,\ T\ge1 .
\]
This proves part (ii).
\end{proof}

\subsection{A priori estimates and tightness}
\label{app:apriori}

\begin{lemma}[Uniform Volterra-resolvent bound]
\label{lem:uniform-volterra-resolvent}
Let $0\le L_T(t)\le C_Lt^{\alpha-1}$ on $(0,1]$, with
$\alpha\in(0,1)$, and define
\[
  R_T:=\sum_{n\ge1}L_T^{*n}
\]
on the finite horizon $[0,1]$. Then
\[
  \sup_{T\ge 1}\|R_T\|_{L^1([0,1])}<\infty,
  \qquad
  \sup_{T\ge 1}\|R_T*f\|_{L^2([0,1])}\le C_R\|f\|_{L^2([0,1])}.
\]
\end{lemma}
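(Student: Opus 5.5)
The idea is to dominate every $R_T$ by a single $T$-independent resolvent. Let $L(t):=C_Lt^{\alpha-1}$ on $(0,1]$, so $0\le L_T\le L$ pointwise. Since all kernels are nonnegative, convolution is monotone, and by induction $L_T^{*n}\le L^{*n}$ pointwise on $(0,1]$. Hence
\[
0\le R_T\le R:=\sum_{n\ge1}L^{*n}
\]
on $[0,1]$. It therefore suffices to show that $R\in L^1([0,1])$. The $L^2$ bound then follows from Young's inequality on the finite horizon,
\[
\|R_T*f\|_{L^2([0,1])}\le\|R_T\|_{L^1([0,1])}\|f\|_{L^2([0,1])}\le\|R\|_{L^1([0,1])}\|f\|_{L^2([0,1])},
\]
where we use that the truncated convolution on $[0,1]$ only involves the kernel on $[0,1]$.

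To show $R\in L^1([0,1])$, compute the iterated convolutions explicitly with the beta integral \eqref{eq:beta-integral}. Writing $L=C_L\Gamma(\alpha)k_\alpha$ and using $k_\alpha*k_\beta=k_{\alpha+\beta}$, we get
\[
L^{*n}(t)=\frac{(C_L\Gamma(\alpha))^n}{\Gamma(n\alpha)}t^{n\alpha-1}.
\]
Integrating over $[0,1]$ gives
\[
\|L^{*n}\|_{L^1([0,1])}=\frac{(C_L\Gamma(\alpha))^n}{\Gamma(n\alpha+1)}.
\]
Because $\Gamma(n\alpha+1)$ grows superexponentially, the series $\sum_n$ converges; its sum is $E_\alpha(C_L\Gamma(\alpha))-1$, a Mittag--Leffler value. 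By monotone convergence, $\|R\|_{L^1([0,1])}$ is bounded by this finite constant, independent of $T$. This also shows that the series defining $R_T$ converges absolutely in $L^1([0,1])$, so $R_T$ is well defined.

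The main point to check carefully is the pointwise domination $L_T^{*n}\le L^{*n}$ together with the restriction to the finite horizon. The statement bounds $L_T$ only on $(0,1]$, but the convolutions evaluated at $t\in[0,1]$ only see values on $(0,t]$, so this is enough. Apart from that, the argument reduces to Tonelli's theorem and the Gamma-function growth.
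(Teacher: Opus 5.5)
Your proposal is correct and follows essentially the same route as the paper: pointwise domination $L_T^{*n}\le \ell^{*n}$ by the $T$-independent power kernel $\ell(t)=C_Lt^{\alpha-1}$, the explicit beta-integral formula $\|\ell^{*n}\|_{L^1([0,1])}=(C_L\Gamma(\alpha))^n/\Gamma(n\alpha+1)$ summed via Mittag--Leffler growth, and Young's inequality for the $L^2$ operator bound. Your observation that convolutions evaluated on $[0,1]$ only use the kernel on $(0,t]$ is a detail the paper leaves implicit.
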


\begin{proof}
Let $\ell(t)=C_Lt^{\alpha-1}$. Since the kernels are non-negative,
$L_T^{*n}\le \ell^{*n}$. 
Iterating the beta-integral \eqref{eq:beta-integral}, with
$a=\alpha$ and $b=(n-1)\alpha$ at the $n$th step, gives
\[
  \ell^{*n}(t)
  =
  \frac{C_L^n\Gamma(\alpha)^n}{\Gamma(n\alpha)}
  t^{n\alpha-1},
  \qquad t>0.
\]
Consequently,
\[
  \|\ell^{*n}\|_{L^1([0,1])}
  =
  \frac{C_L^n\Gamma(\alpha)^n}{\Gamma(n\alpha+1)}.
\]
The series
$\sum_{n\ge1}(C_L\Gamma(\alpha))^n/\Gamma(n\alpha+1)$ converges by the
Mittag--Leffler growth of the Gamma function. Hence,
$\sup_{T\ge 1}\|R_T\|_{L^1([0,1])}<\infty$. 
Young's inequality then implies that
\[
  \|R_T*f\|_2\le \|R_T\|_1\|f\|_2,
\]
which proves the $L^2$-operator bound.
\end{proof}

\begin{proposition}[Uniform first moment]
\label{prop:first-moment}
Under Assumptions \ref{ass:subcritical} and \ref{ass:baseline}, 
\[
\sup_{T\ge T_0}\E\int_0^1V_t^Tdt<\infty.
\]
\end{proposition}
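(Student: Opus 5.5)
Take expectations in the exact mild decomposition \eqref{eq:VT-mild}, $V^T=g_0^T+\mathcal M^T+\mathcal J^T$, and close a linear Volterra inequality for $f_T(t):=\E V_t^T$. The regular martingale term $\mathcal M^T$ is a stochastic integral of a deterministic kernel against the martingale $\bar M^T$, and it has zero mean. To justify this we first localize, e.g.\ by stopping when $\lambda^T$ exceeds a level $n$; the intensity is finite and non-explosive by Lemma~\ref{lem:finite-T-subcriticality}. We then remove the localization at the end by monotone convergence, since every other term is nonnegative. For the marked term we use the decomposition \eqref{eq:JT-decomposition}: the compensated part $\bar{\mathcal J}^T$ also has zero mean, and Lemma~\ref{lem:jump-mean} gives
\[
  f_T(t)=\E g_0^T(t)+\int_0^tK_T^{J,{\rm mean}}(t-u)f_T(u)\,du ,
\]
where $g_0^T$ is deterministic.

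The key quantitative input is the kernel bound. By \eqref{eq:KJmean-limit} and the uniform envelope of Lemma~\ref{lem:resolvent-perturbation}, for all large $T$ we have $0\le K_T^{J,{\rm mean}}(u)\le \gamma_T\,T^{1-\alpha}H_T(Tu)\le C u^{\alpha-1}$ on $(0,1]$. Here $\gamma_T\to\xi\int z\nu_J(dz)$. Iterating the renewal equation, we write $f_T=(\delta_0+R_T)*g_0^T$ with $R_T=\sum_{n\ge1}(K_T^{J,{\rm mean}})^{*n}$. This is legitimate because $f_T$ is a priori finite and locally integrable at fixed $T$: the finite-$T$ cluster sizes have finite mean by subcriticality, so the Neumann series identification holds (or one compares to the minimal solution). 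Lemma~\ref{lem:uniform-volterra-resolvent}, applied with $L_T=K_T^{J,{\rm mean}}$ and $C_L=C$, then gives $\sup_T\|R_T\|_{L^1([0,1])}<\infty$.

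It follows that $\int_0^1 f_T\le (1+\|R_T\|_1)\|g_0^T\|_{L^1}\le (1+\sup_T\|R_T\|_1)\|g_0^T\|_{L^2([0,1])}$, and the right-hand side is bounded uniformly in $T$ by Assumption~\ref{ass:baseline}. The indices $T_0\le T<T_1$ before the asymptotic envelope applies are handled separately: on that bounded range each $f_T$ has finite integral, and one checks uniformity using monotonicity/continuity of the parameters, or simply enlarges $T_0$, which is allowed since the claim concerns large $T$.

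The main obstacle is rigor at fixed $T$. We need $f_T$ finite, so that the expectation identities (zero-mean martingale convolutions) are valid and the renewal inequality can be inverted rather than being only a formal identity. I would settle this first via the branching representation, bounding the expected number of events on $[0,T]$ by $\mathbf 1^\top(I-M_T)^{-1}$ times the integrated baseline. After that, the uniform-in-$T$ bound is exactly the Mittag--Leffler-type resolvent estimate of Lemma~\ref{lem:uniform-volterra-resolvent}. Notably, Assumption~\ref{ass:subcritical} is not even needed for the uniform constant here, only for finiteness at fixed $T$.
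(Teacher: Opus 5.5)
Your proposal is correct and follows the paper's proof essentially step for step. You take expectations in the mild form and kill the martingale convolutions using finite expected compensators from the subcritical cluster representation. That leaves $f_T=g_0^T+K_T^{J,\mathrm{mean}}*f_T$ with the $Cu^{\alpha-1}$ envelope, Lemma~\ref{lem:uniform-volterra-resolvent} inverts this uniformly, and $\|g_0^T\|_{L^1}\le\|g_0^T\|_{L^2}$ is bounded by Assumption~\ref{ass:baseline}.

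You also make explicit two points the paper leaves implicit: the Neumann-series inversion via a priori local integrability, and the pre-asymptotic range of $T$. For that range only your ``enlarge $T_0$'' option is actually supported. The parameters $a_T,p_T,r_T,\hat h^T$ are constrained only asymptotically, so no monotonicity or continuity in $T$ is available.
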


\begin{proof}
Let $m_T(t)=\E\left[V_t^T\right]$. By \eqref{eq:VT-mild} and
\eqref{eq:JT-decomposition}, the two martingale terms are $\mathcal M^T$ and
$\bar{\mathcal J}^T$, defined in \eqref{eq:MT} and
\eqref{eq:JT-compensated}, while \eqref{eq:JT-predictable} expresses the
predictable jump term as a Volterra convolution of $V^T$.
Finite-$T$ non-explosion from Lemma~\ref{lem:finite-T-subcriticality},
together with the subcritical cluster representation used in its proof,
implies that the relevant counting compensators have finite expectation on
compact time intervals. Since $m_1<\infty$, the stochastic convolutions in
\eqref{eq:MT} and \eqref{eq:JT-compensated} are integrable martingales on
$[0,1]$. Consequently, we may take expectations in \eqref{eq:VT-mild} and use
\eqref{eq:JT-predictable} to obtain
\[
  m_T=g_0^T+K_T^{J,{\rm mean}}*m_T.
\]
The kernel $K_T^{J,{\rm mean}}$ is non-negative and bounded by
$Cu^{\alpha-1}$ on $(0,1]$. Let $R_T$ be its Volterra resolvent. Lemma
\ref{lem:uniform-volterra-resolvent} gives the uniform $L^1$-bound on $R_T$.
Here and below, $I$ denotes convolution with the Dirac mass $\delta_0$ at
zero.
\[
  m_T=(I+R_T)*g_0^T,
  \qquad
  \|m_T\|_{L^1([0,1])}
  \le (1+\|R_T\|_1)\|g_0^T\|_{L^1([0,1])}.
\]
Assumption \ref{ass:baseline} gives
$\sup_{T\ge T_0}\|g_0^T\|_1<\infty$, and therefore
$\sup_{T\ge T_0}\E\int_0^1V_t^Tdt<\infty$.
Lemma~\ref{lem:finite-T-subcriticality} gives finite-$T$ non-explosion for
every $T\ge T_0$.
\end{proof}

\begin{proposition}[Uniform second moment]
\label{prop:second-moment}
If $F_J$ satisfies the first- and second-moment conditions
\eqref{eq:FJ-moments},
\[
  \sup_{T\ge T_0}\E\left\|V^T\right\|_{L^2([0,1])}^2<\infty.
\]
\end{proposition}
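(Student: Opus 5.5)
We will work with the mild decomposition \eqref{eq:VT-mild} combined with \eqref{eq:JT-decomposition}, which gives
\[
  V^T=g_0^T+\mathcal M^T+\bar{\mathcal J}^T+K_T^{J,{\rm mean}}*V^T .
\]
The kernel $K_T^{J,{\rm mean}}$ is nonnegative and satisfies $K_T^{J,{\rm mean}}(u)\le Cu^{\alpha-1}$ on $(0,1]$, uniformly in $T$, by \eqref{eq:KJmean-limit} and Lemma~\ref{lem:resolvent-perturbation}. Let $R_T$ be its Volterra resolvent. Solving the linear Volterra equation then gives
\[
  V^T=(I+R_T)*\left(g_0^T+\mathcal M^T+\bar{\mathcal J}^T\right).
\]
By Lemma~\ref{lem:uniform-volterra-resolvent}, $(I+R_T)*$ is bounded on $L^2([0,1])$ uniformly in $T$. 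Hence
\[
  \E\|V^T\|_2^2\le C\left(\|g_0^T\|_2^2+\E\|\mathcal M^T\|_2^2+\E\|\bar{\mathcal J}^T\|_2^2\right).
\]
The baseline term is bounded uniformly in $T$ by Assumption~\ref{ass:baseline}. It therefore remains to bound the two martingale convolutions.

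For the regular term, we use the representation from the proof of Proposition~\ref{prop:regular-volterra-limit},
\[
  \mathcal M_t^T=\sqrt{c_{B,T}}\int_0^tT^{1-\alpha}H_T(T(t-u))\,dB_u^T .
\]
For fixed $t$ this is a martingale integral with a deterministic integrand, and $d\langle B^T\rangle_u=(1-p_T)V_u^Tdu$. The isometry, combined with the envelope $T^{1-\alpha}H_T(Tu)\le Cu^{\alpha-1}$, yields
\[
  \E|\mathcal M_t^T|^2\le C\int_0^t(t-u)^{2\alpha-2}\,\E V_u^T\,du .
\]
Integrating in $t\in[0,1]$ and using Tonelli, which is legitimate because $2\alpha-2>-1$ when $\alpha>1/2$, gives
\[
  \E\|\mathcal M^T\|_2^2\le C\sup_{T}\E\int_0^1V_u^Tdu .
\]
The right-hand side is finite by Proposition~\ref{prop:first-moment}.

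The compensated jump term is handled the same way through the random-measure isometry, with compensator \eqref{eq:rescaled-compensator} and kernel bound \eqref{eq:KJ-limit}:
\[
  \E|\bar{\mathcal J}_t^T|^2=2\kappa_Tm_2\int_0^t|K_J^T(t-u)|^2\,\E V_u^T\,du\le C m_2\int_0^t(t-u)^{2\alpha-2}\,\E V_u^T\,du .
\]
This step is where the second mark moment $m_2<\infty$ is used. Integrating in $t$ and applying Proposition~\ref{prop:first-moment} again closes the estimate.

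The main obstacle is rigor at finite $T$. We must check that the stochastic convolutions are genuine square-integrable martingales at each fixed $t$, so that the isometry applies and no a priori finiteness of $\E\|V^T\|_2^2$ is assumed circularly. We would do this first by localization: replace $V^T$ by its stopped or truncated version, or note that in the finite-$T$ subcritical cluster representation of Lemma~\ref{lem:finite-T-subcriticality} the counting processes have finite second moments on compacts. Then apply the bounds and conclude by Fatou's lemma. The bounds themselves only require first moments of $V^T$, so no Gronwall argument on second moments is needed.
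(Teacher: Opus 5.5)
Your proposal is correct and follows the paper's proof essentially step for step. Both arguments write $V^T=(I+R_T)*(g_0^T+\mathcal M^T+\bar{\mathcal J}^T)$ via the uniform Volterra-resolvent bound, control the two martingale convolutions by the It\^o and random-measure isometries with the $u^{\alpha-1}$ envelope (using $m_2<\infty$ for the jump term), and close with the uniform first moment. Your localization remark about square-integrability at finite $T$ only makes explicit a point the paper passes over quickly.
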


\begin{proof}
By Lemma~\ref{lem:jump-mean}, write
\[
  V^T=g_0^T+\mathcal J^{T,{\rm pred}}+\mathcal M^T
  +\bar{\mathcal J}^T.
\]
The predictable term $\mathcal J^{T,{\rm pred}}$ has kernel
$K_T^{J,{\rm mean}}$. Let
\[
R_T:=\sum_{n\ge1}\left(K_T^{J,{\rm mean}}\right)^{*n}.
\]
By Lemma~\ref{lem:uniform-volterra-resolvent},
$\sup_{T\ge T_0}\|R_T\|_{L^1([0,1])}<\infty$, and $R_T$ acts as a uniformly bounded
operator on $L^2([0,1])$. Therefore
\[
  V^T=(I+R_T)*\left(g_0^T+\mathcal M^T+\bar{\mathcal J}^T\right).
\]

The baseline is bounded in $L^2$ by Assumption~\ref{ass:baseline}.
For the regular martingale term \eqref{eq:MT}, It\^o's isometry and
\[
  \left\langle\bar M^T\right\rangle_t\le C\int_0^t\lambda_s^Tds
\]
give
\[
  \E\left|\mathcal M_t^T\right|^2
  \le C\int_0^t(t-u)^{2\alpha-2}\E\left[V_u^T\right]du.
\]
For the compensated jump martingale \eqref{eq:JT-compensated}, using
$m_2<\infty$,
\[
  \E\left|\bar{\mathcal J}_t^T\right|^2
  \le C\int_0^t(t-u)^{2\alpha-2}\E\left[V_u^T\right]du.
\]
Since $2\alpha-2>-1$, integration over $[0,1]$ is finite. Using Proposition
\ref{prop:first-moment}, the right-hand sides of the martingale isometry estimates
are uniformly integrable over $t\in[0,1]$. More explicitly, integrating the
martingale isometry estimates over $t\in[0,1]$ and using Fubini's theorem gives
\[
  \sup_{T\ge T_0}\E\left\|\mathcal M^T\right\|_{L^2([0,1])}^2
  +\sup_{T\ge T_0}\E\left\|\bar{\mathcal J}^T\right\|_{L^2([0,1])}^2
  <\infty.
\]
Since $I+R_T$ is uniformly bounded on $L^2([0,1])$,
\[
  \E\left\|V^T\right\|_{L^2}^2
  \le
  C\E\left\|g_0^T+\mathcal M^T+\bar{\mathcal J}^T\right\|_{L^2}^2,
\]
and the right-hand side is uniformly bounded in $T$.
\end{proof}

\begin{proposition}[Tightness in $L^2$]
\label{prop:tightness}
The family $(V^T)_{T\ge T_0}$ is tight in $L^2([0,1])$.
\end{proposition}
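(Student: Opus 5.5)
The plan is to use the Kolmogorov--Riesz--Fréchet compactness criterion in $L^2([0,1])$, in probabilistic form. Tightness follows once we have two things. First, a uniform moment bound, which Proposition~\ref{prop:second-moment} provides: $\sup_T\E\|V^T\|_{L^2}^2<\infty$. Second, a uniform modulus of $L^2$-translation continuity,
\[
  \lim_{h\downarrow0}\sup_{T\ge T_0}\E\int_0^{1-h}\left|V^T_{t+h}-V^T_t\right|^2dt=0,
\]
together with control of the mass near the endpoints, $\sup_T\E\int_{[0,h]\cup[1-h,1]}|V_t^T|^2dt\to0$. Given both, Markov's inequality puts $V^T$ with high probability in a set $\{v:\|v\|_2\le M,\ \omega_2(v,h_k)\le\varepsilon_k\ \forall k\}$, which is relatively compact by Kolmogorov--Riesz.

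For the modulus we use the decomposition from the proof of Proposition~\ref{prop:second-moment}:
\[
  V^T=(I+R_T)*\left(g_0^T+\mathcal M^T+\bar{\mathcal J}^T\right).
\]
Convolution with $I+R_T$ does not increase the translation modulus by more than the factor $1+\sup_T\|R_T\|_1$. The reason is that convolution on $[0,1]$ with an $L^1$ kernel commutes with shifts, up to a boundary term supported near $0$. So it suffices to bound the modulus of the three inputs.

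\begin{itemize}
\item \emph{Baseline.} $g_0^T\to g_0$ in $L^2$, and a convergent sequence in $L^2$ is relatively compact, hence has a uniform translation modulus.
\item \emph{Regular martingale term.} Write $\mathcal M^T_{t+h}-\mathcal M^T_t$ as a stochastic integral with integrand $\tilde K_T(t+h-u)-\tilde K_T(t-u)$ on $(0,t)$ plus $\tilde K_T(t+h-u)$ on $(t,t+h)$, where $\tilde K_T(u)=\mathfrak c_T H_T(Tu)\cdot(\text{bracket scaling})$. This kernel is bounded by $Cu^{\alpha-1}$ by Lemma~\ref{lem:resolvent-perturbation}. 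Itô's isometry and the bracket bound give
\[
  \E|\mathcal M^T_{t+h}-\mathcal M^T_t|^2\le C\int_0^{t+h}\left|\tilde K_T(t+h-u)-\tilde K_T(t-u)\mathbf 1_{u<t}\right|^2\E V_u^Tdu.
\]
\item \emph{Compensated jump term.} The same bound holds for $\bar{\mathcal J}^T$, using $m_2<\infty$.
\end{itemize}

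After integrating in $t$ and applying Fubini, the martingale bounds reduce to $\sup_T\|m_T\|_{L^1}$ (finite by Proposition~\ref{prop:first-moment}) times
\[
  \sup_T\int_0^1|\tilde K_T(s+h)-\tilde K_T(s)|^2ds+\int_0^h|\tilde K_T(s)|^2ds.
\]
The second piece is at most $Ch^{2\alpha-1}$. The endpoint mass is handled the same way, using $\E|\mathcal M^T_t|^2$ and the uniform $L^2$ bounds on short intervals.

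The main obstacle is the uniform $L^2$-translation continuity of the rescaled kernels $u\mapsto T^{1-\alpha}H_T(Tu)$. This does not follow from the envelope alone. My first approach is to split at a small level $\varepsilon$. On $(0,2\varepsilon)$ the envelope $Cu^{\alpha-1}$ gives a contribution $O(\varepsilon^{2\alpha-1})$, since $\alpha>1/2$. On $[\varepsilon,1]$, Lemma~\ref{lem:resolvent-perturbation} gives uniform convergence to $K$, which is continuous there by Lemma~\ref{lem:subordination}. So for $T$ large the difference is within $o(1)$ of $\int_\varepsilon^1|K(s+h)-K(s)|^2ds\to0$. For the finitely many remaining (bounded) $T$, the modulus of each fixed kernel is handled separately. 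Letting $h\downarrow0$ and then $\varepsilon\downarrow0$ closes the argument.
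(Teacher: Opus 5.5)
The gap is in your final compactness step. Everything before it is sound and essentially matches the paper.

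Your translation estimates follow the paper. The It\^o and random-measure isometries reduce the martingale terms to $\sup_T\|m_T\|_{L^1}$ times the $L^2$ translation modulus of $T^{1-\alpha}H_T(T\cdot)$. Your $\varepsilon$-split of that kernel (envelope near the diagonal, uniform convergence to the continuous $K$ on $[\varepsilon,1]$) is Lemma~\ref{lem:uniform-kernel-translation-cutoff}.

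Two differences from the paper are harmless:
\begin{itemize}
\item You push the predictable term through $I+R_T$. The paper instead keeps $\mathcal J^{T,{\rm pred}}=K_T^{J,{\rm mean}}*V^T$ and bounds its translation pathwise by $\|\tau_hK_T^{J,{\rm mean}}-K_T^{J,{\rm mean}}\|_{L^1}\|V^T\|_{L^2}$.
\item You work on $[0,1]$ with endpoint-mass conditions. The paper zero-extends the kernels so that every term lives in $L^2(\mathbb R)$ with support in $[0,2]$.
\end{itemize}
In your version, the ``boundary term'' is $t\mapsto\int_0^hR_T(t+h-u)f(u)\,du$. It is not supported near $0$, but its $L^2$ norm is at most $\|R_T\|_1\|f\|_{L^2([0,h])}$. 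So what you need is the inputs' mass near $0$, which you do control.

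\textbf{The gap.} What you actually bound is $\E\|\tau_hV^T-V^T\|_{L^2}^2$ for each \emph{fixed} $h$. A Kolmogorov--Riesz set needs, for each of its elements, $\sup_{0<s\le h_k}\|\tau_sv-v\|\le\varepsilon_k$. Markov's inequality applied to your bound controls only one shift at a time, or countably many, not that supremum.

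Controlling single shifts $h_k$ is genuinely insufficient. Take $v_n(x)=\cos(2\pi n!\,x)$. Then $\|\tau_{1/k!}v_n-v_n\|_{L^2}\le 2\pi/k$ for all $n,k$, the family is bounded, and its mass on any interval of length $h$ is at most $h$. Yet it is an orthogonal family with no convergent subsequence. So compact containment does not follow as written.

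\textbf{The repair} is what the paper does: average over shifts. By Jensen and Fubini,
\[
  \E\|V^T-\varphi_\delta*V^T\|_{L^2}^2\le\int\varphi_\delta(h)\,\E\|\tau_hV^T-V^T\|_{L^2}^2\,dh,
\]
which uses only your fixed-shift bounds. The map $f\mapsto\varphi_\delta*f$ is Hilbert--Schmidt, hence compact, on functions supported in a fixed bounded interval. A compact set is then built from finite $2^{-n}$-nets of the mollified variables together with Markov's inequality.

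Equivalently, for $f\in L^2(\mathbb R)$ and $0<\delta\le h$,
\[
  \sup_{0\le s\le\delta}\|\tau_sf-f\|\le2\bigl(h^{-1}\textstyle\int_0^h\|\tau_rf-f\|^2dr\bigr)^{1/2}+2\delta h^{-1}\|f\|.
\]
This converts averaged moduli into the uniform modulus on $\{\|f\|\le M\}$.

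A smaller point: $T$ ranges over a continuum, so ``finitely many remaining $T$'' is not available. Replace $\sup_{T\ge T_0}$ by $\limsup_{T\to\infty}$ along arbitrary sequences $T_j\to\infty$. That is all the subsequence arguments use, and it is how the paper concludes.
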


\begin{proof}
The uniform $L^2$ bound is Proposition \ref{prop:second-moment}. It remains
to control translations. By Lemma~\ref{lem:jump-mean}, use the decomposition
\[
  V^T=g_0^T+\mathcal J^{T,{\rm pred}}+\mathcal M^T
  +\bar{\mathcal J}^T,
  \qquad
  \mathcal J_t^{T,{\rm pred}}
  =\int_0^tK_T^{J,{\rm mean}}(t-s)V_s^Tds.
\]
The baseline $g_0^T$ is controlled by $g_0^T\to g_0$ in $L^2$. The
predictable Volterra term
$\mathcal J^{T,{\rm pred}}=K_T^{J,{\rm mean}}*V^T$ is controlled by the
$L^1$-translation modulus of $K_T^{J,{\rm mean}}$, while the two martingale
convolutions $\mathcal M^T$ and $\bar{\mathcal J}^T$, defined in
\eqref{eq:MT} and \eqref{eq:JT-compensated}, are controlled by the
$L^2$-translation moduli of $T^{1-\alpha}H_T(T\cdot)$ and $K_J^T$,
respectively, using It\^o and random-measure isometries.
The required uniform kernel
translation estimate is proved in Appendix~\ref{app:tightness} by cutting off
the diagonal and using
\[
  \int_0^\varepsilon u^{2\alpha-2}du<\infty.
\]
Together with the uniform $L^2$ bound, these estimates imply that, for every
$\eta>0$, the family $(V^T)_{T\ge T_0}$ is contained with probability at least
$1-\eta$ in a set that is relatively compact in $L^2([0,1])$.
The Kolmogorov--M. Riesz--Fr\'echet compactness criterion in $L^2$
\cite[Theorem~4.26]{brezis2011functional} is applied on $\mathbb R$ to the
convolution extensions defined by zero-extending the kernels, and restriction
of those extensions to $[0,1]$ then yields tightness in $L^2([0,1])$.
\end{proof}

\subsection{Jump-measure convergence from compensators and martingale problems}
\label{app:jump-measure}

\begin{lemma}[Tightness of the rescaled jump measures]
\label{lem:jump-measure-tightness}
The family $(\bar\mu_J^T)_{T\ge T_0}$ is tight in
$\mathcal M_p([0,1]\times(0,\infty))$ endowed with the vague topology.
Consequently, $(V^T,\bar\mu_J^T)$ is tight in
$L^2([0,1])\times\mathcal M_p([0,1]\times(0,\infty))$.
\end{lemma}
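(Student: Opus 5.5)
We would use the standard criterion for tightness of random point measures in the vague topology on $\mathcal M_p([0,1]\times(0,\infty))$: a family is tight if and only if, for every compact $C\subset[0,1]\times(0,\infty)$, the total masses $\bar\mu_J^T(C)$ are tight as real random variables; see e.g.\ Kallenberg's criterion for random measures. Every such compact set lies inside some $[0,1]\times[\delta,R]$ with $0<\delta<R<\infty$, so it suffices to bound $\bar\mu_J^T([0,1]\times[\delta,R])$, and in fact $\bar\mu_J^T([0,1]\times(0,\infty))$, uniformly in probability.

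For this we would use the compensator identity \eqref{eq:rescaled-compensator}. Since $\bar\mu_J^T$ has compensator $2\kappa_TV_u^TF_J(dz)du$, taking expectations of the counting process stopped at time one gives
\[
  \E\,\bar\mu_J^T([0,1]\times(0,\infty))
  =2\kappa_T\,\E\int_0^1V_u^Tdu .
\]
This requires that the expected count is finite at each fixed $T$, which follows from finite-$T$ non-explosion and the cluster bound in Lemma~\ref{lem:finite-T-subcriticality}; alternatively we would argue by localization and monotone convergence. Because $\kappa_T\to\kappa$ is bounded and Proposition~\ref{prop:first-moment} gives $\sup_{T\ge T_0}\E\int_0^1V_u^Tdu<\infty$, the total masses are bounded in $L^1$. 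Markov's inequality then gives tightness of $\bar\mu_J^T(C)$ for every compact $C$, and hence tightness of $(\bar\mu_J^T)$.

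The second assertion follows from marginal tightness. $(V^T)$ is tight in $L^2([0,1])$ by Proposition~\ref{prop:tightness}, and a product of tight families is tight in the product of Polish spaces. Here $\mathcal M_p$ with the vague topology is Polish, being a closed subset of the Polish space of boundedly finite measures.

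We do not expect any step to be a real obstacle, since the needed moment estimate is already available. The only point needing care is justifying the compensator identity in expectation at fixed $T$, i.e.\ integrability of the counting process. This is where we would invoke the finite expected cluster size from Lemma~\ref{lem:finite-T-subcriticality}, or simply use localization together with monotone convergence.
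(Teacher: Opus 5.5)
Your proposal is correct and follows essentially the same route as the paper. Both arguments combine the compensator identity \eqref{eq:rescaled-compensator} with the uniform first-moment bound of Proposition~\ref{prop:first-moment} to get uniformly bounded expected masses on compacts, then use Markov's inequality and the compact-containment criterion for vague tightness, closedness of $\mathcal M_p$, and product tightness with Proposition~\ref{prop:tightness}; the only cosmetic difference is that you bound the total mass directly (since $F_J$ is a probability measure), whereas the paper bounds $2\kappa_TF_J(C)\E\int_0^1V_s^Tds$ for a compact mark set $C$.
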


\begin{proof}
Let $K\subset[0,1]\times(0,\infty)$ be compact. Then there exists a compact set $C\subset(0,\infty)$ such that
\[
  K\subset [0,1]\times C.
\]
By the compensator identity \eqref{eq:rescaled-compensator},
\[
\E \bar\mu_J^T(K)
\le
\E \bar\mu_J^T([0,1]\times C)
=
\E \bar\nu_J^T([0,1]\times C).
\]
Using
\[
  \bar\nu_J^T(ds,dz)
  =2\kappa_TV_s^TF_J(dz)ds,
\]
we get
\[
\E \bar\nu_J^T([0,1]\times C)
=
2\kappa_TF_J(C)\E\int_0^1V_s^Tds.
\]
Since $\kappa_T\to\kappa$ and Proposition~\ref{prop:first-moment} gives
\[
  \sup_{T\ge T_0}\E\int_0^1V_s^Tds<\infty,
\]
it follows that
\[
  \sup_{T\ge T_0}\E \bar\mu_J^T(K)<\infty
\]
for every compact $K\subset[0,1]\times(0,\infty)$. Therefore, by Markov's inequality,
\[
\sup_{T\ge T_0}\Q\left(\bar\mu_J^T(K)>M\right)
\le
\frac{1}{M}\sup_{T\ge T_0}\E\left[\bar\mu_J^T(K)\right]
\to0,
\qquad M\to\infty.
\]
Thus, the masses of $\bar\mu_J^T$ on every compact set are tight.

Because $[0,1]\times(0,\infty)$ is locally compact and Polish, the
compact-containment criterion gives tightness in the vague topology. The
space of point measures is closed under vague convergence, so the limit
remains in $\mathcal M_p$. Since $V^T$ is tight in $L^2([0,1])$ by
Proposition~\ref{prop:tightness}, product tightness gives the assertion.
\end{proof}

\begin{lemma}[Convergence of the microscopic jump compensators]
\label{lem:limiting-compensator-identification}
Let $V^T\Rightarrow V$ in $L^2([0,1])$. Then the predictable compensators
\[
  \bar\nu_J^T(dt,dz)=2\kappa_TV_t^TF_J(dz)dt
\]
converge vaguely, against test functions in $C_c([0,1]\times(0,\infty))$, to
\[
  V_t^p\nu_J(dz)dt,
  \qquad
  \nu_J(dz)=2\kappa F_J(dz).
\]
Here $V^p$ is the left-difference-quotient version defined above; since
$V^p=V$ $dt$-a.e., this limiting random measure is equivalently
$V_t\nu_J(dz)dt$.
\end{lemma}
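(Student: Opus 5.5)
We need to prove: if $V^T\Rightarrow V$ in $L^2([0,1])$, then $\bar\nu_J^T=2\kappa_TV^T_tF_J(dz)dt$ converges vaguely to $V^p_t\nu_J(dz)dt$. The plan is to reduce everything to the continuity of a deterministic map on $L^2([0,1])$ and then apply the continuous mapping theorem.

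First, for a fixed test function $f\in C_c([0,1]\times(0,\infty))$, define the deterministic functional
\[
  \Phi_f(v):=\int_0^1\int_{(0,\infty)}f(t,z)F_J(dz)\,v_t\,dt,
  \qquad v\in L^2([0,1]).
\]
The inner integral $\bar f(t):=\int f(t,z)F_J(dz)$ is bounded and measurable in $t$ (indeed continuous, by dominated convergence, since $f$ is bounded and $F_J$ is a probability measure). Hence $|\Phi_f(v)-\Phi_f(w)|\le\|f\|_\infty\|v-w\|_{L^1}\le\|f\|_\infty\|v-w\|_{L^2}$, so $\Phi_f$ is Lipschitz on $L^2([0,1])$. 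Then
\[
  \int f\,d\bar\nu_J^T=2\kappa_T\Phi_f(V^T)
  =2\kappa\Phi_f(V^T)+2(\kappa_T-\kappa)\Phi_f(V^T).
\]
The second term tends to zero in probability, because $\kappa_T\to\kappa$ and $\Phi_f(V^T)$ is tight: it is bounded by $\|f\|_\infty\int_0^1V^T_tdt$, and Proposition~\ref{prop:first-moment} supplies the needed bound. The continuous mapping theorem therefore gives $\int f\,d\bar\nu_J^T\Rightarrow 2\kappa\Phi_f(V)=\int f(t,z)V_t\nu_J(dz)dt$.

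Second, to obtain vague convergence of the random measures rather than merely of individual integrals, I use the finite-dimensional version of the same argument. Take a countable convergence-determining family $(f_\ell)\subset C_c([0,1]\times(0,\infty))$. The map $v\mapsto(2\kappa\Phi_{f_\ell}(v))_\ell\in\mathbb R^{\mathbb N}$ is continuous, so the joint convergence holds in $\mathbb R^{\mathbb N}$. Since the vague topology on locally finite measures is metrized through such a family, this yields $\bar\nu_J^T\Rightarrow V_t\nu_J(dz)dt$. More simply, the measure-valued map $v\mapsto 2\kappa v_t F_J(dz)dt$ is itself continuous from $L^2_+$ into the space of Radon measures with the vague topology, by the Lipschitz estimate above. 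The same map also gives joint convergence with $V^T$ and with any other coordinate converging jointly, which is the form used in Lemma~\ref{lem:jump-price-functional-convergence}.

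Finally, Lemma~\ref{lem:canonical-predictable-density} gives $V^p=V$ $dt$-a.e. pathwise, so $V_t\nu_J(dz)dt=V^p_t\nu_J(dz)dt$ as random measures.

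The only step needing care is the passage from $\kappa_T$ to $\kappa$, which requires the tightness of $\int_0^1V^T_tdt$. That tightness follows both from Proposition~\ref{prop:first-moment} and directly from the $L^2$ convergence of $V^T$.
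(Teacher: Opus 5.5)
Your proposal is correct and follows essentially the same route as the paper: both proofs write each test integral as the $L^2$-continuous linear functional $v\mapsto 2\kappa_T\int_0^1 v_t\bigl(\int h(t,z)F_J(dz)\bigr)dt$, pass to the limit using $\kappa_T\to\kappa$, and then use $V^p=V$ $dt$-a.e. to identify the limit. Your additional step, which lifts this to continuity of the measure-valued map $v\mapsto 2\kappa v_tF_J(dz)dt$ and hence gives joint convergence with the other coordinates, is a harmless and slightly more explicit strengthening of the paper's statement.
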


\begin{proof}
Let $h\in C_c([0,1]\times(0,\infty))$, and define
\[
  H(t):=\int_{(0,\infty)} h(t,z)F_J(dz).
\]
Then $H$ is bounded and continuous. Since the map
$v\mapsto\int_0^1v_tH(t)dt$
is continuous on $L^2([0,1])$, and since $\kappa_T\to\kappa$, we have
\begin{equation}
\label{eq:jump-compensator-test-limit}
\begin{aligned}
\int_0^1\int_{(0,\infty)} h(t,z)\bar\nu_J^T(dt,dz)
&=2\kappa_T\int_0^1V_t^TH(t)dt\\
&\Rightarrow
2\kappa\int_0^1V_tH(t)dt
=
\int_0^1\int_{(0,\infty)} h(t,z)V_t\nu_J(dz)dt.
\end{aligned}
\end{equation}
Because $V^p=V$ $dt$-a.e., the limiting integral on the right-hand side of
\eqref{eq:jump-compensator-test-limit} is also the integral of $h$ against
$V_t^p\nu_J(dz)dt$. This proves the asserted vague convergence
of compensators.
\end{proof}

\begin{lemma}[No clustering of macroscopic marked jumps]
\label{lem:marked-no-clustering}
For every relatively compact Borel set $C\subset(0,\infty)$,
\[
\lim_{\eta\downarrow0}\limsup_{T\to\infty}
\Q\left(
\sup_{I\subset[0,1], |I|\le\eta}
\bar\mu_J^T(I\times C)\ge2
\right)=0.
\]
\end{lemma}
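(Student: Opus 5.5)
We reduce the supremum over short intervals to a count over a fixed grid and then bound second factorial moments through the compensator. Fix a relatively compact Borel set $C\subset(0,\infty)$ and $\eta\in(0,1)$. Partition $[0,1]$ into the intervals $I_j:=[j\eta,(j+2)\eta]\cap[0,1]$, $0\le j<\lceil1/\eta\rceil$. Every interval $I\subset[0,1]$ with $|I|\le\eta$ is contained in some $I_j$. Hence the event in the statement implies that $\bar\mu_J^T(I_j\times C)\ge2$ for some $j$. It therefore suffices to show
\[
  \limsup_{T\to\infty}\sum_j\Q\left(\bar\mu_J^T(I_j\times C)\ge2\right)\longrightarrow0,
  \qquad \eta\downarrow0.
\]
For a point process $N$, the event $\{N(I)\ge2\}$ means that two distinct atoms fall in $I$. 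Thus $\Q(N(I)\ge2)\le\E\int_I\int_{(s,\cdot)\cap I}N(du)\,N(ds)$, which is the expected number of ordered pairs $s<u$ of atoms in $I$.

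We compute this pair count with the compensator \eqref{eq:rescaled-compensator}. Writing $N_t:=\bar\mu_J^T([0,t]\times C)$, we have
\[
  \E\sum_{s<u\in I_j}1
  =\E\int_{I_j}\left(N_{u-}-N_{a_j}\right)\,2\kappa_TF_J(C)V_u^T\,du,
\]
where $a_j$ is the left endpoint of $I_j$ and $N_{u-}-N_{a_j}$ is predictable. We bound $N_{u-}-N_{a_j}$ by the total count $N(I_j)$. Then Cauchy--Schwarz gives
\[
  \E\left[N(I_j)\int_{I_j}V_u^Tdu\right]
  \le \left(\E N(I_j)^2\right)^{1/2}\left(\E\left(\int_{I_j}V^T\right)^2\right)^{1/2}.
\]
Moreover $\E N(I_j)^2=\E\int_{I_j}c V^T+\E\left(\int_{I_j}cV^T\right)^2$ plus a martingale cross term. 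This follows from the square of the compensated martingale via the random-measure isometry, since $N=\widetilde N+cA$ with $\E\widetilde N(I)^2=c\E A(I)$. Both pieces are controlled by $\E\int_{I_j}V^T$ and $\E\left(\int_{I_j}V^T\right)^2\le 2\eta\,\E\int_{I_j}(V^T)^2$.

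Summing over $j$ and using that each time point lies in at most two $I_j$, we obtain a bound of the form
\[
  C\sum_j\left(\E\int_{I_j}V^T+\eta\E\int_{I_j}(V^T)^2\right)^{1/2}\left(\eta\E\int_{I_j}(V^T)^2\right)^{1/2}.
\]
The discrete Cauchy--Schwarz inequality over $j$, together with Propositions~\ref{prop:first-moment} and \ref{prop:second-moment}, bounds this by $C\eta^{1/2}\left(\sup_T\E\int_0^1V^T+\sup_T\E\|V^T\|_2^2\right)$. This tends to zero as $\eta\downarrow0$ uniformly in $T\ge T_0$. The constant $2\kappa_TF_J(C)$ is bounded because $\kappa_T\to\kappa$.

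The main obstacle is that we only have $L^2$-in-time control of $V^T$, not pathwise boundedness. The argument must therefore avoid estimating $\sup_u V_u^T$. This is why the quantity $\E\left(\int_IV^T\right)^2$ is converted via $\left(\int_IV\right)^2\le|I|\int_IV^2$, which produces the crucial factor $\eta$. If the Cauchy--Schwarz step turns out too lossy, the fallback is to localize on $\{\|V^T\|_2\le M\}$. Markov's inequality with Proposition~\ref{prop:second-moment} makes the complement small uniformly in $T$. On the good set, $\int_IV^T\le M\eta^{1/2}$, so the conditional pair intensity is $O(M^2\eta)$ per interval and $O(M^2\eta)$ summed, using that the good set's indicator can be replaced by a stopping-time truncation $\tau_M:=\inf\{t:\int_0^t(V^T)^2\ge M^2\}$, preserving predictability.
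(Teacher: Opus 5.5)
Your proof is correct and is essentially the paper's argument. Both proofs bound the expected number of ordered pairs (the second factorial moment) through the compensator $2\kappa_TF_J(C)V^T\,du$, apply Cauchy--Schwarz with $(\int_IV^T)^2\le|I|\int_I(V^T)^2$ to produce the factor $\eta$, and cover $[0,1]$ by overlapping intervals of length $2\eta$ whose contributions sum to $O(\eta^{1/2})$ uniformly in $T$.

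The differences are cosmetic:
\begin{itemize}
\item You bound $N_{u-}-N_{a_j}$ by $N(I_j)$ and control $\E N(I_j)^2$ via the isometry, with the cross term handled by $(x+y)^2\le2x^2+2y^2$; the paper instead splits $N_{a,s-}$ into martingale and compensator parts first.
\item You sum with discrete Cauchy--Schwarz and the first-moment bound, where the paper uses H\"older.
\end{itemize}

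You should state that $N$ is simple in time, because the pair-count bound $\Q(N(I)\ge2)\le\E[\#\{s<u\}]$ relies on it. This holds here since the independent Poisson embeddings have diffuse time intensity. The stopping-time fallback is unnecessary.
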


\begin{proof}
Write
\[
  N_t^{T,C}:=\bar\mu_J^T((0,t]\times C),
  \qquad
  \lambda_t^{T,C}
  =2\kappa_TF_J(C)V_t^T.
\]
Each $N^{T,C}$ is simple: the two underlying Poisson random measures have
diffuse time intensity, and their independent superposition has no
simultaneous time atoms almost surely.

For $I=(a,b]$, let $N_I:=N_b^{T,C}-N_a^{T,C}$ and
$M_t^{T,C}:=N_t^{T,C}-\int_0^t\lambda_s^{T,C}ds$, and write
$N_{a,s-}^{T,C}:=N_{s-}^{T,C}-N_a^{T,C}$ for $s\in I$. Thus,
\[
N_I(N_I-1)
=
2\int_{(a,b]}N_{a,s-}^{T,C}\,dN_s^{T,C}.
\]
Taking expectations and using the compensator,
\[
\E[N_I(N_I-1)]
=
2\E\int_a^bN_{a,s-}^{T,C}\lambda_s^{T,C}ds.
\]
Decompose the count increment into its martingale and compensator parts.
The martingale isometry for the compensated count, followed by Fubini's
theorem, gives
\[
\E\int_I\left|M_{s-}^{T,C}-M_a^{T,C}\right|^2ds
=
\E\int_a^b\int_a^s\lambda_u^{T,C}du\,ds
\le
|I|\E\int_I\lambda_u^{T,C}du.
\]
Moreover,
\[
2\int_a^b
\left(\int_a^s\lambda_u^{T,C}du\right)\lambda_s^{T,C}ds
=
\left(\int_I\lambda_s^{T,C}ds\right)^2
\le
|I|\int_I\left(\lambda_s^{T,C}\right)^2ds.
\]
Cauchy--Schwarz inequality therefore yields
\[
\E[N_I(N_I-1)]
\le
2\left(
|I|\E\int_I\lambda_s^{T,C}ds
\right)^{1/2}
\left(
\E\int_I\left(\lambda_s^{T,C}\right)^2ds
\right)^{1/2}
+
|I|\E\int_I\left(\lambda_s^{T,C}\right)^2ds.
\]

For $\eta\in(0,1/2)$, let $n_\eta:=\lceil\eta^{-1}\rceil$ and define
\[
  J_j:=[j\eta,(j+1)\eta),
  \qquad 0\le j\le n_\eta-2,
  \qquad
  J_{n_\eta-1}:=[(n_\eta-1)\eta,1].
\]
For $0\le j\le n_\eta-2$, set
\[
  K_j:=J_j\cup J_{j+1},
  \qquad
  a_j^T:=\E\int_{K_j}\left(\lambda_s^{T,C}\right)^2ds.
\]
Any two points at distance at most $\eta$ lie in some $K_j$, and
$|K_j|\le2\eta$.
Because $\kappa_T$ is bounded and
$\sup_T\E\int_0^1|V_s^T|^2ds<\infty$, the bounded overlap of the $K_j$
gives
\[
  \sum_{j=0}^{n_\eta-2}a_j^T\le C.
\]
Also
\[
\E\int_{K_j}\lambda_s^{T,C}ds
\le |K_j|^{1/2}\left(a_j^T\right)^{1/2}.
\]
Applying the factorial-moment bound to $K_j$, summing, and using H\"older's
inequality,
\begin{equation}
\label{eq:no-clustering-factorial-bound}
\begin{aligned}
\sum_{j=0}^{n_\eta-2}
\E\left[N^{T,C}(K_j)\left(N^{T,C}(K_j)-1\right)\right]
&\le
C\eta^{3/4}\sum_{j=0}^{n_\eta-2}\left(a_j^T\right)^{3/4}
+C\eta\sum_{j=0}^{n_\eta-2}a_j^T\\
&\le C\eta^{1/2}+C\eta,
\end{aligned}
\end{equation}
since the number of intervals is $\mathcal{O}(\eta^{-1})$. If an interval of length
at most $\eta$ contains two points, some $K_j$ contains both. Markov's
inequality together with \eqref{eq:no-clustering-factorial-bound} proves the
assertion.
\end{proof}

\begin{proposition}[Point-process stability of rare marked jumps]
\label{prop:point-process-stability}
Let $(V^T)_{T\ge T_0}$ be tight in $L^2([0,1])$, and assume the uniform first
and second moment bounds
\[
\sup_{T\ge T_0} \E\int_0^1 V_t^Tdt<\infty,
\qquad
\sup_{T\ge T_0}\E\|V^T\|_{L^2([0,1])}^2<\infty.
\]
Let
\[
  \bar\mu_J^T(dt,dz)
  :=\mu_{J,{\rm com}}^T(Tdt,dz)
\]
be the rescaled common jump-marked measure. Then $(V^T,\bar\mu_J^T)$ is tight
in
\[
  L^2([0,1])\times \mathcal M_p([0,1]\times(0,\infty)).
\]
Moreover, along every jointly convergent subsequence
\[
  \left(V^T,\bar\mu_J^T\right)\Rightarrow(V,\mu),
\]
the limiting coordinate $\mu$ is a boundedly finite point measure that is
simple in time:
\[
  \mu(\{t\}\times(0,\infty))\le1
  \qquad\text{for every }t\in[0,1],
  \quad\text{almost surely}.
\]
\end{proposition}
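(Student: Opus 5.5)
The proof has two parts: tightness and time simplicity. Tightness of $(V^T,\bar\mu_J^T)$ is already contained in Lemma~\ref{lem:jump-measure-tightness}. There, the compensator identity \eqref{eq:rescaled-compensator} and the uniform first moment bound $\E\bar\mu_J^T([0,1]\times C)=2\kappa_TF_J(C)\E\int_0^1V_s^Tds\le C_C$ give tightness of the masses on compacts. Tightness of $V^T$ in $L^2([0,1])$ is assumed. So I will simply invoke that lemma, noting that it uses only the two stated moment bounds and tightness of $V^T$. The vague limit of point measures is again a point measure, and finiteness on compacts $[0,1]\times C$ follows from Fatou applied to the same first-moment bound. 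Hence $\mu$ is boundedly finite.

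For time simplicity I fix a compact $C\subset(0,\infty)$ and work along a Skorokhod representation of the convergent subsequence. The key input is Lemma~\ref{lem:marked-no-clustering}. For $\eta>0$, consider the event
\[
E_\eta:=\left\{\exists\,t\in[0,1]:\ \mu([t-\eta/2,t+\eta/2]\times C')\ge2\right\},
\]
where $C'\supset C$ is an open, relatively compact set chosen so that its boundary carries no limiting atoms almost surely (possible for all but countably many radii, as in \eqref{eq:admissible-price-cutoffs}). If $\mu(\{t\}\times C)\ge2$, then for every small $\eta$ the open set $(t-\eta/2,t+\eta/2)\times C'$ contains at least two limiting atoms (counted with multiplicity). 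By vague convergence and the Portmanteau lower bound for open sets, $\liminf_T\bar\mu_J^T((t-\eta/2,t+\eta/2)\times C')\ge2$. The prelimit measures therefore eventually have two atoms in a time interval of length $\le\eta$ with marks in $C'$. This gives
\[
\Q\big(\exists t:\mu(\{t\}\times C)\ge2\big)\le\limsup_{T}\Q\Big(\sup_{|I|\le\eta}\bar\mu_J^T(I\times C')\ge2\Big).
\]
The main technical point is making this passage from a random time $t$ to a uniform statement rigorous. To do so I will cover $[0,1]$ by the finitely many overlapping intervals $K_j$ of length $2\eta$ used in the proof of Lemma~\ref{lem:marked-no-clustering}, taken open relative to $[0,1]$. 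This lets me apply Portmanteau to finitely many open sets, and the probability is bounded by the $\limsup$ of the prelimit event.

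Letting $\eta\downarrow0$, Lemma~\ref{lem:marked-no-clustering} (applied to $C'$) makes the right-hand side vanish. It follows that $\mu(\{t\}\times C)\le1$ for all $t$, almost surely. Since multiplicity at a single point $(t,z)$ is also covered by this argument, no separate treatment is needed for it.

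Finally, I take a countable exhaustion $C_n\uparrow(0,\infty)$ by compacts. The increasing union of the events $\{\sup_t\mu(\{t\}\times C_n)\le1\}$ — each of probability one — yields $\mu(\{t\}\times(0,\infty))\le1$ for all $t$ almost surely. The reason is that two atoms at the same time with marks $z_1,z_2$ both lie in some $C_n$.
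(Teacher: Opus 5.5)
Your proposal is correct and follows essentially the same route as the paper. Tightness comes from Lemma~\ref{lem:jump-measure-tightness}; then a Skorokhod representation gives a pathwise lower bound $\liminf_n\bar\mu_J^{T_n}(I\times C')\ge2$ on a short time window around any collision, and Fatou, Lemma~\ref{lem:marked-no-clustering}, and a countable family of mark sets finish the argument. The differences are cosmetic: your use of the Portmanteau $\liminf$ inequality for relatively open sets makes both your boundary-atom choice of $C'$ and the paper's continuity-set construction (via continuity points of $F_J$) unnecessary, and the finite cover by the $K_j$ is optional, since the pathwise argument already handles the random collision time.
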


\begin{proof}
The tightness of $V^T$ is assumed, while the tightness of
$\bar\mu_J^T$ follows from Lemma \ref{lem:jump-measure-tightness}. Hence the
pair is tight.

The limit belongs to $\mathcal M_p$ because the space of boundedly finite
point measures is closed under vague convergence. Fix a jointly convergent
subsequence, index it by $T_n$, and use a Skorokhod representation such that
\begin{equation}
\label{eq:point-stability-a.s.-vague}
  \left(V^{T_n},\bar\mu_J^{T_n}\right)\longrightarrow(V,\mu)
  \qquad\text{almost surely in }L^2\times\mathcal M_p.
\end{equation}

Let $D\subset(0,\infty)$ be a countable dense set of continuity points of
$F_J$, and let
\[
  \mathcal C
  :=
  \{[a,b]:a,b\in D,\ 0<a<b<\infty\}.
\]
This is a countable family such that any two positive marks are contained in
the interior of some $C\in\mathcal C$. We first show that every
$C=[a,b]\in\mathcal C$ is almost surely a mark-continuity set for $\mu$.
For sufficiently small $\varepsilon>0$,
let
\[
U_\varepsilon
:=(a-\varepsilon,a+\varepsilon)
  \cup(b-\varepsilon,b+\varepsilon)
\subset(0,\infty),
\]
and choose
$f_\varepsilon\in C_c((0,\infty))$ with
\[
  0\le f_\varepsilon\le1,
  \qquad
  f_\varepsilon=1\text{ on }\partial C,
  \qquad
  \operatorname{supp}(f_\varepsilon)\subset U_\varepsilon.
\]
The compensator formula and the uniform first-moment bound yield
\begin{align*}
  \sup_n\E\int_{[0,1]\times(0,\infty)}
  f_\varepsilon(z)\bar\mu_J^{T_n}(dt,dz)
  =
  \sup_n 2\kappa_{T_n}F_J(f_\varepsilon)
  \E\int_0^1V_t^{T_n}dt
  \le C F_J(U_\varepsilon).
\end{align*}
By \eqref{eq:point-stability-a.s.-vague}, continuity of integration against
$f_\varepsilon$, and Fatou's lemma,
\begin{align*}
  \E\mu([0,1]\times\partial C)
  \le
  \E\int_{[0,1]\times(0,\infty)}f_\varepsilon(z)\mu(dt,dz)
  \le C F_J(U_\varepsilon).
\end{align*}
Since $F_J(\partial C)=0$, letting $\varepsilon\downarrow0$ gives
\begin{equation}
\label{eq:limit-mark-boundary-zero}
  \mu([0,1]\times\partial C)=0
  \qquad\text{almost surely},\quad C\in\mathcal C.
\end{equation}

For $C\in\mathcal C$, define the limiting collision event
\[
  E_C
  :=
  \left\{\exists t\in[0,1]:\mu(\{t\}\times C)\ge2\right\}
\]
and, for $m\ge1$, the prelimit event
\[
  E_{n,m,C}
  :=
  \left\{
  \sup_{I\subset[0,1],|I|\le2/m}
  \bar\mu_J^{T_n}(I\times C)\ge2
  \right\}.
\]
On $E_C$, choose a collision time $t$. For every $m$, one can choose a
half-open interval $I_m\subset[0,1]$, containing $t$ and with
$|I_m|\le2/m$, whose endpoint sections are not charged by $\mu$ on $C$.
Together with \eqref{eq:limit-mark-boundary-zero}, this makes
$I_m\times C$ a relatively compact continuity set and
\[
  \mu(I_m\times C)\ge2.
\]
The almost sure vague convergence in
\eqref{eq:point-stability-a.s.-vague} then implies
\[
  \bar\mu_J^{T_n}(I_m\times C)\ge2,
\]
for all sufficiently large $n$. Hence, for every fixed $m$,
\begin{equation}
\label{eq:collision-event-inclusion}
  \mathbf 1_{E_C}
  \le
  \liminf_{n\to\infty}\mathbf 1_{E_{n,m,C}},
  \qquad\text{almost surely}.
\end{equation}
Fatou's lemma and \eqref{eq:collision-event-inclusion} give
\begin{align*}
  \Q(E_C)
  \le
  \liminf_{n\to\infty}\Q(E_{n,m,C})
  \le
  \limsup_{T\to\infty}
  \Q\left(
  \sup_{I\subset[0,1],|I|\le2/m}
  \bar\mu_J^T(I\times C)\ge2
  \right).
\end{align*}
Letting $m\to\infty$ and applying
Lemma~\ref{lem:marked-no-clustering} yields $\Q(E_C)=0$. Since
$\mathcal C$ is countable and contains any two positive marks in one of its
members,
\[
  \Q\left(
  \exists t\in[0,1]:
  \mu(\{t\}\times(0,\infty))\ge2
  \right)
  \le\sum_{C\in\mathcal C}\Q(E_C)=0.
\]
This proves time simplicity simultaneously for every $t\in[0,1]$.
\end{proof}

\subsection{Singular Volterra stochastic-convolution convergence}
\label{app:volterra-conv}
\begin{lemma}[Convergence of singular martingale convolutions]
\label{lem:singular-martingale-convolution}
Let $K_T\to K$ locally uniformly on $(0,1]$, and assume
\[
  |K_T(u)|+|K(u)|\le Cu^{\alpha-1},
  \qquad u\in(0,1],
  \qquad \alpha>1/2.
\]
Let $M^T$ be square-integrable martingales with
\[
  d\left\langle M^T\right\rangle_t=q_t^Tdt,
  \qquad
  \sup_{T\ge T_0}\E\int_0^1q_t^Tdt<\infty,
\]
Let $Y^T$ be any additional canonical coordinates with values in a Polish
space, and assume
\[
  \left(M^T,Y^T\right)\Rightarrow(M,Y),
\]
where $M$ is a continuous square-integrable martingale with
$\E\langle M\rangle_1<\infty$. Then
\[
\left(
  Y^T,
  \left(\int_0^tK_T(t-s)dM_s^T\right)_{t\in[0,1]}
\right)
\Rightarrow
\left(
  Y,
  \left(\int_0^tK(t-s)dM_s\right)_{t\in[0,1]}
\right)
\]
in the product with $L^2([0,1])$.
\end{lemma}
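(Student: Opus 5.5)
The plan is to truncate the singular kernel near the diagonal, treat the truncated convolution as a continuous functional of the martingale path via integration by parts, and control the diagonal remainder uniformly with It\^o's isometry. The final step is the converging-together lemma \citep[Theorem~3.2]{billingsley1999convergence}.

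\textbf{Truncation.} For $\varepsilon\in(0,1)$ fix a continuous cutoff $\chi_\varepsilon$ on $[0,\infty)$ with $0\le\chi_\varepsilon\le1$, $\chi_\varepsilon=0$ on $[0,\varepsilon/2]$ and $\chi_\varepsilon=1$ on $[\varepsilon,\infty)$. Set $K_T^\varepsilon:=\chi_\varepsilon K_T$ and $K^\varepsilon:=\chi_\varepsilon K$. These truncated kernels vanish near $0$. The decomposition is
\[
\int_0^tK_T(t-s)dM_s^T
=\int_0^tK_T^\varepsilon(t-s)dM_s^T+\int_0^t(1-\chi_\varepsilon)(t-s)K_T(t-s)dM_s^T,
\]
with the analogous splitting in the limit.

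\textbf{Remainder.} It\^o's isometry, Fubini and the envelope give
\[
\E\left\|\int_0^\cdot(1-\chi_\varepsilon)K_T(\cdot-s)dM_s^T\right\|_{L^2([0,1])}^2
\le C\E\int_0^1q_s^T\int_0^{\varepsilon}u^{2\alpha-2}du\,ds
\le C\varepsilon^{2\alpha-1}.
\]
This bound is uniform in $T$ and vanishes as $\varepsilon\downarrow0$ because $\alpha>1/2$. The same bound holds in the limit, with $\E\langle M\rangle_1<\infty$ in place of the uniform bound on $q^T$. This supplies condition \eqref{eq:converging-together-error} of the converging-together lemma.

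\textbf{Truncated term.} Here one cannot integrate by parts directly, since $K_T^\varepsilon$ need not be differentiable. I would first mollify in $u$ on $[\varepsilon/2,1]$: replace $K^\varepsilon$ by a $C^1$ kernel $\widetilde K^{\varepsilon}$, still vanishing near $0$, with $\sup_u|K^\varepsilon-\widetilde K^\varepsilon|\le\varepsilon$. Local uniform convergence then gives $\sup_{u\le1}|K_T^\varepsilon(u)-\widetilde K^\varepsilon(u)|\le 2\varepsilon$ for large $T$. The error from replacing $K_T^\varepsilon$ by $\widetilde K^\varepsilon$ is bounded in $L^2(\Q\otimes dt)$ by $C\varepsilon^2\sup_T\E\int_0^1q^T$, again by It\^o's isometry. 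So it suffices to treat a fixed $C^1$ kernel $G$ with $G(0)=0$. For such $G$,
\[
\int_0^tG(t-s)dM_s^T=\int_0^tG'(t-s)\left(M_s^T-M_0^T\right)ds.
\]
Hence the map $x\mapsto\left(\int_0^tG'(t-s)(x_s-x_0)ds\right)_{t\in[0,1]}$ is continuous from $D([0,1])$ to $C([0,1])\subset L^2([0,1])$. Indeed, Skorokhod convergence implies pointwise a.e.\ convergence with locally bounded sup-norms, and dominated convergence applies. By the continuous mapping theorem applied jointly with $Y^T$, the truncated term converges jointly with $Y$ to $\left(Y,\int_0^\cdot G(\cdot-s)dM_s\right)$. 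Here the limit is identified using the same integration-by-parts identity for the continuous martingale $M$.

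\textbf{Expected obstacle.} The main difficulty is arranging that all three approximations, namely truncation, mollification and kernel convergence, are uniform in $T$ and jointly compatible with $Y^T$. This works because every error term is controlled in probability in the $L^2([0,1])$ norm, uniformly in $T$, by powers of $\varepsilon$. The converging-together lemma, applied along $\varepsilon_n\downarrow0$ in the product space, then yields the claimed joint convergence.
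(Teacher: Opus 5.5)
Your proposal is correct and follows the paper's proof essentially step for step: a continuous diagonal cutoff $\chi_\varepsilon$, It\^o's isometry with the envelope $Cu^{\alpha-1}$ giving a remainder of order $\varepsilon^{2\alpha-1}$ uniformly in $T$, uniform $C^1$ approximation of the cutoff kernel with $G(0)=0$, integration by parts to obtain a continuous path functional, the continuous mapping theorem applied jointly with $Y^T$, and the converging-together lemma. The differences are cosmetic. You merge the kernel-convergence step $K_T^\varepsilon\to K^\varepsilon$ and the $C^1$ approximation into a single $2\varepsilon$ sup-bound. You also subtract $M_0^T$ explicitly in the integration-by-parts identity, which the paper leaves implicit.
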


\begin{proof}
For $\varepsilon>0$, choose a smooth function
$\chi_\varepsilon:[0,1]\to[0,1]$ such that
$\chi_\varepsilon=0$ on $[0,\varepsilon/2]$ and
$\chi_\varepsilon=1$ on $[\varepsilon,1]$, and define
\[
  H_T^\varepsilon:=\chi_\varepsilon K_T,
  \qquad
  H^\varepsilon:=\chi_\varepsilon K.
\]
The near-diagonal error satisfies, by It\^o's isometry and Fubini's theorem,
\[
\E\left\|
  \int_0^\cdot
  (1-\chi_\varepsilon)(\cdot-s)K_T(\cdot-s)dM_s^T
\right\|_2^2
\le
C\varepsilon^{2\alpha-1}
\sup_{T\ge T_0}\E\left\langle M^T\right\rangle_1.
\]
The same bound holds for $M$ and $K$. It vanishes uniformly as
$\varepsilon\downarrow0$ because $\alpha>1/2$.

For fixed $\varepsilon$, local uniform convergence away from zero gives, as
$T\to\infty$,
\[
  \left\|H_T^\varepsilon-H^\varepsilon\right\|_\infty\longrightarrow0.
\]
Consequently, as $T\to\infty$,
\[
\E\left\|
  \int_0^\cdot
  (H_T^\varepsilon-H^\varepsilon)(\cdot-s)dM_s^T
\right\|_2^2
\le
\|H_T^\varepsilon-H^\varepsilon\|_\infty^2
\sup_{T\ge T_0}\E\left\langle M^T\right\rangle_1
\longrightarrow0.
\]

It remains to treat the fixed cutoff kernel $H^\varepsilon$. Choose
$h_n\in C^1([0,1])$ with $h_n(0)=0$ and
$\|h_n-H^\varepsilon\|_\infty\to0$. This uniform approximation gives the
stochastic-integral estimate used below. Integration by parts gives
\[
  \int_0^t h_n(t-s)dM_s^T
  =
  \int_0^t h_n'(t-s)M_s^Tds,
\]
and the same identity holds for $M$. Since $M$ is continuous, the map
\[
  m\longmapsto
  \left(\int_0^t h_n'(t-s)m_sds\right)_{t\in[0,1]}
\]
is continuous at $M$ from the Skorokhod space into $L^2([0,1])$.
The continuous mapping theorem therefore gives joint convergence with
$Y^T$ for each $n$. The uniform approximation errors are bounded by
\[
  \|h_n-H^\varepsilon\|_\infty^2
  \sup_{T\ge T_0}\E\langle M^T\rangle_1
\]
before the limit and by the corresponding expression with
$\E\langle M\rangle_1$ after the limit. We may thus let
$T\to\infty$, then $n\to\infty$, and finally
$\varepsilon\downarrow0$. Applying the converging-together lemma stated in
\eqref{eq:converging-together-hypotheses}-\eqref{eq:converging-together-error} first to the $C^1$ approximation
$h_n\to H^\varepsilon$ and then to the diagonal cutoff
$\varepsilon\downarrow0$ proves the claim.
\end{proof}

Applying Lemma \ref{lem:singular-martingale-convolution} with
$K_T(u)=T^{1-\alpha}H_T(Tu)$ and $M^T=B^T$ gives the martingale
convergence used in Proposition \ref{prop:regular-volterra-limit}.

\subsection{Raw jump-feedback convergence with mark truncation}
\label{app:raw-jump}
We provide the truncation details for Proposition \ref{prop:raw-feedback-limit}.
After the time change,
\[
  \mathcal J_t^T
  =
  \int_{(0,t)}K_J^T(t-s)\int_{(0,\infty)} z\bar\mu_J^T(ds,dz),
  \qquad
  \left|K_J^T(u)\right|\le Cu^{\alpha-1},
\]
and $K_J^T(u)\to\xi K(u)$ locally uniformly on $(0,1]$. The target is
\[
\xi\int_{[0,t)}K(t-s)\int_{(0,\infty)} z\mu(ds,dz).
\]
After passing to a Skorokhod representation of the considered subsequence,
Fatou's lemma and the uniform moment bounds from Propositions
\ref{prop:first-moment} and \ref{prop:second-moment} give
\[
  \E\int_0^1V_sds<\infty,
  \qquad
  \E\|V\|_{L^2([0,1])}^2<\infty.
\]
For $R>0$ and $\varepsilon>0$, define
\[
\mathcal J_t^{T,R}
:=
\int_{(0,t)}K_J^T(t-s)
\int_{(0,\infty)} z\mathbf 1_{\{z\le R\}}\bar\mu_J^T(ds,dz),
\]
and
\[
\mathcal J_t^{T,R,\varepsilon}
:=
\int_{(0,t)}K_J^T(t-s)\mathbf 1_{\{t-s>\varepsilon\}}
\int_{(0,\infty)} z\mathbf 1_{\{z\le R\}}\bar\mu_J^T(ds,dz).
\]
Let $\mathcal J^{R,\varepsilon}$ and $\mathcal J^R$ denote the corresponding
limits obtained by replacing $K_J^T,\bar\mu_J^T$ with $\xi K,\mu$.

\begin{lemma}[Bounded marks plus diagonal cutoff]
\label{lem:bounded-marks-diagonal-cutoff}
For fixed $R<\infty$ satisfying $F_J(\{R\})=0$ and every
$\varepsilon>0$,
\[
  \mathcal J^{T,R,\varepsilon}\Rightarrow\mathcal J^{R,\varepsilon}
\]
in $L^2([0,1])$.
\end{lemma}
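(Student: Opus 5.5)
We will show that $\mathcal J^{T,R,\varepsilon}$ is the image of the pair $(\bar\mu_J^T, K_J^T)$ under a functional that is almost surely continuous at the limit point. We then combine the vague convergence of $\bar\mu_J^T$ with the local uniform convergence $K_J^T\to\xi K$ on $[\varepsilon,1]$.

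We work along a jointly convergent subsequence and pass to a Skorokhod representation, so that $\bar\mu_J^T\to\mu$ vaguely almost surely. The mark window is $[\delta,R]$. Marks below $\delta$ are removed by a separate $L^2$ estimate given at the end, because $z\mathbf 1_{\{z\le R\}}$ is not compactly supported in $(0,\infty)$. As in the proof of Lemma~\ref{lem:jump-price-functional-convergence}, we choose $\delta$ and $R$ so that $\mu([0,1]\times\{\delta,R\})=0$ almost surely. This is possible for almost every $\delta$, and $F_J(\{R\})=0$ together with the compensator/Fatou argument in Proposition~\ref{prop:point-process-stability} gives it for $R$. On $[0,1]\times[\delta,R]$ the limit $\mu$ has finitely many atoms $(t_k,z_k)$, and it has none at the endpoints by Proposition~\ref{prop:joint-tightness-identification}. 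Vague convergence on this continuity set then gives, for large $T$, a matching enumeration $(t_k^T,z_k^T)\to(t_k,z_k)$, exactly as in Lemma~\ref{lem:jump-price-functional-convergence}. The truncated feedback becomes the finite sum
\[
\mathcal J_t^{T,\delta,R,\varepsilon}=\sum_{k}K_J^T(t-t_k^T)\mathbf 1_{\{t-t_k^T>\varepsilon\}}z_k^T .
\]
For each $k$ the map $t\mapsto K_J^T(t-t_k^T)\mathbf 1_{\{t-t_k^T>\varepsilon\}}$ is bounded by $C\varepsilon^{\alpha-1}$. It converges to $\xi K(t-t_k)\mathbf 1_{\{t-t_k>\varepsilon\}}$ for every $t\ne t_k+\varepsilon$. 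This uses the uniform convergence of $K_J^T$ on $[\varepsilon/2,1]$ from \eqref{eq:KJ-limit}, continuity of $K$ from Lemma~\ref{lem:subordination}, and $t_k^T\to t_k$. Dominated convergence in $L^2([0,1])$ then gives almost sure convergence of each finite sum, since the number of atoms is fixed for large $T$.

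To remove the lower cutoff $\delta$, we bound the marks in $(0,\delta)$ directly. Since $K_J^T\mathbf 1_{(\varepsilon,1]}\le C\varepsilon^{\alpha-1}$ and the measure is nonnegative, the compensator identity \eqref{eq:rescaled-compensator} gives
\[
\E\sup_t\Bigl|\int_{(0,t)}K_J^T\mathbf 1\int_{(0,\delta)}z\,\bar\mu_J^T\Bigr|\le C\varepsilon^{\alpha-1}\int_{(0,\delta)}zF_J(dz)\sup_T\E\int_0^1V^T_sds .
\]
The right side vanishes as $\delta\downarrow0$, uniformly in $T$, by Proposition~\ref{prop:first-moment}. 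The same estimate holds for the limit with $\nu_J$ in place of $2\kappa_TF_J$, using Fatou's lemma. The converging-together lemma \eqref{eq:converging-together-hypotheses}--\eqref{eq:converging-together-error}, applied along admissible $\delta_n\downarrow0$, then yields $\mathcal J^{T,R,\varepsilon}\Rightarrow\mathcal J^{R,\varepsilon}$.

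The main delicacy is the matching-enumeration step. It requires that there be no limiting atoms on the mark boundaries $\{\delta,R\}$ and none at the time endpoints, so that the vague convergence of the point measures can be converted into convergence of their atoms. Unlike the price lemma, time simplicity is not needed here: atoms sharing a time are handled termwise in the sum. The time-discontinuity at $t=t_k+\varepsilon$ is harmless because it is a null set for the $L^2$ norm.
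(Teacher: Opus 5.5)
Your proof is correct, but the central step takes a different route from the paper's.

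\textbf{The paper's route.} The paper never enumerates atoms. It fixes $t$ and treats $(s,z)\mapsto K_J^T(t-s)\mathbf 1_{\{t-s>\varepsilon\}}h(z)$ as a test function, with $h\in C_c((0,\infty))$ a continuous mark function. This gives $F_T^h(t)\to F^h(t)$ for Lebesgue-a.e.\ $t$, namely those $t$ with $\mu(\{t-\varepsilon\}\times\operatorname{supp}h)=0$, and dominated convergence in $t$ then gives $L^2$ convergence. It passes from $h$ to $z\mathbf 1_{[\delta,R]}$ by sandwiching between $h_n^\pm$ and bounding the boundary mass through the compensator and Markov's inequality. That is why it takes both $\delta$ and $R$ to be continuity points of $F_J$. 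Finally, it removes the marks below $\delta$ by a compensated/predictable split, using the random-measure isometry and Young's inequality.

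\textbf{Your route.} You use the atom-labeling property of vague convergence on the compact window $[0,1]\times[\delta,R]$, the same device as in Lemma~\ref{lem:jump-price-functional-convergence}. This turns the functional into a finite sum of kernel translates, makes the continuity transparent, and removes the $h_n^\pm$ approximation entirely. It only needs $\mu$ not to charge the mark boundaries, which holds for a.e.\ $\delta$ and, via $F_J(\{R\})=0$, for $R$. Your small-mark bound is a plain first-moment estimate. That is legitimate because, for fixed $\varepsilon$, the cut kernel is bounded by $C\varepsilon^{\alpha-1}$ and everything is nonnegative, so no martingale cancellation is needed.

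\textbf{Minor points.} First, the labeling does not actually require the absence of atoms at times $0,1$. Since $[0,1]$ is the whole time space, $\{0,1\}\times[\delta,R]$ is not part of the window's boundary. The absence of a time-$0$ atom only matters for matching the $(0,t)$ versus $[0,t)$ conventions. Second, for the limiting small-mark estimate, the cleanest justification is the identified compensator $V^p_t\nu_J(dz)\,dt$. The Fatou route, with monotone approximation from below by $C_c$ functions, also works but gives the constant $\sup_T2\kappa_TF_J$ rather than $\nu_J$.

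\textbf{What each approach buys.} The paper's argument uses only the test-function definition of vague convergence. It also keeps its error terms in the same $L^2$-isometry format as Lemmas~\ref{lem:diagonal-cutoff-error} and~\ref{lem:mark-truncation-error}, where the singular kernel genuinely forces the split. Yours is shorter and more elementary for this fixed-$\varepsilon$ step, and as you note it needs no time simplicity.
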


\begin{proof}
Fix $R<\infty$ and $\varepsilon>0$. We first prove the convergence with
a lower mark cutoff and a continuous compactly supported mark function. Let
$0<\delta<R$, and let $h\in C_c((0,\infty))$ be bounded. Later we take
$h$ approximating $z\mathbf 1_{\{\delta\le z\le R\}}$. Define
\[
  F_T^h(t)
  :=
  \int_{[0,t)}
  K_J^T(t-s)\mathbf 1_{\{t-s>\varepsilon\}}h(z)
  \bar\mu_J^T(ds,dz),
\]
and
\[
  F^h(t)
  :=
  \int_{[0,t)}
  \xi K(t-s)\mathbf 1_{\{t-s>\varepsilon\}}h(z)
  \mu(ds,dz).
\]

By the Skorokhod representation theorem, along the considered subsequence we
may assume that
$\bar\mu_J^T\to\mu$
vaguely almost surely. We also have
$K_J^T\to \xi K$
uniformly on $[\varepsilon,1]$. We prove the convergence on this
representation space.

For fixed $t$, the test function
\[
  (s,z)\mapsto
  K_J^T(t-s)\mathbf 1_{\{0<t-s-\varepsilon\}}h(z)
\]
is supported in the compact set
\[
  [0,(t-\varepsilon)_+]\times \operatorname{supp}(h),
\]
but it is discontinuous at the time boundary $s=t-\varepsilon$. If
\[
  \mu(\{t-\varepsilon\}\times \operatorname{supp}(h))=0,
\]
then vague convergence of $\bar\mu_J^T$, together with uniform convergence
of $K_J^T$ on $[\varepsilon,1]$, yields
\[
  F_T^h(t)\to F^h(t).
\]
The exceptional set of such $t$'s has Lebesgue measure zero, because $\mu$
is boundedly finite on compact subsets and hence charges at most countably many
time sections of the compact set $[0,1]\times \operatorname{supp}(h)$.

Moreover, since $K_J^T$ is uniformly bounded on $[\varepsilon,1]$,
\[
  \left|F_T^h(t)\right|
  \le
  C_{\varepsilon,h}
  \bar\mu_J^T([0,1]\times \operatorname{supp}(h)).
\]
For almost every sample path, vague convergence implies that the compact masses
$\bar\mu_J^T([0,1]\times \operatorname{supp}(h))$ are eventually bounded,
provided the compact support is replaced, if necessary, by a slightly larger
compact continuity set. Hence, the preceding bound gives a pathwise $L^2(dt)$
dominating function. Therefore, by dominated convergence,
\[
  \left\|F_T^h-F^h\right\|_{L^2([0,1])}\to0
\]
almost surely, and hence in distribution.

We now pass from continuous mark functions to
$h_{\delta,R}(z):=z\mathbf 1_{\{\delta\le z\le R\}}$. Choose continuous
functions $h_n^-,h_n^+\in C_c((0,\infty))$ such that
\[
  0\le h_n^-\le h_{\delta,R}\le h_n^+,
\]
and
\[
  h_n^+(z)-h_n^-(z)
  \le
  C_R\left(
    \mathbf 1_{\{\delta-1/n<z<\delta+1/n\}}
    +
    \mathbf 1_{\{R-1/n<z<R+1/n\}}
  \right).
\]
For these continuous approximations the convergence has already been proved.
The error caused by replacing $h_{\delta,R}$ by $h_n^\pm$ is controlled by
the compact-boundary mass of the prelimit and limiting point measures. Put
\[
  B_n
  =
  (\delta-1/n,\delta+1/n)
  \cup(R-1/n,R+1/n)
\]
and denote the resulting boundary approximation error by $E^{T,n}$. Since
the cutoff kernel is uniformly bounded for fixed $\varepsilon$ and $R$, the
pathwise estimate is
\[
  \left\|E^{T,n}\right\|_{L^2([0,1])}
  \le
  C_{\varepsilon,R}\,
  \bar\mu_J^T([0,1]\times B_n).
\]
Therefore Markov's inequality and the compensator formula give, for every
$\eta>0$,
\[
  \Q\left(\left\|E^{T,n}\right\|_2>\eta\right)
\le
  \frac{C_{\varepsilon,R}}{\eta}
  \E\bar\mu_J^T([0,1]\times B_n)
  \le
  \frac{C_{\varepsilon,R}}{\eta}F_J(B_n)
  \sup_{T\ge T_0}\E\int_0^1V_s^Tds.
\]
Choose $\delta$ and $R$ to be continuity points of $F_J$ (equivalently of
$\nu_J=2\kappa F_J$). Then $F_J(B_n)\to0$. The identical pathwise and
Markov bounds apply to the limiting point measure with $F_J$ replaced by
$\nu_J$. Hence the approximation error converges to zero in probability in
$L^2([0,1])$. The converging-together lemma, with
$X^{T,n}=F_T^{h_n^-}$, $X^n=F^{h_n^-}$,
$X^T=F_T^{h_{\delta,R}}$, and $X=F^{h_{\delta,R}}$, gives
\[
  F_T^{h_{\delta,R}}\Rightarrow F^{h_{\delta,R}}
\]
in $L^2([0,1])$.

It remains to show that the auxiliary lower-cutoff error vanishes as
$\delta\downarrow0$. This error is obtained by replacing
$z\mathbf 1_{\{z\le R\}}$ with $z\mathbf 1_{\{\delta\le z\le R\}}$, and is
therefore exactly the small-mark contribution associated with
$z\mathbf 1_{\{0<z<\delta\}}$.
Write this contribution as the
sum of its compensated and predictable parts:
\[
\mathrm{SmallComp}^{T,\delta}_t
:=
\int_{[0,t)}
K_J^T(t-s)\mathbf 1_{\{t-s>\varepsilon\}}
\int_{(0,\delta)} z\widetilde M_J^T(ds,dz),
\]
and
\[
\mathrm{SmallPred}^{T,\delta}_t
:=
\int_{[0,t)}
K_J^T(t-s)\mathbf 1_{\{t-s>\varepsilon\}}
\int_{(0,\delta)} z\bar\nu_J^T(ds,dz).
\]
Since the kernel
$K_J^T(\cdot)\mathbf 1_{\{\cdot>\varepsilon\}}$ is uniformly bounded for fixed
$\varepsilon>0$, the random-measure isometry gives
\[
\begin{aligned}
\E\left\|\mathrm{SmallComp}^{T,\delta}\right\|_{L^2([0,1])}^2
&=
\E\int_0^1
\left|
\int_{[0,t)}
K_J^T(t-s)\mathbf 1_{\{t-s>\varepsilon\}}
\int_{(0,\delta)}z\widetilde M_J^T(ds,dz)
\right|^2dt \\
&\le
C_\varepsilon
\E\int_0^1\int_{[0,t)}
\int_{(0,\delta)}z^2\bar\nu_J^T(ds,dz)dt  \\
&\le
C_\varepsilon
\left(\int_{(0,\delta)}z^2F_J(dz)\right)
\sup_{T\ge T_0}\E\int_0^1V_s^Tds .
\end{aligned}
\]
For the predictable part, using
\[
  \bar\nu_J^T(ds,dz)=2\kappa_TV_s^TF_J(dz)ds,
\]
the boundedness of $\kappa_T$, and Young's inequality, we obtain
\[
\begin{aligned}
\mathrm{SmallPred}^{T,\delta}_t
&=
2\kappa_T
\left(\int_{(0,\delta)}zF_J(dz)\right)
\int_0^t
K_J^T(t-s)\mathbf 1_{\{t-s>\varepsilon\}}V_s^Tds,
\end{aligned}
\]
and therefore
\[
\begin{aligned}
\E\left\|\mathrm{SmallPred}^{T,\delta}\right\|_{L^2([0,1])}^2
&\le
C_\varepsilon
\left(\int_{(0,\delta)}zF_J(dz)\right)^2
\E\left\|V^T\right\|_{L^2([0,1])}^2  \\
&\le
C_\varepsilon
\left(\int_{(0,\delta)}zF_J(dz)\right)^2
\sup_{T\ge T_0}\E\left\|V^T\right\|_{L^2([0,1])}^2 .
\end{aligned}
\]
Both bounds vanish as $\delta\downarrow0$, by $m_1<\infty$,
$m_2<\infty$, Proposition~\ref{prop:first-moment}, and
Proposition~\ref{prop:second-moment}. The limiting estimates are identical,
with $F_J$ replaced by $\nu_J$. Hence the convergence with
$h_{\delta,R}$ implies the desired convergence with
$z\mathbf 1_{\{z\le R\}}$. This proves
\[
  \mathcal J^{T,R,\varepsilon}\Rightarrow \mathcal J^{R,\varepsilon}
\]
in $L^2([0,1])$.
\end{proof}

\begin{lemma}[Vanishing of the diagonal cutoff error]
\label{lem:diagonal-cutoff-error}
For every fixed $R<\infty$,
\[
  \lim_{\varepsilon\downarrow0}
  \sup_{T\ge T_0}\E\left\|\mathcal J^{T,R}-\mathcal J^{T,R,\varepsilon}\right\|_{L^2([0,1])}^2
  =0.
\]
The same conclusion holds for the limiting processes
$\mathcal J^{R,\varepsilon}\to\mathcal J^R$.
\end{lemma}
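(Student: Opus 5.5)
The difference $\mathcal J^{T,R}-\mathcal J^{T,R,\varepsilon}$ is the near-diagonal piece
\[
  D_t^{T,\varepsilon}
  :=\int_{(0,t)}K_J^T(t-s)\mathbf 1_{\{t-s\le\varepsilon\}}
  \int_{(0,\infty)}z\mathbf 1_{\{z\le R\}}\bar\mu_J^T(ds,dz).
\]
I split it, as in the small-mark step of Lemma~\ref{lem:bounded-marks-diagonal-cutoff}, into a compensated part and a predictable part:
\[
  D^{T,\varepsilon}=D^{T,\varepsilon,{\rm comp}}+D^{T,\varepsilon,{\rm pred}},
\]
where the first integrates against $\widetilde M_J^T$ and the second against $\bar\nu_J^T(ds,dz)=2\kappa_TV_s^TF_J(dz)ds$. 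Then $\E\|D^{T,\varepsilon}\|_2^2\le2\E\|D^{T,\varepsilon,{\rm comp}}\|_2^2+2\E\|D^{T,\varepsilon,{\rm pred}}\|_2^2$, and I bound each term uniformly in $T\ge T_0$ using only the fractional envelope $|K_J^T(u)|\le Cu^{\alpha-1}$ from \eqref{eq:KJ-limit}.

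For the compensated part, fix $t$; the integrand $s\mapsto K_J^T(t-s)\mathbf 1_{\{t-s\le\varepsilon\}}$ is deterministic, so the random-measure isometry gives
\[
\E|D_t^{T,\varepsilon,{\rm comp}}|^2
\le C\int_{(0,R]}z^2F_J(dz)\,\E\int_{(t-\varepsilon)^+}^t(t-s)^{2\alpha-2}V_s^Tds .
\]
I then integrate over $t\in[0,1]$ and use Fubini, which gives the bound
\[
  C m_2\,\E\int_0^1V_s^T\int_0^\varepsilon u^{2\alpha-2}du\,ds
  \le C\varepsilon^{2\alpha-1}\sup_T\E\int_0^1V^T_sds .
\]
This tends to zero because $\alpha>1/2$ and by Proposition~\ref{prop:first-moment}.

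For the predictable part,
\[
  D_t^{T,\varepsilon,{\rm pred}}
  =2\kappa_T\Big(\int_{(0,R]}zF_J(dz)\Big)
  \big((K_J^T\mathbf 1_{(0,\varepsilon]})*V^T\big)(t).
\]
Young's inequality gives
\[
  \|D^{T,\varepsilon,{\rm pred}}\|_2
  \le Cm_1\|K_J^T\mathbf 1_{(0,\varepsilon]}\|_1\|V^T\|_2
  \le C\varepsilon^\alpha\|V^T\|_2 .
\]
Squaring and applying Proposition~\ref{prop:second-moment} yields the bound $C\varepsilon^{2\alpha}$ uniformly in $T$. Both bounds are independent of $T$, which proves the prelimit claim.

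For the limit, the same argument applies, with $\xi K$ in place of $K_J^T$ (it also satisfies $\xi K(u)\le Cu^{\alpha-1}$ by Proposition~\ref{prop:resolvent-scaling}), $\mu$ in place of $\bar\mu_J^T$, and compensator $V^p_s\nu_J(dz)ds$ from Proposition~\ref{prop:joint-tightness-identification}. The moment bounds $\E\int_0^1V_sds<\infty$ and $\E\|V\|_2^2<\infty$ hold by Fatou, as noted at the start of Appendix~\ref{app:raw-jump}. The main obstacle is this limiting version, because the isometry requires $\mu$ to have the identified compensator in a filtration in which the kernel integrand is predictable; the prelimit side is routine. Since the kernel is deterministic and $\mathbf 1_{\{z\le R\}}z$ is bounded, the integrand is predictable in the common filtration $\mathcal G$. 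The isometry then holds for the $\mathcal G$-compensated measure, and $V^p=V$ $dt$-a.e.\ allows $V^p$ to be replaced by $V$ in the bounds.
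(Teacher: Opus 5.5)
Your proposal is correct and follows essentially the same argument as the paper. You split the near-diagonal piece into compensated and predictable parts, bound the first via the random-measure isometry and Fubini by $C\varepsilon^{2\alpha-1}\sup_T\E\int_0^1V^T_sds$, and bound the second via Young's inequality by $C\varepsilon^{2\alpha}\sup_T\E\|V^T\|_2^2$. You then treat the limiting case identically, using the compensator $V^p_s\nu_J(dz)ds$ identified in the common filtration.
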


\begin{proof}
Split the difference into the compensated and predictable parts. The
compensated part satisfies the random-measure isometry bound
\[
\E\int_0^1\left|
\int_0^tK_J^T(t-s)\mathbf 1_{\{t-s\le\varepsilon\}}
\int_{(0,\infty)} z\mathbf 1_{\{z\le R\}}\widetilde M_J^T(ds,dz)
\right|^2dt
\le
C_R\varepsilon^{2\alpha-1}\sup_{T\ge T_0}\E\int_0^1V_s^Tds.
\]
For the predictable part, Young's inequality gives
\[
\E\left\|
\left(u^{\alpha-1}\mathbf 1_{\{u\le\varepsilon\}}\right)*V^T
\right\|_{L^2}^2
\le
C\varepsilon^{2\alpha}\sup_{T\ge T_0}\E\left\|V^T\right\|_{L^2([0,1])}^2.
\]
Both bounds vanish as $\varepsilon\downarrow0$, since $\alpha>1/2$. The
limit estimates are identical with $V$ and $\nu_J$.
\end{proof}

\begin{lemma}[Vanishing of the mark truncation error]
\label{lem:mark-truncation-error}
Under $m_2<\infty$,
\[
  \lim_{R\to\infty}
  \sup_{T\ge T_0}\E\left\|\mathcal J^T-\mathcal J^{T,R}\right\|_{L^2([0,1])}^2
  =0.
\]
The same tail estimate holds for the limiting jump feedback.
\end{lemma}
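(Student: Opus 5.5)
The difference $\mathcal J^T-\mathcal J^{T,R}$ only involves marks $z>R$:
\[
  \mathcal J_t^T-\mathcal J_t^{T,R}
  =\int_{(0,t)}K_J^T(t-s)\int_{(R,\infty)}z\,\bar\mu_J^T(ds,dz).
\]
As in the proofs of Lemmas~\ref{lem:bounded-marks-diagonal-cutoff} and \ref{lem:diagonal-cutoff-error}, we split it into a compensated part and a predictable part, using $\bar\mu_J^T=\widetilde M_J^T+\bar\nu_J^T$ and $\bar\nu_J^T(ds,dz)=2\kappa_TV_s^TF_J(dz)ds$. The compensated part is
\[
  \mathrm{TailComp}^{T,R}_t:=\int_{(0,t)}K_J^T(t-s)\int_{(R,\infty)}z\,\widetilde M_J^T(ds,dz).
\]
The predictable part is
\[
  \mathrm{TailPred}^{T,R}_t:=2\kappa_T\Big(\int_{(R,\infty)}zF_J(dz)\Big)\int_0^tK_J^T(t-s)V_s^Tds.
\]

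For the compensated part, fix $t$. The integrand $s\mapsto K_J^T(t-s)$ is deterministic, so the random-measure isometry gives
\[
  \E\big|\mathrm{TailComp}^{T,R}_t\big|^2
  =2\kappa_T\Big(\int_{(R,\infty)}z^2F_J(dz)\Big)\E\int_0^t|K_J^T(t-s)|^2V_s^Tds .
\]
The envelope \eqref{eq:KJ-limit} gives $|K_J^T(u)|^2\le Cu^{2\alpha-2}$, and $2\alpha-2>-1$ since $\alpha>1/2$. Integrating over $t\in[0,1]$ and applying Fubini then gives the bound
\[
  C\Big(\int_{(R,\infty)}z^2F_J(dz)\Big)\sup_{T\ge T_0}\E\int_0^1V_s^Tds .
\]
This is uniform in $T$ by Proposition~\ref{prop:first-moment}, because $\kappa_T$ is bounded.

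For the predictable part, $\|K_J^T\|_{L^1([0,1])}\le C\int_0^1u^{\alpha-1}du<\infty$ uniformly in $T$. Young's inequality therefore gives
\[
  \E\|\mathrm{TailPred}^{T,R}\|_{L^2}^2\le C\Big(\int_{(R,\infty)}zF_J(dz)\Big)^2\sup_{T\ge T_0}\E\|V^T\|_{L^2([0,1])}^2,
\]
which is uniform in $T$ by Proposition~\ref{prop:second-moment}. Combining the two parts with $(a+b)^2\le 2a^2+2b^2$, the supremum over $T$ is bounded by a constant times $\int_{(R,\infty)}z^2F_J(dz)+\big(\int_{(R,\infty)}zF_J(dz)\big)^2$. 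Both terms tend to $0$ as $R\to\infty$ by dominated convergence, since $m_1,m_2<\infty$.

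For the limiting feedback we repeat the same argument with $K_J^T,\bar\mu_J^T,V^T$ replaced by $\xi K,\mu,V$. Proposition~\ref{prop:joint-tightness-identification} gives $V^p_s\nu_J(dz)ds$ as the compensator of $\mu$, and $V^p=V$ almost everywhere. The bound $K(u)\le Cu^{\alpha-1}$ holds in the limit. The moments $\E\int_0^1V_sds<\infty$ and $\E\|V\|_2^2<\infty$ follow by Fatou's lemma, as noted at the start of this appendix. The tails of $\nu_J=2\kappa F_J$ are controlled exactly like those of $F_J$.

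The only real point of care is to make sure the isometry is applied with the singular kernel. Square integrability is exactly where $\alpha>1/2$ is needed, so no diagonal cutoff is required here; the remaining steps are routine.
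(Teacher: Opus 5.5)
Your proposal is correct and follows the paper's proof essentially step for step. You split the large-mark tail into a compensated part, bounded by the random-measure isometry with the envelope $|K_J^T(u)|\le Cu^{\alpha-1}$ and $\alpha>1/2$, and a predictable part, bounded by Young's inequality and the uniform second moment; the limiting estimate then follows by replacing $F_J$ with $\nu_J$, exactly as in the paper, and like the paper you need no diagonal cutoff.
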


\begin{proof}
By the definitions of $\mathcal J^T$ and $\mathcal J^{T,R}$ in \ref{app:raw-jump}, the mark-truncation
error is the large-mark tail
\begin{equation}
\mathcal J^T_t-\mathcal J^{T,R}_t
=
\int_{(0,t)} K^T_J(t-s)
\int_{(R,\infty)} z\,\bar\mu^T_J(ds,dz).
\label{eq:large-mark-tail}
\end{equation}
Using
\[
\bar\mu^T_J=\widetilde M^T_J+\bar\nu^T_J,
\]
we split \eqref{eq:large-mark-tail} into its compensated and
predictable parts. For the compensated tail, the random-measure isometry gives
\[
\begin{aligned}
&\E\int_0^1\left|
\int_0^tK_J^T(t-s)
\int_{(0,\infty)} z\mathbf 1_{\{z>R\}}\widetilde M_J^T(ds,dz)
\right|^2dt  \\
&\quad
=
\E\int_0^1\int_0^t \left|K_J^T(t-s)\right|^2
\int_{(R,\infty)} z^2\bar\nu_J^T(ds,dz)dt  \\
&\quad
=
2\kappa_T\int_{(R,\infty)}z^2F_J(dz)
\E\int_0^1\int_s^1 \left|K_J^T(t-s)\right|^2dt V_s^Tds  \\
&\quad
\le
C\int_{(R,\infty)}z^2F_J(dz)
\E\int_0^1V_s^Tds.
\end{aligned}
\]
Here, we used $\bar\nu_J^T(ds,dz)=2\kappa_TV_s^TF_J(dz)ds$,
the boundedness of $\kappa_T$, and
\[
  \int_s^1 \left|K_J^T(t-s)\right|^2dt
  \le
  C\int_0^1 u^{2\alpha-2}du
  <\infty,
\]
because $\alpha>1/2$.

For the predictable tail, define
\[
  \mathrm{PredTail}^{T,R}_t
  :=
  \int_0^tK_J^T(t-s)
  \int_{(R,\infty)}z\bar\nu_J^T(ds,dz).
\]
Using again
$\bar\nu_J^T(ds,dz)=2\kappa_TV_s^TF_J(dz)ds$, we have
\[
  \mathrm{PredTail}^{T,R}_t
  =
  2\kappa_T
  \left(\int_{(R,\infty)}zF_J(dz)\right)
  \int_0^tK_J^T(t-s)V_s^Tds.
\]
Since $\kappa_T$ is bounded and
\[
  \left|K_J^T(u)\right|\le Cu^{\alpha-1},\qquad u\in(0,1],
\]
Young's inequality gives
\[
\begin{aligned}
\E\left\|\mathrm{PredTail}^{T,R}\right\|_{L^2([0,1])}^2
&\le
C
\left(\int_{(R,\infty)}zF_J(dz)\right)^2
\E\left\|
  \int_0^\cdot (\cdot-s)^{\alpha-1}V_s^Tds
\right\|_{L^2([0,1])}^2  \\
&\le
C
\left(\int_{(R,\infty)}zF_J(dz)\right)^2
\left\|u^{\alpha-1}\right\|_{L^1([0,1])}^2
\E\left\|V^T\right\|_{L^2([0,1])}^2  \\
&\le
C
\left(\int_{(R,\infty)}zF_J(dz)\right)^2
\sup_{T\ge T_0}\E\left\|V^T\right\|_{L^2([0,1])}^2.
\end{aligned}
\]
Moreover,
\[
  \int_{(R,\infty)}zF_J(dz)\to0,
  \qquad
  \int_{(R,\infty)}z^2F_J(dz)\to0,
\]
because $m_1<\infty$ and $m_2<\infty$. Hence both tails vanish uniformly in
$T$ as $R\to\infty$. The limiting estimate is identical with $F_J$
replaced by the finite measure $\nu_J$.
\end{proof}

Choose $R\uparrow\infty$ through continuity points of $F_J$. For each such
fixed $R$ and every $\varepsilon>0$, Lemma
\ref{lem:bounded-marks-diagonal-cutoff} gives convergence. The approximation
errors are then controlled by Lemmas \ref{lem:diagonal-cutoff-error} and
\ref{lem:mark-truncation-error}: the diagonal cutoff error vanishes as
$\varepsilon\downarrow0$, and the mark truncation error vanishes as
$R\to\infty$, uniformly in $T$. This proves Proposition
\ref{prop:raw-feedback-limit}.

\subsection{Tightness in \texorpdfstring{$L^2$}{L2} via translation estimates}
\label{app:tightness}

Throughout this subsection, kernels on $(0,1]$ are extended by zero to
$\mathbb R$, and each stochastic or deterministic convolution term is
extended to $\mathbb R$ by using the zero-extended kernel in the same
convolution. In particular,
\[
  \widetilde{\mathcal M}_t^T
  :=\sqrt{c_{B,T}}\int_0^1K_{T,\mathrm{ext}}^{\rm res}(t-s)dB_s^T,
  \qquad t\in\mathbb R,
\]
and $\widetilde A^T$ and
$\widetilde{\bar{\mathcal J}}^T$ are defined analogously below in \eqref{tilde:A:J}. Since the
zero-extended kernels are supported in $(0,1]$ and the integration variable
ranges over $[0,1]$, all these extensions are supported in $[0,2]$.
Indeed, $s\in[0,1]$ and $t-s\in(0,1]$ imply $t\in[0,2]$. This auxiliary
convolution agrees with the original terms on $[0,1]$, while its formulation
on $L^2(\mathbb R)$ permits direct use of translation estimates and the
Kolmogorov--M. Riesz--Fr\'echet criterion.

\begin{lemma}[Uniform translation modulus for the rescaled resolvent kernels]
\label{lem:uniform-kernel-translation-cutoff}
Let
\[
  K_T^{\rm res}(u):=T^{1-\alpha}H_T(Tu),
  \qquad u\in(0,1].
\]
Assume $K_T^{\rm res}(u)\to K(u)$ locally uniformly on $(0,1]$, and
$|K_T^{\rm res}(u)|\le Cu^{\alpha-1}$ for $u\in(0,1]$.
Extend all kernels by zero outside $(0,1]$. Then, for every $q\in[1,2]$,
\[
  \lim_{h\to0}
  \limsup_{T\to\infty}
  \left\|
  \tau_hK_T^{\rm res}-K_T^{\rm res}
  \right\|_{L^q([-1,1])}
  =0.
\]
\end{lemma}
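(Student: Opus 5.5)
We split the translated difference into a near-diagonal part, where we use only the uniform fractional envelope, and a part away from the diagonal, where we use locally uniform convergence to $K$. Write $\tau_hf(u):=f(u+h)$ and take $|h|\le\varepsilon$, with a cutoff $\varepsilon\in(0,1/2)$ chosen later. Decompose $[-1,1]$ into $D_\varepsilon:=[-1,1]\cap\{|u|\le2\varepsilon\}$, the boundary layer $E_h:=[1-|h|,1+|h|]\cap[-1,1]$, and the remainder $G_{\varepsilon,h}:=[2\varepsilon,1-|h|]$. On $G_{\varepsilon,h}$ both $u$ and $u+h$ lie in $[\varepsilon,1]$, so no zero-extension jump occurs there.

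\emph{Near-diagonal part.} On $D_\varepsilon$ we do not use cancellation. The envelope $|K_T^{\rm res}(u)|\le Cu^{\alpha-1}\mathbf 1_{(0,1]}(u)$ gives
\[
  \|\tau_hK_T^{\rm res}-K_T^{\rm res}\|_{L^q(D_\varepsilon)}
  \le 2C\left(\int_0^{3\varepsilon}u^{q(\alpha-1)}du\right)^{1/q}
  =C_q\varepsilon^{\alpha-1+1/q}.
\]
This bound is uniform in $T$ and in $|h|\le\varepsilon$. The exponent is positive for every $q\in[1,2]$, because $q(1-\alpha)<1$ holds when $\alpha>1/2$. This is exactly where the assumption $\alpha>1/2$ enters, through $\int_0^\varepsilon u^{2\alpha-2}du<\infty$.

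\emph{Boundary layer.} On $E_h$ the envelope bounds both kernels by $C(1-|h|)^{\alpha-1}\le C'$. The contribution is therefore at most $C'(2|h|)^{1/q}$, again uniformly in $T$.

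\emph{Bulk part.} On $G_{\varepsilon,h}$ we use the triangle inequality:
\[
|\tau_hK_T^{\rm res}-K_T^{\rm res}|\le 2\sup_{[\varepsilon,1]}|K_T^{\rm res}-K|+|K(u+h)-K(u)|.
\]
The first term tends to zero as $T\to\infty$ by the assumed locally uniform convergence on $(0,1]$, which we read as uniform on $[\varepsilon,1]$. For the second term, $K$ is continuous on $(0,\infty)$ by Lemma~\ref{lem:subordination}, hence uniformly continuous on $[\varepsilon,2]$. Its $L^q$ norm over a set of length at most $1$ is therefore bounded by $\omega_{K,\varepsilon}(|h|)$, the modulus of continuity of $K$ on $[\varepsilon,2]$, and this vanishes as $h\to0$. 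Taking $\limsup_{T\to\infty}$ first removes the convergence term and leaves
\[
  \limsup_{T\to\infty}\|\tau_hK_T^{\rm res}-K_T^{\rm res}\|_{L^q([-1,1])}
  \le C_q\varepsilon^{\alpha-1+1/q}+C'(2|h|)^{1/q}+\omega_{K,\varepsilon}(|h|).
\]

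\emph{Conclusion.} Let $h\to0$ with $\varepsilon$ fixed; the last two terms vanish. Then let $\varepsilon\downarrow0$; the first term vanishes. This gives the claim for every $q\in[1,2]$.

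The main point of care is to keep the diagonal singularity and the zero-extension jump at $u=1$ out of the uniform-convergence argument. The envelope controls both uniformly in $T$, while convergence to $K$ is only available away from the origin. The stated order of limits, with $\limsup_{T\to\infty}$ inside and $h\to0$ outside, is exactly what allows the cutoff argument to close.
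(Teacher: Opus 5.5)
Your proof is correct and is essentially the paper's argument: the envelope controls the near-diagonal piece uniformly in $T$ (with $\alpha>1/2$ giving $q(1-\alpha)<1$), the zero-extension edge at $u=1$ costs $O(|h|^{1/q})$, and on the bulk, uniform convergence on $[\varepsilon,1]$ plus regularity of $K$ closes the estimate, with limits taken in the order $T\to\infty$, then $h\to0$, then $\varepsilon\downarrow0$. The differences are cosmetic: you split the integration domain rather than the kernel, and you use uniform continuity of $K$ on $[\varepsilon,1]$ (from Lemma~\ref{lem:subordination}) instead of $L^q$-continuity of translations of $K\mathbf 1_{\{u>\varepsilon\}}$. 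Two small points to note: the unlisted region $[-1,-2\varepsilon)$ contributes nothing because both kernels vanish there, and on the boundary layer the translated kernel is bounded by $C(1-2|h|)^{\alpha-1}$ rather than $C(1-|h|)^{\alpha-1}$, which is still a constant for $|h|\le 1/4$.
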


\begin{proof}
Fix $\varepsilon\in(0,1)$. The near-diagonal part satisfies
\[
  \sup_{T\ge T_0}
  \left\|K_T^{\rm res}\mathbf 1_{\{0<u\le\varepsilon\}}\right\|_{L^q([-1,1])}^q
  \le
  C\int_0^\varepsilon u^{q(\alpha-1)}du
  =
  \frac{C}{1-q(1-\alpha)}\varepsilon^{1-q(1-\alpha)},
\]
which tends to zero as $\varepsilon\downarrow0$ since $\alpha>1/2$ implies
$1-q(1-\alpha)>0$ for $q\le2$. The same estimate controls the translated
near-diagonal part.

For the cutoff part
$K_T^{{\rm res},\varepsilon}(u):=K_T^{\rm res}(u)\mathbf 1_{\{u>\varepsilon\}}$,
choose $|h|<\varepsilon/2$. Then, except on boundary intervals of total
length $\mathcal O(|h|)$, all arguments of $K_T^{\rm res}$ involved in the
translation difference stay in $[\varepsilon/2,1]$. On
$[\varepsilon/2,1]$, the convergence $K_T^{\rm res}\to K$ is uniform.
Hence, for all large $T$,
\[
\begin{aligned}
\left\|\tau_hK_T^{{\rm res},\varepsilon}-K_T^{{\rm res},\varepsilon}\right\|_{L^q([-1,1])}
&\le
2\|K_T^{\rm res}-K\|_{L^\infty([\varepsilon/2,1])}
+\|\tau_hK^\varepsilon-K^\varepsilon\|_{L^q([-1,1])}
+C_\varepsilon|h|^{1/q}.
\end{aligned}
\]
The first term is arbitrarily small for large $T$. Since
$K\in L^q_{\rm loc}((0,1])$, translations are continuous in $L^q$ on the
cutoff portion, so the second term tends to zero as $h\to0$. Combining and
then letting $\varepsilon\downarrow0$ proves the claim.
\end{proof}

We now complete the tightness proof of Proposition~\ref{prop:tightness} with
the kernel-extension convention stated above. Besides
$K_{T,\mathrm{ext}}^{\rm res}$, let
$K_{T,\mathrm{ext}}^{J,{\rm mean}}$ and $K_{J,\mathrm{ext}}^T$ denote the
zero extensions to $\mathbb R$ of the corresponding kernels on $(0,1]$, and
define, for $t\in\mathbb R$,
\begin{equation}\label{tilde:A:J}
\begin{aligned}
  \widetilde A_t^T
  &:={\int_0^1}K_{T,\mathrm{ext}}^{J,{\rm mean}}(t-s)V_s^Tds,\\
  \widetilde{\bar{\mathcal J}}_t^T
  &:={\int_{(0,1)}}K_{J,\mathrm{ext}}^T(t-s)
  \int_{(0,\infty)}z\widetilde M_J^T(ds,dz).
\end{aligned}
\end{equation}
Together with $\widetilde{\mathcal M}^T$ defined at the start of the
subsection, these extensions agree with their original convolution terms on
$[0,1]$ and are supported in $[0,2]$.

With these definitions, the It\^o-isometry, random-measure-isometry, and
Young's inequality hold on $L^2(\mathbb R)$ with the kernel translation moduli
measured in $L^2(\mathbb R)$ for the martingale terms and in
$L^1(\mathbb R)$ for the predictable term. For example, Fubini's theorem and
$d\langle B^T\rangle_s=(1-p_T)V_s^Tds$ give the exact computation
\[
\E\int_{\mathbb R}
\left|\tau_h\widetilde{\mathcal M}_t^T
-\widetilde{\mathcal M}_t^T\right|^2dt\\
=
c_{B,T}(1-p_T)
\left\|\tau_hK_{T,\mathrm{ext}}^{\rm res}
-K_{T,\mathrm{ext}}^{\rm res}\right\|_{L^2(\mathbb R)}^2
\E\int_0^1V_s^Tds.
\]
Likewise, the compensator identity
$\bar\nu_J^T(ds,dz)=2\kappa_TV_s^TF_J(dz)ds$ and $m_2<\infty$ yield
\[
\E\left\|\tau_h\widetilde{\bar{\mathcal J}}^T
-\widetilde{\bar{\mathcal J}}^T\right\|_{L^2(\mathbb R)}^2
=
2\kappa_Tm_2
\left\|\tau_hK_{J,\mathrm{ext}}^T
-K_{J,\mathrm{ext}}^T\right\|_{L^2(\mathbb R)}^2
\E\int_0^1V_s^Tds,
\]
while Young's inequality yields that
\[
\left\|\tau_h\widetilde A^T-\widetilde A^T\right\|_{L^2(\mathbb R)}
\le
\left\|\tau_hK_{T,\mathrm{ext}}^{J,{\rm mean}}
-K_{T,\mathrm{ext}}^{J,{\rm mean}}\right\|_{L^1(\mathbb R)}
\left\|V^T\right\|_{L^2([0,1])}.
\]
By symmetry of the translation norm it is enough to consider
$0\le h\le1$. With the convention $\tau_hf(t)=f(t+h)$, the supports of a
zero-extended kernel and its positive translate then lie in $[-1,1]$.
Consequently, for $|h|\le1$ the $L^q(\mathbb R)$ translation modulus reduces,
by symmetry, to the $L^q([-1,1])$ modulus in
Lemma~\ref{lem:uniform-kernel-translation-cutoff}. The same observation
applies to $K_J^T$, because
$K_J^T=(\mathfrak c_Tc_Jr_TT^{\alpha-1})K_T^{\rm res}$ with a prefactor
converging to $\xi$, and to $K_T^{J,{\rm mean}}$ in $L^1$. Proposition
\ref{prop:first-moment} and Proposition~\ref{prop:second-moment} therefore
show that all three convolution extensions are uniformly
translation-continuous in mean square.

For the baseline, let $g_{0,\mathrm{ext}}^T$ and $g_{0,\mathrm{ext}}$ be the
zero extensions to $\mathbb R$ of the functions restricted to $[0,1]$. Then
\[
\left\|\tau_hg_{0,\mathrm{ext}}^T-g_{0,\mathrm{ext}}^T\right\|_{L^2(\mathbb R)}
\le
2\left\|g_0^T-g_0\right\|_{L^2([0,1])}
+\left\|\tau_hg_{0,\mathrm{ext}}-g_{0,\mathrm{ext}}\right\|_{L^2(\mathbb R)}.
\]
The first term vanishes as $T\to\infty$, and translation continuity in
$L^2(\mathbb R)$ of the single fixed function $g_{0,\mathrm{ext}}$ makes the
second vanish as $h\to0$.

Finally set
\[
  \widetilde V^T
  :=g_{0,\mathrm{ext}}^T+\widetilde A^T
  +\widetilde{\mathcal M}^T+\widetilde{\bar{\mathcal J}}^T.
\]
The family $(\widetilde V^T)_{T\ge T_0}$ is supported in the common compact
interval $[0,2]$ and is bounded in mean square in $L^2(\mathbb R)$ by the
same It\^{o}'s isometry and Young's inequality estimates together with
Proposition~\ref{prop:second-moment}. The preceding estimates give
\[
  \lim_{h\to0}\limsup_{T\to\infty}
  \E\left\|\tau_h\widetilde V^T-\widetilde V^T
  \right\|_{L^2(\mathbb R)}^2=0.
\]
We spell out the passage from this mean-square translation estimate to random
compact containment. Let $\varphi_\delta$ be a smooth, nonnegative,
compactly supported approximate identity. Jensen's inequality and Fubini's
theorem give
\[
\E\left\|\widetilde V^T-\varphi_\delta*\widetilde V^T\right\|_2^2
\le
\int_{\mathbb R}\varphi_\delta(h)
\E\left\|\tau_h\widetilde V^T-\widetilde V^T\right\|_2^2dh.
\]
The uniform mean-square bound supplies a dominating constant. The reverse
Fatou inequality and the preceding translation estimate therefore imply
\[
\lim_{\delta\downarrow0}\limsup_{T\to\infty}
\E\left\|\widetilde V^T-\varphi_\delta*\widetilde V^T\right\|_2^2=0.
\]
Set $S_\delta f:=\varphi_\delta*f$ and
\[
  H_0:=\{f\in L^2(\mathbb R):f=0\ \text{a.e.\ on }\mathbb R\setminus[0,2]\}.
\]
For fixed $\delta$, $S_\delta:H_0\to L^2(\mathbb R)$ is Hilbert--Schmidt,
because its integral kernel
$\varphi_\delta(t-s)\mathbf 1_{[0,2]}(s)$ belongs to
$L^2(\mathbb R^2)$. It is therefore compact. If
\[
  C_0:=\sup_{T\ge T_0}\E\|\widetilde V^T\|_2^2<\infty,
  \qquad
  K_{\delta,R}:=
  \overline{S_\delta\{f\in H_0:\|f\|_2\le R\}},
\]
then $K_{\delta,R}$ is compact and Markov's inequality gives
\begin{equation}
\label{eq:smoothed-compact-containment}
  \sup_{T\ge T_0}
  \Q\left(S_\delta\widetilde V^T\notin K_{\delta,R}\right)
  \le \frac{C_0}{R^2}.
\end{equation}

We now turn the approximation estimate into compact containment. Fix an
arbitrary sequence $T_j\to\infty$, let $\eta>0$, and set
$\varepsilon_n:=2^{-n}$. By the preceding $L^2$ approximation and Markov's
inequality, one can choose $\delta_n\downarrow0$ and $J_n$ such that
\begin{equation}
\label{eq:mollification-probability-error}
  \sup_{j\ge J_n}
  \Q\left(
    \left\|\widetilde V^{T_j}-S_{\delta_n}\widetilde V^{T_j}\right\|_2
    >\frac{\varepsilon_n}{2}
  \right)
  \le \eta 2^{-n-2}.
\end{equation}
For each $n$, \eqref{eq:smoothed-compact-containment} and compactness of
$K_{\delta_n,R}$ provide a finite $\varepsilon_n/2$-net for the smoothed
variables, outside an event of probability at most $\eta2^{-n-2}$ after
choosing $R$ sufficiently large. The finitely many laws with $j<J_n$ are
tight individually. Enlarging the net to cover those laws, we obtain a finite
set $F_n\subset L^2(\mathbb R)$ such that
\begin{equation}
\label{eq:finite-net-probability}
  \sup_{j\ge1}
  \Q\left(
    d_2\left(\widetilde V^{T_j},F_n\right)>\varepsilon_n
  \right)
  \le \eta2^{-n-1},
  \qquad
  d_2(f,F):=\inf_{g\in F}\|f-g\|_2.
\end{equation}
Define
\[
  K_\eta
  :=\bigcap_{n\ge1}
  \{f\in L^2(\mathbb R):d_2(f,F_n)\le\varepsilon_n\}.
\]
The set $K_\eta$ is closed and totally bounded: for any $r>0$, choose $n$
so that $2\varepsilon_n<r$; then the finite set $F_n$ is an $r$-net for
$K_\eta$. Since $L^2(\mathbb R)$ is complete, $K_\eta$ is compact. Moreover,
the union bound and \eqref{eq:finite-net-probability} yield
\[
  \sup_{j\ge1}\Q\left(\widetilde V^{T_j}\notin K_\eta\right)
  \le\sum_{n\ge1}\eta2^{-n-1}<\eta.
\]
Thus every sequence $T_j\to\infty$ has tight laws. Equivalently, by
Prokhorov's theorem and the Kolmogorov--M. Riesz--Fr\'echet criterion
\cite[Theorem~4.26]{brezis2011functional},
$(\widetilde V^T)$ is tight in $L^2(\mathbb R)$. Since
$\widetilde V^T=V^T$ on $[0,1]$, restriction to $[0,1]$ proves tightness of
$(V^T)_{T\ge T_0}$ in $L^2([0,1])$.

\subsection{Resolvent perturbation details}
\label{app:resolvent-perturbation}
We provide the details for Lemma \ref{lem:resolvent-perturbation}. All kernels
discussed here possess densities, and the inequalities are pointwise density
inequalities. The renewal-density envelope strengthens the available
Laplace-transform convergence to the pointwise perturbation estimate used
below.

We use the resolvent identity. For $a_T$ and $\widetilde a_T:=a_T(1-p_T)$, the resolvent identity yields that
\[
\Psi_{a_T}-\Psi_{\widetilde a_T}
=(a_T-\widetilde a_T)\phi*(\delta_0+\Psi_{a_T})*(\delta_0+\Psi_{\widetilde a_T}).
\]
Expanding the convolutions yields
\[
\phi*(\delta_0+\Psi_{a_T})*(\delta_0+\Psi_{\widetilde a_T})
=
\phi+\phi*\Psi_{a_T}+\phi*\Psi_{\widetilde a_T}
+\phi*\Psi_{a_T}*\Psi_{\widetilde a_T}.
\]
Since
\[
\Psi_{a_T}=a_T\phi+a_T\phi*\Psi_{a_T},
\]
and all kernels are non-negative, for all sufficiently large $T$, $a_T\ge1/2$, and hence
\[
\phi\le a_T^{-1}\Psi_{a_T}\le 2\Psi_{a_T},
\qquad
\phi*\Psi_{a_T}\le a_T^{-1}\Psi_{a_T}\le 2\Psi_{a_T}.
\]
Similarly, since $\widetilde a_T\to1$,
\[
\phi*\Psi_{\widetilde a_T}
\le \widetilde a_T^{-1}\Psi_{\widetilde a_T}
\le C\Psi_{\widetilde a_T}
\le C\Psi_{a_T}.
\]
Since $\widetilde a_T\le a_T$, the non-negative series representation yields that
\[
  \Psi_{\widetilde a_T}\le \Psi_{a_T}.
\]
Therefore,
\[
  \phi*\Psi_{a_T}*\Psi_{\widetilde a_T}
  \le C\Psi_{a_T}*\Psi_{a_T}.
\]
Using the envelope from Proposition \ref{prop:resolvent-scaling}, we bound $\Psi_{a_T}*\Psi_{a_T}$:
\[
\begin{aligned}
(\Psi_{a_T}*\Psi_{a_T})(Tu)
&=
T\int_0^u\Psi_{a_T}(T(u-v))\Psi_{a_T}(Tv)dv\\
&\le
CT^{2\alpha-1}\int_0^u(u-v)^{\alpha-1}v^{\alpha-1}dv
=
CT^{2\alpha-1}B(\alpha,\alpha)u^{2\alpha-1}.
\end{aligned}
\]
Here, $B(\alpha,\alpha)=(\Gamma(\alpha))^2/\Gamma(2\alpha)$ is the Beta
function. Since $u\in(0,1]$ and $\alpha\in(1/2,1)$, we have
$2\alpha-1>0>\alpha-1$. Hence,
$u^{2\alpha-1}=u^{\alpha-1}u^\alpha\le u^{\alpha-1}$. Therefore,
\[
  (\Psi_{a_T}*\Psi_{a_T})(Tu)
  \le
  CT^{2\alpha-1}u^{\alpha-1},
  \qquad u\in(0,1].
\]
Multiplying by $a_T-\widetilde a_T=a_Tp_T\le p_T$ and rescaling yields that
\[
T^{1-\alpha}\left|\Psi_T^{\rm reg}(Tu)-\Psi_T(Tu)\right|
\le
C p_TT^\alpha u^{\alpha-1}.
\]
Set $\epsilon_T:=p_TT^\alpha$. Since $p_T=\mathcal O(T^{-2\alpha})$, we have $\epsilon_T=\mathcal O(T^{-\alpha})$, and the desired estimate follows after increasing the constant $C$.
Combining this estimate with
Proposition~\ref{prop:resolvent-scaling} and
$H_T=(1-p_T)^{-1}\Psi_T^{\rm reg}$ proves both the local uniform convergence
$T^{1-\alpha}H_T(Tu)\to K(u)$ and the uniform envelope
$T^{1-\alpha}H_T(Tu)\le Cu^{\alpha-1}$ used throughout the paper.

\bibliographystyle{abbrvnat}
\bibliography{bibtex}

@article{xu2024diffusion,
  title={Diffusion approximations for self-excited systems with applications to general branching processes},
  author={Xu, Wei},
  journal={Annals of Applied Probability},
  volume={34},
  number={3},
  pages={2650--2713},
  year={2024},
  publisher={Institute of Mathematical Statistics}
}

@Article{dandapani2021quadratic,
  author  = {Dandapani, A. and Jusselin, P. and Rosenbaum, M.},
  title   = {From Quadratic {H}awkes Processes to Super-{H}eston Rough Volatility Models with {Z}umbach Effect},
  journal = {Quantitative Finance},
  volume  = {21},
  number  = {8},
  pages   = {1235--1247},
  year    = {2021}
}

@Article{gatheral2020quadratic,
  author  = {Gatheral, J. and Jusselin, P. and Rosenbaum, M.},
  title   = {The Quadratic Rough {H}eston Model and the Joint {S\&P} 500/{VIX} Smile Calibration Problem},
  journal = {Risk},
  year    = {2020},
  month   = {May}
}

@article{contkokholm2013,
  author  = {Cont, Rama and Kokholm, Thomas},
  title   = {A Consistent Pricing Model for Index Options and Volatility Derivatives},
  journal = {Mathematical Finance},
  year    = {2013},
  volume  = {23},
  number  = {2},
  pages   = {248--274}
}

@article{sepp2008,
  author  = {Sepp, Artur},
  title   = {Pricing options on realized variance in the {H}eston model with jumps in returns and volatility},
  journal = {Journal of Computational Finance},
  year    = {2008},
  volume  = {11},
  pages   = {33--70}
}

@article{rebolledo1980central,
  title={Central limit theorems for local martingales},
  author={Rebolledo, Rolando},
  journal={Zeitschrift für Wahrscheinlichkeitstheorie und Verwandte Gebiete},
  volume={51},
  number={3},
  pages={269--286},
  year={1980}
}

@article{hagerHorstWagenhoferXu2026roughlognormal,
  title         = {Microstructural Foundation of Rough Log-Normal Volatility Models},
  author        = {Hager, Paul P. and Horst, Ulrich and Wagenhofer, Thomas and Xu, Wei},
  year          = {2026},
  journal       = {arXiv preprint arXiv:2603.13170}
}

@book{brezis2011functional,
  title={Functional Analysis, Sobolev Spaces and Partial Differential Equations},
  author={Brezis, H.},
  series={Universitext},
  year={2011},
  publisher={Springer},
  address={New York}
}

@book{folland1999real,
  title={Real Analysis: Modern Techniques and Their Applications},
  author={Folland, Gerald B.},
  edition={2nd},
  year={1999},
  publisher={John Wiley \& Sons},
  address={New York}
}

@book{billingsley1995probability,
  title={Probability and Measure},
  author={Billingsley, P.},
  edition={3},
  year={1995},
  publisher={John Wiley \& Sons},
  address={New York}
}

@book{billingsley1999convergence,
  title={Convergence of Probability Measures},
  author={Billingsley, Patrick},
  edition={2},
  year={1999},
  publisher={John Wiley \& Sons},
  address={New York},
  doi={10.1002/9780470316962}
}

@article{bondi2024affine,
  title={Affine Volterra processes with jumps},
  author={Bondi, A. and Livieri, G. and Pulido, S.},
  journal={Stochastic Processes and their Applications},
  volume={168},
  pages={104264},
  year={2024},
  publisher={Elsevier}
}

@misc{alonso2026microstructure,
  title        = {Microstructure Mechanisms for Stochastic Volatility and Roughness},
  author       = {Noguer I Alonso, Miquel},
  year         = {2026},
  note         = {Available at SSRN 6122526},
  doi          = {10.2139/ssrn.6122526}
}

@article{wang2025rough,
  title={{R}ough {H}eston model as the scaling limit of bivariate cumulative heavy-tailed {INAR} processes: {W}eak-error bounds and option pricing},
  author={Wang, Yingli and Cui, Zhenyu and Zhu, Lingjiong},
  journal={arXiv preprint arXiv:2503.18259},
  year={2025}
}

@Article{guyon2025dispersion,
  author = {Guyon, Julien},
  title = {Dispersion-Constrained Martingale {Schr{\"o}dinger} Bridges: Joint Entropic Calibration of Stochastic Volatility Models to {S{\&}P} 500 and {VIX} Smiles},
  journal = {SIAM Journal on Financial Mathematics},
  volume = {16},
  number = {3},
  pages = {834--874},
  year = {2025}
}

@Article{BayerBreneis2024WeakRoughHeston,
  author = {Bayer, C. and Breneis, S.},
  title = {Efficient option pricing in the rough {H}eston model using weak simulation schemes},
  journal = {Quantitative Finance},
  year = {2024},
  volume = {24},
  number = {9},
  pages = {1247--1261}
}

@Article{BayerFrizGatheral2015PricingRough,
  title = {Pricing under rough volatility},
  author = {Bayer, C. and Friz, P. and Gatheral, J.},
  journal = {Quantitative Finance},
  volume = {16},
  number = {6},
  pages = {887--904},
  year = {2016}
}

@Article{jaisson2015limit,
  title = {{Limit Theorems for nearly unstable Hawkes processes}},
  author = {Jaisson, T. and Rosenbaum, M.},
  journal = {Annals of Applied Probability},
  volume = {25},
  number = {2},
  pages = {600--631},
  year = {2015}
}

@Article{callegaro2021fast,
  title = {{Fast hybrid schemes for fractional Riccati equations (rough is not so tough)}},
  author = {Callegaro, G. and Grasselli, M. and Pages, G.},
  journal = {Mathematics of Operations Research},
  volume = {46},
  number = {1},
  pages = {221--254},
  year = {2021},
  publisher = {INFORMS}
}

@Article{hawkes1974cluster,
  title = {A cluster process representation of a self-exciting process},
  author = {Hawkes, Alan G and Oakes, David},
  journal = {Journal of Applied Probability},
  volume = {11},
  number = {3},
  pages = {493--503},
  year = {1974},
  publisher = {Cambridge University Press}
}

@Article{bacry2013modelling,
  title = {Modelling microstructure noise with mutually exciting point processes},
  author = {Bacry, E. and Delattre, S. and Hoffmann, M. and Muzy, J.-F.},
  journal = {Quantitative Finance},
  volume = {13},
  number = {1},
  pages = {65--77},
  year = {2013},
  publisher = {Taylor \& Francis}
}

@Article{jaisson2016rough,
  title = {{Rough fractional diffusions as scaling limits of nearly unstable heavy tailed Hawkes processes}},
  author = {Jaisson, T. and Rosenbaum, M.},
  journal = {Annals of Applied Probability},
  volume = {26},
  number = {5},
  pages = {2860--2882},
  year = {2016}
}

@Book{bingham1989regular,
  title = {Regular Variation},
  author = {Bingham, N. H. and Goldie, C. M. and Teugels, J. L.},
  year = {1989},
  publisher = {Cambridge University Press}
}

@Book{jacod2013limit,
  title = {Limit Theorems for Stochastic Processes},
  author = {Jacod, J. and Shiryaev, A.},
  volume = {288},
  year = {2013},
  publisher = {Springer Science \& Business Media}
}

@Article{bowsher2007modelling,
  title = {Modelling security market events in continuous time: Intensity based, multivariate point process models},
  author = {Bowsher, C.},
  journal = {Journal of Econometrics},
  volume = {141},
  number = {2},
  pages = {876--912},
  year = {2007},
  publisher = {Elsevier}
}

@Article{horst2023convergence,
  title = {Convergence of heavy-tailed {H}awkes Processes and the microstructure of rough volatility},
  author = {Horst, U. and Xu, W. and Zhang, R.},
  journal = {arXiv preprint arXiv:2312.08784},
  year = {2023}
}

@Article{horst2019scaling,
  title = {A scaling limit for limit order books driven by {H}awkes processes},
  author = {Horst, U. and Xu, W.},
  journal = {SIAM Journal on Financial Mathematics},
  volume = {10},
  number = {2},
  pages = {350--393},
  year = {2019},
  publisher = {SIAM}
}

@Article{horst2022microstructure,
  title = {The microstructure of stochastic volatility models with self-exciting jump dynamics},
  author = {Horst, U. and Xu, W.},
  journal = {Annals of Applied Probability},
  volume = {32},
  number = {6},
  pages = {4568--4610},
  year = {2022},
  publisher = {Institute of Mathematical Statistics}
}

@Article{horstXu2026functional,
  title={Functional limit theorems for {H}awkes processes},
  author={Horst, Ulrich and Xu, Wei},
  journal={Probability Theory and Related Fields},
  pages={917--996},
  volume={194},
  number={1--2},
  year={2026},
  publisher={Springer}
}

@Article{horstXuZhang2024pathdependent,
  author = {Horst, U. and Xu, W. and Zhang, R.},
  title = {Path-dependent fractional {V}olterra Equations and the microstructure of rough volatility models driven by {P}oisson random measures},
  journal = {arXiv preprint arXiv:2412.16436},
  year = {2024}
}

@Article{el2019characteristic,
  title = {The characteristic function of rough {H}eston models},
  author = {El Euch, O. and Rosenbaum, M.},
  journal = {Mathematical Finance},
  volume = {29},
  number = {1},
  pages = {3--38},
  year = {2019},
  publisher = {Wiley Online Library}
}

@article{wang2026scaling,
  title = {Scaling limit of heavy-tailed nearly unstable cumulative {INAR}($\infty$) processes and rough fractional diffusions},
  author = {Cai, Chunhao and He, Ping and Wang, Qinghua and Wang, Yingli},
  journal = {Methodology and Computing in Applied Probability},
  year = {2026},
  volume = {28},
  number = {1},
  pages = {13}
}

@Article{bondi2024,
  author = {Bondi, A. and Pulido, S. and Scotti, S.},
  title = {The rough {H}awkes--{H}eston stochastic volatility model},
  journal = {Mathematical Finance},
  volume = {34},
  number = {4},
  pages = {1197--1241},
  year = {2024}
}

@Article{eleuchfukasawarosenbaum2018,
  author = {El Euch, O. and Fukasawa, M. and Rosenbaum, M.},
  title = {The microstructural foundations of leverage effect and rough volatility},
  journal = {Finance and Stochastics},
  volume = {22},
  number = {2},
  pages = {241--280},
  year = {2018}
}

@Article{gatheral2018,
  author = {Gatheral, J. and Jaisson, T. and Rosenbaum, M.},
  title = {Volatility is rough},
  journal = {Quantitative Finance},
  volume = {18},
  number = {6},
  pages = {933--949},
  year = {2018}
}

@Article{heston1993,
  author = {Heston, S.},
  title = {A closed-form solution for options with stochastic volatility with applications to bond and currency options},
  journal = {Review of Financial Studies},
  volume = {6},
  number = {2},
  pages = {327--343},
  year = {1993}
}

@Article{abijaber2019,
  author = {Abi Jaber, E. and Larsson, M. and Pulido, S.},
  title = {Affine {V}olterra processes},
  journal = {Annals of Applied Probability},
  volume = {29},
  number = {5},
  pages = {3155--3200},
  year = {2019}
}

@Article{abijaber2021,
  author = {Abi Jaber, E.},
  title = {Weak existence and uniqueness for affine stochastic {V}olterra equations with {$L^1$}-kernels},
  journal = {Bernoulli},
  volume = {27},
  number = {3},
  pages = {1583--1615},
  year = {2021}
}

@Book{bremaud1981,
  author = {Br{\'e}maud, P.},
  title = {Point Processes and Queues: Martingale Dynamics},
  publisher = {Springer},
  year = {1981}
}

@article{caravennadoney2019,
  author  = {Caravenna, Francesco and Doney, Ron},
  title   = {Local large deviations and the strong renewal theorem},
  journal = {Electronic Journal of Probability},
  volume  = {24},
  year    = {2019},
  number  = {72},
  pages   = {1--48}
}

\end{document}